\newtheorem{proposition}{Proposition}
\newcolumntype{L}[1]{>{\raggedright\arraybackslash}p{#1}}
\newcolumntype{C}[1]{>{\centering\arraybackslash}p{#1}}
\newcommand{\cmark}{\ding{51}}
\newcommand{\xmark}{}
\newcommand{\arrowbox}[1]{%
  \makebox[\widthof{$\Longleftrightarrow$}][l]{$#1$}%
}
\title{Graph Pattern-based Association Rules \\Evaluated Under No-repeated-anything Semantics \\in the Graph Transactional Setting}
\author{
Basil Ell\\
Center for Cognitive Interaction Technology, Bielefeld University, Germany\\
Norwegian Centre for Knowledge-driven Machine Learning, University of Oslo, Norway\\
\texttt{bell@techfak.uni-bielefeld.de} \\
}
\newcommand{\pgone}{
\begin{tikzpicture}[>=latex, scale=0.2]
  \coordinate (g1t1) at (1,0);
  \coordinate (g1t2) at (0,1);
  \coordinate (g1t3) at (2,1);
  \coordinate (g1t4) at (0,2);
  \coordinate (g1t5) at (1,2);
  \coordinate (g1t6) at (2,2);
  \draw[thick] (g1t1) -- (g1t2);
  \draw[thick] (g1t1) -- (g1t3);
  \draw[thick] (g1t2) -- (g1t4);
  \draw[thick] (g1t3) -- (g1t6);  
  \draw[thick] (g1t1) -- (g1t5);
\end{tikzpicture}
}
\newcommand{\pgtwo}{
\begin{tikzpicture}[scale=0.2]
  \coordinate (g2t1) at (4,0);
  \coordinate (g2t2) at (3,1);
  \coordinate (g2t3) at (5,1);
  \coordinate (g2t4) at (4,2);
  \coordinate (g2t5) at (5,2);
  \draw[thick] (g2t1) -- (g2t2);
  \draw[thick] (g2t1) -- (g2t3);
  \draw[thick] (g2t2) -- (g2t4);
  \draw[thick] (g2t1) -- (g2t4);
  \draw[thick] (g2t3) -- (g2t5);
\end{tikzpicture}
}
\newcommand{\pgthree}{
\begin{tikzpicture}[scale=0.2]
  \coordinate (g3t1) at (7,0);
  \coordinate (g3t2) at (6,1);
  \coordinate (g3t3) at (8,1);
  \coordinate (g3t4) at (6,2);
  \coordinate (g3t5) at (7,2);
  \draw[thick] (g3t1) -- (g3t2);
  \draw[thick] (g3t1) -- (g3t3);
  \draw[thick] (g3t2) -- (g3t4);
  \draw[thick] (g3t1) -- (g3t5);
  \draw[thick] (g3t3) -- (g3t5);
\end{tikzpicture}
}
\newcommand{\pgfour}{
\begin{tikzpicture}[scale=0.2]
  \coordinate (g4t1) at (10,0);
  \coordinate (g4t2) at (9,1);
  \coordinate (g4t3) at (11,1);
  \coordinate (g4t4) at (10,1);
  \coordinate (g4t5) at (10,2);
  \draw[thick] (g4t1) -- (g4t2);
  \draw[thick] (g4t1) -- (g4t3);
  \draw[thick] (g4t1) -- (g4t4);
  \draw[thick] (g4t2) -- (g4t5);
  \draw[thick] (g4t3) -- (g4t5);
\end{tikzpicture}
}
\newcommand{\pgfive}{
\begin{tikzpicture}[scale=0.2]
  \coordinate (g5t1) at (13,0);
  \coordinate (g5t2) at (12,1);
  \coordinate (g5t3) at (14,1);
  \coordinate (g5t4) at (13,2);
  \draw[thick] (g5t1) -- (g5t2);
  \draw[thick] (g5t1) -- (g5t3);
  \draw[thick] (g5t1) -- (g5t4);
  \draw[thick] (g5t2) -- (g5t4);
  \draw[thick] (g5t3) -- (g5t4);
\end{tikzpicture}

}
\begin{document}
\maketitle

\begin{abstract}
We introduce graph pattern-based association rules (GPARs) for directed labeled multigraphs such as RDF graphs. GPARs support both generative tasks, where a graph is extended, and evaluative tasks, where the plausibility of a graph is assessed. The framework goes beyond related formalisms such as graph functional dependencies, graph entity dependencies, relational association rules, graph association rules, multi-relation and path association rules, and Horn rules. Given a collection of graphs, we evaluate graph patterns under no-repeated-anything semantics, which allows the topology of a graph to be taken into account more effectively. We define a probability space and derive confidence, lift, leverage, and conviction in a probabilistic setting. We further analyze how these metrics relate to their classical itemset-based counterparts and identify conditions under which their characteristic properties are preserved.
\end{abstract}

\keywords{Association Rules \and Directed Labeled Multigraphs \and RDF graphs \and Inference \and Metrics}

\newcommand{\fulltau}{\mathcal{T}^n_{g,p_1,p_2}}
\newcommand{\fulltauG}{\mathcal{T}^n_{G,p_1,p_2}}
\newcommand{\indicatorboth}{\sum_{T \in \fulltau} \mathbb{1}_{\lbrace m_g(T,p_1,V_1) \land m_g(T, p_2,V_2)\rbrace}}
\newcommand{\indicatorleft}{\sum_{T \in \fulltau} \mathbb{1}_{\lbrace m_g(T,p_1,V_1)\rbrace}}
\newcommand{\indicatorright}{\sum_{T \in \fulltau} \mathbb{1}_{\lbrace m_g(T,p_2,V_2)\rbrace}}
\newcommand{\pleft}{\hat{P}_g(e(p_1,V_1;p_2))}
\newcommand{\pright}{\hat{P}_g(e(p_2,V_2;p_1))}
\newcommand{\pboth}{\hat{P}_g(e(p_1,V_1;p_2) \cap e(p_2,V_2;p_1))}

\section{Introduction}

Many systems can naturally be represented as graphs (specifically: directed labeled multigraphs), including social networks (with persons, groups, and organizations as node types and friendship, membership, or influence as relation types), communication networks (with individuals, devices, and servers connected by message exchange or physical connection relations), citation networks (with papers, authors, and journals linked by citation, authorship, or publication relations), transportation networks (with locations, routes, and vehicles connected by service, connection, or transfer relations), trade networks (with countries, firms, and commodities connected by export or import relations), biological interaction networks (with genes, proteins, and complexes linked by encoding, interaction, or regulation relations), %
molecular structure graphs (with atoms, molecules, and functional groups linked by bonding, containment, or spatial proximity relations), and chemical reaction networks (with compounds, reactions, and catalysts connected by input, output, or catalytic relations).

Beyond these concrete systems, formally represented knowledge can likewise be modeled as a graph. In formal knowledge representation, entities, concepts, and instances typically correspond to nodes, while semantic relations between them correspond to typed edges. This structure underlies frameworks such as the Resource Description Framework (RDF) and knowledge graphs more broadly, where information is expressed as triples linking subjects, predicates, and objects. %

Even when the primary data is not inherently graph-structured -- such as a sequence of tokens in a natural language text -- one obtains graph-structured data when the primary data is combined with annotations (e.g., where in the text an entity of type person is mentioned, or which relation is expressed between which entities), metadata (e.g., from which document, created by which author, the text was extracted, or a description of how the annotations were created), and background knowledge (e.g., the profession of the mentioned person or that the expressed relation is symmetric).

It can be interesting to study regularities in these graphs by mining frequent patterns and rules. For two frequent patterns $p_1$ and $p_2$ that we combine into a rule, we might find that it frequently is the case that if the graph is matched by the pattern $p_1$ in some area of the graph, then the graph is also matched by the pattern $p_2$ in that same area of the graph. In the context of an evaluative task, if we find that $p_1$ matches but $p_2$ does not, then this might indicate an irregularity in that area. In the context of a generative task, we might want to extend the graph so that $p_2$ matches in that area.

Whereas in some situations we are dealing with a single graph, which we refer to as the \textit{single graph setting}, in other situations we are dealing with a set of graphs of the same kind, which we refer to as the \textit{graph-transactional setting}. For example, we may study multiple different social networks or multiple snapshots of the same social network over time. Similarly, we may work with collections of communication networks, citation networks, transportation networks, trade networks, biological networks, molecular structures, or chemical reaction networks. Furthermore, we can transform a set of natural language sentences into a set of graphs where each sentence is enriched with annotations and metadata, thus transforming the corpus into a set of graphs.

Having available a set of graphs can be different from having available a single graph. 
While there might be situations where we can combine a set of graphs into one graph and then analyze that graph, in other situations it can be more appropriate to keep the graphs separate. For example, let's consider that we have given a set of natural language sentences and represent each sentence, together with the information which relation is expressed in that sentence, as a graph and then combine these graphs into a single graph. 
Then we mine frequent patterns from this graph where these patterns are intended to be used to classify which relation is expressed in a graph representing a sentence where we don't know which relation is expressed.
We would not want to obtain patterns that span multiple sentences, as in the classification scenario only a single sentence is given. Thus, either we need to restrict pattern mining to "stay within a sentence" or we keep the graphs separated. Thus, keeping the graphs apart can be relevant.

In traditional data mining, predictive relationships such as, "if a set contains all members of the set $A$, then it will also contain all members of the set $B$," are expressed as association rules written as $A\Rightarrow B$. Association rules are well known from market basket analysis and are particularly attractive because they are inherently explainable. In this paper, we extend the concept of association rules
from itemset-based association rules to graph pattern-based association rules.

Our main contributions are the following:
\begin{itemize}
\item We introduce the concept of graph pattern-based association rule for directed labeled multigraphs (which corresponds to the structure of RDF graphs).
\item We consider the case where graph patterns are evaluated under the no-repeated-anything semantics, giving graph patterns a higher discriminative power compared to when they are evaluated under the standard homomorphism semantics.

\item We show that graph pattern-based association rules evaluated under no-repeated-anything semantics admit a probability space formulation. 

\item 
We define the metrics support, confidence, lift, leverage, and conviction, which are well-known from itemset-based association rules, both for the single graph setting, as well as for the graph transactional setting where in the latter we distinguish between micro-averaged and macro-averaged scores. 
\item We analyze how these metrics relate to the classical metrics for itemset-based association rules, identifying which characteristics they share and how they differ in capturing associations.

\end{itemize}

The paper is structured as follows. Chapter \ref{sec:preliminaries} covers preliminaries, including itemset-based association rules, their interpretation and quality metrics, directed labeled multigraphs, graph patterns, and graph pattern evaluation semantics. Chapter~\ref{sec:gpar} introduces the concept of graph pattern-based association rules, their interpretation, and potential applications, including generative and evaluative tasks. Chapter~\ref{sec:metrics} presents metrics for graph pattern-based association rules. First, we define the underlying probability space in Section~\ref{sec:probspace}. Section~\ref{sec:metricssingle} introduces metrics for the single-graph setting, while Section~\ref{sec:metricstransactional} addresses the graph transactional setting, in both micro-averaged and macro-averaged variants. In Section~\ref{sec:reframing}, we show how evaluation of metrics in the single-graph setting and micro-averaged graph-transactional setting can be reframed in terms of metrics for classical itemset-based association rules. Section \ref{sec:characteristics} provides a detailed analysis of all metrics, highlighting the extent to which they resemble classical metrics in terms of their characteristics. Chapter~\ref{sec:relatedwork} discusses related work, and Chapter~\ref{sec:summaryandconclusion} concludes with a summary and conclusions. The Appendix contains proofs regarding the characteristics of the proposed metrics.

\section{Preliminaries}\label{sec:preliminaries}

\subsection{Association Rules and their Quality Metrics}
An association rule $A \Rightarrow B$ is constructed from two itemsets $A$ and $B$ where $A$ and $B$ are non-empty and disjoint and $A$ and $B$ are both subset of a set $I$ of items. $A$ is called the antecedent and $B$ is called the consequent of the rule. To distinguish these association rules from the graph pattern-based association rules (GPARs or GPA rules) introduced later in this paper, we here call them itemset-based association rules (ISARs or ISA rules). Itemset-based association rules are mined from a non-empty bag $T$ (called transaction database) of transactions where each transaction is an itemset. %
Itemset-based association rules are known from shopping basket analysis \cite{apriori}. For example, $ \lbrace bread, cheese \rbrace \Rightarrow \lbrace butter \rbrace$ is an association rule. 
We formalize the meaning of a rule $A \Rightarrow B$ in the context of some itemset $S$ as follows:
\begin{equation*}A \subseteq S \vdash_p B \subseteq S\end{equation*}

The statement expresses that if we have observed that A is a subset of $S$, then, given the association rule $A \Rightarrow B$, we can probabilistically entail that we can observe that $B$ is a subset of $S$. In the context of shopping basket analysis we can imagine a customer in the process of shopping where, until checkout, items are added to (but never removed from) a shopping basket. Since we do not know the set $S$ of items the customer will have selected at the end of the shopping process, we can only make observations based on the items that were selected so far and make predictions about which items the customer will have selected at the end of the shopping process.
Thus, if we observe that a customer has added bread and cheese to their shopping basket, then we predict that the customer will add butter to the shopping basket. It is important to note that the rule does not express that $A \subseteq S$ \textit{causes} $B \subseteq S$ to be the case.\\

We regard a rule $A \Rightarrow B$ with $B \subseteq A$ as trivial.
The rule is trivial because for any itemset $S$ with $A \subseteq S$ we can conclude, without even making further observations about $S$, especially without making observations about $S \setminus A$, that $B \subseteq S$, i.e., knowing that $A$ is a subset of a transaction $S$ is sufficient for knowing that $B$ is subset of the transaction $S$ for any transaction database. Thus, the rule does not allow us to make non-trivial predictions about $S$.
For any itemset $S$ with $A \subseteq S$, the fact that $B \subseteq S$ does not depend on $S$, but it depends on the specific relation between $A$ and $B$, namely $B \subseteq A$. %
Note that in practice one discards rules where $A \cap B \neq \varnothing$ and $B \not \subseteq A$, as these rules are partially redundant.\\

Multiple metrics have been introduced in the context of association rules.
We present and discuss well-known metrics to score itemsets (namely: absolute and relative support) and to score association rules (namely: confidence, lift, leverage, and conviction). We assume the transaction database $T$ to be non-empty. %
Metrics for a non-empty itemset $S$:
\[
\begin{array}{rl}
\operatorname{absolute-support}_T : & \mathcal{P}(I) 
 \longrightarrow \mathbb{N}, \\[6pt]
 & S \longmapsto |\lbrace \, t \in T \mid S \subseteq t \,\rbrace|\\[6pt]
 & = \sum_{t \in T} \mathbb{1}_{\left\lbrace S \subseteq t \right\rbrace}\\[6pt]
\operatorname{relative-support}_T : & \mathcal{P}(I) 
 \longrightarrow [0,1] \cap\mathbb{Q}, \\[6pt]
 & S \longmapsto \dfrac{\operatorname{absolute-support}_T(S)}{|T|}\\[6pt]
 & = \dfrac{\sum_{t \in T} \mathbb{1}_{\left\lbrace S \subseteq t \right\rbrace}}{|T|}
\end{array}
\]

Absolute support ($\operatorname{absolute-support}_T(S)$) of an itemset $S$ is the number of transactions that contain the members of the given set $S$, 
whereas relative support\footnote{Relative support was introduced as \textit{support} by Agrawal and Srikant~\cite{apriori}.} 
($\operatorname{relative-support}_T(S)$) is the fraction of transactions in the transaction database that contain the members of the given set $S$. 
Typically, one is interested in finding itemsets with sufficiently high absolute or relative support, from which then rules are built. 
Both metrics are anti-monotonic, i.e., for any transaction database $T$ and any two sets $S$ and $S'$ such that $S \subset S'$ it is the case that $\operatorname{absolute-support}_T(S) \geq \operatorname{absolute-support}_T(S')$ and $\operatorname{relative-support}_T(S) \geq \operatorname{relative-support}_T(S')$. This property is relevant, as it allows to prune the search space when mining frequent itemsets: any itemset that has a non-frequent subset (i.e., that subset has a support value below a user-defined threshold value) cannot be frequent. Instead of evaluating the support of every itemset in the powerset $\mathcal{P}(I)$, once we found that a set is non-frequent we can discard all members of the powerset that are supersets of this set. This is known as the Apriori principle \cite{apriori}.\\

Itemset-based association rules admit a probability space formulation \cite{sharma2022novel}.
Although association rules are usually defined in terms of itemsets and transaction counts, they can equivalently be formulated in a probability space $(\Omega,\mathcal{F},\hat{P}_T)$ as follows:
\begin{itemize}
\item The sample space $\Omega$ is the bag $T$ of transactions, because we pick a transaction uniformly at random from the transaction database $T$.
\item The sigma-algebra $\mathcal{F}$ is the powerset $\mathcal{P}(\Omega)$, because $\Omega$ is finite and thus every subset of $\Omega$ is a measurable event.
\item The empirical probability measure $\hat{P}_T : \mathcal{F} \to [0,1]$ is defined for every event $X \subseteq \Omega$ by $\hat{P}_T(X) = \tfrac{|X|}{|T|}$.
Given an itemset $A \subseteq \mathcal{I}$, we associate it to the event $E_A \coloneq \lbrace \,t \in T \mid A \subseteq t \,\rbrace$. Thus, $\hat{P}_T(E_A)$ is the probability that for a transaction $t$, selected uniformly at random from $T$, $A \subseteq t$ is true. The empirical probability is measured as follows:
\[ \hat{P}_T(E_A) = \frac{|\lbrace \,t \in T \mid A \subseteq t \,\rbrace|}{|T|}\]
\end{itemize}

Assuming the transaction database $T$ to be finite and non-empty, $(\Omega,\mathcal{F},\hat{P}_T)$ is a probability space, because it satisfies the Kolmogorov axioms:
\begin{enumerate}
\item Non-negativity (Axiom 1).

For all $X \subseteq \Omega$, $|X| \geq 0$, and $|T| > 0$. Thus, $\hat{P}_T(A) = \tfrac{|A|}{|T|}$ is a rational and thus a real number. Furthermore, since $|A| \geq 0$ we have $\hat{P}_T(A) \geq 0$ and $\tfrac{|A|}{|T|} \geq 0$.

\item Normalization (Axiom 2).

Because $\Omega = T$, it follows that $\hat{P}_T(\Omega) = \tfrac{|\Omega|}{|T|} = \tfrac{|\Omega|}{|\Omega|} = 1$.

\item Countable additivity (Axiom 3).

Let $X = (X_1, \ldots,X_m)$ be a sequence of pairwise disjoint non-empty subsets of $\Omega$. Then
\[ \hat{P}_T\left(\bigcup_{X_i \in X} X_i\right) =
\frac{\left|\displaystyle \bigcup_{X_i \in X}(X_i)\right|}{|T|} = \frac{\displaystyle \sum_{X_i \in X}|X_i|}{|T|} = \sum_{X_i \in X} \frac{|X_i|}{|T|} = \sum_{X_i \in X} \hat{P}_T(X_i)\]
\end{enumerate}

Thus, $\operatorname{relative-support}_T(S)$ is equal to $\hat{P}_T(E_S)$.
Furthermore, we can describe the empirical joint and conditional probabilities as follows:
\[
\begin{array}{rl}
\hat{P}_T(E_A \cap E_B) =& \hat{P}_T(E_{A \cup B}) = \dfrac{|\lbrace \, t \in T \mid A \subseteq t \land B \subseteq t\,\rbrace|}{|T|}\\[10pt]
\hat{P}_T(E_B \mid E_A) =& \dfrac{|\lbrace \, t \in T \mid A \subseteq t \land B \subseteq t\,\rbrace|}{|\lbrace \, t \in T \mid A \subseteq t \,\rbrace|}
\end{array}
\]
$\hat{P}_T(E_A \cap E_B)$ denotes the empirical probability, estimated from the transaction database $T$, that for a transaction $t$, picked uniformly at random from $T$, it is the case that $A \subseteq t$ is true and $B \subseteq t$ true. 
$\hat{P}_T(E_B \mid E_A)$ denotes the empirical conditional probability, estimated from the transaction database $T$, that for a transaction $t$, picked uniformly at random from $T$, $B \subseteq t$ is true under the condition that $A \subseteq t$ is true.

In what follows we introduce metrics both based on transaction counts and, equivalently, based on probabilities. 
To quantify the strength of the associations of the two itemsets (where the first itemset has a support > 0) in a rule the confidence metric was introduced.\footnote{The confidence metric was introduced by Agrawal and Srikant \cite{apriori}.} It is defined as follows:
\[
\begin{array}{rl}
\operatorname{confidence}_T : & \lbrace \, A \Rightarrow B \mid A,B \in \mathcal{P}(I)\setminus \lbrace \varnothing\rbrace \, \rbrace 
 \longrightarrow [0,1] \cap \mathbb{Q}, \\[6pt]
& A \Rightarrow B \longmapsto \dfrac{\operatorname{relative-support}_T(A \cup B)}{\operatorname{relative-support}_T(A)}\\[8pt]
& = \dfrac{|\lbrace ~t \in T \mid A \cup B \subseteq t ~\rbrace|}{|\lbrace ~t \in T \mid A \subseteq t ~\rbrace|}\\[8pt]
 & = \dfrac{\sum_{t \in T} \mathbb{1}_{\left\lbrace A \subseteq t \land B \subseteq t\right\rbrace}}{\sum_{t \in T} \mathbb{1}_{\left\lbrace A \subseteq t \right\rbrace}}\\[8pt]
 & = \hat{P}_T(E_B \mid E_A)
\end{array}
\]

Confidence is the fraction of the number of true predictions made by the rule (the number of transactions that contain both the antecedent $A$ and the consequent $B$) to the number of all predictions made by the rule (i.e., the number of rules that contain the antecedent $A$). Rules with a high confidence are desirable. 
Note that we equally define confidence either as the ratio of relative support of $A \cup B$ to the relative support of $A$ or as the ratio of the absolute support of $A \cup B$ to the absolute support of $A$. 
Confidence of the rule $A \Rightarrow B$ can be interpreted as the empirical conditional probability $\hat{P}_T(E_B \mid E_A)$.

Typically, one is interested in rules with a support score above some threshold and with a confidence score above some threshold where these thresholds are application-specific. 
However, association rules with high support and high confidence do not necessarily disclose truly interesting event relationships \cite{brin1997beyond}. Consider, for example, that one has found that the rule $\lbrace butter \rbrace \Rightarrow \lbrace bread \rbrace$ has both a sufficiently high support and a sufficiently high confidence. But, if the transaction database we have analyzed comes from a bakery where almost all transactions contain bread, the the rule might actually be not interesting, because bread is purchased frequently regardless of whether butter is purchased. 
Therefore, the lift\footnote{The lift metric was introduced as \textit{interest} by Brin et al. \cite{brin1997dynamic}.} score was introduced, which divides confidence of the rule by the relative support of the consequent.
\[
\begin{array}{rl}
\operatorname{lift}_T : & \lbrace \, A \Rightarrow B \mid A,B \in \mathcal{P}(I)\setminus \lbrace \varnothing\rbrace \, \rbrace 
 \longrightarrow  [0,\infty) \cap \mathbb{Q}, \\[6pt]
& A \Rightarrow B \longmapsto \dfrac{\operatorname{confidence}_T(A \Rightarrow B)}{\operatorname{relative-support}_T(B)}\\[8pt]
& = \dfrac{|\lbrace ~t \in T \mid A \cup B \subseteq t ~\rbrace|\cdot|T|}{|\lbrace ~t \in T \mid A \subseteq t ~\rbrace|\cdot|\lbrace ~t \in T \mid B \subseteq t ~\rbrace|}\\[8pt]
 & = \dfrac{\hat{P}_T(E_B \mid E_A)}{\hat{P}_T(E_B)}\\[8pt]
 & = \dfrac{\left[\sum_{t \in T} \mathbb{1}_{\left\lbrace A \subseteq t \land B \subseteq t\right\rbrace}\right] \cdot |T|}{\left[ \sum_{t \in T} \mathbb{1}_{\left\lbrace A \subseteq t \right\rbrace} \right] \cdot \left[ \sum_{t \in T} \mathbb{1}_{\left\lbrace B \subseteq t\right\rbrace} \right]}
\end{array}
\]

A drawback of the lift score is that it only uses relative numbers. That means, it produces a high value for rules where the event $E_B$ strongly depends on the event $E_A$, even though the absolute support of these events might be very small. Thus, rules that might not be very interesting can have a high lift score.

For example, the rule $ \lbrace capers \rbrace \Rightarrow \lbrace anchovies \rbrace $ and the rule
$ \lbrace apples \rbrace \Rightarrow \lbrace bread \rbrace$ might have the same lift score, but the absolute support of the itemset $\lbrace capers \rbrace$ might be much smaller than the absolute support of the itemset $\lbrace apples \rbrace$
and the absolute support of the itemset $\lbrace anchovies \rbrace$ might be much smaller than the absolute support of the itemset $\lbrace bread \rbrace$. Thus, one could be more interested in the latter rule than in the former.

A metric that takes into account the correlation of the events $E_A$ and $E_B$, but also takes into account the support scores, is leverage, introduced by Piatetsky-Shapiro \cite{leverage}, which is defined as follows:
\[
\begin{array}{rl}
\operatorname{leverage}_T : & \lbrace \, A \Rightarrow B \mid A,B \in \mathcal{P}(I)\setminus \lbrace \varnothing\rbrace \, \rbrace
\longrightarrow \left[-\tfrac{1}{4}, \tfrac{1}{4}\right] \cap \mathbb{Q}, \\[6pt]
& A \Rightarrow B \longmapsto \operatorname{rel.-support}_T(A \cup B) - \operatorname{rel.-support}_T(A) \cdot \operatorname{rel.-support}_T(B)\\[6pt]
& = \dfrac{|\lbrace ~t \in T \mid A \cup B \subseteq t ~\rbrace|}{|T|} - \dfrac{|\lbrace ~t \in T \mid A \subseteq t ~\rbrace|}{|T|} \cdot \dfrac{|\lbrace ~t \in T \mid B \subseteq t ~\rbrace|}{|T|}\\[6pt]
& = \dfrac{\sum_{t \in T} \mathbb{1}_{\left\lbrace A \subseteq t \land B \subseteq t \right\rbrace}}{|T|} - \dfrac{\sum_{t \in T} \mathbb{1}_{\left\lbrace A \subseteq t \right\rbrace}}{|T|} \cdot \dfrac{\sum_{t \in T} \mathbb{1}_{\left\lbrace B \subseteq t \right\rbrace}}{|T|}\\[6pt]
& = \hat{P}_T(E_{A \cup B}) - \hat{P}_T(E_A) \cdot \hat{P}_T(E_B)
\end{array}
\]

A drawback of leverage is that the metric is symmetric. Thus, what leverage metric is correlation but not the degree of implication. A metric that is not symmetric but instead is sensitive to the direction of the implication %
is conviction, introduced by Brin et al. \cite{brin1997dynamic}. 
The conviction metric makes use of two kinds of failure rates:
\begin{enumerate}
\item

One the one hand, we are interested in the probability with which the rule $A \Rightarrow B$ fails, i.e., the \textit{empirical failure rate}: %
$\hat{P}_T(E_A \cap \neg E_B)$ (with $\neg E_B = \Omega \setminus E_B$) 
which is equal to $\hat{P}_T(A) \cdot  \operatorname{confidence}_T(A \Rightarrow \neg B)$ which is equal to $\hat{P}_T(A) \cdot (1 - \operatorname{confidence}_T(A \Rightarrow B))$.

\item
On the other hand, we are interested in the \textit{expected failure rate under independence assumption} of the events $E_A$ and $E_B$ (or, equivalently, the events $E_A$ and $\neg E_B$). 
Under the assumption of independence, 
$\hat{P}_T(E_A \cap \neg E_B) = \hat{P}_T(E_A) \cdot \hat{P}_T(\neg E_B) = \hat{P}_T(E_A) \cdot (1-\hat{P}_T(E_B))$, which is equal to $\operatorname{relative-support}_T(A)\cdot (1 - \operatorname{relative-support}_T(B))$.
\end{enumerate}

Conviction of a rule is then defined as the fraction of the expected failure rate under independence assumption to the empirical failure rate:
\[
\begin{array}{rl}
\operatorname{conviction}_T : & \lbrace \, A \Rightarrow B \mid A,B \in \mathcal{P}(I)\setminus \lbrace \varnothing\rbrace \, \rbrace 
 \longrightarrow  \mathbb{Q}_{> 0} \cup \lbrace \infty \rbrace, \\[6pt]
& A \Rightarrow B  \longmapsto \dfrac{1 - \operatorname{relative-support}_T(B)}{1 - \operatorname{confidence}_T(A \Rightarrow B)}\\[10pt]
& =  \dfrac{1 - \hat{P}_T(E_B)}{1 - \hat{P}_T(E_B \mid E_A)}\\[8pt]
& =  \dfrac{\left(\sum_{t \in T} \mathbb{1}_{\left\lbrace A \subseteq t \right\rbrace}\right) \cdot \left(|T| - \sum_{t \in T} \mathbb{1}_{\left\lbrace B \subseteq t \right\rbrace}\right)}{|T| \cdot \left(\left[\sum_{t \in T} \mathbb{1}_{\left\lbrace A \subseteq t\right\rbrace}\right] - \left[\sum_{t \in T} \mathbb{1}_{\left\lbrace A \subseteq t \land B \subseteq t \right\rbrace}\right]\right)} \end{array}
\]

Conviction is undefined if $\hat{P}_T(E_B) = 1$. In the case that $\hat{P}_T(E_B \mid E_A) = 1$, or, equivalently, $\operatorname{confidence}_T(A \Rightarrow B) = 1$, then $\operatorname{conviction}_T(A \Rightarrow B) \to \infty$: since the denominator approaches zero from the positive side, the fraction diverges to infinity.
We define $\operatorname{conviction}_T(A \Rightarrow B)$ to \textit{be} $\infty$ in the case that $\operatorname{confidence}_T(A \Rightarrow B) = 1$.

Note that for a rule $A \Rightarrow B$ and a non-empty transaction database,
the confidence score is not defined if $\hat{P}_T(E_A) = 0$ and
the lift score is not defined if $\hat{P}_T(E_A) = 0$ or $\hat{P}_T(E_B) = 0$.
In practice, however, one derives itemsets from a non-empty transaction database and constructs rules from itemsets with a non-zero support. Thus, 
the situation does not occur that 
the probability of the antecedent (i.e., $\hat{P}_T(E_A)$) or the probability of the consequent
(i.e., $\hat{P}_T(E_B)$) is zero which would
lead to the confidence score or the lift score to be undefined.

\subsection{Graphs, Graph Patterns, and Graph Pattern Evaluation Semantics}\label{sec:graphsintro}

Let $\mathcal{T} = \lbrace t_1, t_2, \ldots \rbrace$ denote an infinite set of terms. A directed labeled multigraph (DLM) $g$ is a subset of $\mathcal{T} \times \mathcal{T} \times \mathcal{T}$. We refer to an element of a graph as a triple and we interpret a triple $(t_i,t_j,t_k) \in g$ as a directed labeled edge where the edge is labeled with the term $t_j$, the edge originates from the node labeled with the term $t_i$, and the edge points to the node labeled with the term $t_k$. Given a graph $g$, $\mathcal{T}_g \subseteq \mathcal T$ denotes the set of terms that occur in $g$.\footnote{I.e., $\mathcal{T}_g = \bigcup_{(t_i,t_j,t_k) \in g} \lbrace t_i, t_j, t_k\rbrace$}

We require that for a graph $g$ for any term $t \in \mathcal{T}_g$ there exists at most one node with that label in $g$ -- thus, a node is \textit{identified} by its label. It follows from our definition of DLM that given an edge created in a graph for a triple $(t_i,t_j,t_k)$, the edge is not identified by the label of the edge, $t_j$, but instead by the triple $(t_i,t_j,t_k)$. 
This requirement has two important consequences: i) the topology of a graph is uniquely determined by a set of triples, ii) in order to check whether some graph $s$ is contained in some graph $g$, instead of having to carry out the subgraph isomorphism check we can carry out a less complex task, namely the subset test: $s \subseteq g$.

A further consequence of this requirement is that we can define the concept of graph association rule (GAR) where a rule has the form $g_1 \Rightarrow g_2$ and $g_1$ and $g_2$ are DLMs. We can apply the rule to probabilistically entail that if a graph $s$ contains the subgraph $g_1$, i.e., $g_1 \subseteq s$, then it also contains the subgraph $g_2$, i.e., $g_2 \subseteq s$:
\begin{equation*}g_1 \subseteq s \vdash_p g_2 \subseteq s\end{equation*}

In the context of graph association rules, a triple is seen as an item and the set $I$ of items thus contains triples, i.e., $I \subseteq \mathcal{T} \times \mathcal{T} \times \mathcal{T}$. 
Frequent itemset mining on a transaction database where each transaction is a graph is then equal to frequent subgraph mining and the metrics defined in the ISAR context can also be applied in the GAR context. Analogously to ISARs, a GAR of the form $g_1 \Rightarrow g_2$ where $g_2 \subseteq g_1$ can be considered trivial. Furthermore, the rule would be discarded if $g_1 \cap g_2 \neq \varnothing$ and $g_2 \not \subseteq g_1$ because it is partially redundant.\\

Figure \ref{fig:examplegraph} shows an example of an DLM, as it could occur in the context of natural language processing as a rich representation of the sentence \texttt{"Alice Smith likes Bob and Charlie likes Bob."} The graph contains a node for each of the eight tokens. These nodes are labeled/identifies with the terms $t_1, \ldots, t_8$. Each token node is connected via an \texttt{next} edge to the next token, e.g., $(t_3, next, t_4)$ and with a \texttt{label} edge to the actual token string from the sentence. Further information represents results from the task of named entity recognition. For example, the tokens $t_1$ and $t_2$ are both parts of a mention (\texttt{pom}) which is a mention of the entity (\texttt{moe}) \texttt{Alice}. A result of the relation detection task is that the sentence expresses that the entity \texttt{Alice} likes the entity Bob. Thus, the graph expresses that there is an instance of the relation (\texttt{ior}) \texttt{likes} where \texttt{Alice} is the head of the relation instance (\texttt{hori}) and \texttt{Bob} is the tail of the relation instance (\texttt{tori}). The relation instance is an instance of the relation (\texttt{ior}) \texttt{likes}. Further information comes from background knowledge, e.g., that \texttt{Alice} in of type (\texttt{isA}) \texttt{Person}, and that it is known that \texttt{Alice} likes \texttt{Bob}. %
This graphs serves as an example for the situation where the primary data is not inherently graph-structured, but a graph-structured representation is obtained when the data is combined with annotations and background knowledge.\\

\begin{figure}
$\begin{aligned}
g ~=~ & \lbrace \,(t_1,label,"Alice"), (t_2,label,"Smith"), (t_3,label,"likes"), (t_4,label,"Bob"),\\
& \phantom{\lbrace\,} (t_5,label,"and"), (t_6,label,"Charlie"),(t_7,label,"likes"),(t_8,label,"Bob")\\
& \phantom{\lbrace\,} (t_1, next, t_2), (t_2, next, t_3), (t_3, next, t_4), (t_4, next, t_5), (t_5, next, t_6), (t_6, next, t_7),\\
& \phantom{\lbrace\,}  (t_7, next, t_8),\\
& \phantom{\lbrace\,} (t_1, pom, em_1), (t_2, pom, em_1), (t_4, pom, em_2), (t_6, pom, em_3), (t_8, pom, em_4),\\
& \phantom{\lbrace\,} (em_1, moe, Alice), (em_2, moe, Bob), (em_3, moe, Charlie), (em_4, moe, Bob),\\
& \phantom{\lbrace\,} (Alice, hori, ri_1), (Bob, tori, ri_1), (Charlie, hori, ri_2), (Bob, tori, ri_2),\\
& \phantom{\lbrace\,} (ri_1, ior, likes), (ri_2, ior, likes),\\
& \phantom{\lbrace\,} (Alice, likes, Bob),\\
& \phantom{\lbrace\,} (Alice, isA, Person), (Bob, isA, Person), (Charlie, isA, Person) \,\rbrace%
\end{aligned}$

\begin{tikzpicture}[
    >=latex,
    node style/.style={circle, draw, minimum size=6pt, inner sep=0pt, fill=white},
    nodelabel/.style={font=\small, inner sep=1pt},
    edgelabel/.style={font=\small, inner sep=1pt, fill=white}
  ]
  
\begin{scope}
\clip (0,-8) rectangle (12,4);

  \node[node style] (lAlice) at (1,1.2) {};
  \node[node style] (lSmith) at (2.5,1.2) {};
  \node[node style] (llikes) at (4,1.2) {};
  \node[node style] (lBob) at (5.5,1.2) {};
  \node[node style] (land) at (7,1.2) {};
  \node[node style] (lCharlie) at (8.5,1.2) {};  
  
  \node[nodelabel, left=0pt, above=5pt] at (lAlice) {$"Alice"$};
  \node[nodelabel, left=0pt, above=5pt] at (lSmith) {$"Smith"$};
  \node[nodelabel, left=6pt, above=5pt] at (llikes) {$"likes"$};
  \node[nodelabel, left=6pt, above=5pt] at (lBob) {$"Bob"$};
  \node[nodelabel, left=0pt, above=5pt] at (land) {$"and"$};
  \node[nodelabel, left=0pt, above=5pt] at (lCharlie) {$"Charlie"$};
  
  \node[node style] (tAlice) at (1,0) {};
  \node[node style] (tSmith) at (2.5,0) {};
  \node[node style] (tlikes1) at (4,0) {};
  \node[node style] (tBob1) at (5.5,0) {};
  \node[node style] (tand) at (7,0) {};
  \node[node style] (tCharlie) at (8.5,0) {};  
  \node[node style] (tlikes2) at (10,0) {};
  \node[node style] (tBob2) at (11.5,0) {};

  \node[nodelabel, left=6pt, above=3pt] at (tAlice) {$t_1$};
  \node[nodelabel, left=6pt, above=3pt] at (tSmith) {$t_2$};
  \node[nodelabel, left=6pt, above=3pt] at (tlikes1) {$t_3$};
  \node[nodelabel, left=6pt, above=3pt] at (tBob1) {$t_4$};
  \node[nodelabel, left=6pt, above=3pt] at (tand) {$t_5$};
  \node[nodelabel, left=6pt, above=3pt] at (tCharlie) {$t_6$};
  \node[nodelabel, left=6pt, above=3pt] at (tlikes2) {$t_7$};
  \node[nodelabel, left=6pt, above=3pt] at (tBob2) {$t_8$};
  
  \draw[->, thick] (tAlice) -- (lAlice);
  \draw[->, thick] (tSmith) -- (lSmith);
  \draw[->, thick] (tlikes1) -- (llikes);
  \draw[->, thick] (tBob1) -- (lBob);
  \draw[->, thick] (tand) -- (land);
  \draw[->, thick] (tCharlie) -- (lCharlie); 
      
  \node[edgelabel] at ($(tAlice)!0.5!(lAlice) + (0,0)$) {$label$};
  \node[edgelabel] at ($(tSmith)!0.5!(lSmith) + (0,0)$) {$label$};
  \node[edgelabel] at ($(tlikes1)!0.5!(llikes) + (0,0)$) {$label$};
  \node[edgelabel] at ($(tBob1)!0.5!(lBob) + (0,0)$) {$label$};  
  \node[edgelabel] at ($(tand)!0.5!(land) + (0,0)$) {$label$};
  \node[edgelabel] at ($(tCharlie)!0.5!(lCharlie) + (0,0)$) {$label$};

\draw[->] (tlikes2) .. controls +(1,3) and +(2,2) .. (llikes) node[midway, above=0pt] {$label$};
\draw[->] (tBob2) .. controls +(1,2.3) and +(2,2) .. (lBob) node[midway, above=0pt] {$label$};      

  \draw[->, thick] (tAlice) -- (tSmith);
  \draw[->, thick] (tSmith) -- (tlikes1);
  \draw[->, thick] (tlikes1) -- (tBob1);
  \draw[->, thick] (tBob1) -- (tand);
  \draw[->, thick] (tand) -- (tCharlie);
  \draw[->, thick] (tCharlie) -- (tlikes2);
  \draw[->, thick] (tlikes2) -- (tBob2);  
  
  \node[edgelabel] at ($(tAlice)!0.5!(tSmith) + (-0.0,-0.0)$) {$next$};  
  \node[edgelabel] at ($(tSmith)!0.5!(tlikes1) + (-0.0,-0.0)$) {$next$};  
  \node[edgelabel] at ($(tlikes1)!0.5!(tBob1) + (-0.0,-0.0)$) {$next$};  
  \node[edgelabel] at ($(tBob1)!0.5!(tand) + (-0.0,-0.0)$) {$next$};
  \node[edgelabel] at ($(tand)!0.5!(tCharlie) + (-0.0,-0.0)$) {$next$};
  \node[edgelabel] at ($(tCharlie)!0.5!(tlikes2) + (-0.0,-0.0)$) {$next$};  
  \node[edgelabel] at ($(tlikes2)!0.5!(tBob2) + (-0.0,-0.0)$) {$next$};

  \node[node style] (e1) at (1.75,-1) {};
  \node[node style] (e2) at (5.5,-1) {};
  \node[node style] (e3) at (8.5,-1) {};
  \node[node style] (e4) at (11.5,-1) {};    
  \node[nodelabel, left=4pt] at (e1) {$em_1$};
  \node[nodelabel, left=4pt] at (e2) {$em_2$};
  \node[nodelabel, left=4pt] at (e3) {$em_3$};
  \node[nodelabel, right=4pt] at (e4) {$em_4$};
    
  \draw[->, thick] (tAlice) -- (e1);
  \draw[->, thick] (tSmith) -- (e1);
  \draw[->, thick] (tBob1) -- (e2);
  \draw[->, thick] (tCharlie) -- (e3);
  \draw[->, thick] (tBob2) -- (e4);
  
  \node[edgelabel] at ($(tAlice)!0.5!(e1) + (-0.0,-0.0)$) {$pom$};
  \node[edgelabel] at ($(tSmith)!0.5!(e1) + (-0.0,-0.0)$) {$pom$};    
  \node[edgelabel] at ($(tBob1)!0.5!(e2) + (-0.0,-0.0)$) {$pom$}; 
  \node[edgelabel] at ($(tBob2)!0.5!(e4) + (-0.0,-0.0)$) {$pom$};  
  \node[edgelabel] at ($(tCharlie)!0.5!(e3) + (-0.0,-0.0)$) {$pom$};

  \node[node style] (eAlice) at (1.75,-2) {};
  \node[node style] (eBob) at (7,-3) {};
  \node[node style] (eCharlie) at (8.5,-2) {};
  \node[nodelabel, left=4pt] at (eAlice) {$Alice$};
  \node[nodelabel, right=8pt, above=7pt] at (eBob) {$Bob$};
  \node[nodelabel, left=4pt] at (eCharlie) {$Charlie$};
    
  \draw[->, thick] (e1) -- (eAlice);
  \node[edgelabel] at ($(e1)!0.5!(eAlice) + (-0.0,-0.0)$) {$moe$};
  
  \draw[->, thick] (e2) -- (eBob);
  \node[edgelabel] at ($(e2)!0.5!(eBob) + (-0.0,-0.0)$) {$moe$};
  
  \draw[->, thick] (e3) -- (eCharlie);
  \node[edgelabel] at ($(e3)!0.5!(eCharlie) + (-0.0,-0.0)$) {$moe$};
  
  \draw[->, thick] (e4) -- (eBob);
  \node[edgelabel] at ($(e4)!0.5!(eBob) + (-0.0,-0.0)$) {$moe$};

  \node[node style] (ri1) at (4,-4) {};
  \node[node style] (ri2) at (8,-4) {};
  \node[nodelabel, left=4pt] at (ri1) {$ri_1$};
  \node[nodelabel, right=4pt] at (ri2) {$ri_2$};
      
  \draw[->, thick] (eAlice) -- (ri1);
  \draw[->, thick] (eBob) -- (ri1);
  \draw[->, thick] (eCharlie) -- (ri2);
  \draw[->, thick] (eBob) -- (ri2);
  
  \node[edgelabel] at ($(eAlice)!0.5!(ri1) + (-0.0,-0.0)$) {$hori$};
  \node[edgelabel] at ($(eBob)!0.5!(ri1) + (-0.0,-0.0)$) {$tori$};
  \node[edgelabel] at ($(eCharlie)!0.5!(ri2) + (-0.0,-0.0)$) {$hori$};
  \node[edgelabel] at ($(eBob)!0.5!(ri2) + (-0.0,-0.0)$) {$tori$};
  
  \node[node style] (plikes) at (6,-5) {};
  \draw[->, thick] (ri1) -- (plikes);
  \draw[->, thick] (ri2) -- (plikes);
  \node[edgelabel] at ($(ri1)!0.6!(plikes) + (-0.0,-0.0)$) {$ior$};
  \node[edgelabel] at ($(ri2)!0.5!(plikes) + (-0.0,-0.0)$) {$ior$};

  \node[nodelabel, left=6pt] at (plikes) {$likes$};      
        
  \node[node style] (cPerson) at (2.5,-6) {};
  \node[nodelabel, left=4pt] at (cPerson) {$Person$};
  
  \draw[->, thick] (eBob) -- (cPerson);
  \draw[->, thick] (eAlice) -- (cPerson);
  \node[edgelabel] at ($(eBob)!0.5!(cPerson) + (-0.3,-0.3)$) {$isA$};
  \node[edgelabel] at ($(eAlice)!0.5!(cPerson) + (-0.0,-0.0)$) {$isA$};

  \draw[->] (eCharlie) .. controls +(8,-5) and +(0,-2) .. (cPerson) node[midway, above=0pt] {$isA$};

  \draw[->, thick] (eAlice) -- (eBob);
  \node[edgelabel] at ($(eAlice)!0.5!(eBob) + (-0.0,-0.0)$) {$likes$};

\end{scope}
\end{tikzpicture}
\caption{Example of a directed labeled multigraph (DLM) as it could occur in the context of Natural Language Processing, representing the sentence \texttt{"Alice Smith likes Bob and Charlie likes Bob"}, some annotations (e.g., which entities are mentioned and which relations are expressed), and some background knowledge (e.g., Alice is a person and it is known that Alice likes Bob).
 $pom$ stands for \textit{part of mention},
$em$ stands for \textit{entity mention},
$moe$ stands for \textit{mention of entity},
$hori$ stands for \textit{head of relation instance},
$tori$ stands for \textit{tail of relation instance},
$ri$ stands for \textit{relation instance}, and
$ior$ stands for \textit{instance of relation}.}
\label{fig:examplegraph}

\end{figure}
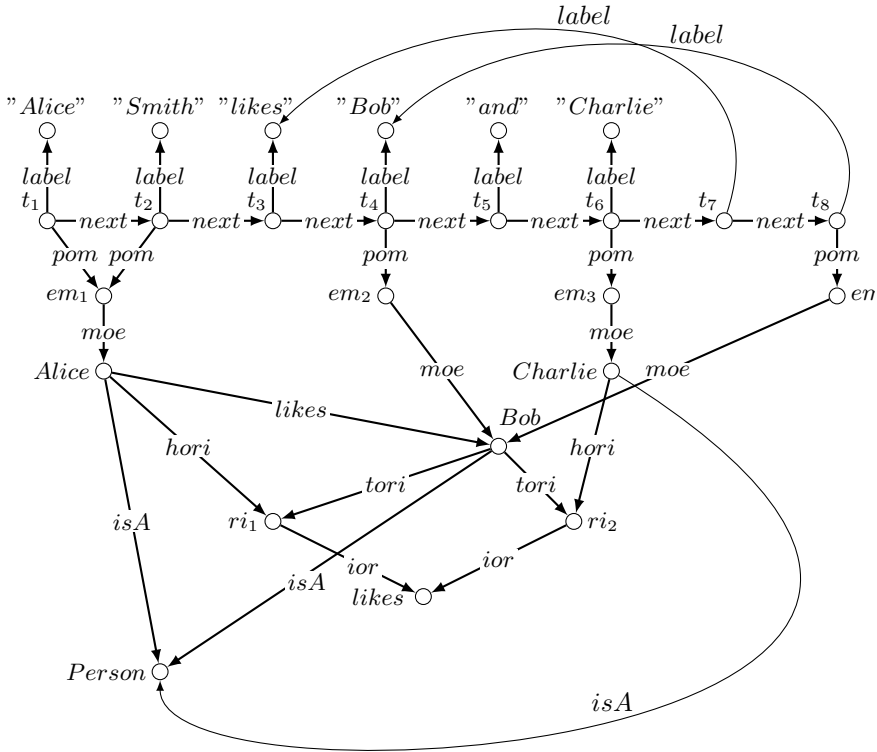

An RDF graph is a set of RDF triples of the form $(s,p,o) \in (\mathcal{I} \cup \mathcal{B}) \times \mathcal{I} \times (\mathcal{I} \cup \mathcal{B} \cup \mathcal{L})$, where $\mathcal{I}$, $\mathcal{B}$ and $\mathcal{L}$ are infinite pairwise disjoint sets of IRIs, blank nodes, and literal values, respectively. 
We can represent the structure of an RDF graph $g$ in the form of an DLM when we ignore the distinction between $\mathcal{I}$, $\mathcal{B}$, and $\mathcal{L}$ by letting the set $\mathcal{T}$ of terms be $\mathcal{I} \cup \mathcal{B} \cup \mathcal{L}$. The RDF data model\footnote{See \url{https://www.w3.org/TR/rdf11-concepts/}} can be seen as a specialization of DLM. 
The concept of graph association rules is directly applicable to RDF graphs under the condition that the RDF graphs are ground RDF graphs, which means that these graphs no not contain blank nodes. 
In the presence of blank nodes one would need to go beyond subgraph checks and perform subgraph
 isomorphism\footnote{See \url{https://www.w3.org/TR/rdf12-concepts/\#graph-isomorphism}} and \textit{simple entailment}\footnote{See \url{https://www.w3.org/TR/rdf11-mt/\#simpleentailment}}
checks, since blank nodes are interpreted as existentially qualified variables.\\

Graph association rules have important limitations. 
Let's assume we would like to derive a rule from a set of graphs that are similar to the graph presented in Figure \ref{fig:examplegraph} where the rule predicts whether in a graph it is expressed that one person likes another person. 
The antecedent graph of such a (not particularly suitable) rule could contain the triples
$(t_6, next, t_7)$, $(t_7, label, "likes")$, and $(t_7, next, t_8)$.
The terms $t_6$, $t_7$, and $t_8$ are identifiers of tokens in a sentence. Thus, the relation can only be detected in a sentence that contains at least eight tokens and where the label of the 7th token is "likes". Thus, the rule would not be general. Furthermore, what the rule does not consider is that directly before and directly after the 7th token, two entities of type Person need to be mentioned, but therefore the antecedent would need to also contain terms such as $em_3$ and $em_4$, which are specific to that sentence. Better generalization can be achieved by letting antecedent and consequent of a rule be graph patterns, which we define in what follows.\\

Let $\mathcal{V} = \lbrace v_1, v_2, \ldots \rbrace$ denote an infinite set of variables. A graph pattern is a subset of $(\mathcal{T} \cup \mathcal{V}) \times (\mathcal{T} \cup \mathcal{V}) \times (\mathcal{T} \cup \mathcal{V})$. We refer to an element of a graph pattern as a triple pattern. Given a graph pattern $p$, $\mathcal{T}_p \subseteq \mathcal V$ denotes the set of terms that occur in $p$,\footnote{I.e., $\mathcal{T}_g = \bigcup_{(x_i,x_j,x_k) \in g} \left[ \lbrace x_i, x_j, x_k \rbrace \cap \mathcal{V} \right]$}
and $\mathcal{V}_p \subseteq \mathcal V$ denotes the set of variables that occur in $p$.\footnote{I.e., $\mathcal{V}_p = \bigcup_{(x_i,x_j,x_k) \in g} \left[ \lbrace x_i, x_j, x_k \rbrace \cap \mathcal{T} \right]$}

The result of matching a graph pattern $p$ against a graph $g$ is a set $\Omega_{p,g}$ of mapping functions from $\mathcal{V}_p \cup \mathcal{T}_p$ to $\mathcal{T}_g$ where each mapping function has to satisfy a set of conditions. Which conditions need to be satisfied depends on the choice of an evaluation semantics.\footnote{See Angles et al. \cite{angles2017foundations} for an overview of graph pattern evaluation semantics.}
In the most general case, under a graph pattern evaluation semantics known as \textit{homomorphism semantics (hom)}, the set of evaluation results, $\Omega^{hom}_{p,g}$, is characterized as follows: 
\begin{equation*}
\Omega^{hom}_{p,g} = \Big\lbrace ~ \mu \in \lbrace \, \mathcal{V}_p \cup \mathcal{T}_p \to \mathcal{T}_g \, \rbrace \mid \forall t \in \mathcal{T}_p : \mu(t) = t \land
\mu(p) \subseteq g ~
\Big\rbrace\end{equation*}

Here, $\lbrace \, \mathcal{V}_p \cup \mathcal{T}_p \to \mathcal{T}_g \, \rbrace$ denotes the set of total functions with domain $\mathcal{V}_p \cup \mathcal{T}_p$ and co-domain $\mathcal{T}_g$.
Furthermore, $\mu(t)$ denotes the image of $t$ under $\mu$ and $\mu(p)$ denotes the result of applying the mapping function $\mu$ to every element of every triple pattern in the graph pattern $p$, i.e., each triple pattern $(x_i,x_j,x_k)$ is replaced by $(\mu(x_i), \mu(x_j), \mu(x_k))$. 
We call $\mu(p)$ the embedding corresponding to the pattern $p$ and the mapping $\mu$.
Homomorphism semantics requires for a mapping function $\mu$ in the context of a graph pattern $p$ and a graph $g$ that i) $\mu$ maps each term in the pattern to itself and that ii) the embedding $\mu(p)$ is a subgraph of $g$. 

A drawback of homomorphism semantics is that some topologically different graph structures cannot be distinguished. For example, the graph pattern $\lbrace (v_1,v_2,v_3) \rbrace$ matches the graph $g_1 = \lbrace (t_1, t_2, t_3) \rbrace$ (which does not contain a self-loop) via $\mu_1 = \lbrace (v_1, t_1), (v_2, t_2), (v_3, t_3)\rbrace$ and it also matches the graph $g_2 = \lbrace (t_1, t_2, t_1) \rbrace$ (which contains a self-loop) via $\mu_2 = \lbrace (v_1, t_1), (v_2, t_2), (v_3, t_1) \rbrace$. 

Furthermore, under homomorphism semantics it is possible that the size of the embedding is smaller than the size of the pattern. For example, let $g$ be $\lbrace (t_1,t_2,t_3) \rbrace$ and let $p$ be $\lbrace (v_1,t_2,v_2),$ $(v_3,t_2,v_2)\rbrace$. The pattern will match the graph with $\mu = \lbrace (v_1,t_1), (v_2,t_3), (v_3,t_1)\rbrace$. The embedding $\mu(p)$ is $\lbrace (t_1,t_2,t_3)\rbrace$. Although the pattern consists of two triple patterns, the embedding consists of one triple.

To ensure that the structure of each embedding corresponds to the structure of the graph pattern, we can use a specific isomorphism-based graph pattern evaluation semantics known as \textit{no-repeated-anything semantics (nra)}, which is defined as follows: 
\begin{equation*}
\Omega^{nra}_{p,g} = \Big\lbrace ~\mu \in \Omega^{hom}_{p,g} \mid \forall (x_i,t_i),(x_j,t_j) \in \mu : t_i = t_j \Rightarrow x_i = x_j ~\Big\rbrace\end{equation*}

In no-repeated-anything semantics, mapping functions need to be injections. Thus, i) no two variables can be bound to the same term, ii) a variable in a pattern cannot be bound to a term that occurs in the pattern, and iii) a pattern that consists of $n$ triple patterns can only match a subgraph that consists of $n$ triples. %

To better understand when it can be beneficial to use \textit{nra} semantics instead of \textit{hom} semantics we look at a task from the field of Sociology of Education. We might be interested to find out what difference it makes on the education level of a person whether the person resides in the birth country of the parents.
Figure \ref{fig:graphsandpatterns} shows five situations represented as graphs. The edge labels \textit{hF}, \textit{hM}, \textit{cob}, and \textit{cor} stand for \textit{has father}, \textit{has mother}, \textit{country of birth}, and \textit{country or residence}, respectively. For example, in the graph $g_1$ the person ($t_1$) has a father ($t_2$) that was born in country $t_4$,
has a mother ($t_3$) that was born in country $t_6$, and resides in country $t_5$. Each of the five graphs $g_1,\ldots, g_5$ is abstracted into a graph pattern by replacing terms in node positions by variables, resulting in the patterns $p_1,\ldots, p_5$.

Table \ref{tab:homnra} provides an overview of which of these patterns matches which of these graphs under \textit{hom} semantics (a) and under \textit{nra} semantics (b). For example, under \textit{hom} semantics $p_2$ matches $g_2$ and $g_5$,
under \textit{nra} semantics $p_5$ matches only $g_5$.
Under \textit{hom} semantics most of the patterns match more than one graph, thus, they cannot distinguish between different situations. For example, under \textit{hom} semantics the pattern $p_2$ matches the graph $g_2$ which expresses that the person resides in the father's but not in the mother's birth country, and it matches the graph $g_5$ which expresses that the person's country of residence coincides with both parents' country of birth. In contrast, under \textit{nra} semantics each pattern matches exactly one of the graphs. In case that these distinctions are actually relevant for accurately predicting the educational level obtained by a person, then \textit{hom} semantics would be insufficient, as it cannot make relevant distinctions -- metaphorically speaking, in case that in this situation these distinctions are relevant, then a model based on the distinctions that can be made under \textit{hom} semantics does not carve nature at the joints. Instead, a model based on the distinctions that can be made under \textit{nra} semantics does carve nature at the joints.
In the case that the patterns that can be distinguished under \textit{nra} semantics are more specific than necessary, then 
there exists not a single pattern from which a situation of interest can be predicted, but a set of patterns. Then, a model can form a disjunction of these patterns and predict the situation of interest if at least one of these patterns matches.

\begin{figure}
\begin{tikzpicture}[
    >=latex,
    node style/.style={circle, draw, minimum size=6pt, inner sep=0pt, fill=white},
    nodelabel/.style={font=\small, inner sep=1pt},
    edgelabel/.style={font=\small, inner sep=1pt, fill=white}
  ]
  \node[node style] (g1t1) at (1,0) {};
  \node[node style] (g1t2) at (0,1) {};
  \node[node style] (g1t3) at (2,1) {};
  \node[node style] (g1t4) at (0,2) {};
  \node[node style] (g1t5) at (1,2) {};
  \node[node style] (g1t6) at (2,2) {};
  \node[nodelabel, below=4pt] at (g1t1) {$t_1$};
  \node[nodelabel, left=4pt] at (g1t2) {$t_2$};
  \node[nodelabel, left=4pt] at (g1t3) {$t_3$};
  \node[nodelabel, above=4pt] at (g1t4) {$t_4$};
  \node[nodelabel, above=4pt] at (g1t5) {$t_5$};
  \node[nodelabel, above=4pt] at (g1t6) {$t_6$};
  \draw[->, thick] (g1t1) -- (g1t2);
  \draw[->, thick] (g1t1) -- (g1t3);
  \draw[->, thick] (g1t2) -- (g1t4);
  \draw[->, thick] (g1t3) -- (g1t6);  
  \draw[->, thick] (g1t1) -- (g1t5);  
  \node[edgelabel] at ($(g1t1)!0.5!(g1t2) + (-0.3,-0.2)$) {$hF$};
  \node[edgelabel] at ($(g1t1)!0.5!(g1t3) + (0.3,-0.2)$) {$hM$};
  \node[edgelabel] at ($(g1t2)!0.5!(g1t4) + (-0.3,0.2)$) {$cob$};
  \node[edgelabel] at ($(g1t3)!0.5!(g1t6) + (-0.3,0.2)$) {$cob$};
  \node[edgelabel] at ($(g1t1)!0.5!(g1t5) + (0,0)$) {$cor$};

  \node[node style] (g2t1) at (4,0) {};
  \node[node style] (g2t2) at (3,1) {};
  \node[node style] (g2t3) at (5,1) {};
  \node[node style] (g2t4) at (4,2) {};
  \node[node style] (g2t5) at (5,2) {};
  \node[nodelabel, below=4pt] at (g2t1) {$t_1$};
  \node[nodelabel, left=4pt] at (g2t2) {$t_2$};
  \node[nodelabel, left=4pt] at (g2t3) {$t_3$};
  \node[nodelabel, above=4pt] at (g2t4) {$t_4$};
  \node[nodelabel, above=4pt] at (g2t5) {$t_5$};
  \draw[->, thick] (g2t1) -- (g2t2);
  \draw[->, thick] (g2t1) -- (g2t3);
  \draw[->, thick] (g2t2) -- (g2t4);
  \draw[->, thick] (g2t1) -- (g2t4);
  \draw[->, thick] (g2t3) -- (g2t5); 
  \node[edgelabel] at ($(g2t1)!0.5!(g2t2) + (-0.3,-0.2)$) {$hF$};
  \node[edgelabel] at ($(g2t1)!0.5!(g2t3) + (0.3,-0.2)$) {$hM$};  
  \node[edgelabel] at ($(g2t2)!0.5!(g2t4) + (-0.2,0.2)$) {$cob$};
  \node[edgelabel] at ($(g2t3)!0.5!(g2t5) + (-0.3,0.2)$) {$cob$};
  \node[edgelabel] at ($(g2t1)!0.5!(g2t4) + (0,0)$) {$cor$};
    
  \node[node style] (g3t1) at (7,0) {};
  \node[node style] (g3t2) at (6,1) {};
  \node[node style] (g3t3) at (8,1) {};
  \node[node style] (g3t4) at (6,2) {};
  \node[node style] (g3t5) at (7,2) {};
  \node[nodelabel, below=4pt] at (g3t1) {$t_1$};
  \node[nodelabel, left=4pt] at (g3t2) {$t_2$};
  \node[nodelabel, left=4pt] at (g3t3) {$t_3$};
  \node[nodelabel, above=4pt] at (g3t4) {$t_4$};
  \node[nodelabel, above=4pt] at (g3t5) {$t_5$};
  \draw[->, thick] (g3t1) -- (g3t2);
  \draw[->, thick] (g3t1) -- (g3t3);
  \draw[->, thick] (g3t2) -- (g3t4);
  \draw[->, thick] (g3t1) -- (g3t5);
  \draw[->, thick] (g3t3) -- (g3t5);
  \node[edgelabel] at ($(g3t1)!0.5!(g3t2) + (-0.3,-0.2)$) {$hF$};
  \node[edgelabel] at ($(g3t1)!0.5!(g3t3) + (0.3,-0.2)$) {$hM$};
  \node[edgelabel] at ($(g3t2)!0.5!(g3t4) + (-0.3,0.2)$) {$cob$};
  \node[edgelabel] at ($(g3t3)!0.5!(g3t5) + (0.2,0.2)$) {$cob$};
  \node[edgelabel] at ($(g3t1)!0.5!(g3t5) + (0,0)$) {$cor$};
        
  \node[node style] (g4t1) at (10,0) {};
  \node[node style] (g4t2) at (9,1) {};
  \node[node style] (g4t3) at (11,1) {};
  \node[node style] (g4t4) at (10,1) {};
  \node[node style] (g4t5) at (10,2) {};
  \node[nodelabel, below=4pt] at (g4t1) {$t_1$};
  \node[nodelabel, left=4pt] at (g4t2) {$t_2$};
  \node[nodelabel, left=4pt] at (g4t3) {$t_3$};
  \node[nodelabel, above=4pt] at (g4t4) {$t_4$};
  \node[nodelabel, above=4pt] at (g4t5) {$t_5$};
  \draw[->, thick] (g4t1) -- (g4t2);
  \draw[->, thick] (g4t1) -- (g4t3);
  \draw[->, thick] (g4t1) -- (g4t4);
  \draw[->, thick] (g4t2) -- (g4t5);
  \draw[->, thick] (g4t3) -- (g4t5);
  \node[edgelabel] at ($(g4t1)!0.5!(g4t2) + (-0.3,-0.2)$) {$hF$};
  \node[edgelabel] at ($(g4t1)!0.5!(g4t3) + (0.3,-0.2)$) {$hM$};
  \node[edgelabel] at ($(g4t2)!0.5!(g4t5) + (-0.2,0.2)$) {$cob$};
  \node[edgelabel] at ($(g4t3)!0.5!(g4t5) + (0.2,0.2)$) {$cob$};
  \node[edgelabel] at ($(g4t1)!0.5!(g4t4) + (0,0)$) {$cor$};
          
  \node[node style] (g5t1) at (13,0) {};
  \node[node style] (g5t2) at (12,1) {};
  \node[node style] (g5t3) at (14,1) {};
  \node[node style] (g5t4) at (13,2) {};
  \node[nodelabel, below=4pt] at (g5t1) {$t_1$};
  \node[nodelabel, left=4pt] at (g5t2) {$t_2$};
  \node[nodelabel, left=4pt] at (g5t3) {$t_3$};
  \node[nodelabel, above=4pt] at (g5t4) {$t_4$};
  \draw[->, thick] (g5t1) -- (g5t2);
  \draw[->, thick] (g5t1) -- (g5t3);
  \draw[->, thick] (g5t1) -- (g5t4);
  \draw[->, thick] (g5t2) -- (g5t4);
  \draw[->, thick] (g5t3) -- (g5t4);
  \node[edgelabel] at ($(g5t1)!0.5!(g5t2) + (-0.3,-0.2)$) {$hF$};
  \node[edgelabel] at ($(g5t1)!0.5!(g5t3) + (0.3,-0.2)$) {$hM$};
  \node[edgelabel] at ($(g5t2)!0.5!(g5t4) + (-0.2,0.2)$) {$cob$};
  \node[edgelabel] at ($(g5t3)!0.5!(g5t4) + (0.2,0.2)$) {$cob$};
  \node[edgelabel] at ($(g5t1)!0.5!(g5t4) + (0,0)$) {$cor$};   

  \node[nodelabel, below=18pt] at (g1t1) {$g_1$};
  \node[nodelabel, below=18pt] at (g2t1) {$g_2$};
  \node[nodelabel, below=18pt] at (g3t1) {$g_3$};
  \node[nodelabel, below=18pt] at (g4t1) {$g_4$};
  \node[nodelabel, below=18pt] at (g5t1) {$g_5$};

  \node[node style] (p1t1) at (1,-3.7) {};
  \node[node style] (p1t2) at (0,-2.7) {};
  \node[node style] (p1t3) at (2,-2.7) {};
  \node[node style] (p1t4) at (0,-1.7) {};
  \node[node style] (p1t5) at (1,-1.7) {};
  \node[node style] (p1t6) at (2,-1.7) {};
  \node[nodelabel, below=4pt] at (p1t1) {$v_1$};
  \node[nodelabel, left=4pt] at (p1t2) {$v_2$};
  \node[nodelabel, left=4pt] at (p1t3) {$v_3$};
  \node[nodelabel, above=4pt] at (p1t4) {$v_4$};
  \node[nodelabel, above=4pt] at (p1t5) {$v_5$};
  \node[nodelabel, above=4pt] at (p1t6) {$v_6$};
  \draw[->, thick] (p1t1) -- (p1t2);
  \draw[->, thick] (p1t1) -- (p1t3);
  \draw[->, thick] (p1t2) -- (p1t4);
  \draw[->, thick] (p1t3) -- (p1t6);  
  \draw[->, thick] (p1t1) -- (p1t5);  
  \node[edgelabel] at ($(p1t1)!0.5!(p1t2) + (-0.3,-0.2)$) {$hF$};
  \node[edgelabel] at ($(p1t1)!0.5!(p1t3) + (0.3,-0.2)$) {$hM$};
  \node[edgelabel] at ($(p1t2)!0.5!(p1t4) + (-0.3,0.2)$) {$cob$};
  \node[edgelabel] at ($(p1t3)!0.5!(p1t6) + (-0.3,0.2)$) {$cob$};
  \node[edgelabel] at ($(p1t1)!0.5!(p1t5) + (0,0)$) {$cor$};  

  \node[node style] (p2t1) at (4,-3.7) {};
  \node[node style] (p2t2) at (3,-2.7) {};
  \node[node style] (p2t3) at (5,-2.7) {};
  \node[node style] (p2t4) at (4,-1.7) {};
  \node[node style] (p2t5) at (5,-1.7) {};
  \node[nodelabel, below=4pt] at (p2t1) {$v_1$};
  \node[nodelabel, left=4pt] at (p2t2) {$v_2$};
  \node[nodelabel, left=4pt] at (p2t3) {$v_3$};
  \node[nodelabel, above=4pt] at (p2t4) {$v_4$};
  \node[nodelabel, above=4pt] at (p2t5) {$v_5$};
  \draw[->, thick] (p2t1) -- (p2t2);
  \draw[->, thick] (p2t1) -- (p2t3);
  \draw[->, thick] (p2t2) -- (p2t4);
  \draw[->, thick] (p2t1) -- (p2t4);
  \draw[->, thick] (p2t3) -- (p2t5); 
  \node[edgelabel] at ($(p2t1)!0.5!(p2t2) + (-0.3,-0.2)$) {$hF$};
  \node[edgelabel] at ($(p2t1)!0.5!(p2t3) + (0.3,-0.2)$) {$hM$};  
  \node[edgelabel] at ($(p2t2)!0.5!(p2t4) + (-0.2,0.2)$) {$cob$};
  \node[edgelabel] at ($(p2t3)!0.5!(p2t5) + (-0.3,0.2)$) {$cob$};
  \node[edgelabel] at ($(p2t1)!0.5!(p2t4) + (0,0)$) {$cor$};

  \node[node style] (p3t1) at (7,-3.7) {};
  \node[node style] (p3t2) at (6,-2.7) {};
  \node[node style] (p3t3) at (8,-2.7) {};
  \node[node style] (p3t4) at (6,-1.7) {};
  \node[node style] (p3t5) at (7,-1.7) {};
  \node[nodelabel, below=4pt] at (p3t1) {$v_1$};
  \node[nodelabel, left=4pt] at (p3t2) {$v_2$};
  \node[nodelabel, left=4pt] at (p3t3) {$v_3$};
  \node[nodelabel, above=4pt] at (p3t4) {$v_4$};
  \node[nodelabel, above=4pt] at (p3t5) {$v_5$};
  \draw[->, thick] (p3t1) -- (p3t2);
  \draw[->, thick] (p3t1) -- (p3t3);
  \draw[->, thick] (p3t2) -- (p3t4);
  \draw[->, thick] (p3t1) -- (p3t5);
  \draw[->, thick] (p3t3) -- (p3t5);
  \node[edgelabel] at ($(p3t1)!0.5!(p3t2) + (-0.3,-0.2)$) {$hF$};
  \node[edgelabel] at ($(p3t1)!0.5!(p3t3) + (0.3,-0.2)$) {$hM$};
  \node[edgelabel] at ($(p3t2)!0.5!(p3t4) + (-0.3,0.2)$) {$cob$};
  \node[edgelabel] at ($(p3t3)!0.5!(p3t5) + (0.2,0.2)$) {$cob$};
  \node[edgelabel] at ($(p3t1)!0.5!(p3t5) + (0,0)$) {$cor$};

  \node[node style] (p4t1) at (10,-3.7) {};
  \node[node style] (p4t2) at (9,-2.7) {};
  \node[node style] (p4t3) at (11,-2.7) {};
  \node[node style] (p4t4) at (10,-2.7) {};
  \node[node style] (p4t5) at (10,-1.7) {};
  \node[nodelabel, below=4pt] at (p4t1) {$v_1$};
  \node[nodelabel, left=4pt] at (p4t2) {$v_2$};
  \node[nodelabel, left=4pt] at (p4t3) {$v_3$};
  \node[nodelabel, above=4pt] at (p4t4) {$v_4$};
  \node[nodelabel, above=4pt] at (p4t5) {$v_5$};
  \draw[->, thick] (p4t1) -- (p4t2);
  \draw[->, thick] (p4t1) -- (p4t3);
  \draw[->, thick] (p4t1) -- (p4t4);
  \draw[->, thick] (p4t2) -- (p4t5);
  \draw[->, thick] (p4t3) -- (p4t5);
  \node[edgelabel] at ($(p4t1)!0.5!(p4t2) + (-0.3,-0.2)$) {$hF$};
  \node[edgelabel] at ($(p4t1)!0.5!(p4t3) + (0.3,-0.2)$) {$hM$};
  \node[edgelabel] at ($(p4t2)!0.5!(p4t5) + (-0.2,0.2)$) {$cob$};
  \node[edgelabel] at ($(p4t3)!0.5!(p4t5) + (0.2,0.2)$) {$cob$};
  \node[edgelabel] at ($(p4t1)!0.5!(p4t4) + (0,0)$) {$cor$};
  
  \node[node style] (p5t1) at (13,-3.7) {};
  \node[node style] (p5t2) at (12,-2.7) {};
  \node[node style] (p5t3) at (14,-2.7) {};
  \node[node style] (p5t4) at (13,-1.7) {};
  \node[nodelabel, below=4pt] at (p5t1) {$v_1$};
  \node[nodelabel, left=4pt] at (p5t2) {$v_2$};
  \node[nodelabel, left=4pt] at (p5t3) {$v_3$};
  \node[nodelabel, above=4pt] at (p5t4) {$v_4$};
  \draw[->, thick] (p5t1) -- (p5t2);
  \draw[->, thick] (p5t1) -- (p5t3);
  \draw[->, thick] (p5t1) -- (p5t4);
  \draw[->, thick] (p5t2) -- (p5t4);
  \draw[->, thick] (p5t3) -- (p5t4);
  \node[edgelabel] at ($(p5t1)!0.5!(p5t2) + (-0.3,-0.2)$) {$hF$};
  \node[edgelabel] at ($(p5t1)!0.5!(p5t3) + (0.3,-0.2)$) {$hM$};
  \node[edgelabel] at ($(p5t2)!0.5!(p5t4) + (-0.2,0.2)$) {$cob$};
  \node[edgelabel] at ($(p5t3)!0.5!(p5t4) + (0.2,0.2)$) {$cob$};
  \node[edgelabel] at ($(p5t1)!0.5!(p5t4) + (0,0)$) {$cor$};   

  \node[nodelabel, below=18pt] at (p1t1) {$p_1$};
  \node[nodelabel, below=18pt] at (p2t1) {$p_2$};
  \node[nodelabel, below=18pt] at (p3t1) {$p_3$};
  \node[nodelabel, below=18pt] at (p4t1) {$p_4$};
  \node[nodelabel, below=18pt] at (p5t1) {$p_5$};
\end{tikzpicture}
\caption{The graphs $g_1,\ldots,g_5$ show different situations of a person ($t_1$), their country of residence, and their father's and mother's country of birth. The edge labels \textit{hF}, \textit{hM}, \textit{cob}, and \textit{cor} stand for \textit{has father}, \textit{has mother}, \textit{country of birth}, and \textit{country or residence}, respectively. Each of these graphs is abstracted into a graph pattern by replacing the node terms by variables, resulting in the patterns $p_1,\ldots,p_5$.}
\label{fig:graphsandpatterns}
\end{figure}
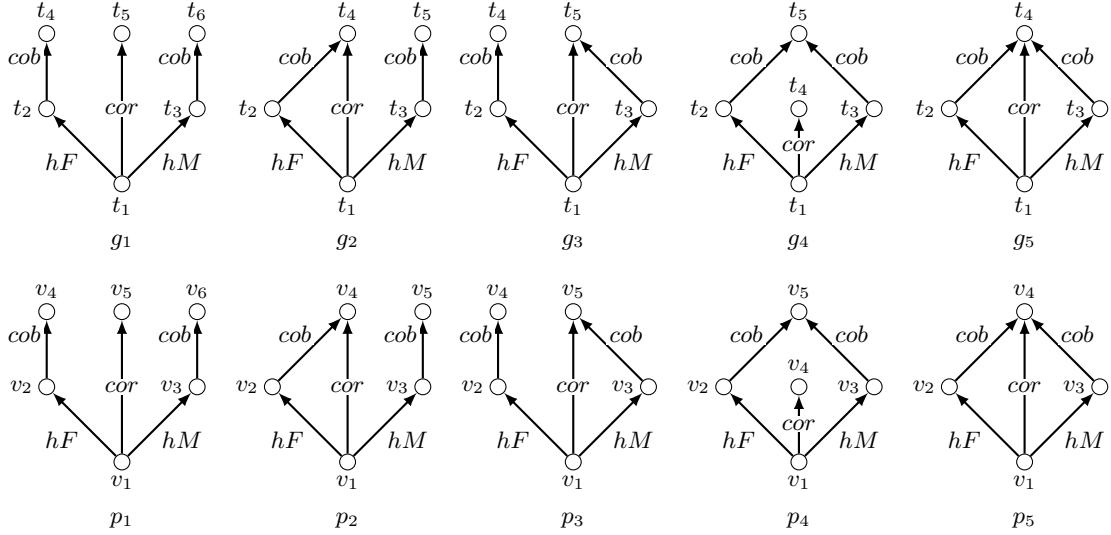

{
\setlength{\tabcolsep}{1pt}

\begin{table}
\caption{Overview of which of the patterns $p_1,\ldots,p_5$ matches which of the graphs $g_1,\ldots,g_5$ under \textit{hom} semantics (a) and under \textit{nra} semantics (b).}
\centering
\begin{subtable}{0.5\textwidth}
\centering
\captionsetup{justification=centering}
\caption{Matches under \textit{hom} semantics}
\begin{tabular}{cccccc}
 & \pgone $g_1$ & \pgtwo $g_2$ & \pgthree $g_3$ & \pgfour $g_4$ & \pgfive $g_5$\\
\pgone $p_1$ 	& \cmark & \cmark & \cmark & \cmark & \cmark \\
\pgtwo $p_2$ 	& \xmark & \cmark & \xmark & \xmark & \cmark \\
\pgthree $p_3$ 	& \xmark & \xmark & \cmark & \xmark & \cmark \\
\pgfour $p_4$ 	& \xmark & \xmark & \xmark & \cmark & \cmark \\
\pgfive $p_5$ 	& \xmark & \xmark & \xmark & \xmark & \cmark \\
\end{tabular}
\end{subtable}\hfill
\begin{subtable}{0.5\textwidth}
\centering
\captionsetup{justification=centering}
\caption{Matches under \textit{nra} semantics}
\begin{tabular}{cccccc}
 & \pgone $g_1$ & \pgtwo $g_2$ & \pgthree $g_3$ & \pgfour $g_4$ & \pgfive $g_5$\\
\pgone $p_1$ 	& \cmark & \xmark & \xmark & \xmark & \xmark \\
\pgtwo $p_2$ 	& \xmark & \cmark & \xmark & \xmark & \xmark \\
\pgthree $p_3$ 	& \xmark & \xmark & \cmark & \xmark & \xmark \\
\pgfour $p_4$ 	& \xmark & \xmark & \xmark & \cmark & \xmark \\
\pgfive $p_5$ 	& \xmark & \xmark & \xmark & \xmark & \cmark \\
\end{tabular}
\end{subtable}
\label{tab:homnra}
\end{table}
}
\newpage

In no-repeated-anything semantics, the shape of a pattern $p$ is identical with the shape of each of its embeddings $\mu(p)$. Formally, given a pattern $p$ and one of its embeddings $\mu(p)$:
\begin{align*}
\forall t \in \mathcal{T}_p : &\\
 & |\lbrace \, (s,p,o) \in p \mid s=t \,\rbrace| = |\lbrace \, (s,p,o) \in \mu(p) \mid s=t\,\rbrace|\\
 & ~~~~\text{(i.e., same out-degree for term nodes in pattern and embedding)}\\
 & \land |\lbrace \, (s,p,o) \in p \mid o=t \,\rbrace| = |\lbrace \, (s,p,o) \in \mu(p) \mid o=t\,\rbrace|\\
 & ~~~~\text{(i.e., same in-degree for term nodes in pattern and embedding)}\\
\land \, \forall v \in \mathcal{V}_p : &\\ 
 & |\lbrace \, (s,p,o) \in p \mid s=v \,\rbrace| = |\lbrace \, (s,p,o) \in \mu(p) \mid s=\mu(v)\,\rbrace|\\
 & ~~~~\text{(i.e., same out-deg. for var. nodes in pattern and corresp. term node in emb.)}\\
 & \land |\lbrace \, (s,p,o) \in p \mid o=v \,\rbrace| = |\lbrace \, (s,p,o) \in \mu(p) \mid o=\mu(v)\,\rbrace|\\
 & ~~~~\text{(i.e., same in-deg. for var. nodes in pattern and corresp. term node in emb.)}
\end{align*}

Thus, if a term occurs in subject position of $n$ triple patterns in the pattern (i.e., the out-degree of that node in the pattern is $n$), then the term occurs in subject position of $n$ triples in the embedding (i.e., the out-degree of that node in the embedding is $n$). The same holds for the in-degree. Furthermore,
if a variable $v$ occurs in subject position of $n$ triple patterns in the pattern (i.e., the out-degree of that node in the pattern is $n$), then the corresponding term $\mu(v)$ occurs in subject position of $n$ triples in the embedding (i.e., the out-degree of that node in the embedding is $n$). The same holds for the in-degree.\\

Note that for practical purposes it is sufficient for a mapping function to only map variables to terms.
Thus, it is sufficient that $\mu \in \lbrace \mathcal{V}_{p} \mapsto \mathcal{T}_g \rbrace$ instead of that $\mu \in \lbrace \mathcal{V}_p \cup \mathcal{T}_p \to \mathcal{T}_g \rbrace$. Because each mapping function $\mu \in \Omega^{nra}_{p,g}$ maps each term $t \in \mathcal{T}_p$ to itself, these term-to-term mappings would otherwise be stored redundantly for each mapping function, consuming memory space. Furthermore, these term-to-term mappings can be derived from $\mathcal{T}_p$. Therefore, in practice for a mapping function only the variable-to-term mappings are stored.

\section{Graph Pattern-based Association Rules}\label{sec:gpar}

\subsection{Preliminary Considerations}

In the ISAR context, an itemset is either subset of another itemset or it is not. When we move on to rules based on graph patterns, we face the situation that a graph pattern can match a graph multiple times. We could ignore the actual number of matches and only care about whether there is a match or whether there is no match. Thus, the condition we would be dealing with is $\Omega^{nra}_{p,g} \neq \varnothing$. Then, we could interpret a rule of the form $p_1 \Rightarrow p_2$, given a graph $g$, as follows:
\[ \Omega^{nra}_{p_1,g} \neq \varnothing \vdash_p \Omega^{nra}_{p_2,g} \neq \varnothing\]
Thus, under the condition that the graph $g$ is matched by the pattern $p_1$, we probabilistically entail that the graph $g$ is also matched by the pattern $p_2$. 
However, this concept has a limitation. 
We'd miss the opportunity to specify constraints on how a match of $p_2$ needs to correspond to a match of $p_1$. For example, we'd like to express that for each match $\mu_1$ of $p_1$ on $g$ (i.e., $\mu_1 \in \Omega^{nra}_{p_1,g}$) there exists a match $\mu_2$ of $p_2$ on $g$ (i.e., $\mu_2 \in \Omega^{nra}_{p_2,g}$) in the \textit{same location} (which means that the embeddings $\mu_1(p_1)$ and $\mu_2(p_2)$ overlap).%

To transcend this limitation we can define constraints such as that in order for a pair of matches of $p_1$ and $p_2$ to correspond it is necessary that the term bound to a given variable in the match $\mu_1$ of $p_1$ is equal to the term bound to a given variable in the match $\mu_2$ of $p_2$. Thereby, we go beyond expressing mere correlation between patterns towards describing logical implications.

\subsection{Formalization and Interpretation}

Let $p_1$ and $p_2$ be graph patterns 
with $\mathcal{V}_{p_1} \neq \varnothing$ and $\mathcal{V}_{p_2} \neq \varnothing$ and let 
$V_1 = (v_{1,1}, \ldots, v_{1,n})$ and $V_2 = (v_{2,1}, \ldots, v_{2,n})$ be non-empty non-repetitive sequences of equal length of variables from $\mathcal{V}_{p_1}$ and $\mathcal{V}_{p_2}$, respectively. We refer to members of these two list as joining variables, and we call $(p_1, p_2, V_1, V_2)$ a graph pattern-based association rule. 
We formalize the meaning of a graph pattern-based association rule $(p_1, p_2, V_1, V_2)$ in the context of some graph $g$ as follows:
\begin{equation*}
\Omega^{nra}_{p_1,g} \neq \varnothing  \vdash_p \forall \mu_1 \in \Omega^{nra}_{p_1,g} : \exists \mu_2 \in \Omega^{nra}_{p_2,g} : \mu_1(V_1) = \mu_2(V_2)
\end{equation*}
By a slight misuse of notation, $\mu(V)$ denotes the result or replacing each variable $v$ in the sequence $V$ with $\mu(v)$, resulting in a non-repetitive sequence of terms. 
The meaning of a rule is that if the antecedent pattern $p_1$ matches the graph $g$ (i.e., $\Omega^{nra}_{p_1,g} \neq \varnothing$), then we probabilistically entail that for every match $\mu_1$ of the antecedent graph pattern $p_1$ on $g$ (i.e., $\forall \mu_1 \in \Omega^{nra}_{p_1,g}$) there exists a match $\mu_2$ of the consequent graph pattern $p_2$ on $g$ such that the term sequence $\mu_1(V_1)$ is equal to the term sequence $\mu_2(V_2)$, which means that in both mappings corresponding variables bind to the same term. Thus, with $\mu_1(V_1) = \mu_2(V_2)$ we express the correspondence between the two matches $\mu_1$ and $\mu_2$ of $p_1$ and $p_2$, respectively.

It is important to note that the rule does not express that the existence of some $\mu_1 \in \Omega^{nra}_{p_1,g}$ \textit{causes} some $\mu_2 \in \Omega^{nra}_{p_2,g}$ with $\mu_1(V_1) = \mu_2(V_2)$ to be the case.

\subsection{Examples}
Figures \ref{fig:Ex1}--\ref{fig:Ex3} show examples of graph pattern-based association rules. 
In the rule shown in Figure \ref{fig:Ex1}, the variables $v_2$ and $v_3$ are shared between the patterns $p_1$ and $p_2$. The graph $g$ describes that Alice and Bob are co-authors and that Alice works at Org. The rule expresses that if two entities are co-author of each other and one of them has a \texttt{worksAt} relation to something, then the other entity also has a \texttt{worksAt} relation to that thing.  The pattern $p_1$ matches the graph $g$ exactly once, resulting in the mapping $\mu$. The rule can be applied to extend the graph with an edge that expresses that Bob works at Org. This edge is obtained by the applying the mapping $\mu$ on the pattern $p_1$ (i.e., $\mu(p_2)$). How that works is explained in Section \ref{sec:applications}.

In the rule shown in Figure \ref{fig:Ex2}, the variable $v_2$ is shared between the patterns $p_1$ and $p_2$. The graph $g$ describes that Alice knows Bob and Bob knows Alice. The rule expresses that if two entities are mutually connected via a relation, then that relation is probably symmetric. The pattern $p_1$ matches the graph $g$ exactly once, resulting in the mapping $\mu$. The rule can be applied to extend the graph with an edge that expresses that the knows relation is probably symmetric. Note that in contrast to the previous example, we here make use of a variable in predicate position. Furthermore, note that the term \texttt{knows} occurs both as a node label and as an edge label.

Furthermore, note that is important to score rules based on available evidence of their correctness and then decide per use case about the minimum level of evidence required when deciding whether to apply or discard a rule. Therefore, we introduce metrics for graph pattern-based association rules in Section \ref{sec:metrics}. Finally, applying rules learned from data realizes inductive reasoning, but it can be the case a rule that is learned from data or that is manually created is actually a deduction rule. Thus, graph pattern-based association rules can express deduction rules and can be applied for the task of logical deduction.

In the rule shown in Figure \ref{fig:Ex3}, the variable $v_1$ is shared between the patterns $p_1$ and $p_2$. The graph $g$ expresses that $m_1$ is both a molecule and a Carboxylic acid. The pattern $p_1$ expresses that something is of type Carboxylic acid.
The pattern $p_2$ describes how a Carboxylic acid is defined: a node ($v_1$) representing a molecule has \texttt{hasAtom} relations to four nodes ($v_2$--$v_5$), where $v_2$ and $v_4$ are Oxygen atoms, $v_3$ is a Carbon atom, and $v_5$ is a Hydrogen atom. $v_2$ and $v_3$ are connected via double bonds, $v_3$ and $v_4$ are connected via single bonds, and  $v_4$ and $v_5$ are connected via single bonds. Note that in contrast to the previous examples, the consequent of the rule is more complex, consisting of multiple triple patterns, whereas in the previous two examples the consequent consists only of a single triple pattern. In contrast, the consequent of itemset-based association rules usually contain exactly one item. Moreover, in those example, it was the case that $\mathcal{V}_{p_2} \subseteq \mathcal{V}_{p_1}$. Whereas the obvious applications of the rules in Figure \ref{fig:Ex1} and Figure \ref{fig:Ex2} are link prediction, the application of the rule in Figure \ref{fig:Ex3} could be plausibility analysis: any molecule that is of type Carboxylic acid should be matched by the consequent pattern or it would be regarded as anomalous if that pattern does not match. Here the usefulness of complex consequent becomes more obvious.

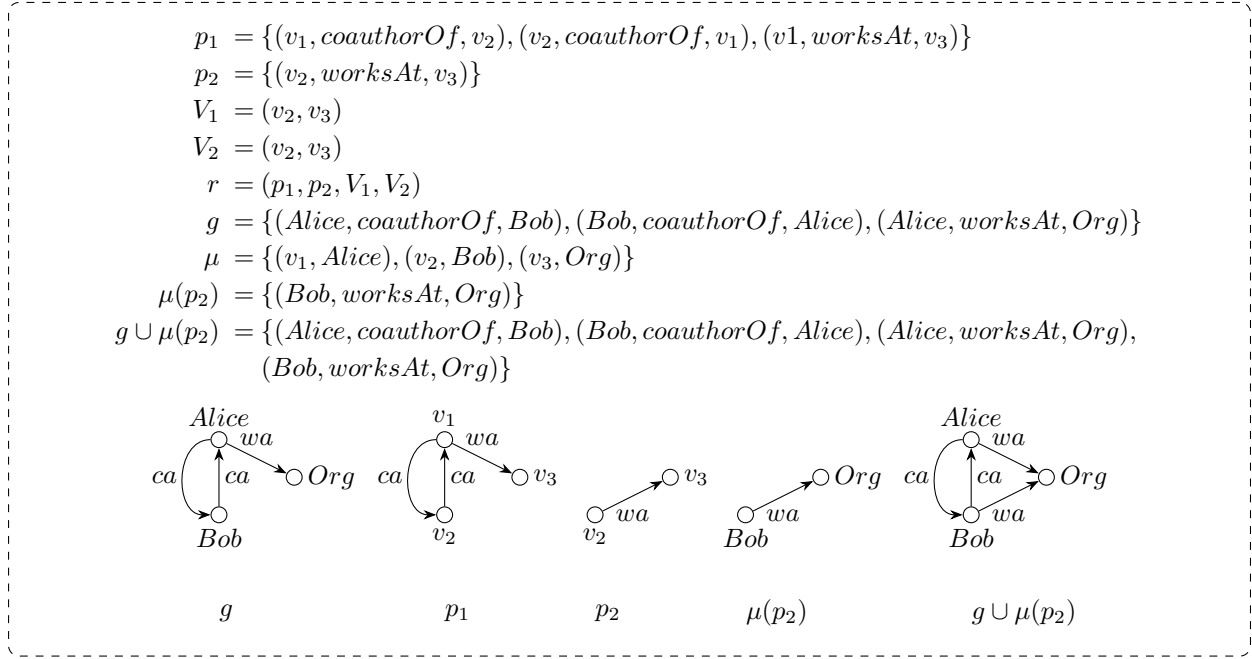
\begin{figure}
\begin{tikzpicture}
\node (content) [
    inner sep=0pt
] {
\begin{tabular}[t]{@{}c@{\hspace{0cm}}c@{}}
$\begin{aligned}
p_1 ~=~ & \lbrace (v_1, coauthorOf, v_2), (v_2, coauthorOf, v_1), (v1,worksAt,v_3) \rbrace\\

p_2 ~=~ & \lbrace (v_2, worksAt, v_3) \rbrace\\
V_1 ~=~ & (v_2,v_3)\\
V_2 ~=~ & (v_2,v_3)\\
r ~=~ & (p_1,p_2, V_1,V_2)\\
g ~=~ & \lbrace (Alice, coauthorOf, Bob), (Bob, coauthorOf, Alice), (Alice, worksAt, Org)\rbrace\\
\mu ~=~ & \lbrace (v_1, Alice), (v_2, Bob), (v_3, Org)\rbrace\\

\mu(p_2) ~=~ & \lbrace (Bob, worksAt, Org) \rbrace \\
g \cup \mu(p_2) ~=~ & \lbrace (Alice, coauthorOf, Bob), (Bob, coauthorOf, Alice), (Alice, worksAt, Org), \\
& (Bob, worksAt, Org) \rbrace\\
\end{aligned}$
&

\\[2.5cm]

\begin{tikzpicture}[
    node style/.style={circle, draw, minimum size=6pt, inner sep=0pt, fill=white},
    every path/.style={->, >=Stealth}
]

\node[node style] (g1v1) at (-0.5, 1) {};
\node[above=5pt] at (g1v1) {$Alice$};
\node[node style] (g1v2) at (-0.5, 0) {};
\node[below=5pt] at (g1v2) {$Bob$};
\draw[->] (g1v1) .. controls +(-0.6,0) and +(-0.6,0) .. (g1v2) node[midway, left=2pt] {$ca$};
\draw (g1v2) -- (g1v1)  node[midway, right=2pt ] {$ca$};
\node[node style] (g1v3) at (0.5, 0.5) {};
\node[right=5pt] at (g1v3) {$Org$};
\draw (g1v1) -- (g1v3)  node[midway, above=5pt ] {$wa$};

\node[node style] (p1v1) at (2.5, 1) {};
\node[above=5pt] at (p1v1) {$v_1$};
\node[node style] (p1v2) at (2.5, 0) {};
\node[below=5pt] at (p1v2) {$v_2$};
\draw[->] (p1v1) .. controls +(-0.6,0) and +(-0.6,0) .. (p1v2) node[midway, left=2pt] {$ca$};
\draw (p1v2) -- (p1v1)  node[midway, right=2pt ] {$ca$};
\node[node style] (p1v3) at (3.5, 0.5) {};
\node[right=5pt] at (p1v3) {$v_3$};
\draw (p1v1) -- (p1v3)  node[midway, above=5pt ] {$wa$};

\node[node style] (p2v2) at (4.5, 0) {};
\node[below=5pt] at (p2v2) {$v_2$};
\node[node style] (p2v3) at (5.5, 0.5) {};
\node[right=5pt] at (p2v3) {$v_3$};
\draw (p2v2) -- (p2v3)  node[midway, below=5pt ] {$wa$};

\node[node style] (p3v2) at (6.5, 0) {};
\node[below=5pt] at (p3v2) {$Bob$};
\node[node style] (p3v3) at (7.5, 0.5) {};
\node[right=5pt] at (p3v3) {$Org$};
\draw (p3v2) -- (p3v3)  node[midway, below=5pt ] {$wa$};

\node[node style] (g2v1) at (9.5, 1) {};
\node[above=5pt] at (g2v1) {$Alice$};
\node[node style] (g2v2) at (9.5, 0) {};
\node[below=5pt] at (g2v2) {$Bob$};
\draw[->] (g2v1) .. controls +(-0.6,0) and +(-0.6,0) .. (g2v2) node[midway, left=2pt] {$ca$};
\draw (g2v2) -- (g2v1)  node[midway, right=2pt ] {$ca$};
\node[node style] (g2v3) at (10.5, 0.5) {};
\node[right=5pt] at (g2v3) {$Org$};
\draw (g2v1) -- (g2v3)  node[midway, above=5pt ] {$wa$};
\draw (g2v2) -- (g2v3)  node[midway, below=5pt ] {$wa$};

\node[anchor=west] at (-0.5, -1.3) {$g$};
\node[anchor=west] at (2.5, -1.3) {$p_1$};
\node[anchor=west] at (4.5, -1.3) {$p_2$};
\node[anchor=west] at (6.5, -1.3) {$\mu(p_2)$};
\node[anchor=west] at (9.5, -1.3) {$g \cup \mu(p_2)$};
\end{tikzpicture}
&
\end{tabular}
};

\node[fit=(content),
minimum width=\textwidth, 
draw=black, dashed, inner sep=8pt, rounded corners=4pt] {};
\end{tikzpicture}
\caption{Example of a graph pattern-based association rule.  In the visualizations, the term \texttt{coauthorOf} is abbreviated to \texttt{ca} and the term \texttt{worksAt} is abbreviated to \texttt{wa}.}
\label{fig:Ex1}
\end{figure}

\begin{figure}
\begin{tikzpicture}
\node (content) [
    inner sep=0pt
] {
\begin{tabular}[t]{@{}c@{\hspace{0cm}}c@{}}
$\begin{aligned}
p_1 ~=~ & \lbrace (v_1, v_2, v_3), (v_3, v_2, v_1) \rbrace\\

p_2 ~=~ & \lbrace (v_3, type, ProbablySymmetric) \rbrace\\
V_1 ~=~ & (v_3)\\
V_2 ~=~ & (v_3)\\
r ~=~ & (p_1,p_2,V_1,V_2)\\
g ~=~ & \lbrace (Alice, knows, Bob), (Bob, knows, Alice)\rbrace\\
\mu ~=~ & \lbrace (v_1, Alice), (v_2, knows), (v_3, Bob)\rbrace\\

\mu(p_2) ~=~ & \lbrace (knows, type, ProbablySymmetric) \rbrace \\
g \cup \mu(p_2) ~=~ & \lbrace (Alice, knows, Bob), (Bob, knows, Alice), (knows, type, ProbablySymmetric) \rbrace\\
\end{aligned}$
&

\\[2.5cm]

\begin{tikzpicture}[
    node style/.style={circle, draw, minimum size=6pt, inner sep=0pt, fill=white},
    every path/.style={->, >=Stealth}
]

\node[node style] (g1v1) at (-0.5, 1) {};
\node[above=5pt] at (g1v1) {$Alice$};
\node[node style] (g1v2) at (-0.5, 0) {};
\node[below=5pt] at (g1v2) {$Bob$};
\draw[->] (g1v1) .. controls +(-0.6,0) and +(-0.6,0) .. (g1v2) node[midway, left=2pt] {$kn$};
\draw[->] (g1v2) .. controls +(0.6,0) and +(0.6,0) .. (g1v1) node[midway, right=2pt] {$kn$};

\node[node style] (p1v1) at (2, 1) {};
\node[above=5pt] at (p1v1) {$v_1$};
\node[node style] (p1v2) at (2, 0) {};
\node[below=5pt] at (p1v2) {$v_2$};
\draw[->] (p1v1) .. controls +(-0.6,0) and +(-0.6,0) .. (p1v2) node[midway, left=2pt] {$v_3$};
\draw[->] (p1v2) .. controls +(0.6,0) and +(0.6,0) .. (p1v1) node[midway, right=2pt] {$v_3$};

\node[node style] (p2v2) at (4, 1) {};
\node[above=5pt] at (p2v2) {$v_3$};
\node[node style] (p2v3) at (4, 0) {};
\node[below=5pt] at (p2v3) {$PS$};
\draw (p2v2) -- (p2v3)  node[midway, right=5pt ] {$type$};

\node[node style] (p2v2) at (6, 1) {};
\node[above=5pt] at (p2v2) {$kn$};
\node[node style] (p2v3) at (6, 0) {};
\node[below=5pt] at (p2v3) {$PS$};
\draw (p2v2) -- (p2v3)  node[midway, right=5pt ] {$type$};

\node[node style] (g2v1) at (8.5, 1) {};
\node[above=5pt] at (g2v1) {$Alice$};
\node[node style] (g2v2) at (8.5, 0) {};
\node[below=5pt] at (g2v2) {$Bob$};
\draw[->] (g2v1) .. controls +(-0.6,0) and +(-0.6,0) .. (g2v2) node[midway, left=2pt] {$kn$};
\draw[->] (g2v2) .. controls +(0.6,0) and +(0.6,0) .. (g2v1) node[midway, right=2pt] {$kn$};
\node[node style] (g2v2) at (10, 1) {};
\node[above=5pt] at (g2v2) {$kn$};
\node[node style] (g2v3) at (10, 0) {};
\node[below=5pt] at (g2v3) {$PS$};
\draw (g2v2) -- (g2v3)  node[midway, right=5pt ] {$type$};

\node[anchor=west] at (-0.5, -1.3) {$g$};
\node[anchor=west] at (2, -1.3) {$p_1$};
\node[anchor=west] at (4, -1.3) {$p_2$};
\node[anchor=west] at (6, -1.3) {$\mu(p_2)$};
\node[anchor=west] at (9, -1.3) {$g \cup \mu(p_2)$};
\end{tikzpicture}
&
\end{tabular}
};

\node[fit=(content),
minimum width=\textwidth, 
draw=black, dashed, inner sep=8pt, rounded corners=4pt] {};
\end{tikzpicture}
\caption{Example of a graph pattern-based association rule. In the visualizations, the term \texttt{knows} is abbreviated to \texttt{kn}, and the term \texttt{ProbablySymmetric} is abbreviated to \texttt{PS}.}
\label{fig:Ex2}
\end{figure}

\begin{figure}
\begin{tikzpicture}
\node (content) [
    inner sep=0pt
] {
\begin{tabular}[t]{@{}c@{\hspace{0cm}}c@{}}
$\begin{aligned}
p_1 ~=~ & \lbrace (v_1, type, CarboxylicAcid) \rbrace\\
p_2 ~=~ & \lbrace (v_1, hasAtom, v_2), (v_1, hasAtom, v_3), (v_1, hasAtom, v_4), (v_1, hasAtom, v_5), \\
& (v_2, doubleBond, v_3), (v_3, doubleBond, v_2), (v_3, singleBond, v_4), (v_4, singleBond, v_3), \\
& (v_4, singleBond, v_5), (v_5, singleBond, v_4), (v_2, element, O), (v_3, element, Carbon),\\
& (v_4, element, Oxygen), (v_5, element, Hydrogen)\rbrace\\
V_1 ~=~ & (v_1)\\
V_2 ~=~ & (v_1)\\
r ~=~ & (p_1,p_2,V_1,V_2)\\
g ~=~ & \lbrace (m_1, type, Molecule), (m_1, type, CarboxylicAcid)\rbrace
\end{aligned}$
&

\\[2.5cm]

\begin{tikzpicture}[
    node style/.style={circle, draw, minimum size=6pt, inner sep=0pt, fill=white},
    every path/.style={->, >=Stealth}
]

\node[node style] (g1v1) at (-0.5, 0) {};
\node[above=5pt] at (g1v1) {$M$};
\node[node style] (g1v2) at (0.25, 0) {};
\node[above=5pt] at (g1v2) {$CA$};
\node[node style] (g1v3) at (0.25, -1) {};
\node[left=5pt] at (g1v3) {$m_1$};
\draw (g1v3) -- (g1v1)  node[midway, left=5pt ] {$type$};
\draw (g1v3) -- (g1v2)  node[midway, right=5pt ] {$type$};

\node[node style] (p1v2) at (2, 0) {};
\node[above=5pt] at (p1v2) {$CA$};
\node[node style] (p1v3) at (2, -1) {};
\node[left=5pt] at (p1v3) {$v_1$};
\draw (p1v3) -- (p1v2)  node[midway, right=5pt ] {$type$};

\node[node style] (p2v1) at (4, 0) {};
\node[node style] (p2v2) at (4, -1) {};
\node[node style] (p2v3) at (5, -1.5) {};
\node[node style] (p2v4) at (6, -1.5) {};
\node[node style] (p2v5) at (3.2, -2.75) {};
\node[node style] (p2v6) at (3.8, -2.75) {};
\node[node style] (p2v7) at (4.4, -2.75) {};
\node[node style] (p2v8) at (5.5, 0.5) {};

\node[above=5pt] at (p2v8) {$v_1$};
\node[left=5pt] at (p2v1) {$v_2$};
\node[left=5pt] at (p2v2) {$v_3$};
\node[above=8pt] at (p2v3) {$v_4$};
\node[right=5pt] at (p2v4) {$v_5$};

\draw[->] (p2v1) to[out=20, in=200, bend left=40] node[midway, above, right=2pt] {$db$} (p2v2);
\draw[->] (p2v2) to[out=20, in=200, bend left=40] node[midway, above, right=1pt] {$db$} (p2v1);

\draw[->] (p2v2) to[out=20, in=200, bend left=40] node[pos=0.6, above, above=3pt] {$sb$} (p2v3);
\draw[->] (p2v3) to[out=20, in=200, bend left=40] node[pos=0.4, above, above=2pt] {$sb$} (p2v2);

\draw[->] (p2v3) to[out=20, in=200, bend left=40] node[midway, above, above=2pt] {$sb$} (p2v4);
\draw[->] (p2v4) to[out=20, in=200, bend left=40] node[midway, above, above=2pt] {$sb$} (p2v3);

\draw[->] (p2v8) to[out=20, in=200, bend right=30] node[pos=0.5, above, above=2pt] {$a$} (p2v1);
\draw[->] (p2v8) to[out=20, in=200, bend left=18] node[pos=0.25, above, left=2pt] {$a$} (p2v2);
\draw[->] (p2v8) to[out=20, in=200, bend left=18] node[pos=0.5, above, left=2pt] {$a$} (p2v3);
\draw[->] (p2v8) to[out=20, in=200, bend left=18] node[pos=0.5, above, left=2pt] {$a$} (p2v4);

\node[below=5pt] at (p2v5) {$C$};
\node[below=5pt] at (p2v6) {$O$};
\node[below=5pt] at (p2v7) {$H$};

\draw[->] (p2v1) to[out=20, in=20, bend right=40] node[midway, above, left=2pt] {$e$} (p2v6);
\draw[->] (p2v3) to[out=20, in=20, bend left=20] node[midway, above, above=2pt] {$e$} (p2v6);
\draw[->] (p2v4) to[out=20, in=20, bend left=20] node[midway, above, above=2pt] {$e$} (p2v7);
\draw[->] (p2v2) to[out=20, in=20, bend left=20] node[midway, above, left=2pt] {$e$} (p2v5);

\node[anchor=west] at (0, -3.7) {$g$};
\node[anchor=west] at (2, -3.7) {$p_1$};
\node[anchor=west] at (3.7, -3.7) {$p_2$};
\end{tikzpicture}
&
\end{tabular}
};

\node[fit=(content),
minimum width=\textwidth, 
draw=black, dashed, inner sep=8pt, rounded corners=4pt] {};
\end{tikzpicture}
\caption{Example of a graph pattern-based association rule.  In the visualizations, terms are abbreviated as follows:
\texttt{Molecule} as \texttt{M},
\texttt{CarboxylicAcid} as \texttt{CA},
\texttt{hasAtom} as \texttt{a},
\texttt{doubleBond} as \texttt{db},
\texttt{singleBond} as \texttt{sb},
\texttt{element} as \texttt{e},
\texttt{Carbon} as \texttt{C},
\texttt{Oxygen} as \texttt{O}, and
\texttt{Hygrogen} as \texttt{H}.}
\label{fig:Ex3}
\end{figure}

\subsection{Types of Correspondence}

In this section we'll look at different kinds of correspondences between two matches of the patterns $p_1$ and $p_2$ in a graph $g$. %

In the case that there exists a term $t \in \mathcal{T}_{p_1} \cap \mathcal{T}_{p_2}$ that occurs in both patterns in node position (i.e., in subject or object position of a triple pattern), then for each $\mu_1 \in \Omega_{p_1,g}$ and $\mu_2 \in \Omega_{p_2,g}$ it is the case that $t \in \mathcal{T}_{\mu_1(p_1)} \cap \mathcal{T}_{\mu_2(p_2)}$.
Thus, the embedding $\mu_1(p_1)$ overlaps with the embedding $\mu_2(p_2)$ because they share a node term. This can be seen as an a priori correspondence between the embeddings of matches of both patterns, because the correspondence exists independently of $\mu_1$ and $\mu_2$.
We will look at two types of interesting not a priori correspondences.

\begin{enumerate}
\item Given a match $\mu_1$ of $p_1$ on $g$ (i.e., $\mu_1 \in \Omega^{nra}_{p_1,g}$), a match $\mu_2$ of $p_2$ on $g$ (i.e., $\mu_2 \in \Omega^{nra}_{p_2,g}$), a pair $(v_1,v_2)$ of joining variables where $v_1$ occurs in node position in $p_1$ and $v_2$ occurs in node position in $p_2$,
then the embeddings $\mu_1(p_1)$ and $\mu_2(p_2)$ have a node in common if both variables bind to the same term, i.e., $\mu_1(v_1) = \mu_2(v_2)$. The correspondence lies in the sharing of a node term. Topologically, the two embeddings overlap in the graph. However, it may not be the case that all pairs of embeddings of $p_1$ and $p_2$ correspond in the same way.

\item Given a match $\mu_1$ of $p_1$ on $g$ (i.e., $\mu_1 \in \Omega^{nra}_{p_1,g}$), a match $\mu_2$ of $p_2$ on $g$ (i.e., $\mu_2 \in \Omega^{nra}_{p_2,g}$), a pair $(v_1,v_2)$ of joining variables where $v_1$ occurs only in predicate position in $p_1$ and $v_2$ occurs only in predicate position in $p_2$, then the embeddings $\mu_1(p_1)$ and $\mu_2(p_2)$ correspond in that they both contain an edge with the same label. This alone is insufficient for a topological overlap.

\item Given a match $\mu_1$ of $p_1$ on $g$ (i.e., $\mu_1 \in \Omega^{nra}_{p_1,g}$), a match $\mu_2$ of $p_2$ on $g$ (i.e., $\mu_2 \in \Omega^{nra}_{p_2,g}$), a pair $(v_1,v_2)$ of joining variables where $v_1$ occurs only in predicate position in $p_1$ and $v_2$ occurs only in node position in $p_2$, as it is the case with the rule shown in Figure \ref{fig:Ex2}, then the embeddings $\mu_1(p_1)$ and $\mu_2(p_2)$ correspond in that $\mu_1(v_1) = \mu_2(v_2)$, where $\mu_1(v_1)$ is an edge and $\mu_2(v_2)$ is a node. This alone is insufficient for a topological overlap.
\end{enumerate}

\subsection{Trivial Rules}

According to the interpretation of a graph pattern-based association rule $(p_1,p_2,V_1,V_2)$, for each graph $g \in G$ for each mapping function $\mu_1 \in \Omega^{nra}_{p_1,g}$ there exists a corresponding mapping function $\mu_1 \in \Omega^{nra}_{p_1,g}$, where corresponding means that $\mu_1(V_1)=\mu_2(V_2)$. In analogy to trivial itemset-based association rules, once we know that $\mu_1 \in \Omega^{nra}_{p_1,g}$ we can conclude, without making further observations about $g$, especially without making observations about $g \setminus \mu_1(p_1)$, that a corresponding mapping $\mu_2 \in \Omega^{nra}_{p_2,g}$ exists. Thus, the rule does not allow us to make non-trivial predictions about $g$. The fact that for any graph where a $\mu_1$ exist a corresponding $\mu_2$ exists does not depend on $g$, but it depends on the specific relation between $p_1$ and $p_2$.
In GPAR context which relation between $p_1$ and $p_2$ makes a rule trivial is a bit more evolved than in ISAR context. We'll begin with specific types of relations before we introduce the general statement about the relation between $p_1$ and $p_2$ that makes a rule trivial. 
All types of relations have in common that the following is the case:
\[ \forall g \in \mathcal{G} : \forall \mu_1 \in \Omega^{nra}_{p_1,g} : \exists \mu_2 \in \Omega^{nra}_{p_2,\mu_1(p_1)} : \mu_1(p_1) \subseteq \mu_2(p_2)\]
with $\mathcal{G}$ being the infinite set of graphs. 
Note that $p_2$ is not evaluated against $g$, but against a subgraph of $g$, namely the embedding $\mu_1(p_1)$ of $p_1$ corresponding to the match $\mu_1$. Therefore, $\mu_2 \in \Omega^{nra}_{p_2,\mu_1(p_1)}$. The embedding corresponding to the match $\mu_2$ of $p_2$ on $g$ (i.e., $\mu_2(p_2)$) is a subgraph of the embedding corresponding to the match $\mu_1$ of $p_1$ on $g$ (i.e., $\mu_2(p_2)$). 
Thus, those parts of $g$ that are not contained in some embedding of $p_1$ are irrelevant.
We know a priori that if a subgraph of $g$ is the embedding corresponding to a match of $p_1$, then that subgraph is matched by $p_2$. Knowing that $\mu_1$ is a match is sufficient for knowing that a corresponding match $\mu_2$ exists.\\

The most simple case of a trivial rule is a rule where $p_1=p_2$. 
Given that $p_1=p_2$, we can rewrite the condition of a rule being trivial, i.e.,
\[\forall g \in \mathcal{G} : \forall \mu_1 \in \Omega^{nra}_{p_1,g} : \exists \mu_2 \in \Omega^{nra}_{p_2,g} : \mu_1(p_1) \subseteq \mu_2(p_2)\]
to
\[\forall g \in \mathcal{G} : \forall \mu_1 \in \Omega^{nra}_{p_1,g} : \exists \mu_2 \in \Omega^{nra}_{p_1,g} : \mu_2(p_1) \subseteq \mu_1(p_1)\]
This expression is true if $\mu_1 = \mu_2$, which is possible because $\Omega^{nra}_{p_1,g} = \Omega^{nra}_{p_2,g}$. Given that $\mu_1 = \mu_2$, we obtain
\[\forall g \in \mathcal{G} : \forall \mu_1 \in \Omega^{nra}_{p_1,g} : \mu_1(p_1) \subseteq \mu_1(p_1)\] which is obviously true.\\

Note that so far in the discussion about trivial rules we have ignored the sequences of variables $V_1$ and $V_2$. These sequences enable to enforce particular correspondences between two mappings. However, they do not have an impact on the fact that matches of $p_2$ can be obtained by evaluating $p_2$ against embeddings corresponding to matches of $p_1$. Thus, the variable sequences do not matter for the distinction between trivial and non-trivial rules.\\

Beyond the case where $p_1=p_2$, 
which leads to the situation that knowing that $\mu_1$ is a match is sufficient for knowing that a corresponding match $\mu_2$ exists, we are in the same situation if $p_2$ subgraph-isomorphic to $p_1$ (where $p_2 \subseteq p_1$ is a special case), where $p_2$ is a (partial) instantiation of $p_1$, or a combination thereof.

Informally, $p_2$ is subgraph-isomorphic to $p_1$ if we can consistently and using an injection rename variables in $p_2$ so that the result is identical to a subgraph of $p_1$. 
Formally, $p_2$ is subgraph-isomorphic to $p_1$ if there exists a total injective function $m_1 \in \left\lbrace \mathcal{V}_{p_2} \mapsto \mathcal{V}_{p_1} \right\rbrace$ such that $m_1(p_2) \subseteq p_1$, where $m_1(p_2)$ denotes the result of replacing each variable in $p_2$ according to $m_1$.

The following rule (where we omit the variable sequences, as they do not matter here) is an example of a trivial rule where $p_2$ is subgraph-isomorphic to $p_1$, because a variable-to-variable-renaming function $m_1$ with the necessary properties exists:
\begin{align*}
p_1 ~=~& \lbrace (v_1, t_1, v_2), (v_2, t_2, v_3) \rbrace\\
p_2 ~=~& \lbrace (v_1, t_2, v_2) \rbrace\\
m_1 ~=~& \lbrace (v_1, v_2), (v_2, v_3) \rbrace\\
m_1(p_2) ~=~& \lbrace (v_2, t_2, v_3) \rbrace
\end{align*}

Informally, a pattern $p_2$ is an instantiation of a pattern $p_1$ if we can consistently and using an injection replace terms in $p_2$ by variables (that do not already occur in $p_2$) such that both patterns become identical. Formally, $p_2$ is an instantiation of $p_1$ if there exists a mapping function $m_2 \in \left\lbrace \mathcal{T}_{p_2} \mapsto \mathcal{V}_{p_1} \setminus \mathcal{V}_{p_2} \right\rbrace$ such that $m_2(p_2) = p_1$, where $m_2(p_2)$ denotes the result of replacing terms in $p_2$ according to $m_2$.

The following rule is an example of a trivial rule where $p_2$ is an instantiation of $p_1$, because a term-to-variable-renaming function $m_2$ with the necessary properties exists:
\begin{align*}
p_1 ~=~& \lbrace (v_1, t_1, v_2) \rbrace\\
p_2 ~=~& \lbrace (v_1, t_1, t_2) \rbrace\\
m_2 ~=~& \lbrace (t_2, v_2) \rbrace\\
m_2(p_2) ~=~& \lbrace (v_1, t_1, v_2) \rbrace
\end{align*}

Finally, bringing both types of relations together, informally a rule is trivial if the consequent is an instantiation of a pattern that is subgraph-isomorphic to the antecedent. 
Formally, a rule $(p_1,p_2,V_1,V_2)$ is trivial if there exists a total injective function $m \in \left\lbrace (\mathcal{V}_{p_2} \cup \mathcal{T}_{p_2}) \mapsto (\mathcal{V}_{p_1} \setminus \mathcal{T}_{p_1}) \right\rbrace$ where a term can either be mapped to a variable or itself such that $m(p_2) \subseteq p_1$, where $m(p_2)$ denotes the result of replacing each variable and each term in $p_2$ according to $m$.

The following rule is an example of a trivial rule where $p_2$ is an instantiation of a pattern that is subgraph-isomorphic to $p_1$, because a term-and-variable-renaming function $m$ with the necessary properties exists:
\begin{align*}
p_1 ~=~& \lbrace (v_1, t_1, v_2), (v_2, t_2, v_3) \rbrace\\
p_2 ~=~& \lbrace (v_1, t_2, t_3) \rbrace\\
m ~=~& \lbrace (v_1, v_2), (t_2, t_2), (t_3,v_3) \rbrace\\
m(p_2) ~=~& \lbrace (v_2, t_2, v_3) \rbrace
\end{align*}

\subsection{Simplified Notation}
So far we have described a graph pattern-based association rule with two graph patterns $p_1$ and $p_2$ and two non-repetitive sequences of variables $V_1$ and $V_2$. The two graph patterns could have been derived by two  individual mining processes. Thus, that two variables $v_i \in \mathcal{V}_{p_1}$ and $v_j \in \mathcal{V}_{p_2}$ are identical or different might not carry any information. Therefore, via the sequences $V_1$ and $V_2$ we can establish correspondences between variables and we can iterate over the possible sequences of $V_1$ and $V_2$ and thus iterate over the possible correspondences that we can establish. However, we can also rename variables in the patterns and then assume that it carries meaning if two variables $v_i \in \mathcal{V}_{p_1}$ and $v_j \in \mathcal{V}_{p_2}$ are identical or different.

Given the graph pattern-based association rule $r$ of the form $(p_1,p_2,V_1,V_2)$ we can rewrite it into the graph pattern-based association rule $r'$ of the form $(p_1,p_2'')$ %
via the following two steps.
\begin{enumerate}
\item Step 1. We need to take care that it is not the case that in the rewritten rule $r$ a variable occurs in both patterns where this correspondence was not expressed by the original rule $r$. Therefore we transform $p_2$ into $p_2'$. 
Let $V_o = \mathcal{V}_{p_1} \cap (\mathcal{V}_{p_2} \setminus V_2)$\footnote{We slightly abuse the notation and occasionally treat a sequence of variables as a set of variables.} be the set of variables that occur in $p_1$ and $p_2$ but for which no correspondence is expressed via $V_2$. For each variable $v \in V_o$ a new variable name (that occurs in neither $p_1$ nor $p_2'$) is used and each occurrence of $v$ in $p_2$ is replaced by that new name. %

\item Step 2. We need to replace variables in $p_2'$ according to the correspondences expressed via $V_1$ and $V_2$ so that an equally named variable in both patterns expresses correspondence. Therefore, we transform $p_2'$ into $p_2''$. 
Given $V_1 = (v_{1,1}, \ldots, v_{1,n})$ and $V_2 = (v_{2,1}, \ldots, v_{2,n})$, let $V_1 \triangledown V_2$ denote the set $\lbrace (v_{1,i}, v_{2,i}) ~|~ v_{1,i} \in V_1, v_{2,i} \in V_2, 1 \leq i \leq n \rbrace$. For each pair $(v_i,v_j) \in V_1 \triangledown V_2$, we replace each occurrence of $v_j$ in $p_2''$ by $v_i$. The result of the transformation of $p_2'$ is $p_2''$. 
\end{enumerate}

After the two steps have been followed we have transformed the rule $(p_1,p_2,V_1,V_2)$ into $(p_1,p_2'')$. Figure \ref{fig:simplifiednotation} shows an example of a rule $r$ and a possible result of its simplification into $r'$. Note that the two steps defined here do not ensure that the result is unique, as how a new variable is named is not specified. On the left side of the figure the initial rule $r$ is specified and the patterns $p_1$ and $p_2$ are visualized. The correspondence between variables of the patterns according to pairs of joining variables is indicated via dotted lines, labeled with the pair of joining variables that establishes the correspondence. On the right side of the figure the resulting rule $r'$ is specified and the patterns $p_1$ and $p_2''$ are visualized. The dotted lines indicate correspondences between the two patterns via identically named variables. 
In the example, $V_o = \mathcal{V}_{p_1} \cap (\mathcal{V}_{p_2} \setminus V_2) = \lbrace v_1, v_2, v_3\rbrace \cap (\lbrace v_1, v_2, v_3, v_4 \rbrace  \setminus \lbrace v_2, v_3 \rbrace) = \lbrace v_1 \rbrace$.
Thus, in Step 1 we need to replace $v_1$ in $p_2$ by a new variable. We replace it by $v_5$. We obtain the graph pattern $p_2' = \lbrace (v_5, t_3, v_2), (v_2, t_4, v_3), (v_3, t_5, v_4) \rbrace$. 
Then, in Step 2, following $V1 \triangledown V2$, we replace $v_3$ by $v_2$ and we replace $v_2$ by $v_3$ and obtain $p_2'' = \lbrace (v_5, t_3, v_3), (v_3, t_4, v_2), (v_2, t_5, v_4) \rbrace$. Thus, we have obtained the rule $r' = (p_1,p_2'')$.\\

\begin{figure}
\begin{tikzpicture}

\node (content) [
    inner sep=0pt
] {
\begin{tabular}[t]{@{}c@{\hspace{2cm}}c@{}}

$\begin{aligned}
r ~=~& (p_1,p_2,V_1,V_2)\\
p_1 ~=~& \lbrace (v_1, t_1, v_2), (v_1, t_2, v_3) \rbrace\\
p_2 ~=~& \lbrace (v_1, t_3, v_2), (v_2, t_4, v_3), (v_3, t_5, v_4) \rbrace\\
V_1 ~=~& (v_2, v_3)\\
V_2 ~=~& (v_1, v_2)
\end{aligned}$

&
$\begin{aligned}
r' ~=~& (p_1,p_2'')\\
p_1 ~=~& \lbrace (v_1, t_1, v_2), (v_1, t_2, v_3) \rbrace\\
p_2'' ~=~& \lbrace (v_5, t_3, v_3), (v_3, t_4, v_2), (v_2, t_5, v_4) \rbrace\\
\phantom{} &\\
\phantom{} &\\
\end{aligned}$ \\[1.5cm]

\begin{tikzpicture}[
    node style/.style={circle, draw, minimum size=6pt, inner sep=0pt, fill=white},
    every path/.style={->, >=Stealth}
]

\node[node style] (n22) at (2, -2) {};
\node[node style] (n31) at (1, -2.5) {};
\node[node style] (n32) at (2, -3) {};

\node[node style] (n13) at (4.5, -1) {};
\node[node style] (n23) at (4.5, -2) {};
\node[node style] (n33) at (4.5, -3) {};
\node[node style] (n43) at (4.5, -4) {};

\node[left=4pt] at (n31) {$v_1$};
\node[above=4pt] at (n22) {$v_2$};
\node[below=4pt] at (n32) {$v_3$};

\node[right=4pt] at (n13) {$v_4$};
\node[right=4pt] at (n23) {$v_3$};
\node[right=4pt] at (n33) {$v_2$};
\node[right=4pt] at (n43) {$v_1$};

\draw (n31) -- (n22)  node[midway, above=2pt] {$t_1$};
\draw (n31) -- (n32)  node[midway, below=2pt] {$t_2$};

\draw (n43) -- (n33) node[midway, right] {$t_3$};
\draw (n33) -- (n23) node[midway, right] {$t_4$};
\draw (n23) -- (n13) node[midway, right] {$t_5$};

\draw[-, dotted] (n22) -- (n23) node[midway, above] {$(v_2,v_3)$};
\draw[-, dotted] (n32) -- (n33) node[midway, above] {$(v_3,v_2)$};

\node[anchor=west] at (1.2, -4.8) {$p_1$};
\node[anchor=west] at (4.3, -4.8) {$p_2$};
\end{tikzpicture}

&
\begin{tikzpicture}[
    node style/.style={circle, draw, minimum size=6pt, inner sep=0pt, fill=white},
    every path/.style={->, >=Stealth}
]

\node[node style] (n22) at (2, -2) {};
\node[node style] (n31) at (1, -2.5) {};
\node[node style] (n32) at (2, -3) {};

\node[node style] (n13) at (4.5, -1) {};
\node[node style] (n23) at (4.5, -2) {};
\node[node style] (n33) at (4.5, -3) {};
\node[node style] (n43) at (4.5, -4) {};

\node[left=4pt] at (n31) {$v_1$};
\node[above=4pt] at (n22) {$v_2$};
\node[below=4pt] at (n32) {$v_3$};

\node[right=4pt] at (n13) {$v_4$};
\node[right=4pt] at (n23) {$v_2$};
\node[right=4pt] at (n33) {$v_3$};
\node[right=4pt] at (n43) {$v_5$};

\draw (n31) -- (n22)  node[midway, above=2pt] {$t_1$};
\draw (n31) -- (n32)  node[midway, below=2pt] {$t_2$};

\draw (n43) -- (n33) node[midway, right] {$t_3$};
\draw (n33) -- (n23) node[midway, right] {$t_4$};
\draw (n23) -- (n13) node[midway, right] {$t_5$};

\draw[-, dotted] (n22) -- (n23);
\draw[-, dotted] (n32) -- (n33);

\node[anchor=west] at (1.2, -4.8) {$p_1$};
\node[anchor=west] at (4.3, -4.8) {$p_2''$};
\end{tikzpicture}

\end{tabular}
};

\node[fit=(content), minimum width=\textwidth, draw=black, dashed, inner sep=8pt, rounded corners=4pt] {};
\end{tikzpicture}
\caption{Example of a rule of the form $r=(p_1,p_2,V_1,V_2)$ and the result of rewriting it to $r'=(p_1,p_2'')$.}
\label{fig:simplifiednotation}
\end{figure}
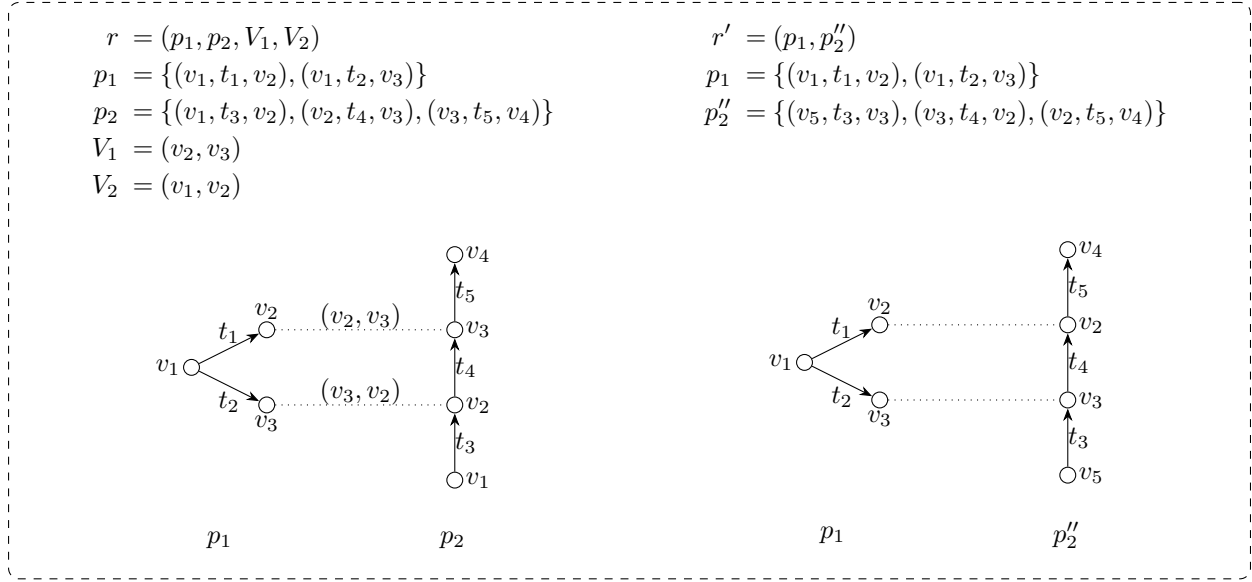

We formalize the meaning of a graph pattern-based association rule of the form $(p_1, p_2)$ in the context of some graph $g$ as follows:
\begin{equation*}
\Omega^{nra}_{p_1,g} \neq \varnothing  \vdash_p \forall \mu_1 \in \Omega^{nra}_{p_1,g} : \exists \mu_2 \in \Omega^{nra}_{p_2,g} :
\forall v \in \mathcal{V}_{p_1} \cap \mathcal{V}_{p_2} : \mu_1(v) = \mu_2(v)
\end{equation*}

Note that for the example rules shown in Figures \ref{fig:Ex1}--\ref{fig:Ex3}, the variables in the consequent pattern have been named in a way to ease understanding of the rule, thus corresponding variables are named equally. In these examples, from a rule $(p_1,p_2,V_1,V_2)$ the simplified form $(p_1,p_2)$ can simply be obtained by omitting the two sequences of variables. 

\subsection{Applications}\label{sec:applications}

We sketch how graph pattern-based association rules can be used in two types of tasks: generative tasks and evaluative tasks. 
In a generative task one produces likely extensions of a graph, whereas in an evaluative tasks one scores a graph by its probability.

Note the analogy to language models \cite{jm3}: language models allows both to extend a sequence of tokens by predicting the most likely continuation and to predict the probability of a sequence of tokens.

\subsubsection{Generative Tasks}
Given a graph $g$ and a rule of the form $(p_1,p_2)$ where $p_1$ matches the graph (i.e., $\Omega^{nra}_{p_1,g} \neq \varnothing$). %
Let's consider the case where all variables in $p_2$ are joining variables (i.e., 
$\mathcal{V}_{p_2} \subseteq \mathcal{V}_{p_1}$). 
Then, for each $\mu \in \Omega_{p_1,g}$ it is the case that $\mu(p_2)$ does not contain variables, because all variables were replaced via $\mu$. Thus, we obtain a (possibly) extended graph $g' = e_r(g)$ (with $|g'| \geq |g|$) via the following operation $e_r$:
\[ e_r(g) = g \cup \bigcup\limits_{\mathclap{\mu \in \Omega^{nra}_{p_1,g}}} \mu(p_2)\]
Once a graph has been extended via that operation it can be the case that it can be extended further using the same operation, as shown in Figure \ref{fig:iterativeextension}. 
Given a rule $r$, let $g^0_r$ be $g$ and let $g^n_r$, with $n > 1$, be the result of extending the graph $g^{n-1}_r$, i.e.:
\[ g^n_r = e_r(g^{n-1}_r)\]

We can then define the closure $Cl_r(g)$ of the graph $g$ under the rule $r$ as follows: 
 \[ Cl_r(g) = \bigcup\limits_{n \in \mathbb{N}} g_r^n\]

\begin{figure}
\begin{tikzpicture}
\node (content) [
    inner sep=0pt
] {
\begin{tabular}[t]{@{}c@{\hspace{0cm}}c@{}}
$\begin{aligned}
g ~=~ & \lbrace (t_1, t_2, t_3), (t_3, t_2, t_4), (t_4, t_2, t_5), (t_5, t_2, t_6) \rbrace\\
p_1 ~=~ & \lbrace (v_1, t_2, v_2), (v_2, t_2, v_3) \rbrace\\
p_2 ~=~ & \lbrace (v_1, t_2, v_3) \rbrace \\
r ~=~ & (p_1,p_2)
\end{aligned}$

\\[1.5cm]

\begin{tikzpicture}[
    node style/.style={circle, draw, minimum size=6pt, inner sep=0pt, fill=white},
    every path/.style={->, >=Stealth}
]

\node[node style] (g1t1) at (-1, -1) {};
\node[above=5pt] at (g1t1) {$t_1$};
\node[node style] (g1t3) at (1, -1) {};
\node[above=5pt] at (g1t3) {$t_3$};
\node[node style] (g1t5) at (3, -1) {};
\node[above=5pt] at (g1t5) {$t_5$};
\node[node style] (g1t6) at (5, -1) {};
\node[above=5pt] at (g1t6) {$t_6$};
\draw (g1t1) -- (g1t3)  node[midway, above=2pt] {$t_2$};
\draw (g1t3) -- (g1t5)  node[midway, above=2pt] {$t_2$};
\draw (g1t5) -- (g1t6)  node[midway, above=2pt] {$t_2$};
\node[anchor=west] at (-2, -1) {$g$};

\node[node style] (g2t1) at (-1, -2) {};
\node[above=5pt] at (g2t1) {$t_1$};
\node[node style] (g2t3) at (1, -2) {};
\node[above=5pt] at (g2t3) {$t_3$};
\node[node style] (g2t5) at (3, -2) {};
\node[above=5pt] at (g2t5) {$t_5$};
\node[node style] (g2t6) at (5, -2) {};
\node[above=5pt] at (g2t6) {$t_6$};
\draw (g2t1) -- (g2t3)  node[midway, above=2pt] {$t_2$};
\draw (g2t3) -- (g2t5)  node[midway, above=2pt] {$t_2$};
\draw (g2t5) -- (g2t6)  node[midway, above=2pt] {$t_2$};
\draw[->] (g2t1) .. controls +(0,-1) and +(0,-1) .. (g2t5) node[midway, above=2pt] {$t_2$};
\draw[->] (g2t3) .. controls +(0,-1) and +(0,-1) .. (g2t6) node[midway, above=2pt] {$t_2$};
\node[anchor=west] at (-2, -2) {$g'$};

\node[node style] (g3t1) at (-1, -3.5) {};
\node[above=5pt] at (g3t1) {$t_1$};
\node[node style] (g3t3) at (1, -3.5) {};
\node[above=5pt] at (g3t3) {$t_3$};
\node[node style] (g3t5) at (3, -3.5) {};
\node[above=5pt] at (g3t5) {$t_5$};
\node[node style] (g3t6) at (5, -3.5) {};
\node[above=5pt] at (g3t6) {$t_6$};
\draw (g3t1) -- (g3t3)  node[midway, above=2pt] {$t_2$};
\draw (g3t3) -- (g3t5)  node[midway, above=2pt] {$t_2$};
\draw (g3t5) -- (g3t6)  node[midway, above=2pt] {$t_2$};
\draw[->] (g3t1) .. controls +(0,-1) and +(0,-1) .. (g3t5) node[midway, above=2pt] {$t_2$};
\draw[->] (g3t3) .. controls +(0,-1) and +(0,-1) .. (g3t6) node[midway, above=2pt] {$t_2$};
\draw[->] (g3t1) .. controls +(0,-2) and +(0,-2) .. (g3t6) node[midway, above=2pt] {$t_2$};
\node[anchor=west] at (-2, -3.5) {$g''$};
\end{tikzpicture}
\end{tabular}
};

\node[fit=(content),
minimum width=\textwidth, 
draw=black, dashed, inner sep=8pt, rounded corners=4pt] {};
\end{tikzpicture}
\caption{An example where a graph can be extended iteratively for two steps using one rule. The rule is given in simplified form, i.e., $(p_1,p_2)$.}
\label{fig:iterativeextension}
\end{figure}
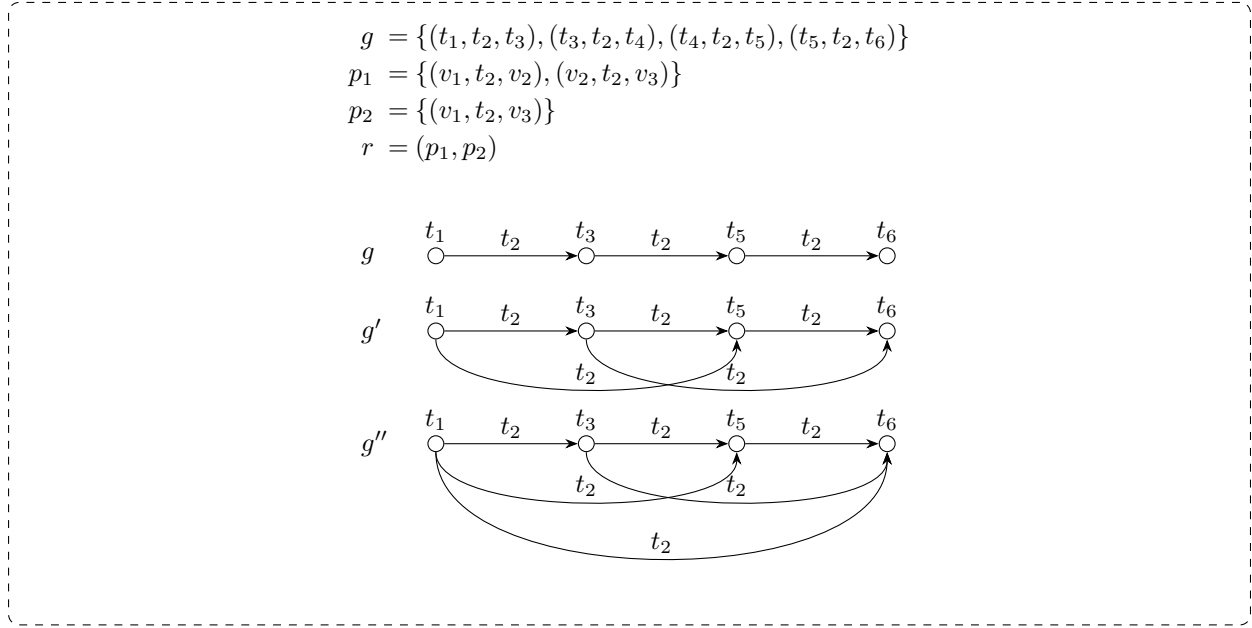

Since we only consider finite graphs and a rule based on finite graph patterns, a fixpoint is reached after a finite number of steps. Note that in practice, however, reaching the fixpoint might not be the goal. Instead, one might be interested in particular extensions in particular regions -- not in all possible extensions everywhere. 
 For the rules shown in Figure \ref{fig:Ex1} and Figure \ref{fig:Ex2} it is the case that $\mathcal{V}_{p_2} \subseteq \mathcal{V}_{p_1}$. Thus, for every $\mu \in \Omega^{nra}_{p_1,g}$ it is the case that $\mu(p_2)$ is a graph. Thus, we can use these rules to extend $g$ to $g \cup \mu(p_2)$.

If $p_2$ contains further variables, then for every $\mu \in \Omega^{nra}_{p_1,g}$ it is the case that $\mu(p_2)$ is a graph pattern that contains variables and the extension $g'$ would not be a graph but a graph pattern. However, this does not mean that those rules would not be interesting for generative tasks. However, it would just be necessary to introduce another step that replaces remaining variables with terms to obtain a graph. Furthermore, when a graph is to be extended iteratively to a graph pattern, then one would need to extend the definition of graph pattern evaluation so that a graph pattern can be evaluated against a graph pattern. 
This is the case for the rule shown in Figure \ref{fig:Ex3}.

What we leave out in the sketch of the generative task is how to select the rules. Not only do we need to select rules where the antecedent pattern matches the graph, we might also want to select rules that can extend the graph in a particular area (thus we'd have constraints about the antecedent pattern), that can extend the graph in a particular way (thus we'd have constraints about the consequent pattern), or we'd select rules based on quality metrics. In Chapter \ref{sec:metrics} we introduce quality metrics for graph pattern-based association rules. For a concrete use case one would need to investigate how to select rules according to the scores obtained with these metrics.

A well-known task is link prediction in knowledge graphs, which is the task of inferring missing or likely relationships between entities in a knowledge graph. 
Here, three kinds of link prediction tasks are distinguished: head term prediction (HTP), relation term prediction (RTP), and tail term prediction (TTP), which can be described as follows:

\begin{itemize}
\item (HTP) Given a knowledge graph $g$ and an incomplete triple $(?,r,t)$, where $r$ is the relation and $t$ is the tail (target) term, the task is to identify the most likely head (source) term $s \in V$ where $V$ is the vocabulary of terms.
\item (RTP) Given a knowledge graph $g$ and an incomplete triple $(s,?,t)$, where $s$ is the source term and $t$ is the tail (target) term, the task is to identify the most likely relation term $r \in V$ where $V$ is the vocabulary of terms.
\item (TTP) Given a knowledge graph $g$ and an incomplete triple $(s,r,?)$, where $s$ is the source term and $r$ is the relation term, the task is to identify the most likely tail (target) term $t \in V$ where $V$ is the vocabulary of terms.
\end{itemize}

For each of these tasks, adding the predicted triple $(s,r,t)$ to $g$ would provide plausible new information that was missing from the graph.

Note that recent rule-based approaches such as SAFRAN \cite{safran} and AnyBURL \cite{meilicke2019introduction} have achieved results for the link prediction task that are competitive with those achieved by approaches based on neural embedding-based machine learning models.

In the context of link prediction based on graph pattern-based association rules we can distinguish two types of situations: either the rule contains the term to be inserted into the incomplete triple (e.g., the rule contain the term $t$ so that we can complete $(s,r,?)$ to $(s,r,t)$), or it does not. The consequence is that when we have a specific vocabulary given and for each term in that vocabulary want to select, from a large set of rules, those rules that could add that term, then using only those rules where the consequent pattern contains that term might be insufficient.

Figure \ref{fig:TTPnotcontained} shows an example with a graph $g$ and a rule applied for tail term prediction. Given $\mu = \lbrace (v_1,t_1),(v_2,t_3),(v_3,t_5) \rbrace \in \Omega^{nra}_{p_1,g}$, by adding $\mu(p_2) = \lbrace (t_3,t_8,t_5)\rbrace$ to $g$ we extend $g$ with the triple $(t_3,t_8,t_5)$, thus, the predicted tail (target) term is $t_5$. In this example, $t_5$ is not contained in $p_2$ (i.e., $t_5 \not \in \mathcal{T}_{p_2}$). This shows that rules where the target term is not contained in the consequent pattern can be used to extend the graph with that term. 
Figure~\ref{fig:RTPcontained} shows an example with a graph $g$ and a rule applied for relation term prediction.
Given $\mu = \lbrace (v1,t2),(v2,t3),(v3,t5),(v4,t6),(v5,t7) \rbrace \in \Omega^{nra}_{p_1,g}$, by adding $\mu(p_2) = \lbrace (t_3,t_8,t_7) \rbrace$ to $g$ we extend $g$ with the triple $(t_3,t_8,t_7)$, thus, the predicted relation term is $t_8$. In this example, $t_8$ is contained in $p_2$ (i.e., $t_8 \in \mathcal{T}_{p_2}$). 
For the link prediction tasks HTP, RTP and TTP it is sufficient that the consequent pattern consists of a single triple pattern.
However, even in the context of generative tasks it can be useful to allow more complex consequents, with which a graph can be extended in larger steps. 
For example, in the context of graphs such as the one shown in Figure \ref{fig:examplegraph}, one can imagine a antecedent that identifies between which identified entities a relation is expressed and then, in one step, extends the graph with a new relation instance, describing the head and tail entities of the relation instance.

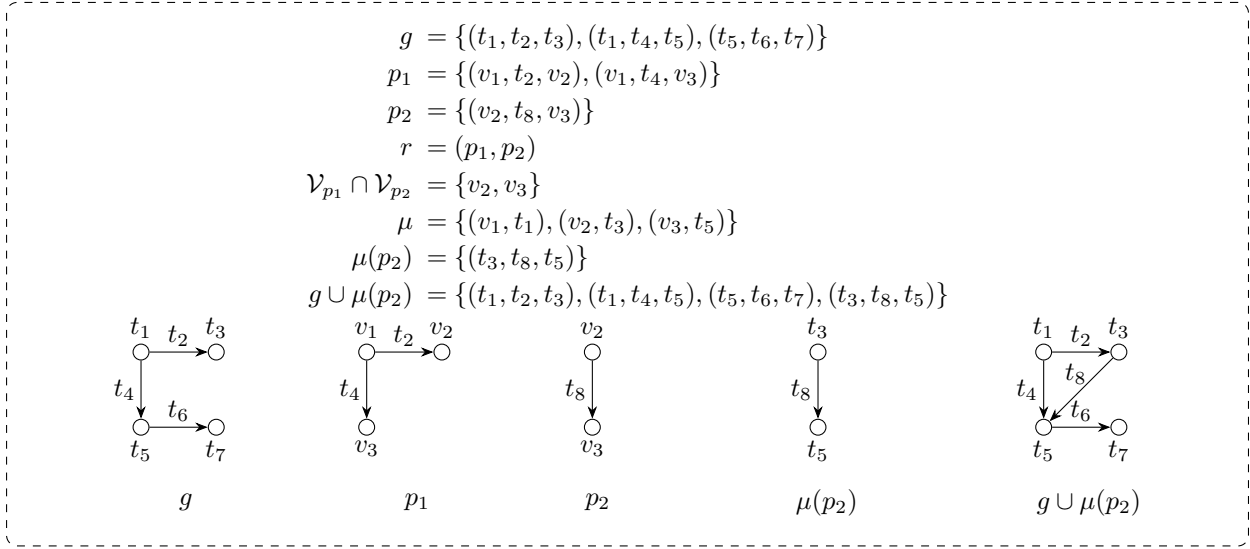
\begin{figure}
\begin{tikzpicture}
\node (content) [
    inner sep=0pt
] {
\begin{tabular}[t]{@{}c@{\hspace{0cm}}c@{}}
$\begin{aligned}
g ~=~ & \lbrace (t_1, t_2, t_3), (t_1,t_4,t_5), (t_5,t_6,t_7) \rbrace\\

p_1 ~=~ & \lbrace (v_1, t_2, v_2), (v_1, t_4, v_3) \rbrace\\

p_2 ~=~ & \lbrace (v_2, t_8, v_3) \rbrace\\
r ~=~ & (p_1,p_2)\\
\mathcal{V}_{p_1} \cap \mathcal{V}_{p_2} ~=~ & \lbrace v_2, v_3 \rbrace\\

\mu ~=~ & \lbrace (v_1, t_1), (v_2, t_3), (v_3, t_5)\rbrace\\

\mu(p_2) ~=~ & \lbrace (t_3, t_8, t_5) \rbrace \\

g \cup \mu(p_2) ~=~ & \lbrace (t_1, t_2, t_3), (t_1,t_4,t_5), (t_5,t_6,t_7), (t_3, t_8, t_5) \rbrace
\end{aligned}$
&

\\[1.8cm]

\begin{tikzpicture}[
    node style/.style={circle, draw, minimum size=6pt, inner sep=0pt, fill=white},
    every path/.style={->, >=Stealth}
]

\node[node style] (gt1) at (1, -1) {};
\node[above=5pt] at (gt1) {$t_1$};
\node[node style] (gt3) at (2, -1) {};
\node[above=5pt] at (gt3) {$t_3$};
\node[node style] (gt5) at (1, -2) {};
\node[below=5pt] at (gt5) {$t_5$};
\node[node style] (gt7) at (2, -2) {};
\node[below=5pt] at (gt7) {$t_7$};
\draw (gt1) -- (gt3)  node[midway, above=2pt] {$t_2$};
\draw (gt1) -- (gt5)  node[midway, left=2pt] {$t_4$};
\draw (gt5) -- (gt7)  node[midway, above=2pt] {$t_6$};

\node[node style] (p1v1) at (4, -1) {};
\node[above=5pt] at (p1v1) {$v_1$};
\node[node style] (p1v2) at (5, -1) {};
\node[above=5pt] at (p1v2) {$v_2$};
\node[node style] (p1v3) at (4, -2) {};
\node[below=5pt] at (p1v3) {$v_3$};
\draw (p1v1) -- (p1v2)  node[midway, above=2pt] {$t_2$};
\draw (p1v1) -- (p1v3)  node[midway, left=2pt] {$t_4$};

\node[node style] (p2v2) at (7, -1) {};
\node[above=5pt] at (p2v2) {$v_2$};
\node[node style] (p2v3) at (7, -2) {};
\node[below=5pt] at (p2v3) {$v_3$};
\draw (p2v2) -- (p2v3)  node[midway, left=2pt] {$t_8$};

\node[node style] (muv2) at (10, -1) {};
\node[above=5pt] at (muv2) {$t_3$};
\node[node style] (muv3) at (10, -2) {};
\node[below=5pt] at (muv3) {$t_5$};
\draw (muv2) -- (muv3)  node[midway, left=2pt] {$t_8$};

\node[node style] (gnewt1) at (13, -1) {};
\node[above=5pt] at (gnewt1) {$t_1$};
\node[node style] (gnewt3) at (14, -1) {};
\node[above=5pt] at (gnewt3) {$t_3$};
\node[node style] (gnewt5) at (13, -2) {};
\node[below=5pt] at (gnewt5) {$t_5$};
\node[node style] (gnewt7) at (14, -2) {};
\node[below=5pt] at (gnewt7) {$t_7$};
\draw (gnewt1) -- (gnewt3)  node[midway, above=2pt] {$t_2$};
\draw (gnewt1) -- (gnewt5)  node[midway, left=2pt] {$t_4$};
\draw (gnewt5) -- (gnewt7)  node[midway, above=2pt] {$t_6$};
\draw (gnewt3) -- (gnewt5)  node[midway, left=2pt, above=2pt] {$t_8$};

\node[anchor=west] at (1.5, -3) {$g$};
\node[anchor=west] at (4.5, -3) {$p_1$};
\node[anchor=west] at (6.9, -3) {$p_2$};
\node[anchor=west] at (9.7, -3) {$\mu(p_2)$};
\node[anchor=west] at (12.9, -3) {$g \cup \mu(p_2)$};

\end{tikzpicture}
&
\end{tabular}
};

\node[fit=(content),
minimum width=\textwidth, 
draw=black, dashed, inner sep=8pt, rounded corners=4pt] {};
\end{tikzpicture}
\caption{Tail Term Prediction (TTP) example where the predicted tail node term (i.e., $t_5$) is not contained in $\mathcal{T}_{p_2}$. The rule is given in simplified form, i.e., $(p_1,p_2)$.}
\label{fig:TTPnotcontained}
\end{figure}

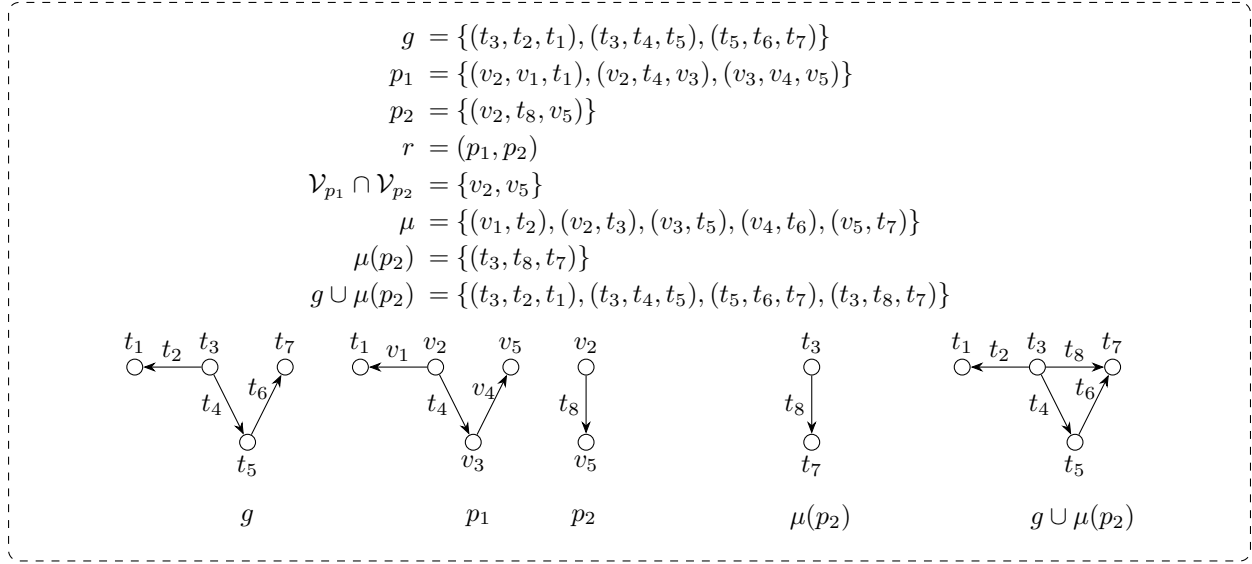
\begin{figure}
\begin{tikzpicture}
\node (content) [
    inner sep=0pt
] {
\begin{tabular}[t]{@{}c@{\hspace{0cm}}c@{}}
$\begin{aligned}
g ~=~ & \lbrace (t_3, t_2, t_1), (t_3, t_4, t_5), (t_5, t_6, t_7) \rbrace\\
p_1 ~=~ & \lbrace (v_2, v_1, t_1), (v_2, t_4, v_3), (v_3, v_4, v_5) \rbrace\\
p_2 ~=~ & \lbrace (v_2, t_8, v_5) \rbrace \\
r ~=~ & (p_1,p_2)\\ 
\mathcal{V}_{p_1} \cap \mathcal{V}_{p_2} ~=~ & \lbrace v_2, v_5 \rbrace\\
\mu ~=~ & \lbrace (v_1,t_2),(v_2,t_3),(v_3,t_5),(v_4,t_6),(v_5,t_7)\rbrace\\
\mu(p_2) ~=~ & \lbrace (t_3, t_8, t_7) \rbrace\\
g \cup \mu(p_2) ~=~ & \lbrace (t_3, t_2, t_1), (t_3, t_4, t_5), (t_5, t_6, t_7), (t_3, t_8, t_7) \rbrace
\end{aligned}$
&

\\[2cm]

\begin{tikzpicture}[
    node style/.style={circle, draw, minimum size=6pt, inner sep=0pt, fill=white},
    every path/.style={->, >=Stealth}
]

\node[node style] (gt1) at (-1, -1) {};
\node[above=5pt] at (gt1) {$t_1$};
\node[node style] (gt3) at (0, -1) {};
\node[above=5pt] at (gt3) {$t_3$};
\node[node style] (gt5) at (0.5, -2) {};
\node[below=5pt] at (gt5) {$t_5$};
\node[node style] (gt7) at (1, -1) {};
\node[above=5pt] at (gt7) {$t_7$};
\draw (gt3) -- (gt1)  node[midway, above=2pt] {$t_2$};
\draw (gt3) -- (gt5)  node[midway, left=2pt] {$t_4$};
\draw (gt5) -- (gt7)  node[midway, left=3pt, above=2pt] {$t_6$};

\node[node style] (p1t1) at (2, -1) {};
\node[above=5pt] at (p1t1) {$t_1$};
\node[node style] (p1v2) at (3, -1) {};
\node[above=5pt] at (p1v2) {$v_2$};
\node[node style] (p1v3) at (3.5, -2) {};
\node[below=5pt] at (p1v3) {$v_3$};
\node[node style] (p1v5) at (4, -1) {};
\node[above=5pt] at (p1v5) {$v_5$};
\draw (p1v2) -- (p1t1)  node[midway, above=2pt] {$v_1$};
\draw (p1v2) -- (p1v3)  node[midway, left=2pt] {$t_4$};
\draw (p1v3) -- (p1v5)  node[midway, left=3pt, above=2pt] {$v_4$};

\node[node style] (p2v2) at (5, -1) {};
\node[above=5pt] at (p2v2) {$v_2$};
\node[node style] (p2v5) at (5, -2) {};
\node[below=5pt] at (p2v5) {$v_5$};
\draw (p2v2) -- (p2v5)  node[midway, left=2pt] {$t_8$};

\node[node style] (muv2) at (8, -1) {};
\node[above=5pt] at (muv2) {$t_3$};
\node[node style] (muv3) at (8, -2) {};
\node[below=5pt] at (muv3) {$t_7$};
\draw (muv2) -- (muv3)  node[midway, left=2pt] {$t_8$};

\node[node style] (gnewt1) at (10, -1) {};
\node[above=5pt] at (gnewt1) {$t_1$};
\node[node style] (gnewt3) at (11, -1) {};
\node[above=5pt] at (gnewt3) {$t_3$};
\node[node style] (gnewt5) at (11.5, -2) {};
\node[below=5pt] at (gnewt5) {$t_5$};
\node[node style] (gnewt7) at (12, -1) {};
\node[above=5pt] at (gnewt7) {$t_7$};
\draw (gnewt3) -- (gnewt1)  node[midway, above=2pt] {$t_2$};
\draw (gnewt3) -- (gnewt5)  node[midway, left=2pt] {$t_4$};
\draw (gnewt5) -- (gnewt7)  node[midway, left=3pt, above=2pt] {$t_6$};
\draw (gnewt3) -- (gnewt7)  node[midway, above=2pt] {$t_8$};

\node[anchor=west] at (0.4, -3) {$g$};
\node[anchor=west] at (3.4, -3) {$p_1$};
\node[anchor=west] at (4.8, -3) {$p_2$};
\node[anchor=west] at (7.7, -3) {$\mu(p_2)$};
\node[anchor=west] at (10.9, -3) {$g \cup \mu(p_2)$};
\end{tikzpicture}

&
\end{tabular}
};

\node[fit=(content),
minimum width=\textwidth, 
draw=black, dashed, inner sep=8pt, rounded corners=4pt] {};
\end{tikzpicture}
\caption{Relation Term Prediction (RTP) example where the predicted relation term ($t_8$) is contained in $\mathcal{T}_{p_2}$. The rule is given in simplified form, i.e., $(p_1,p_2)$.}
\label{fig:RTPcontained}
\end{figure}

\subsubsection{Evaluative Tasks}
We provide some ideas on how the probability of a given graph $g$ can be predicted given a set $R$ of rules of the form $(p_1,p_2)$ that were derived from a set $G$ of graphs.

We make use of the following properties:
\begin{itemize}
\item $p_1$ matches many/few graphs in $G$ (i.e., taking into account $|\lbrace \, g' \in G \mid \Omega^{nra}_{p_1,g'} \neq \varnothing \,\rbrace|$).
\item If $p_1$ matches a graph $g' \in G$, then it matches often/rarely (i.e., taking into account $|\Omega^{nra}_{p_1,g'}|$).
\item The predictions made by a rule are often/rarely already contained in a graph $g' \in G$. If the prediction $\mu(p_2)$ contains no variables, then the prediction is contained in a graph $g' \in G$ if $\mu(p_2) \subseteq g'$; 
if the prediction $\mu(p_2)$ contains variables, then the prediction is "contained" in a graph $g' \in G$ if $\Omega^{nra}_{\mu(p_2),g'} \neq \varnothing$.
\end{itemize}

These properties can then be involved in the prediction of a graph's probability:
\begin{enumerate}
\item A rule where $p_1$ matches many graphs in $G$ but does not match $g$ may indicate that the probability of $g$ is low.

\item A rule where $p_1$ matches many graphs in $G$ and does match $g$ may indicate that the probability of $g$ is high.

\item A rule where $p_1$ matches many graphs in $G$ often but does not or rarely match $g$ may indicate that the probability of $g$ is low.

\item A rule where $p_1$ matches many graphs in $G$ often and also matches $g$ often may indicate that the probability of $g$ is high.

\item If a rule that often matches in $G$ and the predictions made by the rule are often present in these graphs, then if $p_1$ matches the graph but the predictions are not contained in the graph, then that may indicate that the probability of $g$ is low.

\item If a rule that often matches in $G$ and the predictions made by the rule are often present in these graphs, then if $p_1$ matches the graph and the predictions are contained in the graph, then tat may indicate that the probability of $g$ is high.
\end{enumerate}

\section{Metrics for Graph Pattern-based Association Rules}\label{sec:metrics}

In this section we introduce the metrics confidence, lift, leverage, and conviction for graph pattern-based association rules and show to what extent these metrics have the same characteristics as their counterparts in the context of itemset-based association rules.

\subsection{Sets and Probability Space}\label{sec:probspace}

Before the probability space is defined and the metrics are introduced, some core ideas need to be discussed and sets need to be defined.%

Given a graph $g$ and a graph pattern-based association rule $(p_1, p_2, V_1, V_2)$ (with $|V_1| = |V_2| = n$), a mapping function $\mu_1 \in \Omega^{nra}_{p_1,g}$ corresponds to a mapping function $\mu_2 \in \Omega^{nra}_{p_2,g}$ if there exists a non-repetitive term sequence $T$ of length $n$ such that $\mu_1(V_1) = \mu_2(V_2) = T$. The specific correspondence between $\mu_1$ and $\mu_2$ is characterized by $T$. To quantify the strength of correspondence between two patterns we need to define i) the set of term sequences that \textit{could} characterize a correspondence between two mappings of the patterns and ii) the set of term sequences that \textit{actually do} characterize a correspondence between two mappings of the patterns.

Consider the set of non-repetitive term sequences of length $n$ created from the set $\mathcal{T}_g$, i.e., the set of terms that occur in the graph $g$. 
In the case that the pattern $p_1$ contains terms (i.e., $\mathcal{T}_{p_1} \neq \varnothing$), then for a term sequence that contains terms that are member of $\mathcal{T}_{p_1}$ we know a priori that there cannot exist a mapping function $\mu \in \Omega^{nra}_{p_1,g}$ such that $\mu(V_1) = T$ due to no-repeated-anything semantics. 
Likewise, for a term sequence for which in principle a mapping function $\mu_1 \in \Omega^{nra}_{p_1,g}$ exists we can know a priori that no mapping function $\mu_2 \in \Omega^{nra}_{p_2}$ can exist such that $T$ characterizes the correspondence between $\mu_1$ and $\mu_2$ in the case that $T$ contains terms that are member of $\mathcal{T}_{p_2}$.  

Thus, we introduce the set $\fulltau$, which is the set of non-repetitive term sequences of length $n$ that are created from terms in $\mathcal{T}_{g} \setminus (\mathcal{T}_{p_1} \cup \mathcal{T}_{p_2})$. These term sequences could in principle characterize a correspondence between two mappings $\mu_1 \in \Omega^{nra}_{p_1,g}$ and $\mu_2 \in \Omega^{nra}_{p_2,g}$. We define this set as follows:
\[ \mathcal{T}^n_{g,p_1,p_2} := \lbrace \,(t_1, \ldots, t_n) \in \bigtimes_{i=1}^n (\mathcal{T}_g \setminus (\mathcal{T}_{p_1} \cup \mathcal{T}_{p_2})) \mid \forall i,j \in [1,n] : i \neq j \Rightarrow t_i \neq t_j \,\rbrace\]

Note that $\mathcal{T}^n_{g,p_1,p_2} = \mathcal{T}^n_{g,p_2,p_1}$. We compute the cardinality of $\mathcal{T}^n_{g,p_1,p_2}$ as follows:
\begin{equation} |\mathcal{T}^n_{g,p_1,p_2}| = \prod_{i=0}^{n-1} \Big( |\mathcal{T}_g \setminus (\mathcal{T}_{p_1} \cup \mathcal{T}_{p_2})| -i \Big)\label{cardinality_fulltau}
 \end{equation}
 
We introduce a predicate that, given a graph $g$, a sequence $T$ of terms, a pattern $p$, and a sequence $V$ of variables, for which its truth value is evaluated as follows:
\[ m_g(T,p,V) \equiv \exists \mu \in \Omega^{nra}_{p,g} : \mu(V) = T \]

We say that, in the context of a graph $g$, a pattern $p$, and a sequence $V$ of variables, a mapping function $\mu \in \Omega^{nra}_{p,g}$ corresponds to a term sequence $T$ (or, likewise, a term sequence $T$ corresponds to a mapping function $\mu \in \Omega^{nra}_{p,g}$), if $m_g(T,p,V)$ is true. 
Thus, the predicate evaluates to true if for a given graph $g$, pattern $p$, variable sequence $V$, and term sequence $T$ there exists a mapping function that corresponds to $T$.\\

Given a graph $g$, two graph patterns $p_1$ and $p_2$ such that $\Omega^{nra}_{p_1,g} \neq \varnothing$, $\Omega^{nra}_{p_2,g} \neq \varnothing$, and $\fulltau \neq \varnothing$, and given a graph pattern-based association rule $(p_1,p_2,V_1,V_2)$, we can formulate a probability space $(\Omega,\mathcal{F},\hat{P}_g)$ as follows:
\begin{itemize}
\item The sample space $\Omega$ is the set $\fulltau$ -- we pick a term sequence $T$ uniformly at random from $\fulltau$, i.e., the set of term sequences that could in principle characterize a correspondence between two matches $\mu_1 \in \Omega^{nra}_{p_1,g}$ and $\mu_2 \in \Omega^{nra}_{p_2,g}$.

\item The sigma-algebra $\mathcal{F}$ is the powerset $\mathcal{P}(\Omega)$, because $\Omega$ is finite and thus every subset of $\Omega$ is a measurable event.

\item The empirical probability measure $\hat{P}_g : \mathcal{F} \to [0,1]$ is defined for every event $X \subseteq \Omega$ by $\hat{P}_g(X) = \tfrac{|X|}{|\fulltau|}$.
Given a set $E \subseteq \fulltau$ of term sequences, we associate it to the event $E_{g,p_1,V_1;p_2} \coloneq \lbrace \, T \in \fulltau \mid m_g(T,p_1,V_1) \, \rbrace$ (and to the event $E_{g,p_2,V_2;p_1} \coloneq \lbrace \, T \in \fulltau \mid m_g(T,p_2,V_2) \, \rbrace$). 
Thus, the probability $\hat{P}(E_{g,p_1,V_1;p_2})$
that a term sequence $T$, selected uniformly at random from $\fulltau$, corresponds to a match $\mu$ of $p_1$ (i.e., $\mu \in \Omega^{nra}_{p_1,g}$) such that $\mu(V_1) = T$, is measured as follows:
\[ \hat{P}_g(E_{g,p_1,V_1;p_2}) = \frac{|\lbrace \,T \in \fulltau \mid m_g(T,p_1,V_1) \,\rbrace|}{|\fulltau|}\]

The event $E_{g,p_p,V_2;p_1}$ is defined equivalently and the empirical probability $\hat{P}_g(E_{g,p_2,V_2;p_1})$ is measured equivalently.

\end{itemize}

The requirement $\fulltau \neq \varnothing$ means that it needs to be the case that in principle each of the two patterns could match the graph. As a trivial case, no non-empty pattern could match an empty graph. But we only consider the case where all graphs are non-empty. However, with $n=|V_1| = |V_2|$, if $n > |\mathcal{T}_g \setminus (\mathcal{T}_{p_2} \cup \mathcal{T}_{p_2})|$, then $\fulltau = \varnothing$, because there do not exist sufficiently many terms the variables could be bound to to create a non-repetitive term sequence of length $n$.\\

Assuming $\fulltau$ to be non-empty, $(\Omega,\mathcal{F},\hat{P}_g)$ is a probability space, because it satisfies the Kolmogorov axioms:
\begin{enumerate}
\item Non-negativity (Axiom 1).

For all $X \subseteq \Omega$, $|X| \geq 0$, and $|T| > 0$. Thus, $\hat{P}_g(X) = \tfrac{|X|}{|\fulltau|}$ is a rational and thus a real number. Furthermore, since $|X| \geq 0$ we have $\hat{P}_g(X) \geq 0$ and $\tfrac{|X|}{|\fulltau|} \geq 0$.

\item Normalization (Axiom 2).

Because $\Omega = \fulltau$, it follows that $\hat{P}_g(\Omega) = \tfrac{|\Omega|}{|\fulltau|} = \tfrac{|\Omega|}{|\Omega|} = 1$.

\item Countable additivity (Axiom 3).

Let $X = (X_1, \ldots,X_m)$ be a sequence of pairwise disjoint non-empty subsets of $\Omega$. Then
\[ \hat{P}_g\left(\bigcup_{X_i \in X} X_i\right) =
\frac{\left|\displaystyle \bigcup_{X_i \in X}(X_i)\right|}{|\fulltau|} = \frac{\displaystyle \sum_{X_i \in X}|X_i|}{|\fulltau|} = \sum_{X_i \in X} \frac{|X_i|}{|\fulltau|} = \sum_{X_i \in X} \hat{P}_g(X_i)\]
\end{enumerate}

We can describe the empirical joint and conditional probabilities as follows:

\[
\begin{array}{rl}
\hat{P}_g(E_{g,p_1,V_1;p_2} \cap E_{g,p_2,V_2;p_1}) =& \dfrac{|\lbrace \, T \in \fulltau \mid m_g(T,p_1,V_1) \land m_g(T,p_2,V_2)\,\rbrace|}{|\fulltau|}\\[12pt]
\hat{P}_g(E_{g,p_2,V_2;p_1} \mid E_{g,p_1,V_1;p_2}) =& \dfrac{|\lbrace \, T \in \fulltau \mid m_g(T,p_1,V_1) \land m_g(T,p_2,V_2) \,\rbrace|}{|\lbrace \, T \in \fulltau \mid m_g(T,p_1,V_1) \,\rbrace|}
\end{array}
\]

$\hat{P}_g(E_{g,p_1,V_1;p_2})$ denotes the empirical probability, estimated from the graph $g$, that a randomly selected term sequence $T \in \fulltau$ satisfies $m_g(T,p_1,V_1)$; which means that a match of $p_1$ on $g$ (i.e., $\mu \in \Omega^{nra}_{p_1,g}$) exists that is characterized by $T$ (i.e.,  $\mu(V_1) = T$). 
$\hat{P}_g(E_{g,p_2,V_2;p_1})$ can be interpreted equivalently. 
$\hat{P}_g(E_{g,p_1,V_1;p_2} \cap E_{g,p_2,V_2;p_1})$ denotes the empirical probability, estimated from the graph $g$, that a randomly selected term sequence $T \in \fulltau$ satisfies both $m_g(T,p_1,V_1)$ and $m_g(T,p_2,V_2)$. 
Finally, $\hat{P}_g(E_{g,p_2,V_2;p_1} \mid E_{g,p_1,V_1;p_2})$ denotes the empirical conditional probability, estimated from the graph $g$, that a randomly selected term sequence $T \in \fulltau$ satisfies $m_g(T,p_2,V_2)$ under the condition that $T$ satisfies $m_g(T,p_1,V_1)$.

Note the necessary difference in how events are defined here in contrast to how they are defined in the context of itemset-based association rules. In ISAR context, for a rule $A \Rightarrow B$ we were able to define the
relative support of the antecedent without knowledge about the consequent (or vice versa). 
Here, however, since we make use of the set $\fulltau$, we need to know both the antecedent pattern as well as the consequent pattern, so that we can discard those term sequences for which we know a priori that they cannot characterize correspondences. Thus, we call the event $E_{g,p_1,V_1;p_2}$ instead of $E_{g,p_1,V_1}$. To denote that the pattern $p_2$ is a bit secondary (because we mainly care about matches of $p_1$), but cannot be ignored (because without $p_2$ we cannot specify $\fulltau$), we put the secondary $p_2$ after the semicolon in $E_{g,p_1,V_1;p_2}$.

\subsection{Metrics for the Single-graph Setting}\label{sec:metricssingle}

Given a graph $g$ and a rule $(p_1,p_2,V_1,V_2)$ with $|V_1|=|V_2|=n$. Let $\mathcal{P}^n$, with $|\mathcal{V}_p| \geq n$ for each $p \in \mathcal{P}^n$, be the set of graph patterns with at least $n$ variables, and let $\mathcal{V}^n$ denote the set of non-repetitive variable sequences of length $n$. 

In the context of a rule $(p_1,p_2,V_1,V_2)$ and a graph $g$ with $\fulltau \neq \varnothing$, when we define absolute and relative support of one of the two graph patterns, e.g., $p_1$, then we also need to take into account the other graph pattern, e.g., $p_2$. Furthermore, we need to take into account the sequence of joining variables, e.g., $V_1$. Thus, we evaluate the support of a tuple of the form $(p_1,V_1;p_2)$. Here, $p_1$ is the pattern we are primarily interested in, but we also need to take into account the second graph pattern, i.e., $p_2$.

We define absolute support and relative support (for a single graph $g$ or a bag $G$ of graphs) of a pair such as $(p_i,V_i;p_j)$ (where $p_i = p_1$ and $V_i = V_1$, or, $p_i = p_2$ and $V_i = V_2$) as follows:
\[
\begin{array}{rl}
\operatorname{absolute-support}_g : & (\mathcal{P}^n \times \mathcal{V}^n \times \mathcal{P}^n) 
 \longrightarrow \mathbb{N}, \\[6pt]
& (p_i,V_i;p_j) \longmapsto |\lbrace ~T \in \mathcal{T}^n_{g,p_1,p_2} \mid m_g(T,p_i,V_i)~\rbrace|\\[6pt]
& = \sum_{T \in \mathcal{T}^n_{g,p_1,p_2}}\mathbb{1}_{\lbrace m_g(T,p_i,V_i)\rbrace}\\[6pt]
\operatorname{relative-support}_g : & (\mathcal{P}^n \times \mathcal{V}^n \times \mathcal{P}^n) 
 \longrightarrow [0,1] \cap \mathbb{Q}, \\[6pt]
 & (p_i,V_i;p_j) \longmapsto \dfrac{|\lbrace ~T \in \mathcal{T}^n_{g,p_i,p_j} \mid m_g(T,p_i,V_i)~\rbrace|}{|\mathcal{T}^n_{g,p_i,p_j}|}\\[8pt]
 & = \dfrac{\sum_{T \in \mathcal{T}^n_{g,p_i,p_j}}\mathbb{1}_{\lbrace m_g(T,p_i,V_i)\rbrace}}{|\mathcal{T}^n_{g,p_i,p_j}|}\\[8pt]
 & = \hat{P}_g(E_{g,p_i,V_i;p_j})
\end{array}
\]
Given the set $\fulltau$ of term sequences that, in principle, could characterize correspondences between two mappings $\mu_1 \in \Omega^{nra}_{p_1,g}$ and $\mu_2 \in \Omega^{nra}_{p_2,g}$, for a (pattern, variable sequence) tuple $(p_i,V_i)$, absolute support for a single graph $g$ is defined as the number of these  term sequences for which $m_g(T,p_i,V_i)$ is true.

Given the set $\fulltau$ of term sequences that, in principle, could characterize correspondences between two mappings $\mu_1 \in \Omega^{nra}_{p_1,g}$ and $\mu_2 \in \Omega^{nra}_{p_2,g}$, for a (pattern, variable sequence) tuple $(p_i,V_i)$, 
relative support for a single graph $g$ is defined as the number of these  term sequences for which $m_g(T,p_i,V_i)$ is true to the  number of term sequences that could in principle characterize correspondences between two mappings $\mu_1$ and $\mu_2$.

The support metrics are anti-monotonic. Given a (pattern, variable sequence) tuple $(p_i,V_i)$, when we extend $p_i$ to $p_i'$ by adding one or more triple patterns, then for each variable $v \in V_i$ the additional triples might add (but never remove) constraints about $v$ -- thus, it cannot be the case that there exists a term sequence $T$ for which a mapping $\mu' \in \Omega^{nra}_{p_i',g}$ exists such that $\mu'(V_i) = T$ but no mapping $\mu \in \Omega^{nra}_{p_i,g}$ exists such that $\mu(V_i) = T$. However, it can be the case that $|\Omega^{nra}_{p_i}| \leq |\Omega^{nra}_{p_i',g}|$, but that does not make the support metrics not anti-monotonic.\\

The metrics confidence, lift, leverage, and conviction for the single-graph setting are defined analogously to their ISAR counterparts. We define confidence as follows:
\begin{align}\allowdisplaybreaks[4]
\operatorname{confidence}_g ~:~ & (\mathcal{P}^n \times \mathcal{P}^n \times \mathcal{V}^n \times \mathcal{V}^n) \longrightarrow [0,1] \cap \mathbb{Q}, \nonumber\\[6pt]
& (p_1,p_2,V_1,V_2) \longmapsto \dfrac{|\lbrace ~T \in \mathcal{T}^n_{g,p_1,p_2} \mid m_g(T,p_1,V_1) \land m_g(T,p_2,V_2)~\rbrace|}{|\lbrace ~T \in \mathcal{T}^n_{g,p_1,p_2} \mid m_g(T,p_1,V_1)~\rbrace|}\nonumber\\
& = \dfrac{\displaystyle\sum_{T \in \mathcal{T}^n_{g,p_1,p_2}}\mathbb{1}_{\lbrace m_g(T,p_1,V_1) \land m_g(T,p_2,V_2)\rbrace}}{\displaystyle\sum_{T \in \mathcal{T}^n_{g,p_1,p_2}}\mathbb{1}_{\lbrace m_g(T,p_1,V_1) \rbrace}}\nonumber\\
& = \hat{P}_g(E_{g,p_2,V_2;p_1} \mid E_{g,p_1,V_1;p_2})\nonumber
\end{align}
The score of $\operatorname{confidence}_g$ is undefined if $p_1$ does not match the graph $g$ (i.e., $\Omega^{nra}_{p_1,g} = \varnothing$, or, equivalently, $\hat{P}_g(E_{g,p_1,V_1;p_2})= 0$). We define lift as follows:
\begin{align}\allowdisplaybreaks[4]
\operatorname{lift}_g ~:~ & (\mathcal{P}^n \times \mathcal{P}^n \times \mathcal{V}^n \times \mathcal{V}^n) \longrightarrow [0,\infty) \cap \mathbb{Q}, \nonumber\\[2pt]
& (p_1,p_2,V_1,V_2) \longmapsto \dfrac{\operatorname{confidence}_g(p_1,p_2,V_1,V_2)}{\operatorname{relative-support}_g(p_2,V_2;p_1)} = \dfrac{\hat{P}_g(E_{g,p_2,V_2;p_1} \mid E_{g,p_1,V_1;p_2})}{\hat{P}_g(E_{g,p_2,V_2;p_1})}\nonumber\\[4pt]
& = \dfrac{|\lbrace \,T \in \fulltau \mid m_g(T,p_1,V_1) \land m_g(T,p_2,V_2)\,\rbrace|\cdot |\fulltau|}{|\lbrace \,T \in \fulltau \mid m_g(T,p_1,V_1)\,\rbrace|\cdot |\lbrace \, T \in \fulltau \mid m_g(T,p_2,V_2) \, \rbrace|}\nonumber\\[4pt]
& = \dfrac{\left[\indicatorboth\right] \cdot |\fulltau|}{ \left[ \indicatorleft \right] \cdot \left[ \indicatorright \right] }\nonumber
\end{align}
The score of $\operatorname{lift}_g$ is undefined if $p_1$ or $p_2$ does not match the graph $g$ (i.e., $\Omega^{nra}_{p_1,g} = \varnothing$ or $\Omega^{nra}_{p_2,g} = \varnothing$, or, equivalently, $\hat{P}_g(E_{g,p_1,V_1;p_2})= 0$ or $\hat{P}_g(E_{g,p_2,V_2;p_1})= 0$). We define leverage as follows:
\begin{align}\allowdisplaybreaks[4]
\operatorname{leverage}_g ~:~ & (\mathcal{P}^n \times \mathcal{P}^n \times \mathcal{V}^n \times \mathcal{V}^n) \longrightarrow [-\tfrac{1}{4},\tfrac{1}{4}] \cap \mathbb{Q}, \nonumber\\[6pt]
& (p_1,p_2,V_1,V_2) \longmapsto \dfrac{|\lbrace T \in \fulltau \mid M_g(T,p_1,V_1 \land m_g(T,p_2,V_2)\rbrace|}{|\fulltau|} - \nonumber\\[8pt]
& ~~~~\dfrac{|\lbrace \,T \in \fulltau ~|~ m_g(T,p_1,V_1)\,\rbrace|}{|\fulltau|}\cdot \dfrac{|\lbrace \, T \in \fulltau \mid m_g(T,p_2,V_2)\,\rbrace|}{|\fulltau|}\nonumber\\[8pt]
& = \dfrac{\indicatorboth}{|\fulltau|} - \nonumber\\[6pt]
&~~~~\frac{\indicatorleft}{|\fulltau|} \cdot \frac{\indicatorright}{|\fulltau|}\nonumber\\[6pt]
& = \hat{P}_g(E_{g,p_1,V_1;p_2} \cap E_{g,p_2,V_2;p_1}) - \hat{P}_g(E_{g,p_2,V_2;p_1}) \cdot \hat{P}_g(E_{g,p_2,V_2;p_1})\nonumber
\end{align}
The score of $\operatorname{leverage}_g$ is undefined if $\fulltau = \varnothing$, which means that there cannot be a match of both patterns on $g$. We define conviction as follows:
\begin{align}\allowdisplaybreaks[4]
\operatorname{conviction}_g ~:~ & (\mathcal{P}^n \times \mathcal{P}^n \times \mathcal{V}^n \times \mathcal{V}^n) \longrightarrow \mathbb{Q}_{> 0} \cup \lbrace \infty \rbrace, \nonumber\\[6pt]
& (p_1,p_2,V_1,V_2) \longmapsto \dfrac{\hat{P}_g(E_{g,p_1,V_1;p_2}) \cdot(1-\hat{P}_g(E_{g,p_2,V_2;p_1}))}{1 - \hat{P}_g(E_{g,p_2,V_2;p_1} \mid E_{g,p_1,V_1;p_2})}\nonumber\\[6pt]
& \hspace{-0.9cm}= \dfrac{\left(\sum_{T \in \fulltau} \mathbb{1}_{\left\lbrace m_g(T,p_1,V_1) \right\rbrace}\right)^2 {\cdot} \left(|\fulltau| {-} \sum_{T \in \fulltau} \mathbb{1}_{\left\lbrace m_g(T,p_2,V_2) \right\rbrace}\right)}{|\fulltau|^2{\cdot}\! \left(\left[\sum_{T \in \fulltau} \mathbb{1}_{\left\lbrace m_g(T,p_1,V_1) \right\rbrace}\right]\!{-}\!\left[\sum_{T \in \fulltau} \mathbb{1}_{\left\lbrace m_g(T,p_1,V_1) \land m_g(T,p_2,V_2) \right\rbrace}\right]\right)}\nonumber
\end{align}
In the case that $\hat{P}_g(E_{g,p_2,V_2;p_1} \mid E_{g,p_2,V_2;p_1}) = 1$ (i.e., $\operatorname{confidence}_g(p_1,p_2,V_1,V_2) = 1$), then $\operatorname{conviction}_g(p_1,p_2,V_1,V_2) \to \infty$: since the denominator approaches zero from the positive side, the fraction diverges to infinity. We define $\operatorname{conviction}_g(p_1,p_2,V_1,V_2)$ to \textit{be} $\infty$ in the case that $\operatorname{confidence}_g(p_1,p_2,V_1,V_2) = 1$. Conviction is undefined if $\hat{P}_g(E_{g,p_2,V_2;p_1}) = 1$.\\

In practice, one derives graph patterns from a non-empty graph and constructs rules from graph patterns with a non-zero support. Thus, 
the situation does not occur that 
the probability of the antecedent (i.e., $\hat{P}_g(E_{g,p_1,V_1;p_2})$) or the probability of the consequent
(i.e., $\hat{P}_g(E_{g,p_2,V_2;p_1})$) being zero 
leading to the confidence score, the lift score, or the leverage score to be undefined does not occur.

\subsection{Metrics for the Graph-transactional Setting}\label{sec:metricstransactional}

We have two equivalent ways to define metrics for the graph transactional setting. As an example we take the confidence metric. The first option is to sum over the individual probabilities per graph, for example:
\[ \operatorname{confidence}_G(p_1,p_2,V_1,V_2) = \dfrac{\sum_{g \in G} \hat{P}_g(E_{g,p_1,V_1;p_2} \cap E_{g,p_2,V_2;p_1})}{\sum_{g \in G} \hat{P}_g(E_{g,p_1,V_1;p_2})}\]
The second option consists in adapting the definitions of the events and define the probabilities $\hat{P}_G(E_{G,p_2,V_2;p_1})$ and $\hat{P}_G(E_{G,p_1,V_1;p_2} \cap E_{G,p_2,V_2;p_1})$ and define confidence as follows:
\[ \operatorname{confidence}_G(p_1,p_2,V_1,V_2) = \dfrac{\hat{P}_G(E_{G,p_1,V_1;p_2} \cap E_{G,p_2,V_2;p_1})}{\hat{P}_G(E_{G,p_1,V_1;p_2})}\]
To define the event $E_{G,p_1,V_1;p_2}$, we need 
to adapt the sample space from $\fulltau$ to
$\mathcal{T}^n_{G,p_1,p_2}$, which we define as follows:
\[ \mathcal{T}^n_{G,p_1,p_2} := \lbrace \,(i,T) \mid g_i \in G \land T \in \mathcal{T}^n_{g_i,p_1,p_2} \,\rbrace\]
Here, $i$ is the identifier of a graph $g_i \in G$. Then, we can define the event $E_{G,p_1,V_1;p_2}$:
\[ E_{G,p_1,V_1;p_2} = \lbrace \, (i,T) \in \mathcal{T}^n_{G,p_1,p_2} \mid m_{g_i}(T,p_1,V_1) \, \rbrace\]
The event $E_{G,p_2,V_2;p_1}$ is defined equivalently. We define the probability of the event as follows:
\[ \hat{P}(E_{G,p_1,V_1;p_2}) = \dfrac{|\lbrace \, (i,T) \in \mathcal{T}^n_{G,p_1,p_2} \mid m_{g_i}(T,p_1,V_1) \, \rbrace|}{|\mathcal{T}^n_{G,p_1,p_2}|}\]
The probability of the event $E_{G,p_2,V_2;p_1}$ is defined equivalently. Then, we can define relative support as follows:
\[
\begin{array}{rl}
\operatorname{relative-support}_G : & (\mathcal{P}^n \times \mathcal{V}^n \times \mathcal{P}^n) 
 \longrightarrow [0,1] \cap \mathbb{Q}, \\[6pt]
 & (p_i,V_i;p_j) \longmapsto \hat{P}(E_{G,p_i,V_i;p_j})
\end{array}
\]
We define the empirical joint and conditional probabilities as follows: 
\[
\begin{array}{rl}
\hat{P}_G(E_{G,p_1,V_1;p_2} \cap E_{G,p_2,V_2;p_1}) =& \dfrac{|\lbrace \, (i,T) \in \fulltauG \mid m_{g_i}(T,p_1,V_1) \land m_{g_i}(T,p_2,V_2)\,\rbrace|}{|\fulltauG|}\\[10pt]
\hat{P}_G(E_{G,p_2,V_2;p_1} \mid E_{G,p_1,V_1;p_1}) =& \dfrac{|\lbrace \, (i,T) \in \fulltauG \mid m_{g_i}(T,p_1,V_1) \land m_{g_i}(T,p_2,V_2) \,\rbrace|}{|\lbrace \, (i,T) \in \fulltauG \mid m_{g_i}(T,p_1,V_1) \,\rbrace|}
\end{array}
\]
$\hat{P}_G(E_{G,p_1,V_1;p_2})$ denotes the empirical probability, estimated from the bag $G$ of graphs, that for a randomly selected 
pair consisting of the identifier $i$ of a graph and a term sequence $T$ (that could in principle correspond to a match of $p_1$ and to a match of $p_2$ on the graph with the identifier $i$) it is the case that the term sequence satisfies $m_{g_i}(T,p_1,V_1)$; which means that a match of $p_1$ on $g_i$ (i.e., $\mu \in \Omega^{nra}_{p_1,g_i}$) exists that is characterized by $T$ (i.e.,  $\mu(V_1) = T$). $\hat{P}_G(E_{G,p_2,V_2})$ can be interpreted equivalently. 
$\hat{P}_G(E_{G,p_1,V_1;p_2} \cap E_{G,p_2,V_2;p_1})$ denotes the empirical probability, estimated from the bag $G$ of graphs, that for a randomly selected pair consisting of the identifier $i$ of a graph and a term sequence $T$ (that could in principle correspond to a match of $p_1$ and to a match of $p_2$ on the graph with the identifier $i$) it is the case that the term sequence satisfies both $m_{g_i}(T,p_1,V_1)$ and $m_{g_i}(T,p_2,V_2)$. 
Finally, $\hat{P}_G(E_{G,p_2,V_2;p_1} \mid E_{G,p_1,V_1;p_2})$ denotes the empirical conditional probability, estimated from the bag $G$ of graphs, 
that for a randomly selected pair consisting of the identifier $i$ of a graph and a term sequence $T$ (that could in principle correspond to a match of $p_1$ and to a match of $p_2$ on the graph with the identifier $i$) it is the case that the term sequence satisfies $m_{g_i}(T,p_2,V_2)$ under the condition that it satisfies $m_{g_i}(T,p_1,V_1)$.\\

We formulate a probability space $(\Omega, \mathcal{F}, \hat{P}_G)$ with $\Omega = \fulltauG$, $\mathcal{F} = \mathcal{P}(\Omega)$, and the empirical probability measure $\hat{P}_G$ defined above. It is trivial to show that the Kolmogorov axioms are satisfied in this case, too. 
Based on the events $E_{G,p_1,V_1;p_2}$ and $E_{G,p_2,V_2;p_1}$ we can define the following metrics for the graph-transactional setting:
\begin{align*}
\operatorname{confidence}_G ~:~ & (\mathcal{P}^n \times \mathcal{P}^n \times \mathcal{V}^n \times \mathcal{V}^n) \longrightarrow [0,1] \cap \mathbb{Q}, \\[6pt]
& (p_1,p_2,V_1,V_2) \longmapsto \hat{P}_G(E_{G,p_2,V_2;p_1} \mid E_{G,p_1,V_1;p_2})\\
& = \dfrac{|\lbrace \, (i,T) \in \fulltauG \mid m_{g_i}(T,p_1,V_1) \land m_{g_i}(T,p_2,V_2) \,\rbrace|}{|\lbrace \, (i,T) \in \fulltauG \mid m_{g_i}(T,p_1,V_1) \,\rbrace|}
\end{align*}
The score of $\operatorname{confidence}_G$ is undefined if $p_1$ matches none of the graphs in $G$ (i.e., $\forall g \in G : \Omega^{nra}_{p_2,g} = \varnothing$, or, equivalently, $\forall g \in G : \hat{P}_g(E_{g,p_1,V_1;p_2}) = 0$).
\begin{align*}
\operatorname{lift}_G ~:~ & (\mathcal{P}^n \times \mathcal{P}^n \times \mathcal{V}^n \times \mathcal{V}^n) \longrightarrow [0,\infty) \cap \mathbb{Q}, \\[2pt]
& (p_1,p_2,V_1,V_2) \longmapsto \dfrac{\hat{P}_G(E_{G,p_2,V_2;p_1} \mid E_{G,p_1,V_1;p_2})}{\hat{P}_G(E_{G,p_2,V_2;p_1})}\\
& = \dfrac{|\lbrace \, (i,T) \in \fulltauG \mid m_{g_i}(T,p_1,V_1) \land m_{g_i}(T,p_2,V_2) \,\rbrace|}{|\lbrace \, (i,T) \in \fulltauG \mid m_{g_i}(T,p_1,V_1) \,\rbrace| \cdot |\lbrace \, (i,T) \in \fulltauG \mid m_{g_i}(T,p_2,V_2) \,\rbrace|}
\end{align*}
The score of $\operatorname{lift}_G$ is undefined if $p_1$ matches none of the graphs in $G$ or if $p_2$ matches none of the graphs in $G$ (i.e., $\forall g \in G : \Omega^{nra}_{p_1,g} = \varnothing \lor \forall g \in G : \Omega^{nra}_{p_2,g} = \varnothing$, or, equivalently, $\forall g \in G : \hat{P}_g(E_{g,p_1,V_1;p_2}) = \hat{P}_g(E_{g,p_2,V_2;p_1}) = 0$).
\begin{align*}
\operatorname{leverage}_G ~:~ & (\mathcal{P}^n \times \mathcal{P}^n \times \mathcal{V}^n \times \mathcal{V}^n) \longrightarrow [-\tfrac{1}{4},\tfrac{1}{4}] \cap \mathbb{Q}, \\[6pt]
& \hspace{-0.2cm}(p_1,p_2,V_1,V_2) \longmapsto \hat{P}_G(E_{G,p_1,V_1;p_2} \cap E_{G,p_2,V_2;p_1}){-}\hat{P}_G(E_{G,p_2,V_2;p_1}) {\cdot} \hat{P}_G(E_{G,p_2,V_2;p_1})\\
& = \dfrac{|\lbrace \, (i,T) \in \fulltauG \mid m_{g_i}(T,p_1,V_1) \land m_{g_i}(T,p_2,V_2)\,\rbrace|}{|\fulltauG|} -\\
& \hspace{-0cm}\dfrac{|\lbrace \, (i,T) \in \fulltauG \mid m_{g_i}(T,p_1,V_1) \,\rbrace|}{|\fulltauG|} \cdot \dfrac{|\lbrace \, (i,T) \in \fulltauG \mid m_{g_i}(T,p_2,V_2) \,\rbrace|}{|\fulltauG|}
\end{align*}
The score of $\operatorname{leverage}_G$ is undefined if for all graphs $g \in G$ it is the case that $\fulltau = \varnothing$, which means that there cannot be a match of both patterns for any graph.
\begin{align*}
\operatorname{conviction}_G ~:~ & (\mathcal{P}^n \times \mathcal{P}^n \times \mathcal{V}^n \times \mathcal{V}^n) \longrightarrow \mathbb{Q}_{> 0} \cup \lbrace \infty \rbrace, \\[6pt]
& (p_1,p_2,V_1,V_2) \longmapsto \dfrac{\hat{P}_G(E_{G,p_1,V_1;p_2}) \cdot(1-\hat{P}_G(E_{G,p_2,V_2;p_1}))}{1 - \hat{P}_G(E_{G,p_2,V_2;p_1} \mid E_{G,p_1,V_1;p_2})}\\[6pt]
& \hspace{-2.5cm}= \dfrac{\left(\sum_{(i,T) \in \fulltauG} \mathbb{1}_{\left\lbrace m_{g_i}(T,p_1,V_1) \right\rbrace}\right)^2 {\cdot} \left(|\fulltauG| {-} \sum_{(i,T) \in \fulltauG} \mathbb{1}_{\left\lbrace m_{g_i}(T,p_2,V_2) \right\rbrace}\right)}{|\fulltauG|^2{\cdot}\! \left(\left[\sum_{(i,T) \in \fulltauG} \mathbb{1}_{\left\lbrace m_{g_i}(T,p_1,V_1) \right\rbrace}\right]\!{-}\!\left[\sum_{(i,T) \in \fulltauG} \mathbb{1}_{\left\lbrace m_{g_i}(T,p_1,V_1) \land m_{g_i}(T,p_2,V_2) \right\rbrace}\right]\right)}
\end{align*}
In the case that $\hat{P}_G(E_{G,p_2,V_2;p_1} \mid E_{G,p_2,V_2;p_1}) = 1$ (i.e., $\operatorname{confidence}_G(p_1,p_2,V_1,V_2) = 1$), then $\operatorname{conviction}_G(p_1,p_2,V_1,V_2) \to \infty$: since the denominator approaches zero from the positive side, the fraction diverges to infinity. We define $\operatorname{conviction}_G(p_1,p_2,V_1,V_2)$ to \textit{be} $\infty$ in the case that $\operatorname{confidence}_G(p_1,p_2,V_1,V_2) = 1$. Conviction is undefined if $\hat{P}_G(E_{G,p_2,V_2;p_1}) = 1$\\

Note that these metrics are micro-averages. Thus, larger graphs can have a stronger impact on a score than small graphs. For situations where this is undesired, we introduce macro-averaged metrics, beginning with macro-averaged relative support and macro-averaged confidence:
\[
\begin{array}{rl}
\operatorname{macro-relative-support}_G : & (\mathcal{P}^n \times \mathcal{V}^n \times \mathcal{P}^n) 
 \longrightarrow [0,1] \cap \mathbb{Q}, \\[6pt]
 & (p_i,V_i;p_j) \longmapsto \dfrac{1}{|G|}\displaystyle \sum_{g \in G}\hat{P}(E_{g,p_i,V_i;p_j})
\\
\operatorname{macro-confidence}_G ~:~ & (\mathcal{P}^n \times \mathcal{P}^n \times \mathcal{V}^n \times \mathcal{V}^n) \longrightarrow [0,1] \cap \mathbb{Q}, \\[6pt]
& (p_1,p_2,V_1,V_2) \longmapsto \frac{1}{|G|}\sum_{g \in G}\operatorname{confidence}_g(p_1,p_2,V_1,V_2)
\end{array}
\]
The score of $\operatorname{macro-confidence}_G$ is undefined if there is a graph $g \in G$ that is not matched by $p_1$ (i.e., $\exists g \in G : \Omega^{nra}_{p_1,g} = \varnothing$).
Thus, given a bag $G$ of graphs where some graph is not matched by $p_1$, we need to reduce $G$ so that every graph in the reduced set satisfies that condition. If we have derived two rules $r_a=(p_{a,1},p_{a,2},V_{a,1},V_{a,2})$ and $r_b=(p_{b,1},p_{b,2},V_{b,1},V_{b,2})$ from a bag $G$ of graphs, then 
the result of reducing $G$ to those graphs $g$ that satisfy 
$ \hat{P}(E_{g,p_{a,1}, V_{a,1},p_{a,2}}) > 0$ 
can be different in size from 
the result of reducing $G$ to those graphs $g$ that satisfy 
$ \hat{P}(E_{g,p_{b,1}, V_{b,1},p_{b,2}}) > 0$. 
The consequence is that if $r_1$ has a higher lift score than $r_2$, the rule $r_2$ might still be more interesting because $G$ had to be reduced less for $r_2$ than for $r_1$. Thus, ranking rules by lift score only might be insufficient in practical applications.  

To quantify the fraction of the bag $G$ of graphs where a metric with requirements specific to the graphs is applicable we define the metric $\operatorname{degree-of-applicability}_G$ as follows:
\begin{equation*} \operatorname{degree-of-applicability}_G(c) = \frac{\sum_{g \in G} \mathbb{1}_{\lbrace c \rbrace}}{|G|}\end{equation*}
where $c$ is a boolean expression. In the case of confidence, we use 
$c = \hat{P}(E_{g,p_1, V_1,p_2}) > 0$ 
and evaluate macro-confidence on the reduced bag $G_c = \lbrace \, g \in G \mid c \, \rbrace$.

A simple method when ranking a set of graph pattern-based association rules is to carry out two-stage sorting where first it is sorted (descending) based on the $\operatorname{degree-of-applicability}$ score. Ties are broken based on, e.g., the $\operatorname{lift}$ score. 
This two-stage sorting approach is also known as lexicographic ordering.
The field of multi-criteria decision making (MCDM) provides further options beyond lexicographic ordering.
We define macro-averaged lift as follows:
\begin{align*}
\operatorname{macro-lift}_G ~:~ & (\mathcal{P}^n \times \mathcal{P}^n \times \mathcal{V}^n \times \mathcal{V}^n) \longrightarrow [0,\infty) \cap \mathbb{Q}, \\[2pt]
& (p_1,p_2,V_1,V_2) \longmapsto \frac{1}{|G|}\sum_{g \in G}\operatorname{lift}_g(p_1,p_2,V_1,V_2)
\end{align*}
The score of $\operatorname{macro-lift}_G$ is undefined if there exists a graph $g \in G$ where either $p_1$ or $p_2$ does not match (i.e., $\exists g \in G : \Omega^{nra}_{p_1,g} = \varnothing \lor \Omega^{nra}_{p_2,g} = \varnothing$). Thus, we might need to reduce $G$ with $c = \hat{P}(E_{g,p_1, V_1,p_2}) > 0 \land \hat{P}(E_{g,p_2, V_2,p_1}) > 0$ and evaluate macro-lift on the reduced bag $G_c$.
We define macro-averaged leverage as follows:
\begin{align*}
\operatorname{macro-leverage}_G ~:~ & (\mathcal{P}^n \times \mathcal{P}^n \times \mathcal{V}^n \times \mathcal{V}^n) \longrightarrow [-\tfrac{1}{4},\tfrac{1}{4}] \cap \mathbb{Q}, \\[6pt]
& (p_1,p_2,V_1,V_2) \longmapsto \frac{1}{|G|}\sum_{g \in G}\operatorname{leverage}_g(p_1,p_2,V_1,V_2)
\end{align*}
The score of $\operatorname{macro-leverage}_G$ is undefined if there exists a graph $g \in G$ where $\fulltau = \varnothing$, which means that there cannot be a match of both patterns on $g$ (i.e., $\exists g \in G : \fulltau = \varnothing$). Thus, we might need to reduce $G$ with $c = \fulltau \neq \varnothing$ and evaluate macro-leverage on the reduced bag~$G_c$.
We define macro-averaged conviction as follows:
\begin{align*}
\operatorname{macro-conviction}_G ~:~ & (\mathcal{P}^n \times \mathcal{P}^n \times \mathcal{V}^n \times \mathcal{V}^n) \longrightarrow \mathbb{Q}_{> 0} \cup \lbrace \infty \rbrace, \\[6pt]
& (p_1,p_2,V_1,V_2) \longmapsto \frac{1}{|G|}\sum_{g \in G}\operatorname{conviction}_g(p_1,p_2,V_1,V_2)
\end{align*}
We define the score of $\operatorname{conviction}_g$ to be $\infty$ in the case that $\hat{P}_G(E_{G,p_2,V_2;p_1} \mid E_{G,p_1,V_1;p_2}) = 1$ and, therefore, we need to define what the score of $\operatorname{macro-conviction}_G$ is in the case that conviction is $\infty$ for at least one $g \in G$. We treat $\infty$ as a value and let it behave as follows: $\infty + \infty = \infty$, and, for any $x \in \mathbb{Q}$, $\infty + x = \infty$. 
Alternatively, it can make sense to count the number of graphs in G for which $\operatorname{conviction}_g = \infty$ and then evaluate macro-conviction on the reduced bag $G_c$ where $c = \hat{P}_g(E_{g,p_2,V_2;p_1} \mid E_{g,p_1,V_1;p_2}) < 1 \land \hat{P}_g(E_{g,p_2,V_2;p_1}) < 1$.\\

It is important to note that when we macro-average the scores then there is no single probability space -- we are working with a family of probability spaces and the macro-average is an external aggregation of metrics, not something that corresponds to probabilities in one unified probability space. Thus, the the metrics are no longer pure functions of one measure $P$. This has consequences on which characteristics a macro-averaged metric shares with its corresponding micro-averaged metric and with its corresponding itemset-based metrics, as we will see in Section \ref{sec:characteristics}.

\subsection{Reframing to Classical Association Rule Metrics}\label{sec:reframing}

In principle we can reframe the problem of measuring properties of graph pattern-based association rules into the problem of measuring properties of itemset-based association rules -- both in the case where we have given a single graph and where we have given a bag of graphs and micro-averaged metrics. The method consists in generating a transaction database $T$ for a given graph pattern-based association rule $(p_1,p_2,V_1,V_2)$ and a bag $G$ of graphs (that could contain a single graph). We generate the bag $T$ (using bag union operations $\uplus$) as follows:
\[ T = \biguplus_{g \in G}~\biguplus_{X \in \fulltau} \left[ \lbrace \, A \mid m_g(X,p_1,V_1) \, \rbrace \cup \lbrace \, B \mid m_g(X,p_2,V_2) \, \rbrace \right]\]
Thus, for each graph $g \in G$, for each term sequence $X \in \fulltau$ we add a new transaction $t$ to $T$ %
where $t \in \lbrace \varnothing, \lbrace A \rbrace, \lbrace B \rbrace, \lbrace A,B \rbrace\rbrace$.
$A$ and $B$ are special symbols. In the context of a graph $g$ and a term sequence $X$, $A$ expresses that $m_g(X,p_1,V_1)$ is true, which means that there exists a match $\mu \in \Omega^{nra}_{p_1,g}$ such that $\mu(V_1)=X$. $B$ expresses that $m_g(X,p_2,V_2)$ is true, which means that there exists a match $\mu \in \Omega^{nra}_{p_2,g}$ such that $\mu(V_2)=X$.

Because for each $g \in G$ we add exactly $|\fulltau|$ transactions to the bag $T$, it follows that $|T| = |\fulltauG|$. 
Furthermore, the number of transactions in $T$ that contain $A$ is equal to $\sum_{g \in G} |\lbrace \, X \in \fulltau \mid m_g(X,p_1,V_1) \,\rbrace|$, the number of transactions in $T$ that contain $B$ is equal to $\sum_{g \in G} |\lbrace \, X \in \fulltau \mid m_g(X,p_2,V_2) \,\rbrace|$, and
the number of transactions in $T$ that contain $A$ and $B$ is equal to $\sum_{g \in G} |\lbrace \, X \in \fulltau \mid m_g(X,p_1,V_1) \land m_g(X,p_2,V_2) \,\rbrace|$. 
We have thus created a homomorphism, which allows us to evaluate classical metrics for itemset-based association rules on a classical transaction database where each transaction is an itemset instead of evaluating metrics for graph pattern-based association rules on a transaction database where each transaction is a graph.

Given the graph pattern-based association rule $(p_1,p_2,V_1,V_2)$, a bag $G$ of graphs, two special symbols $A$ and $B$ that correspond to the patterns $p_1$ and $p_2$, respectively, and the generated transaction database $T$, we define the following correspondences between events:
\begin{align*}
\hat{P}_G(E_{p_1,V_1;p_2}) ~=~ & \hat{P}_T(E_{\lbrace A \rbrace})\\
\hat{P}_G(E_{p_2,V_2;p_1}) ~=~ & \hat{P}_T(E_{\lbrace B \rbrace})\\
\hat{P}_G(E_{p_1,V_1;p_2} \cap E_{p_2,V_2;p_1}) ~=~ & \hat{P}_T(E_{\lbrace A, B \rbrace})
\end{align*}
From these correspondences follow further correspondences:
\begin{align*}
\operatorname{relative-support}_G(p_1,V_1;p_2) ~=~ & \hat{P}_T(E_{\lbrace A \rbrace}) \\
\operatorname{relative-support}_G(p_2,V_2;p_1) ~=~ & \hat{P}_T(E_{\lbrace B \rbrace}) \\
\operatorname{conf.}_G(p_1,p_2,V_1,V_2) = \hat{P}_G(E_{p_2,V_2;p_1} \mid E_{p_1,V_1;p_2}) ~=~ & \hat{P}_T(E_{\lbrace B \rbrace} \mid E_{\lbrace A \rbrace}) = \operatorname{conf.}_T(\lbrace A \rbrace \Rightarrow \lbrace B \rbrace)\\
\operatorname{lift}_G(p_1,p_2,V_1,V_2) ~=~ & \operatorname{lift}_T(\lbrace A \rbrace \Rightarrow \lbrace B \rbrace)\\
\operatorname{leverage}_G(p_1,p_2,V_1,V_2) ~=~ & \operatorname{leverage}_T(\lbrace A \rbrace \Rightarrow \lbrace B \rbrace)\\
\operatorname{conviction}_G(p_1,p_2,V_1,V_2) ~=~ & \operatorname{conviction}_T(\lbrace A \rbrace \Rightarrow \lbrace B \rbrace)\\
\end{align*}

From the existence of the homomorphism we can conclude that the micro-averaged metrics for graph pattern-based association rules are generalizations of the metrics for itemset-based association rules.

Although this homomorphism is interesting from a theoretical point of view, exploiting the homomorphism and actually generating itemset-based transaction databases is not practical for a couple of reasons: i) the set $\fulltau$ can be very large, see Eq. \ref{cardinality_fulltau}. It is expected that usually most term sequences do not correspond to any match. Thus, the generated database would be large and most of its transactions would be empty. Furthermore, ii) for each rule another transaction database needs to be generated, resulting in high computational costs and high memory costs when the metrics are evaluated for a set of rules.\\
 
The transaction database generation procedure described above can only be applied for the micro-averaged metrics. The situation is different for the macro-averaged metrics. Here, one would need to distinguish transactions generated for different graphs. A simple way would be to generate one transaction database per graph, evaluate a metric, and then create the average of the obtained scores.

\subsection{Characteristics of the Metrics}\label{sec:characteristics}
With characteristics of metrics we refer to the scores of a metric obtained in especially interesting situations. In Table \ref{tab:situation-names} we introduce names to refer to the situations that are relevant when comparing and interpreting the scores of these metrics. These situations are described differently, depending on whether we are dealing with a situation in the context of an itemset-based association rule (itemset setting), in the context of a graph pattern-based association rule and a single graph (single graph setting), in the context of a graph pattern-based association rule and a bag of graphs where scores are micro-averaged (graph transactional setting, micro-averaged), or in the context of a graph pattern-based association rule and a bag of graphs where scores are macro-averaged (graph transactional setting, macro-averaged).

\begin{longtable}{>{\raggedright\arraybackslash}p{0.93cm} p{13cm}}
\caption{Names and descriptions of situations that are relevant when describing and comparing the scores of ISAR and GPAR metrics.\label{tab:situation-names}}\\
\toprule
\textbf{Name} & \textbf{Situation Description} \\
\midrule
\endfirsthead
\bottomrule
\endfoot

\texttt{IDE} &
The events $E_A$ and $E_B$ are identical, i.e., $E_A = E_B$. That means, for every transaction $t \in T$, when $A \subseteq t$ is satisfied then also $B \subseteq t$ is satisfied and vice versa %
(itemset setting).%

The events $E_{g,p_1,V_1;p_2}$ and $E_{g,p_2,V_2;p_1}$ are identical, i.e., $E_{g,p_1,V_1;p_2} = E_{g,p_2,V_2;p_1}$. That means, for every term sequence $T \in \fulltau$, 
when $m_g(T,p_1,V_1)$ is satisfied then also $m_g(T,p_2,V_2)$ is satisfied 
and vice versa %
(single graph setting).%

The events $E_{G,p_1,V_1;p_2}$ and $E_{G,p_2,V_2;p_1}$ are identical, i.e., $E_{G,p_1,V_1;p_2} = E_{G,p_2,V_2;p_1}$. That means, for every pair $(i,T) \in \fulltauG$,
when $m_{g_i}(T,p_1,V_1)$ is satisfied then also $m_{g_i}(T,p_2,V_2)$ is satisfied 
and vice versa %
(graph transactional setting, micro-averaged).

For every graph $g \in G$ it is the case that $E_{g,p_1,V_1;p_2}$ and $E_{g,p_2,V_2;p_1}$ are identical (graph transactional setting, macro-averaged).
\\

\texttt{DIS} &
The events $E_A$ and $E_B$ are disjoint, i.e., $E_A \cap E_B = \varnothing$. That means, for every transaction $t \in T$, when $A \subseteq t$ is satisfied then $B \subseteq t$ is not satisfied and vice versa %
(itemset setting).%

The events $E_{g,p_1,V_1;p_2}$ and $E_{g,p_2,V_2;p_1}$ are disjoint, i.e., $E_{g,p_1,V_1;p_2} \cap E_{g,p_2,V_2;p_1} = \varnothing$. That means, for every term sequence $T \in \fulltau$, 
when $m_g(T,p_1,V_1)$ is satisfied then $m_g(T,p_2,V_2)$ is not satisfied 
and vice versa %
(single graph setting).%

The events $E_{G,p_1,V_1;p_2}$ and $E_{G,p_2,V_2;p_1}$ are disjoint, i.e., $E_{G,p_1,V_1;p_2} \cap E_{G,p_2,V_2;p_1} = \varnothing$. That means, for every pair $(i,T) \in \fulltauG$,
when $m_{g_i}(T,p_1,V_1)$ is satisfied then also $m_{g_i}(T,p_2,V_2)$ is satisfied 
and vice versa %
(graph transactional setting, micro-averaged).

For every graph $g \in G$ it is the case that $E_{g,p_1,V_1;p_2}$ and $E_{g,p_2,V_2;p_1}$ are disjoint (graph transactional setting, macro-averaged).
\\

\texttt{IND} & 
The events $E_A$ and $E_B$ are independent, i.e., $\hat{P}_T(E_A \cap E_B) = \hat{P}_T(E_A) \cdot \hat{P}_T(E_B)$. That means, for every transaction $t \in T$, when $A \subseteq t$ is satisfied then this does not change the probability that $B \subseteq t$ is satisfied 
and vice versa (itemset setting).%

The events $E_{g,p_1,V_1;p_2}$ and $E_{g,p_2,V_2;p_1}$ are independent, i.e., $\hat{P}_T(E_{g,p_1,V_1;p_2} \cap E_{g,p_2,V_2;p_1}) = \hat{P}_T(E_{g,p_1,V_1;p_2}) \cdot \hat{P}_T(E_{g,p_2,V_2;p_1})$. That means, for every term sequence $T \in \fulltau$, 
when $m_g(T,p_1,V_1)$ is satisfied then this does not change the probability that $m_g(T,p_2,V_2)$ is satisfied and vice versa (single graph setting).%

The events $E_{G,p_1,V_1;p_2}$ and $E_{G,p_2,V_2;p_1}$ are independent, i.e., $\hat{P}_T(E_{G,p_1,V_1;p_2} \cap E_{G,p_2,V_2;p_1}) = \hat{P}_T(E_{G,p_1,V_1;p_2}) \cdot \hat{P}_T(E_{G,p_2,V_2;p_1})$. That means, for every pair $(i,T) \in \fulltauG$, when $m_{g_i}(T,p_1,V_1)$ is satisfied then this does not change the probability that $m_{g_i}(T,p_2,V_2)$ is satisfied and vice versa (graph transactional setting, micro-averaged).

For every graph $g \in G$ it is the case that $E_{g,p_1,V_1;p_2}$ and $E_{g,p_2,V_2;p_1}$ are independent (graph transactional setting, macro-averaged).\\

\texttt{POS} & 
The events $E_A$ and $E_B$ are positively correlated, i.e., $\hat{P}_T(E_A \cap E_B) > \hat{P}(E_A)\cdot \hat{P}(E_B)$. That means, for every transaction $t \in T$, when $A \subseteq t$ is satisfied then this increases the probability that $B \subseteq t$ is satisfied and vice versa (itemset setting).%

The events $E_{g,p_1,V_1;p_2}$ and $E_{g,p_2,V_2;p_1}$ are positively correlated, i.e., $\hat{P}_g(E_{g,p_1,V_1;p_2} \cap E_{g,p_2,V_2;p_1}) > \hat{P}_g(E_{g,p_1,V_1;p_2}) \cdot \hat{P}_g(E_{g,p_2,V_2;p_1})$. That means, for every term sequence $T \in \fulltau$, when $m_g(T,p_1,V_1)$ is satisfied then this increases the probability that $m_g(T,p_2,V_2)$ is satisfied and vice versa (single graph setting).%

The events $E_{G,p_1,V_1;p_2}$ and $E_{G,p_2,V_2;p_1}$ are positively correlated, i.e., $\hat{P}_G(E_{G,p_1,V_1;p_2} \cap E_{G,p_2,V_2;p_1}) > \hat{P}_G(E_{G,p_1,V_1;p_2}) \cdot \hat{P}_G(E_{G,p_2,V_2;p_1})$. That means, for every pair $(i,T) \in \fulltauG$, when $m_{g_i}(T,p_1,V_1)$ is satisfied then this increases the probability that $m_{g_i}(T,p_2,V_2)$ is satisfied and vice versa (graph transactional setting, micro-averaged).

For every graph $g \in G$ it is the case that $E_{g,p_1,V_1;p_2}$ and $E_{g,p_2,V_2;p_1}$ are positively correlated (graph transactional setting, macro-averaged).
\\

\texttt{NEG} & 
The events $E_A$ and $E_B$ are negatively correlated, i.e., $\hat{P}_T(E_A \cap E_B) < \hat{P}(E_A)\cdot \hat{P}(E_B)$. That means, for every transaction $t \in T$, when $A \subseteq t$ is satisfied then this decreases the probability that $B \subseteq t$ is satisfied and vice versa (itemset setting).%

The events $E_{g,p_1,V_1;p_2}$ and $E_{g,p_2,V_2;p_1}$ are negatively correlated, i.e., $\hat{P}_g(E_{g,p_1,V_1;p_2} \cap E_{g,p_2,V_2;p_1}) < \hat{P}_g(E_{g,p_1,V_1;p_2}) \cdot \hat{P}_g(E_{g,p_2,V_2;p_1})$. That means, for every term sequence $T \in \fulltau$, when $m_g(T,p_1,V_1)$ is satisfied then this decreases the probability that $m_g(T,p_2,V_2)$ is satisfied and vice versa (single graph setting).%

The events $E_{G,p_1,V_1;p_2}$ and $E_{G,p_2,V_2;p_1}$ are negatively correlated, i.e., $\hat{P}_G(E_{G,p_1,V_1;p_2} \cap E_{G,p_2,V_2;p_1}) < \hat{P}_G(E_{G,p_1,V_1;p_2}) \cdot \hat{P}_G(E_{G,p_2,V_2;p_1})$. That means, for every pair $(i,T) \in \fulltauG$, when $m_{g_i}(T,p_1,V_1)$ is satisfied then this decreases the probability that $m_{g_i}(T,p_2,V_2)$ is satisfied and vice versa (graph transactional setting, micro-averaged).
For every graph $g \in G$ it is the case that $E_{g,p_1,V_1;p_2}$ and $E_{g,p_2,V_2;p_1}$ are negatively correlated (graph transactional setting, macro-averaged).
\\
\end{longtable}

Tables \ref{properties-confidence}--\ref{properties-conviction} show the key characteristics of the metrics confidence, lift, leverage, and conviction, respectively, for itemset-based association rules, denoted by ISAR, and for graph pattern-based association for the single graph setting, denoted by GPAR (single), for the graph transactional setting where scores are micro-averaged, denoted by GPAR (micro), and
for the graph transactional setting where scores are macro-averaged, denoted by GPAR (macro), for each of the situations \texttt{IDE}, \texttt{DIS}, \texttt{IND}, \texttt{POS}, and \texttt{NEG} (which are explained in Table \ref{tab:situation-names}). For each score, the number of the respective proposition is provided for which a proof is given in the appendix. 
For example, Table \ref{properties-confidence} expresses that the score of $\operatorname{confidence}_G(p_1,p_2,V_1,V_2)$ (i.e., GPAR (micro) context) is $0$ if and only if ($\Longleftrightarrow$) the \texttt{DIS} situation is given (i.e., $E_{g,p_1,V_1;p_2} \cap E_{g,p_2,V_2;p_1} = \varnothing$)
(proposition P\ref{proposition-confidence-DIS-ISAR-GPARsinglemicro}). 
Furthermore, Table \ref{properties-confidence} expresses that the score of $\operatorname{confidence}_g(p_1,p_2,V_1,V_2)$ (i.e., GPAR (single) context) is $1$ if ($\Longrightarrow$, but not \textit{if and only if}) the \texttt{IDE} situation is given (i.e., $E_{g,p_1,V_1;p_2} = E_{g,p_2,V_2;p_1}$)
(proposition P\ref{proposition-confidence-IDE-ISAR-GPARsinglemicro}).

\begin{table}
\caption{Key characteristics of the confidence metrics, assuming that $\hat{P}_T(E_B) > 0$, $\hat{P}_g(E_{g,p_1,V_1;p_2}) > 0$, $\hat{P}_G(E_{G,p_1,V_1;p_2}) > 0$, or $\forall g \in G : \hat{P}_g(E_{g,p_1,V_1;p_2}) > 0$, respectively. All four confidence metrics have the co-domain $[0,1] \cap \mathbb{Q}$.}
\label{properties-confidence}
\renewcommand{\arraystretch}{0.8}
\setlength{\tabcolsep}{4pt}

\begin{tabularx}{\textwidth}{
  L{0.04\textwidth} %
  L{0.38\textwidth} %
  L{0.05\textwidth} %
  L{0.40\textwidth} %
  L{0.06\textwidth} %
}
\toprule
\textbf{Sit.} & 
\textbf{ISAR} &
\textbf{} &
\textbf{GPAR (single)} &
\textbf{}\\
\cmidrule[1pt](lr){2-3} \cmidrule[1pt](lr){4-5}

\texttt{IDE} & 
$\arrowbox{\Longrightarrow} =1$ &
(P\ref{proposition-confidence-IDE-ISAR-GPARsinglemicro})
&
$\arrowbox{\Longrightarrow} =1$ &
(P\ref{proposition-confidence-IDE-ISAR-GPARsinglemicro})\\

\texttt{DIS} & 
$\Longleftrightarrow~ =0$ &
(P\ref{proposition-confidence-DIS-ISAR-GPARsinglemicro})
&
$\Longleftrightarrow~ =0$ &
(P\ref{proposition-confidence-DIS-ISAR-GPARsinglemicro})\\

\texttt{IND} & 
$\Longleftrightarrow~ = \hat{P}_T(E_B)$ &
(P\ref{proposition-confidence-IND-ISAR-GPARsinglemicro})
&
$\Longleftrightarrow~ = \hat{P}_g(E_{g,p_2,V_2;p_1})$ &
(P\ref{proposition-confidence-IND-ISAR-GPARsinglemicro})\\

\texttt{POS} & 
$\Longleftrightarrow~ > \hat{P}_T(E_B)$ &
(P\ref{proposition-confidence-POS-ISAR-GPARsinglemicro})
&
$\Longleftrightarrow~ > \hat{P}_g(E_{g,p_2,V_2;p_1})$ &
(P\ref{proposition-confidence-POS-ISAR-GPARsinglemicro})\\

\texttt{NEG} & 
$\Longleftrightarrow~ <  \hat{P}_T(E_B)$ &
(P\ref{proposition-confidence-NEG-ISAR-GPARsinglemicro})
&
$\Longleftrightarrow~ < \hat{P}_g(E_{g,p_2,V_2;p_1})$ &
(P\ref{proposition-confidence-NEG-ISAR-GPARsinglemicro})\\

\cmidrule[1pt](lr){1-5}
\textbf{Sit.} & 
\textbf{GPAR (micro)} &
\textbf{} &
\textbf{GPAR (macro)} &
\textbf{}\\
\cmidrule[1pt](lr){2-3} \cmidrule[1pt](lr){4-5}

\texttt{IDE} & 
$\arrowbox{\Longrightarrow} =1$ &
(P\ref{proposition-confidence-IDE-ISAR-GPARsinglemicro})
&
$\arrowbox{\Longrightarrow} =1$ &
(P\ref{proposition-confidence-IDE-GPARmacro})\\

\texttt{DIS} & 
$\Longleftrightarrow~ =0$ &
(P\ref{proposition-confidence-DIS-ISAR-GPARsinglemicro})
&
$\Longleftrightarrow~ =0$ &
(P\ref{proposition-confidence-DIS-GPARmacro})\\

\texttt{IND} & 
$\Longleftrightarrow~ = \hat{P}_G(E_{G,p_2,V_2;p_1})$ &
(P\ref{proposition-confidence-IND-ISAR-GPARsinglemicro})
&
$\arrowbox{\Longrightarrow} = \tfrac{1}{|G|} \sum_{g \in G} \hat{P}_G(E_{G,p_2,V_2;p_1})$ &
(P\ref{proposition-confidence-IND-GPARmacro})\\

\texttt{POS} & 
$\Longleftrightarrow~ > \hat{P}_G(E_{G,p_2,V_2;p_1})$ &
(P\ref{proposition-confidence-POS-ISAR-GPARsinglemicro})
&
$\arrowbox{\Longrightarrow} > \tfrac{1}{|G|} \sum_{g \in G} \hat{P}_G(E_{G,p_2,V_2;p_1})$ &
(P\ref{proposition-confidence-POS-GPARmacro})\\

\texttt{NEG} & 
$\Longleftrightarrow~ < \hat{P}_G(E_{G,p_2,V_2;p_1})$ &
(P\ref{proposition-confidence-NEG-ISAR-GPARsinglemicro})
&
$\arrowbox{\Longrightarrow} < \tfrac{1}{|G|} \sum_{g \in G} \hat{P}_G(E_{G,p_2,V_2;p_1})$ &
(P\ref{proposition-confidence-NEG-GPARmacro})\\
\bottomrule
\end{tabularx}
\end{table}

\begin{table}
\caption{Key characteristics of the lift metrics, assuming that $\hat{P}_T(E_A) > 0 \land \hat{P}_T(E_B) > 0$,
$\hat{P}_g(E_{g,p_1,V_1;p_2}) > 0 \land \hat{P}_g(E_{g,p_2,V_2;p_1}) > 0$, 
$\hat{P}_G(E_{G,p_1,V_1;p_2}) > 0 \land \hat{P}_G(E_{G,p_2,V_2;p_1}) > 0$, or 
$\forall g \in G : \hat{P}_g(E_{g,p_1,V_1;p_2}) > 0 \land \hat{P}_g(E_{g,p_2,V_2;p_1}) > 0$,
 respectively. All four lift metrics have the co-domain $[0,\infty) \cap \mathbb{Q}$.}
\label{properties-lift}
\renewcommand{\arraystretch}{0.8}
\setlength{\tabcolsep}{4pt}

\begin{tabularx}{\textwidth}{
  L{0.04\textwidth} %
  L{0.38\textwidth} %
  L{0.06\textwidth} %
  L{0.40\textwidth} %
  L{0.06\textwidth} %
}
\toprule
\textbf{Sit.} & 
\textbf{ISAR} &
\textbf{} &
\textbf{GPAR (single)} &
\textbf{}\\
\cmidrule[1pt](lr){2-3} \cmidrule[1pt](lr){4-5}

\texttt{IDE} & 
$\arrowbox{\Longrightarrow} =1/\hat{P}_T(E_B)$ &
(P\ref{proposition-lift-IDE-ISAR-GPARsinglemicro})
&
$\arrowbox{\Longrightarrow} =1/\hat{P}_g(E_{g,p_1,V_1;p_2})$ &
(P\ref{proposition-lift-IDE-ISAR-GPARsinglemicro})\\

\texttt{DIS} & 
$\Longleftrightarrow~ =0$ &
(P\ref{proposition-lift-DIS-ISAR-GPARsinglemicro})
&
$\Longleftrightarrow~ =0$ &
(P\ref{proposition-lift-DIS-ISAR-GPARsinglemicro})\\

\texttt{IND} & 
$\Longleftrightarrow~ =1$ &
(P\ref{proposition-lift-IND-ISAR-GPARsinglemicro})
&
$\Longleftrightarrow~ =1$ &
(P\ref{proposition-lift-IND-ISAR-GPARsinglemicro})\\

\texttt{POS} & 
$\Longleftrightarrow~ >1$ &
(P\ref{proposition-lift-POS-ISAR-GPARsinglemicro})
&
$\Longleftrightarrow~ >1$ &
(P\ref{proposition-lift-POS-ISAR-GPARsinglemicro})\\

\texttt{NEG} & 
$\Longleftrightarrow~ <1$ &
(P\ref{proposition-lift-NEG-ISAR-GPARsinglemicro})
&
$\Longleftrightarrow~ <1$ &
(P\ref{proposition-lift-NEG-ISAR-GPARsinglemicro})\\

\cmidrule[1pt](lr){1-5}
\textbf{Sit.} & 
\textbf{GPAR (micro)} &
\textbf{} &
\textbf{GPAR (macro)} &
\textbf{}\\
\cmidrule[1pt](lr){2-3} \cmidrule[1pt](lr){4-5}

\texttt{IDE} & 
$\arrowbox{\Longrightarrow} =1/\hat{P}_G(E_{G,p_1,V_1;p_2})$ &
(P\ref{proposition-lift-IDE-ISAR-GPARsinglemicro})
&
$\arrowbox{\Longrightarrow} = \tfrac{1}{|G|}\sum_{g \in G} 1/\hat{P}_G(E_{G,p_1,V_1;p_2})$ &
(P\ref{proposition-lift-IDE-GPARmacro})\\

\texttt{DIS} & 
$\Longleftrightarrow~ =0$ &
(P\ref{proposition-lift-DIS-ISAR-GPARsinglemicro})
&
$\Longleftrightarrow~ =0$ &
(P\ref{proposition-lift-DIS-GPARmacro})\\

\texttt{IND} & 
$\Longleftrightarrow~ =1$ &
(P\ref{proposition-lift-IND-ISAR-GPARsinglemicro})
&
$\arrowbox{\Longrightarrow} =1$ &
(P\ref{proposition-lift-IND-GPARmacro})\\

\texttt{POS} & 
$\Longleftrightarrow~ > 1$ &
(P\ref{proposition-lift-POS-ISAR-GPARsinglemicro})
&
$\arrowbox{\Longrightarrow} >1$ &
(P\ref{proposition-lift-POS-GPARmacro})\\

\texttt{NEG} & 
$\Longleftrightarrow~ <1$ &
(P\ref{proposition-lift-NEG-ISAR-GPARsinglemicro})
&
$\arrowbox{\Longrightarrow} <1$ &
(P\ref{proposition-lift-NEG-GPARmacro})\\
\bottomrule
\end{tabularx}
\end{table}

\begin{table}
\caption{Key characteristics of the leverage metrics, assuming that $T \neq \varnothing$, $\fulltau \neq \varnothing$, $\fulltauG \neq \varnothing$, or $\forall g \in G : \fulltau \neq \varnothing$, respectively. All four leverage metrics have the co-domain $[-\tfrac{1}{4},\tfrac{1}{4}] \cap \mathbb{Q}$.}
\label{properties-leverage}
\renewcommand{\arraystretch}{0.8}
\setlength{\tabcolsep}{4pt}

\begin{tabularx}{\textwidth}{
  L{0.04\textwidth} %
  L{0.38\textwidth} %
  L{0.06\textwidth} %
  L{0.40\textwidth} %
  L{0.06\textwidth} %
}
\toprule
\textbf{Sit.} & 
\textbf{ISAR} &
\textbf{} &
\textbf{GPAR (single)} &
\textbf{}\\
\cmidrule[1pt](lr){2-3} \cmidrule[1pt](lr){4-5}

\texttt{IDE} & 
$\arrowbox{\Longrightarrow} = \hat{P}_T(E_A)(1-\hat{P}_T(E_B))$ &
(P\ref{proposition-leverage-IDE-ISAR-GPARsinglemicro})
&
$\arrowbox{\Longrightarrow} = \hat{P}_g(E_{g,p_1,V_1;p_2})(1-\hat{P}_g(E_{g,p_2,V_2;p_1}))$ &
(P\ref{proposition-leverage-IDE-ISAR-GPARsinglemicro})\\

\texttt{DIS} & 
$\Longleftrightarrow~ = -\hat{P}_T(E_A)\hat{P}_T(E_B)$ &
(P\ref{proposition-leverage-DIS-ISAR-GPARsinglemicro})
&
$\Longleftrightarrow~ =-\hat{P}_g(E_{g,p_1,V_1;p_2})\hat{P}_g(E_{g,p_2,V_2;p_1})$ &
(P\ref{proposition-leverage-DIS-ISAR-GPARsinglemicro})\\

\texttt{IND} & 
$\Longleftrightarrow~ =0$ &
(P\ref{proposition-leverage-IND-ISAR-GPARsinglemicro})
&
$\Longleftrightarrow~ =0$ &
(P\ref{proposition-leverage-IND-ISAR-GPARsinglemicro})\\

\texttt{POS} & 
$\Longleftrightarrow~ >0$ &
(P\ref{proposition-leverage-POS-ISAR-GPARsinglemicro})
&
$\Longleftrightarrow~ >0$ &
(P\ref{proposition-leverage-POS-ISAR-GPARsinglemicro})\\

\texttt{NEG} & 
$\Longleftrightarrow~ <0$ &
(P\ref{proposition-leverage-NEG-ISAR-GPARsinglemicro})
&
$\Longleftrightarrow~ <0$ &
(P\ref{proposition-leverage-NEG-ISAR-GPARsinglemicro})\\

\cmidrule[1pt](lr){1-5}
\textbf{Sit.} & 
\textbf{GPAR (micro)} &
\textbf{} &
\textbf{GPAR (macro)} &
\textbf{}\\
\cmidrule[1pt](lr){2-3} \cmidrule[1pt](lr){4-5}

\texttt{IDE} & 
$\arrowbox{\Longrightarrow} = \hat{P}_G(E_{G,p_1,V_1;p_2})(1-\hat{P}_G(E_{G,p_2,V_2;p_1}))$ &
(P\ref{proposition-leverage-IDE-ISAR-GPARsinglemicro})
&
$\arrowbox{\Longrightarrow} = \tfrac{1}{|G|}\sum_{g \in G}\hat{P}_g(E_{g,p_1,V_1;p_2})(1-\hat{P}_g(E_{g,p_2,V_2;p_1}))$ &
(P\ref{proposition-leverage-IDE-GPARmacro})\\

\texttt{DIS} & 
$\Longleftrightarrow~ =-\hat{P}_G(E_{G,p_1,V_1;p_2})\hat{P}_G(E_{G,p_2,V_2;p_1})$ &
(P\ref{proposition-leverage-DIS-ISAR-GPARsinglemicro})
&
$\arrowbox{\Longrightarrow} =-\tfrac{1}{|G|}\sum_{g \in G}\hat{P}_g(E_{g,p_1,V_1;p_2}) \hat{P}_g(E_{g,p_2,V_2;p_1})$ &
(P\ref{proposition-leverage-DIS-GPARmacro})\\

\texttt{IND} & 
$\Longleftrightarrow~ =0$ &
(P\ref{proposition-leverage-IND-ISAR-GPARsinglemicro})
&
$\arrowbox{\Longrightarrow} =0$ &
(P\ref{proposition-leverage-IND-GPARmacro})\\

\texttt{POS} & 
$\Longleftrightarrow~ >0$ &
(P\ref{proposition-leverage-POS-ISAR-GPARsinglemicro})
&
$\arrowbox{\Longrightarrow} >0$ &
(P\ref{proposition-leverage-POS-GPARmacro})\\

\texttt{NEG} & 
$\Longleftrightarrow~ <0$ &
(P\ref{proposition-leverage-NEG-ISAR-GPARsinglemicro})
&
$\arrowbox{\Longrightarrow} <0$ &
(P\ref{proposition-leverage-NEG-GPARmacro})\\
\bottomrule
\end{tabularx}
\end{table}

\begin{table}
\caption{Key characteristics of the conviction metrics, assuming that $\hat{P}_T(E_A) > 0 \land \hat{P}_T(E_B) < 1$, 
$\hat{P}_g(E_{g,p_1,V_1;p_2}) > 0 \land \hat{P}_g(E_{g,p_2,V_2;p_1}) < 1$, 
$\hat{P}_G(E_{G,p_1,V_1;p_2}) > 0 \land \hat{P}_G(E_{G,p_2,V_2;p_1}) < 1$, 
or $\forall g \in G : \hat{P}_g(E_{g,p_1,V_1;p_2}) > 0 \land \hat{P}_g(E_{g,p_2,V_2;p_1}) < 1$, respectively.
All four conviction metrics have the co-domain $\mathbb{Q}_{>0} \cup \lbrace \infty \rbrace$.}
\label{properties-conviction}
\renewcommand{\arraystretch}{0.8}
\setlength{\tabcolsep}{4pt}

\begin{tabularx}{\textwidth}{
  L{0.04\textwidth} %
  L{0.38\textwidth} %
  L{0.06\textwidth} %
  L{0.40\textwidth} %
  L{0.06\textwidth} %
}
\toprule
\textbf{Sit.} & 
\textbf{ISAR} &
\textbf{} &
\textbf{GPAR (single)} &
\textbf{}\\
\cmidrule[1pt](lr){2-3} \cmidrule[1pt](lr){4-5}

\texttt{IDE} & 
$\arrowbox{\Longrightarrow} = \infty$ &
(P\ref{proposition-conviction-IDE-ISAR-GPARsinglemicro})
&
$\arrowbox{\Longrightarrow} = \infty$ &
(P\ref{proposition-conviction-IDE-ISAR-GPARsinglemicro})\\

\texttt{DIS} & 
$\Longleftrightarrow~ =1-\hat{P}_T(E_B)$ &
(P\ref{proposition-conviction-DIS-ISAR-GPARsinglemicro})
&
$\Longleftrightarrow~ =1-\hat{P}_g(E_{g,p_2,V_2;p_1})$ &
(P\ref{proposition-conviction-DIS-ISAR-GPARsinglemicro})\\

\texttt{IND} & 
$\Longleftrightarrow~ =1$ &
(P\ref{proposition-conviction-IND-ISAR-GPARsinglemicro})
&
$\Longleftrightarrow~ =1$ &
(P\ref{proposition-conviction-IND-ISAR-GPARsinglemicro})\\

\texttt{POS} & 
$\Longleftrightarrow~ >1$ &
(P\ref{proposition-conviction-POS-ISAR-GPARsinglemicro})
&
$\Longleftrightarrow~ >1$ &
(P\ref{proposition-conviction-POS-ISAR-GPARsinglemicro})\\

\texttt{NEG} & 
$\Longleftrightarrow~ <1$ &
(P\ref{proposition-conviction-NEG-ISAR-GPARsinglemicro})
&
$\Longleftrightarrow~ <1$ &
(P\ref{proposition-conviction-NEG-ISAR-GPARsinglemicro})\\

\cmidrule[1pt](lr){1-5}
\textbf{Sit.} & 
\textbf{GPAR (micro)} &
\textbf{} &
\textbf{GPAR (macro)} &
\textbf{}\\
\cmidrule[1pt](lr){2-3} \cmidrule[1pt](lr){4-5}

\texttt{IDE} & 
$\arrowbox{\Longrightarrow} =\infty$ &
(P\ref{proposition-conviction-IDE-ISAR-GPARsinglemicro})
&
$\arrowbox{\Longrightarrow} = \infty$ &
(P\ref{proposition-conviction-IDE-GPARmacro})\\

\texttt{DIS} & 
$\Longleftrightarrow~ =1-\hat{P}_G(E_{G,p_2,V_2;p_1})$ &
(P\ref{proposition-conviction-DIS-ISAR-GPARsinglemicro})
&
$\arrowbox{\Longrightarrow} = \tfrac{1}{|G|}\sum_{g \in G}(1-\hat{P}_g(E_{g,p_2,V_2;p_1}))$ &
(P\ref{proposition-conviction-DIS-GPARmacro})\\

\texttt{IND} & 
$\Longleftrightarrow~ =1$ &
(P\ref{proposition-conviction-IND-ISAR-GPARsinglemicro})
&
$\arrowbox{\Longrightarrow} =1$ &
(P\ref{proposition-conviction-IND-GPARmacro})\\

\texttt{POS} & 
$\Longleftrightarrow~ >1$ &
(P\ref{proposition-conviction-POS-ISAR-GPARsinglemicro})
&
$\arrowbox{\Longrightarrow} >1$ &
(P\ref{proposition-conviction-POS-GPARmacro})\\

\texttt{NEG} & 
$\Longleftrightarrow~ <1$ &
(P\ref{proposition-conviction-NEG-ISAR-GPARsinglemicro})
&
$\arrowbox{\Longrightarrow} <1$ &
(P\ref{proposition-conviction-NEG-GPARmacro})\\
\bottomrule
\end{tabularx}
\end{table}

Table \ref{tab:whendefined} summarizes the conditions under which the metrics are defined. We remove redundant terms from these conditions. For example, instead of $T \neq \varnothing \land \hat{P}_T(E_A) >0$ we write $\hat{P}_T(E_A) >0$, because from $\hat{P}_T(E_A) > 0$ it follows that $T \neq \varnothing$.

\begin{table}
\caption{Overview over the necessary conditions for the scores of the metrics to be defined.}
\label{tab:whendefined}
\centering
\begin{tabularx}{0.7\linewidth}{lX}
\cmidrule[1pt](lr){1-2}
\textbf{Metric} & 
\textbf{Condition} \\
\cmidrule[1pt](lr){1-2}
$\operatorname{confidence}_T$ & 
$\hat{P}_T(E_A) > 0$
\\
$\operatorname{confidence}_g$ & 
$\hat{P}_g(E_{g,p_1,V_1;p_2}) > 0$
\\
$\operatorname{confidence}_G$ & 
$\hat{P}_G(E_{G,p_1,V_1;p_2}) > 0$
\\
$\operatorname{macro-confidence}_G$ & 
$\forall g \in G : \hat{P}_g(E_{g,p_1,V_1;p_2}) > 0$\\

\cmidrule[1pt](lr){1-2}

$\operatorname{lift}_T$ & 
$\hat{P}_T(E_A) > 0 \land \hat{P}_T(E_B) > 0$
\\
$\operatorname{lift}_g$ & 
$\hat{P}_g(E_{g,p_1,V_1;p_2}) > 0 \land \hat{P}_g(E_{g,p_2,V_2;p_1}) > 0$
\\
$\operatorname{lift}_G$ & 
$\hat{P}_G(E_{G,p_1,V_1;p_2}) > 0 \land \hat{P}_G(E_{G,p_2,V_2;p_1}) > 0$
\\
$\operatorname{macro-lift}_G$ & 
$\forall g \in G : \hat{P}_g(E_{g,p_1,V_1;p_2}) > 0 \land \hat{P}_g(E_{g,p_2,V_2;p_1}) > 0$\\

\cmidrule[1pt](lr){1-2}

$\operatorname{leverage}_T$ & 
$T \neq \varnothing$
\\
$\operatorname{leverage}_g$ & 
$\fulltau \neq \varnothing$
\\
$\operatorname{leverage}_G$ & 
$\fulltauG \neq \varnothing$
\\
$\operatorname{macro-leverage}_G$ & 
$\forall g \in G : \fulltau \neq \varnothing$\\

\cmidrule[1pt](lr){1-2}

$\operatorname{conviction}_T$ & 
$\hat{P}_T(E_A) > 0 \land \hat{P}_T(E_B) < 1$
\\
$\operatorname{conviction}_g$ & 
$\hat{P}_g(E_{g,p_1,V_1;p_2}) > 0 \land \hat{P}_g(E_{g,p_2,V_2;p_1}) < 1$
\\
$\operatorname{conviction}_G$ & 
$\hat{P}_G(E_{G,p_1,V_1;p_2}) > 0 \land \hat{P}_G(E_{G,p_2,V_2;p_1}) < 1$
\\
$\operatorname{macro-conviction}_G$ & 
$\forall g \in G : \hat{P}_g(E_{g,p_1,V_1;p_2}) > 0 \land \hat{P}_g(E_{g,p_2,V_2;p_1}) < 1$\\
\cmidrule[1pt](lr){1-2}
\end{tabularx}
\end{table}

As we can see from Tables \ref{properties-confidence}--\ref{properties-conviction} and from Table \ref{tab:whendefined}, the metrics behave similarly across contexts. %
The metrics for itemset-based association rules and the metrics for graph pattern-based association rules in the single graph setting and in the graph transactional setting where scores are micro-averaged have similar characteristics, because these characteristics are a consequence of the fact that we have a probability space and the metrics are defined i.t.o. probabilities. However, there are some differences between the macro-averaged metrics and the other metrics.
With macro-averaged metrics we often (in $14$ of $20$ cases) loose the possibility to deduce the situation from the score. For example, whereas in ISAR context if the score of lift is $1$ then we can deduce that $\hat{P}_T(E_A)$ and $\hat{P}_T(E_B)$ are independent. However, in GPAR (macro) context if the score of lift is $1$ we cannot deduce that $\hat{P}_G(E_{G,p_1,V_1;p2})$ and $\hat{P}_G(E_{G,p_2,V_2;p_1})$ are independent in each $g \in G$.

A further disadvantage of the macro-averaged metrics is, as can be seen in Table \ref{tab:whendefined},
that the conditions when a score is defined are much stronger, as these conditions need to apply for each graph in $G$. These conditions require $G$ to be reduced and the metrics to be evaluated on reduced bags, as discussed in Section \ref{sec:metricstransactional}.

\section{Related Work}\label{sec:relatedwork}
Related work comes from three areas: i) extensions of the itemset-based association rule concept, ii) extensions of functional dependencies from relational databases, and iii) Horn rules.
We first provide an overview over general extensions of itemset-based association rules, before we in detail discuss the most related concepts.

\subsection{Extensions of Association Rules}
The classical concept of association rules -- where rules have the form $X \Rightarrow Y$ where X and Y are itemsets and where rules are mined from a database of transactions where a transaction is an itemset -- has been extended in multiple directions. %

\begin{itemize}
\item \textit{Temporal association rules} (see, e.g., Zhou and Hirasawa \cite{zhou2019evolving}) include temporal information and can then, e.g., be used to predict a customer's future demand. For example, the rule
$shoes~ (t=0) \land cloth~ (t=0) \Rightarrow shoe~ oil ~(t=2)$ expresses that customers usually buy shoes and cloth together (at $t=0$) and buy shoe oil two months later (at $t=2$).

\item A related concept is \textit{sequential association rules} (see, e.g., Srikant et al. and Fournier-Viger et al \cite{srikant1996mining,fournier2012cmrules}) where a transaction is a sequence of itemsets. 
Sequential association rules have, for example, been applied for the task of relation detection in text \cite{cellier2015sequential}. 
For example, we can transform a sentence such as "animals such as dogs" into a sequence of sets where each set contains a token and the token's part-of-speech tag, i.e., $(\{t{-}animals,p{-}NOUN\},\, \{t{-}such, p{-}JJ\},\, \\\{t{-}as, p{-}IN\},\, \{t{-}dogs, p{-}NOUN\})$.
The following rule detects that in the sentence "animals such as dogs" the hyponym relation is expressed:\\
$(\{p{-}NOUN\},\, \{t{-}such, p{-}JJ\},\, \{t{-}as, p{-}IN\},\, \{p{-}NOUN\}) \Rightarrow hyponym$.

\item \textit{Spacial association rules} (see, e.g., Koperski and Han \cite{koperski1995discovery}) are rules of the form $X \Rightarrow Y$ where $X$ and $Y$ are sets of predicates where some predicates are spacial predicates. For example, in the following rule, the spacial predicates $within$, $adjacent$, and $closeTo$ occur: $isA(X,\, city) \,\land\, within(X,\, BritishColumbia) \,\land\, adjacentTo(X,\, water) \Rightarrow closeTo(X,\, USA)$. Here, $X$ is a variable. The rule expresses that if there is a thing that is a city, that thing is located within British Columbia, and that thing is adjacent to water, then that thing is located close to the USA. 

\item \textit{Numerical association rules} (see, e.g., Fukuda et al. \cite{fukuda1996mining,kaushik2021systematic}) contain statements that, when rules are mined or applied, require numerical values to be compared. For example, the rule
$(Balance \in [15821, 26264]) \Rightarrow (CardLoan = yes)$, expresses that if a customer's balance score is a value in the numerical interval $[15821, 26264]$, then the customer is likely to use card loan. 

\item \textit{Quantitative association rules} (see, e.g., Srikant and Agrawal \cite{srikant96mining}) are related, as can be seen in the following example rule:
$(Age: 30..39) \land (Married: Yes) \Rightarrow (NumCars = 2)$
where $Age$ and $NumCars$ are numerical attributes and $Married$ is a categorical attribute. In the context of a database about persons, the rule expresses that a person between the age of 30 and 39 that is not married usually owns two cars. 
Related is the work by
Taboada et al. \cite{taboada2008association} where rules of the following form are mined:
$ (A_1 > a_1) \land \ldots \land (A_k > a_k) \Rightarrow (A_m > a_m) \land \cdots \land (A_n > a_n)$. A challenge lies in finding good threshold values.

\item Whereas numerical and quantitative association rules require that crisp threshold values are defined or found, for example by partitioning a scale into intervals,
\textit{fuzzy association rules} (see, e.g., Hu et al. \cite{hu2003discovering}) allow to define linguistic variables such as "high" and "low".
See the following example of a fuzzy association rule, taken from \cite{au1999farm}:
$Investment{-}in{-}construction = Very{-}high
\Rightarrow Capital{-}of{-}transportation = Very{-}high$. The rule expresses that very high investment in construction tends to introduce high capital of transportation. The concept of fuzziness has been combined with other types of association rules, resulting, e.g., in fuzzy temporal association rules (see, e.g., Carinena \cite{carinena2014fuzzy}).

\item Similar to fuzzy association rules, but not based on fuzzy set membership, are \textit{gradual association rules} (see, e.g., Di-Jorio et al. \cite{di2009mining}). Going beyond
 interval based rules, such as \textit{a room located less than one kilometer from the centre will cost between 60 and 200 dollars}, 
a gradual association rule such as $\lbrace Pop^{\geq}, Dist^{\leq} \rbrace \Rightarrow \lbrace  Price^{\geq} \rbrace$ expresses \textit{the bigger the town and the nearer from
the town centre, then the higher the price}.  

\item Whereas the previously mentioned types of association rules deal with increased expressiveness of rules or with a modified interpretation of rules, \textit{privacy-preserving association rules} (see, e.g., Saygin et al. \cite{saygin2002privacy}) deal with the problem to prevent to learn association rules that contain sensitive data that is contained in the transaction database.

\end{itemize}

\subsection{Relational Association Rules}

Extensions towards more complex transactions deal with the scenario where a relational database is given that is viewed as a Datalog database. %
\textit{Relational association rules}, introduced by Dehaspe and Toivonen \cite{10.5555/567222.567232,dehaspe1999discovery}) and later extended by Goethals and Van den Bussche \cite{goethals2002relational},
 contain conjunctive queries as antecedent and consequent. 
For example, consider the following two conjunctive queries shown in Prolog notation:
\begin{align*}
Q_1(x,y) & \mathrel{:\!-} likes(x,\texttt{"Duvel"}), visits(x,y)\\
Q_2(x,y) & \mathrel{:\!-} likes(x,\texttt{"Duvel"}), visits(x,y), serves(y,\texttt{"Duvel"})
\end{align*}

In the first conjunctive query, $Q_1(x,y)$ is the head of the rule and $likes(x,\texttt{"Duvel"}), visits(x,y)$ is the body of the rule. 
Let $A$, $B$, $C$, and $D$ be constants, 
let $likes(A,B)$ be a predicate that expresses that the person $A$ likes the beverage $B$,
let $visits(A,C)$ be a predicate that expresses that the person $A$ visits the bar $C$, and
let $serves(C,D)$ be a predicate that expresses that the bar $C$ serves the beverage $D$. 
The rule $Q_1 \Rightarrow Q_2$ can be interpreted as follows: if a person (x) that likes the beverage Duvel visits a bar (y), then that bar serves the beverage Duvel.

We can express the relational association rule as a graph pattern-based association rule of the form $(p_1,p_2)$ as follows:
\begin{align*}
r ~=~& (p_1,p_2,V)\\
p_1 ~=~& \lbrace (v_1, likes, "Duvel"), (v_1,visits,v_2) \rbrace\\
p_2 ~=~& \lbrace (v_1, likes, "Duvel"), (v_1,visits,v_2), (v_2,serves,"Duvel") \rbrace \end{align*}

Both the antecedent conjunctive query and the consequent conjunctive query can be seen as graph patterns, with the difference that variables cannot occur in predicate position. Note that the body of the antecedent is contained in the body of the consequent, which is a requirement for this type of rules, but which is not required for the GPAR concept we introduced.
Further differences relate to the evaluation semantics of relational association rules. Standard conjunctive query evaluation semantics is comparable to homomorphism semantics and conjunctive queries cannot contain statements of the form $x \neq y$ or $x \neq A$, where $x$ and $y$ are variables, $A$ is a constant, which one would need to add to enforce an evaluation that resembles no-repeated-anything semantics.

The support of a conjunctive query is defined as the number of tuples in the query's answer relation. When evaluating a conjunctive query such as $Q_1(x,y) \mathrel{:\!-} likes(x,\texttt{"Duvel"}), visits(x,y)$, the variables $x$ and $y$ serve as projection variables and the resulting relation is a set of pairs. When the conjunctive query is extended, then the number of pairs cannot grow -- thus, the support metric is anti-monotonic.

Confidence of a rule $Q_1 \Rightarrow Q_2$ is defined as the number of tuples in the answer relation of $Q_2$ divided by the number of tuples in the answer relation of $Q_1$.\\

We can represent the relational database as a graph $g$. Let $\mathcal{T}_g^n$ be the set of (not necessarily non-repetitive) tuples of length $n=|V|$ consisting of constants that occur in the relational database. Using our notation, we express their metrics of absolute support and confidence as follows:
\[\operatorname{absolute-support}_g(p,V) := |\lbrace \, T \in \mathcal{T}_g^n \mid \exists \mu \in \Omega^{hom}_{p_,g} : \mu(V) = T \,\rbrace|\]
\[ \operatorname{confidence}_g(p_1,p_2,V) = \dfrac{|\lbrace \, T \in \mathcal{T}_g^n \mid 
\exists \mu_1 \in \Omega^{hom}_{p_1,g} : \mu_1(V) = T
\land \exists \mu_2 \in \Omega^{hom}_{p_2,g} : \mu_2(V) = T
\,\rbrace|}{|\lbrace \, T \in \mathcal{T}_g^n \mid \exists \mu \in \Omega^{hom}_{p_1,g} : \mu(V) = T \,\rbrace|}\]

This shows some degree of similarity between their metrics and confidence and our metrics of confidence, as the strength of correspondence between two graph patterns, given a sequence of joining variables, is scored based on common term sequences.

\subsection{Graph Entity Depencencies}\label{sec:GED}
The concept of \textit{functional dependency (FD)}, introduced by Codd \cite{codd1972further}, is well-known in the context of relational databases, where functional dependencies ensure data consistency and guide schema normalization.

Given a set $A = \lbrace A_1, \ldots, A_n \rbrace$ of attributes with domains $D_1, \ldots D_n$, respectively, let $X,Y \subseteq A$ be two sets of attributes, and let $R \subseteq D_1 \times \cdots \times D_n$ be a relation. 
A functional dependency $X \to Y$ expresses that for any two tuples $t_1,t_2 \in R$, if the tuples agree on the attributes in X, they must also agree on attributes in Y, i.e.,
\[\forall t_1,t_2 \in R : (t_1[X] = t_2[X] \Rightarrow t_1[Y] = t_2[Y])\]

For example, let a customer be described 
in terms of the phone number country code (CC),
phone number area code (AC),
phone number (PN),
name (NM),
street (STR),
city (CT), and ZIP code (ZIP). Thus, $R \subseteq D_{CC} \times D_{AC} \times \cdots \times D_{STR}$. 
The functional dependency $[CC,AC] \to [CT]$ expresses that
\[\forall t_1,t_2 \in R : (t_1[CC,AC] = t_2[CC,AC] \Rightarrow t_1[CT] = t_2[CT]).\]

Thus, customers with the same phone number country code and the same phone number area code live in the same city. 
Yao et al. \cite{yao2008mining} have investigated the problem of mining functional dependencies from relational data. 
To capture more fine-grained data relationships, the concept of functional dependency has been generalized to \textit{conditional functional dependency (CFD)} by Bohannon et al. \cite{bohannon2006conditional}, which allows the specification of conditions on attribute values:
\[\forall t_1,t_2 \in R : ((t_1[X] = t_2[X] \land C_1(t_1[X])) \Rightarrow (t_1[Y] = t_2[Y] \land C_2(t_1[Y])))\]
Here, $C_1$ and $C_2$ are predicates that evaluate to true if the condition is satisfied; false otherwise. 
For example, the CFD $[CC=\texttt{"01"},AC=\texttt{"908"},PN] \to [STR,CT=\texttt{"MH"},ZIP]$ expresses that
\begin{align*}
\forall t_1,t_2 \in R : ( & ( \\
 & t_1[CC,AC,PN] = t_2[CC,AC,PN] \land\\
 & t_1[CC]=\texttt{"01"} \land \\
 & t_1[AC] = \texttt{"908"}) \Rightarrow (\\
 & t_1[STR,CT,ZIP] = t_2[STR,CT,ZIP] \land\\
 & t_1[CT] = \texttt{"MH"})).
\end{align*}
Here, $C_1 \equiv t_1[CC] = \texttt{"01"} \land t_1[AC] = \texttt{"908"}$ and $C_2 \equiv t_1[CT] = \texttt{"MH"}$. Thus, customers with the same phone number country code (which is 01), the same phone number area code (which is 908), and the same phone number live in the same street, live in the same city (which is MH), and have the same ZIP number. 

Fan et al. \cite{fan2010discovering} have investigated the problem of mining conditional functional dependencies from relational data. 
While functional dependencies and conditional functional dependencies are defined in the relational context, more recent work has extended the notion of dependencies to graph-structured data. In this setting, the concept of a \textit{graph functional dependency (GFD)} has been introduced \cite{fan2016functional,fan2017dependencies,fan2019dependencies}, 
but for a type of graphs similar to property graphs \cite{bonifati2019schema}. %
GFD was extended to \textit{graph entity dependency} (GED) by Fan and Lu \cite{fan2019dependencies}, which subsumes GFD.%

GEDs are designed to be used for detecting inconsistencies, they are consistency constraints. Given an infinite set $\Gamma$ of labels, an infinite set $\Upsilon$ of attributes, and an infinite set $U$ of values,
GEDs are defined for property graphs of the form $G=(V,E,L,F_A)$ where
$V$ is a finite set of node identifiers, $E \subseteq V \times \Gamma \times V$ is a finite set of edges where $(v,\iota, v') \in E$ denotes an edge from node $v$ to node $v'$ labeled with $\iota$, 
each node $v \in V$ carries a label $L(v)$,
and for each node $v \in V$, $F_A(v)$ is a finite tuple of the form $(A_1=a_1, \ldots, A_n=a_n)$ where $A_1, \ldots,A_n$ are attributes in $\Upsilon$, $a_1, \ldots, a_n$ are constants in $U$, and $A_i \neq A_j$ if $i \neq j$. We write $v.A_i$ to denote the value of the attribute $A_i$ of the node $v$. Finally, every node $v \in V$ has a special attribute \textit{id} that carries the node's identifier.

The following example graph expresses that the person Alice is married to the person Bob and that the home address of Alice is Bielefeld:
\begin{align*}
\Gamma ~=~& \lbrace marriedTo \rbrace\\
\Upsilon ~=~& \lbrace id, name, homeAddress \rbrace\\
U ~=~& \lbrace v_1, v_2, \texttt{"Alice"}, \texttt{"Bob"}, \texttt{"Bielefeld"} \rbrace\\
G ~=~& (V,E,L,F_A)\\
V ~=~& \lbrace v_1, v_2 \rbrace\\
E ~=~& \lbrace (v_1, marriedTo, v_2) \rbrace\\
L ~=~& \lbrace (v_1, person),~(v_2, person) \rbrace\\
F_A(v_1) ~=~& \lbrace (id, v_1),~(name, \texttt{"Alice"}),~(homeAddress, \texttt{"Bielefeld"}) \rbrace\\
F_A(v_2) ~=~& \lbrace (id, v_2),~(name, \texttt{"Bob"}) \rbrace
\end{align*}

A graph pattern is a graph $Q[\bar{x}] = (V_Q, E_Q, L_Q)$ where $V_Q$ is a finite set of node identifiers, $E_Q \subseteq V_Q \times (\Gamma \cup \lbrace "\_"\rbrace) \times V_Q$ is a finite set of edges
where $(v,\iota, v') \in E_Q$ denotes an edge labeled with $\iota$ where $\iota = \texttt{"\_"}$ denotes a wildcard, and $L_Q$ is a function that assigns a label from $\Gamma \cup \lbrace \texttt{"\_"} \rbrace$ to each node. $\bar{x}$ is a list of distinct variables, each referring to another node in $V_Q$.

In the context of the graph above, the following shows an example of a graph pattern $Q[\bar{x}]$. Here, we interpret each identifier $x_i$ as a variable. Thus, in the graph pattern, all nodes are variables:
\begin{align*}
Q[\bar{x}] ~=~ & (V_Q,E_Q,L_Q)\\
\bar{x} ~=~& (x_1, x_2)\\
V_Q ~=~& \lbrace x_1, x_2 \rbrace\\
E_Q ~=~& \lbrace (x_1, marriedTo, x_2) \rbrace\\
L_Q ~=~& \lbrace (x_1, person),~(x_2, person) \rbrace
\end{align*}

Finally, we can define a GED as $Q[\bar{x}](X \Rightarrow Y)$ where $Q[\bar{x}]$ is a graph pattern and $X$ and $Y$ are (possibly empty) sets of literals of $\bar{x}$.
A literal of $\bar{x}$ is either a constant literal, a variable literal, or an id literal. A constant literal has the form $x.A = c$ where $x \in \bar{x}$ is a variable, $A \in \Upsilon$ (with $A \neq id$) is an attribute, and $c \in U$ is a constant.
A variable literal has the form $x.A = y.B$
where $x,y \in \bar{x}$ are variables and $A,B \in \Upsilon$ (with $A \neq id$ and $B \neq id$) are attributes.
An id literal has the form $x.id = y.id$ where $x,y \in \bar{x}$ are variables.

Bringing everything together, for our example graph we can define the following GED $\phi$:

\begin{align*}
\phi ~=~& Q[\bar{x}](X \Rightarrow Y)\\
Q[\bar{x}] ~=~& (V_Q,E_Q,L_Q)\\
\bar{x} ~=~& (x_1, x_2)\\
V_Q ~=~& \lbrace x_1, x_2 \rbrace\\
E_Q ~=~& \lbrace (x_1, marriedTo, x_2) \rbrace\\
L_Q ~=~& \lbrace (x_1, person),~(x_2, person) \rbrace\\
X ~=~& \varnothing\\
Y ~=~& \lbrace x_1.homeAddress = x_2.homeAddress \rbrace
\end{align*}

The GED expresses that if two persons are married, then they are expected to have the same home address. 
When this GED is evaluated over the graph then the inconsistency is found, i.e., that there is a pair $(v_1, v_2)$ of persons (where $v_1$ has the name Alice and $v_2$ has the name Bob) where $v_1$ is married to $v_2$ but the home addresses are not identical -- in fact, the home address of $v_2$ is missing.
Although GEDs are not designed to be used to extend graphs, nevertheless, they \textit{could} be used to extend property graphs in a limited way. Here, since the home address of $v_2$ is missing, it might be possible to correct the situation by copying the home address of $v_1$ to $v_2$. Although this procedure can extend a graph by adding missing attribute values, it cannot extend a graph by adding missing nodes or edges.

\subsection{Graph Association Rules}
Graph association rule (GAR) by Fan et al. \cite{fan2020capturing} is an extension of the concept of graph pattern association rule (GPAR) \cite{fan2015association} and of graph functional dependency (GFD) \cite{fan2016functional}. GARs can be applied to identify missing information and to predict links. Given an infinite set $\Gamma$ of labels, an infinite set $\Upsilon$ of attributes, and an infinite set $U$ of values, the GAR concept is defined for graphs of the form $(V,E,L,F_A)$, as in the context of graph entity dependencies (see Section \ref{sec:GED}), with the difference that no special $id$ attribute is used: $V$ is a finite set of node identifiers, $E \subseteq V \times \Gamma \times V$ is a finite set of edges where $(v,\iota, v') \in E$ denotes an edge from node $v$ to node $v'$ labeled with $\iota$, 
each node $v \in V$ carries a label $L(v)$,
and for each node $v \in V$, $F_A(v)$ is a finite tuple of the form $(A_1=a_1, \ldots, A_n=a_n)$ where $A_1, \ldots,A_n$ are attributes in $\Upsilon$, $a_1, \ldots, a_n$ are constants in $U$, and $A_i \neq A_j$ if $i \neq j$. We write $v.A_i$ to denote the value of the attribute $A_i$ of the node $v$.

A graph pattern is a graph $Q[\bar{x}] = (V_Q, E_Q, L_Q, \mu)$ where $V_Q$ is a finite set of node identifiers, $E_Q \subseteq V_Q \times (\Gamma \cup \lbrace \texttt{"\_"}\rbrace) \times V_Q$ is a finite set of edges
where $(v,\iota, v') \in E_Q$ denotes an edge labeled with $\iota$ where $\iota = \texttt{"\_"}$ denotes a wildcard, and $L_Q$ is a function that assigns a label from $\Gamma \cup \lbrace \texttt{"\_"} \rbrace$ to each node. $\bar{x}$ is a list of distinct variables, and $\mu$ is a bijective mapping that assigns a distinct variable to each node in $V_Q$.

A graph association rule is an expression of the form $Q[\bar{x}](X \to Y)$ where $Q[\bar{x}]$ is a graph pattern, $\bar{x}$ is a non-repetitive list of variables that occur in $Q[\bar{x}]$, and X and Y are sets of literals over $Q[\bar{x}]$. 
A literal over $Q[\bar{x}]$ is either 
an attribute literal, an edge literal, an ML literal, a variable literal, or a constant literal. We here ignore ML literals.
An attribute literal has the form $x.A$, where $x$ is a variable in $\bar{x}$ and $A$ is an attribute in $\Upsilon$.
An edge literal has the form $\iota(x,y)$, where $\iota$ is a label in $\Gamma$ and $x$ and $y$ are variables in $\bar{x}$.
A variable literal has the form $x.A = y.B$ where $x$ and $y$ are variables in $\bar{x}$ and $A$ and $B$ are attributes in $\Upsilon$.
A constant literal has the form $x.A=c$ where $x$ is a variable in $\bar{x}$, $A$ is an attribute in $\Upsilon$, and $c$ is a constant in $U$.

Graph patterns are evaluated under homomorphism semantics.
A match of the pattern $Q[\bar{x}]$ against the graph $G$ is a mapping $h$ from $Q=(V_Q,E_Q,L_Q,\mu)$ to $G=(V,E,L,F_A)$ such that for each node $u \in V_Q$, $L_Q(u) = L(h(u))$,
for each edge $(u, \iota, u') \in E_Q$ with $\iota \neq \texttt{"\_"}$, $(h(u), \iota, h(u')) \in E$. For each edge $(u, \texttt{"\_"}, u') \in E_Q$ there needs to be some $\iota \in \Gamma$ such that $(h(u), \iota, h(u')) \in E$.

Edge literals in $Y$ enable GARs to be used to extend graphs. The following graph expresses that a person with the name Alice is married to a person with the name Bob, and that Alice lives in a city with the name Bielefeld.
\begin{align*}
\Gamma ~=~& \lbrace person,~city,~marriedTo,~livesIn \rbrace\\
\Upsilon ~=~& \lbrace name \rbrace\\
U ~=~& \lbrace \texttt{"Alice"},~\texttt{"Bob"},~\texttt{"Bielefeld"} \rbrace\\
G ~=~& (V,~E,~L,~F_A)\\
V ~=~& \lbrace A,~B,~C \rbrace\\
E ~=~& \lbrace 
(A,~marriedTo,~B),~
(A,~livesIn,~C)
\rbrace\\
L ~=~& \lbrace 
(A,~person),~
(B,~person),~
(C,~city)
\rbrace\\
F_A(A) ~=~& \lbrace (name,~\texttt{"Alice"}) \rbrace\\
F_A(B) ~=~& \lbrace (name,~\texttt{"Bob"}) \rbrace\\
F_A(C) ~=~& \lbrace (name,~\texttt{"Bielefeld"}) \rbrace
\end{align*}

The following GAR expresses that if we have a person $x_1$ that is married to a person $x_2$ and if the person $x_1$ lives in the city $x_3$, then it is the case that $x_2$ lives in the city $x_3$.
\begin{align*}
Q[\bar{x}] ~=~& (V_Q,~E_Q,~L_Q,~\mu)\\
\bar{x} ~=~& (x_1,~x_2,~x_3)\\
V_Q ~=~& \lbrace A,~B,~C \rbrace\\
\mu ~=~& \lbrace (x_1,~A),~(x_2,~B),~(x_3,~C) \rbrace\\
E_Q ~=~& \lbrace (x_1,~marriedTo,~x_2),~(x_1,~livesIn,~x_3) \rbrace\\
L_Q ~=~& \lbrace 
(x_1,~person),~
(x_2,~person),~
(x_3,~city)
\rbrace\\
X ~=~& \varnothing\\
Y ~=~& \lbrace (x_2,~livesIn,~x_3) \rbrace
\end{align*}

Thus, we can extend the graph so that it also expresses that Bob lives in Bielefeld.

Although the GAR concept is an extension of concepts developed for identifying inconsistencies in data, GARs can be used to extend a graph. One of the contributions of this paper is the concept of graph pattern-based association rules, where these rules can be applied to extend a graph. Whereas our GAR concept is defined for property graphs, the GPAR concept is defined for directed labeled multigraphs. Whereas the pattern in a GAR is evaluated under homomorphism semantics, the patterns in a GPAR are evaluated under no-repeated-anything semantics, which allows patterns to pay more attention to a graph's topology. The graph pattern in a GAR cannot contain variables in predicate position, whereas the graph patterns in GPARs can, which not only allows to write more general patterns (e.g., allow some edge to have any label), but also to express constraints such as that two edges can have any label as long as both labels are identical or different, or to make use of information about the relation that is represented within the graph, such as that a predicate is declared as being symmetric, as shown in Figure \ref{fig:Ex2}. Beyond the structural differences, this paper considers both the single graph setting, as well as the graph transactional setting and introduces and analyzes metrics to score a pattern and to score the strength of correlation between two graph patterns. To the best of our knowledge, no equivalent metrics have been introduced for the GAR concept. However, for the concept underlying to GAR, the metrics support and confidence were defined in \cite{fan2015association}, but it is unclear whether they are applicable to GARs, too.

\subsection{Graph Pattern Association Rules}
Wang et al. introduced the Graph Pattern Association Rules concept \cite{wang2018mining,WANG2020112897}, which is different from the concept with the identical name by Fan et al. \cite{fan2015association}, for social network analysis, where these rules can be applied to recommend friends, but they are not limited to social networks.  They define a kind of property graph as a directed node-labeled graph by a tuple $(V,E,L)$ where $V$ is a set of node identifiers, $E \subseteq V \times V$ is a set of directed edges, and $L$ is a function that maps each node $v \in V$ to a tuple $L(v)$ of the form $(A_1 = a_1, \ldots, A_n=a_n)$ where $A_1, \ldots, A_n$ are distinct attribute names and $a_1, \ldots, a_n$ are attribute values. A graph pattern $p$ is defined as a tuple $(V_p, E_p,f)$ where $V_p$ is a set of node identifiers, $E_p \subseteq V_p \times V_p$ is a set of edges, and $f$ is a function that maps each node $v \in V_p$ to a tuple $f(v)$ of the form $(A_1 = a_1, \ldots, A_m=a_m)$ where $A_1, \ldots, A_m$ are distinct attribute names and $a_1, \ldots, a_m$ are attribute values.
A graph pattern $(V_p,E_p,f)$ matches a graph $(V,E,L)$ if an injection $h$ from $V_p$ to $V$ exists such that for each $(v,v') \in h$, for each statement $A_i=a_i \in L(v)$ it is the case that $A_i=a_i \in L(v')$ and that for each $(v,v') \in E_p$ it is the case that $(h(v),h(v')) \in E$. Thus, the graph pattern is evaluated under no-repeated-node semantics and we can refer to $h$ as the match. However, since edges are not labeled, no-repeated-node semantics and no-repeated-anything semantics coincide. 
A graph pattern association rule is a statement of the form $Q_l \Rightarrow Q_r$ where $Q_l$ and $Q_r$ are graph patterns. The rule expresses that if $Q_l$ matches a graph $G$, then $Q_r$ also matches $G$. Specifically, if there exists a match $h_l$ from $Q_l$ to $G$, then there exists a match $h_r$ from $Q_r$ to $G$ such that for each $u \in V_l \cap V_r$ it is the case that $h_l(v) = h_r(v)$ -- thus, via shared variables correspondences between the matches can be expressed.

The support of a graph pattern is defined
with the \textit{minimum image based support} metric \cite{bringmann2008frequent}.
Let the set of all matches of the graph pattern $Q=(V_Q,E_Q,f)$ on the graph $G=(V,E,L)$ be denoted by $\Omega_{Q,G}$. Let $img(u)$ denote the set of nodes in $V$ to which the node $u \in V_Q$ can be mapped to via some match $h \in \Omega_{Q,G}$, i.e., $img(u) = \lbrace u' \in V \mid \exists h \in \Omega_{Q,G} : h(u) = u' \rbrace$. Support of a graph pattern $Q$ given a graph $G$ is then defined as the minimum cardinality of these sets for nodes in $V_Q$, i.e., $supp(Q,G) = \min_{u \in V_Q}{|img(u)|}$. 
Given a GPAR $Q_l \Rightarrow Q_r$, with $Q_l = (V_l, E_l,f_l)$ and $Q_r = (V_r, E_r,f_r)$, the graph pattern $Q_R$ is constructed as $(V_l, E_l \cup E_r,f_l)$. Confidence of the rule $R = Q_l \Rightarrow Q_r$ is then defined as $supp(Q_R,G) / supp(Q_l,G)$.

Different to our approach is that another type of graph is considered, where nodes have attribute-value pairs, multiple nodes can have the identical set of attribute-value pairs, edges have no labels, and there can be at most one edge between two nodes. Comparable to our approach is the evaluation semantics used, because \textit{nrn} and \textit{nra} semantics here coincide  due to edges being unlabeled. 
A further similarity is that antecedent and consequent are graph patterns (without restricting these patterns, for example, to be path-shaped) and correspondences between matches can be expressed via shared variables.
There is a significant difference in how support is measured and, besides being based on a significantly different support metric, confidence does not measure the strength of correspondence between matches based on shared variables. Finally, the metrics are developed for the single graph setting, and not for the graph transactional setting and they are not defined using probabilities.

\subsection{Path Association Rules}
Sasaki and Karras \cite{sasaki2024mining} introduce the concept of \textit{path association rule} and adapt the metrics support, confidence, and lift. We slightly adapt their graph definition for clarity. Given a set $\mathcal{L}$ of labels and a set $\mathcal{A}$ of attributes, the concept is defined for graphs of the form $(V,E,src,tgt,\lambda_V,\lambda_E)$ where $V$ is a set of node identifiers, $E$ is a set of edge identifiers, $src$ is a function that maps an edge identifier to the identifier of the source node of the edge, $tgt$ is a function that maps an edge identifier to the identifier of the target node of the edge, $\lambda_V$ is a function that maps each node identifier $v$ to a set $A(v) \subseteq \mathcal{A}$ of attributes, and $\lambda_E$ is a function that maps each edge identifier $e$ to a single label $\lambda_E(e) \in \mathcal{L}$.

A path is a sequence of the form $(v_0, e_0, v_1, \ldots, e_{n-1},v_n)$ where $v_i \in V$ are node identifiers and $e_i \in E$ are edge identifiers. Two types of path patterns are introduced: simple path pattern and reachability path pattern -- we here focus on simple path patterns and call them path patterns. A path pattern is a sequence of the form $(A_0, l_0, A_1, \ldots,l_{n-1}, A_n)$ where $A_i \subseteq \mathcal{A}$ are sets of attributes and $l_i \in \mathcal{L}$ are edge labels.

A path pattern $(A_0, l_0, A_1, \ldots,l_{n-1}, A_n)$ matches a path $(v_0, e_0, v_1, \ldots, e_{n-1},v_n)$ if $A_i \subseteq A(v_i)$ and $l_i = l(e_i)$ for all $i$.
Given a graph $G$, a path $p$, and a node identifier $v$, $\mathcal{V}(p)$ is the set of node identifiers for which a path exist where the node is the source node of the path and the path is matched by the path pattern.

A path association rule is an expression of the form $p_x \Rightarrow p_y$ where $p_x$ and $p_y$ are path patterns.
Absolute support of a rule $p_x \Rightarrow p_y$ is defined as $|\mathcal{V}(p_x) \cap \mathcal{V}(p_y)|$, relative support is defined as $|\mathcal{V}(p_x) \cap \mathcal{V}(p_y)|/|V|$, confidence is defined as $|\mathcal{V}(p_x) \cap \mathcal{V}(p_y)|/|\mathcal{V}(p_x)|$, and lift is defined as
$(|\mathcal{V}(p_x) \cap \mathcal{V}(p_y)|\cdot |V|) / (|\mathcal{V}(p_x)| \cdot |\mathcal{V}(p_y)|)$.
In spite of the difference between their graph formalism and our DLM formalism, the situation is comparable to having a graph pattern-based association rule $(p_1,p_2,V_1,V_2)$ where $p_1$ and $p_2$ are line-shaped graph patterns where no variables occur in predicate position and $V_1=V_2=(v_1)$ contains only one variable $v_1$ that is the source of both patterns and the patterns are evaluated under homomorphism semantics.

\subsubsection{Multi-Relation Association Rules (MRAR)}

Ramezani, Saraee, and Nematbakhsh \cite{ramezani2014mrar,ramezani2013finding,ramezani2020mining} in 2014 introduced the concept of Multi-Relation Association Rules (MRAR). Rules are mined from a single directed labeled multigraph, which is an RDF graph or a graph that is created from a relational database.

The following example rule has the meaning \textit{"those who live in a place which is near to a city with humid climate type and are also younger than 20, they have a good health condition:"}
\begin{verbatim}
LiveIn(NearTo(ClimateType(Humid))), AgeLessThan(20) -> HealthCondition(Good)
\end{verbatim}

The antecedent is a set of chains of relations (here: \texttt{LiveIn(NearTo(ClimateType(Humid)))} and \texttt{AgeLessThan(20)} are chains of relations) and the consequent is a single chain of relations (here the chain is \texttt{HealthCondition(Good)}). Each chain ends with an entity (here: \texttt{Humid}, \texttt{20}, and \texttt{Good}) which the authors call endpoint entity.
We can represent such a chain of relations as a graph pattern and can represent the rule above in the form of a graph pattern-based association rule of the form $(p_1,p_2,V_1,V_2)$ as follows:
\begin{align*}
p_1 ~=~& \lbrace (v_j, LiveIn, v_1), (v_1, NearTo, v_2), (v_2, ClimateType, Humid),(v_j,AgeLessThan,20) \rbrace\\
p_2 ~=~& \lbrace (v_j, HealthCondition, Good) \rbrace\\
V_1 ~=~& (v_j)\\
V_2 ~=~& (v_j)
\end{align*}

Note the usage of the special variable $v_j$ ($j$ to indicate that $v_j$ is the joining variable) that occurs at the source of each chain of relations.  
The authors define the metrics relative support of a graph pattern (which they refer to as L-Large ItemChain) and confidence for rules. We can formalize these metrics in our notation as follows, where $\mathcal{T}_g$ denotes the set of nodes that occur in the graph $g$:
support is the number of entities that are connected to the endpoints via the chains of relations, divided by the number of entities in the graph.
\begin{align*}
support_g(p_1,p_2,(v_j)) ~=~& \dfrac{|\lbrace \, t \in \mathcal{T}_g \mid \exists \mu_1 \in \Omega^{hom}_{p_1,g} : (v_j,t) \in \mu_1 \,\rbrace|}{|\mathcal{T}_g|}\\
confidence_g(p_1,p_2,(v_j)) ~=~& \dfrac{|\lbrace \, t \in \mathcal{T}_g \mid \exists \mu_1 \in \Omega^{hom}_{p_1,g} : (v_j,t) \in \mu_1 \land \exists \mu_2 \in \Omega^{hom}_{p_2,g} : (v_j,t) \in \mu_2 \,\rbrace|}{|\lbrace \, t \in \mathcal{T}_g \mid \exists \mu_1 \in \Omega^{hom}_{p_1,g} : (v_j,t) \in \mu_1 \,\rbrace|}
\end{align*}

The main differences or the MRAR formalism and the GPAR formalism introduced in this paper are:
graph patterns in MRAR are much more restricted than in GPAR:
in MRAR, 
variables are not allowed in predicate position,
entities can only occur in object positions in triple patterns,
chains of relations need to end with an endpoint entity and cannot end with a variable. Because in MRAR chains of relations are used where the node variables are implicit, no constraints such as "resides in the father's country of birth" can be expressed. Thus, the graph patterns are star-like trees.
Furthermore,
there can only be one joining variable in MRAR and
MRAR patterns are evaluated under homomorphism semantics. The consequent of a MRAR rule is a single chain of relations, although the MRAR mining approach could most likely be easily extended so that consequents can consist of multiple chains of relations.

Scoring of MRAR rules and GPA rules is different in that MRA rules are scored against a single graph, whereas GPA rules can be scored against a single graph or a bag of graphs. Their definitions of support and confidence would admit a probability space formulation. Thus, metrics such as lift, leverage, and conviction could be defined. %

\subsection{Horn Rules}
According to \cite{lajus2020fast}, a Horn rule $R$ is a statement of the form $ R = B_1 \land \cdots \land B_n \Rightarrow H$ where $B_1 \land \cdots \land B_n$ is a conjunction of body atoms, $H$ is a head atom, and $\Rightarrow$ denotes logical implication. An atom is a statement of the form $r(X,Y)$ where $r$ is a relation and $X$ and $Y$ are either constants or variables.
Typically, in the context of Horn rule mining, further constraints are imposed, i.e., regarding the structure of a rule, namely that that rules need to be closed and connected.
A Horn rule is \textit{closed} if every variable occurs in at least two atoms. Two atoms are connected if they share a variable or a constant. A rule is \textit{connected} if every atom is connected transitively to every other atom of the rule. Further constraints can be imposed for when a rule is instantiated,
where instantiation is a process in which variables in body atoms of a rule are substituted with entities from the graph. Here, the \textit{object identity} \cite{semeraro1994avoiding,meilicke2020reinforced} constraint requires each variable to be replaced with a different variable.

Existing works on Horn rule mining \cite{meilicke2018fine,meilicke2019introduction,meilicke2020reinforced,lajus2020fast,gu2020towards} focus on mining Horn rules, with a subset of the aforementioned constraints, from a single knowledge graph in the form of an RDF graph -- thus, only the single graph setting is considered. 
Although Horn rules can be seen as built from a body graph pattern and a head graph pattern, there are some differences: i) in both patterns variables are not allowed in predicate position, ii) the body graph pattern needs to be connected, and iii) the head graph pattern can only consist of a single triple pattern. Furthermore, although object identity is more constraining than homomorphism semantics, it is still less constraining than \textit{nra} semantics.

Graph pattern-based association rules, as we have defined them, are implication rules. However, GPARs are not Horn rules, due to the aforementioned differences. 
However, every Horn rule can be expressed as a GPAR. Finally, because Horn rules are mined from a single graph, metrics for scoring Horn rules only consider the single graph setting. %

\subsection{Semantic Web Rule Formalisms}
Plenty of rule formalisms have been proposed in the past, for example in the context of Semantic Web Technologies. In the context of this paper it is convenient to represent a graph pattern-based association rule as a tuple of the form $(p_1,p_2,V_1,V_2)$ or of the form $(p_1,p_2)$.

However, 
we can rely on existing standards to describe and evaluate graph patterns for graph pattern-based association rules of the form $(p_1,p_2)$. 
In this section we discuss how graph pattern-based association rules can be expressed, ensuring no-repeated-anything semantics, using the \textit{SPARQL 1.1 Query Language}\footnote{\url{https://www.w3.org/TR/sparql11-query/}} and using the \textit{Semantic Web Rule Language (SWRL)}.\footnote{\url{https://www.w3.org/Submission/SWRL/}}

Building on the Sociology of Education example discussed in Section \ref{sec:graphsintro}, we define a rule that makes it explicit in a graph that the person, for which the pattern $p_2$ (shown in Figure \ref{fig:graphsandpatterns}: a person resides in the father's country of birth which is different from the mother's country of birth) matches, belongs to some class $ClassX$. The rule has the following simplified form:
\begin{align*}
p_1 ~=~& \lbrace 
  (v_1, hF, v_2),
  (v_1, hM, v_3),
  (v_1, cor, v_4),
  (v_2, cob, v_4),
  (v_3, cob, v_5) \rbrace\\
p_2 ~=~& \lbrace (v_1, type, ClassX) \rbrace%
\end{align*}
When we express the rule as a SPARQL CONSTRUCT query we need to transform the labels of the nodes and edges into URIs. Here, for readability, we use compact URIs and, therefore, prepend the example namespace \texttt{ex}. Thus, $cor$ (country of residence) becomes \texttt{ex:cor}.
The rule can then be represented as a SPARQL CONSTRUCT query:
\begin{verbatim}
PREFIX ex: <http://example.org#>
CONSTRUCT {
  ?v1 ex:type ex:ClassX .
} WHERE {
  ?v1 ex:hF ?v2 .
  ?v1 ex:hM ?v3 .
  ?v1 ex:cor ?v4 .
  ?v2 ex:cob ?v4 .
  ?v3 ex:cob ?v5 .
  FILTER (
    ?v1 != ?v2 && ?v1 != ?v3 && ?v1 != ?v4 && ?v1 != ?v5 && ?v2 != ?v3 &&
    ?v2 != ?v4 && ?v2 != ?v5 && ?v3 != ?v4 && ?v3 != ?v5 && ?v4 != ?v5 &&
    ?v1 != ex:hM && ?v1 != ex:hF && ?v1 != ex:cob && ?v1 != ex:cor &&
    ?v2 != ex:hM && ?v2 != ex:hF && ?v2 != ex:cob && ?v2 != ex:cor &&
    ?v3 != ex:hM && ?v3 != ex:hF && ?v3 != ex:cob && ?v3 != ex:cor &&
    ?v4 != ex:hM && ?v4 != ex:hF && ?v4 != ex:cob && ?v4 != ex:cor &&
    ?v5 != ex:hM && ?v5 != ex:hF && ?v5 != ex:cob && ?v5 != ex:cor  
  )
}
\end{verbatim}

The \texttt{FILTER} expression ensure that the graph pattern is evaluated under \textit{nra} semantics. Without the filter, the graph pattern is evaluated under \textit{hom} semantics. 
Furthermore, the rule can be represented as a SWRL rule:
\begin{verbatim}
ex:hF(?v1, ?v2) ^ 
ex:hM(?v1, ?v3) ^ 
ex:cor(?v1, ?v4) ^ 
ex:cob(?v2, ?v4) ^ 
ex:cob(?v3, ?v5) ^ 
swrlb:notEqual(?v1, ?v2) ^ swrlb:notEqual(?v1, ?v3) ^ 
swrlb:notEqual(?v1, ?v4) ^ swrlb:notEqual(?v1, ?v5) ^ 
swrlb:notEqual(?v2, ?v3) ^ swrlb:notEqual(?v2, ?v4) ^ 
swrlb:notEqual(?v2, ?v5) ^ swrlb:notEqual(?v3, ?v4) ^ 
swrlb:notEqual(?v3, ?v5) ^ swrlb:notEqual(?v4, ?v5) ^ 
swrlb:notEqual(?v1, ex:hM)  ^ swrlb:notEqual(?v1, ex:hF)  ^ 
swrlb:notEqual(?v1, ex:cob) ^ swrlb:notEqual(?v1, ex:cor) ^ 
swrlb:notEqual(?v2, ex:hM)  ^ swrlb:notEqual(?v2, ex:hF)  ^ 
swrlb:notEqual(?v2, ex:cob) ^ swrlb:notEqual(?v2, ex:cor) ^ 
swrlb:notEqual(?v3, ex:hM)  ^ swrlb:notEqual(?v3, ex:hF)  ^ 
swrlb:notEqual(?v3, ex:cob) ^ swrlb:notEqual(?v3, ex:cor) ^ 
swrlb:notEqual(?v4, ex:hM)  ^ swrlb:notEqual(?v4, ex:hF)  ^ 
swrlb:notEqual(?v4, ex:cob) ^ swrlb:notEqual(?v4, ex:cor) ^ 
swrlb:notEqual(?v5, ex:hM)  ^ swrlb:notEqual(?v5, ex:hF)  ^ 
swrlb:notEqual(?v5, ex:cob) ^ swrlb:notEqual(?v5, ex:cor)
-> rdf:type(?v1, ex:ClassX)
\end{verbatim}

The \texttt{swrlb:notEqual} statements ensure that the graph pattern is evaluated under \textit{nra} semantics -- it here serves a similar purpose as the filter in the SPARQL CONSTRUCT query.

Note, however, that in a graph pattern-based association rule we allow variables in predicate position. This is allowed in SPARQL queries and in an N3 logic rule,\footnote{\url{https://www.w3.org/TeamSubmission/n3/}} but this is not allowed in SWRL, in Rule Interchange Format,\footnote{\url{https://www.w3.org/TR/rif-core/}} in Datalog \cite{ceri1990logic,abiteboul1995foundations}, and in Horn rules. 
Thus, 
every graph pattern-based association rule can be expressed in SPARQL, but not every rule can be expressed in SWRL, RIF, or Datalog.

\section{Summary and Conclusions}\label{sec:summaryandconclusion}

Sets of directed labeled multigraphs (DLMs) are relevant in many areas such as
social networks, communication networks, citation networks, transportation networks, trade networks, biological interaction networks, molecular structures, and chemical reaction networks and so are approaches that are applicable for generative tasks where a given graph is extended or evaluative tasks where the plausibility of a given graph is evaluated.
These tasks can be realized with rule based approaches which are particularly interesting due to their interpretable nature in comparison to approaches based on artificial neural networks. 
After providing an overview over the concepts of classical, itemset-based association rules and relevant metrics (confidence, lift, leverage, and conviction), directed labeled multigraphs, and graph pattern evaluation, with a particular focus on the no-repeated-anything semantics, we formally introduced our concept of graph pattern-based association rules (GPAR), discussed their interpretation and showed how they can be applied for both generative tasks (such as link prediction and graph extension in general) and evaluative tasks (such as plausibility analysis).

We defined a probability space  and formulated probabilistic versions of confidence, lift, leverage, and conviction where we distinguish between the single graph setting and the graph transactional setting where for the latter we introduced both micro-averaged measures and macro-averaged measures. 
We showed how the problem of evaluating metrics for GPARs can be reframed as evaluation of corresponding metrics for itemset-based association rules. 
We examined these metrics in detail and described how they relate to their classical itemset-based counterparts, identifying the conditions under which key properties are preserved or diverge. 

Our analysis showed that the capabilities of our framework goes beyond a broad range of related formalisms, including graph functional dependencies, graph entity dependencies, relational association rules, graph association rules, multi-relation association rules, path association rules, and Horn rules, but showed that GPARs can be expressed as SPARQL CONSTRUCT queries.

Future work will focus on efficient mining of graph patterns that are frequent under no-repeated-anything semantics and on efficient mining of graph pattern-based association rules.

\section*{Acknowledgments}
This work was supported by the Research Council of Norway, centre of excellence Integreat -- Norwegian Centre for knowledge-driven machine learning, project number 332645.

\bibliographystyle{plain}
\bibliography{main}

\begin{thebibliography}{10}

\bibitem{abiteboul1995foundations}
Serge Abiteboul, Richard Hull, and Victor Vianu.
\newblock {\em Foundations of databases}, volume~8.
\newblock Addison-Wesley Reading, 1995.

\bibitem{apriori}
Rakesh Agrawal and Ramakrishnan Srikant.
\newblock Fast algorithms for mining association rules in large databases.
\newblock In {\em Proceedings of the 20th International Conference on Very
  Large Data Bases}, VLDB '94, pages 487--499, San Francisco, CA, USA, 1994.
  Morgan Kaufmann Publishers Inc.

\bibitem{angles2017foundations}
Renzo Angles, Marcelo Arenas, Pablo Barcel\'{o}, Aidan Hogan, Juan Reutter, and
  Domagoj Vrgo\v{c}.
\newblock Foundations of modern query languages for graph databases.
\newblock {\em ACM Comput. Surv.}, 50(5), September 2017.

\bibitem{au1999farm}
Wai-Ho Au and Keith~CC Chan.
\newblock {FARM}: A data mining system for discovering fuzzy association rules.
\newblock In {\em FUZZ-IEEE'99. 1999 IEEE International Fuzzy Systems.
  Conference Proceedings (Cat. No. 99CH36315)}, volume~3, pages 1217--1222.
  IEEE, 1999.

\bibitem{bohannon2006conditional}
Philip Bohannon, Wenfei Fan, Floris Geerts, Xibei Jia, and Anastasios
  Kementsietsidis.
\newblock Conditional functional dependencies for data cleaning.
\newblock In {\em 2007 IEEE 23rd international conference on data engineering},
  pages 746--755. IEEE, 2006.

\bibitem{bonifati2019schema}
Angela Bonifati, Peter Furniss, Alastair Green, Russ Harmer, Eugenia Oshurko,
  and Hannes Voigt.
\newblock Schema validation and evolution for graph databases.
\newblock In {\em International Conference on Conceptual Modeling}, pages
  448--456. Springer, 2019.

\bibitem{brin1997beyond}
Sergey Brin, Rajeev Motwani, and Craig Silverstein.
\newblock Beyond market baskets: generalizing association rules to
  correlations.
\newblock {\em SIGMOD Rec.}, 26(2):265--276, June 1997.

\bibitem{brin1997dynamic}
Sergey Brin, Rajeev Motwani, Jeffrey~D. Ullman, and Shalom Tsur.
\newblock Dynamic itemset counting and implication rules for market basket
  data.
\newblock {\em SIGMOD Rec.}, 26(2):255--264, June 1997.

\bibitem{bringmann2008frequent}
Bj{\"o}rn Bringmann and Siegfried Nijssen.
\newblock What is frequent in a single graph?
\newblock In {\em Pacific-Asia Conference on Knowledge Discovery and Data
  Mining}, pages 858--863. Springer, 2008.

\bibitem{carinena2014fuzzy}
Purificacion Carinena.
\newblock Fuzzy temporal association rules: combining temporal and quantitative
  data to increase rule expressiveness.
\newblock {\em Wiley Interdisciplinary Reviews: Data Mining and Knowledge
  Discovery}, 4(1):64--70, 2014.

\bibitem{cellier2015sequential}
Peggy Cellier, Thierry Charnois, Marc Plantevit, Christophe Rigotti, Bruno
  Cr{\'e}milleux, Olivier Gandrillon, Ji{\v{r}}{\'\i} Kl{\'e}ma, and Jean-Luc
  Manguin.
\newblock Sequential pattern mining for discovering gene interactions and their
  contextual information from biomedical texts.
\newblock {\em Journal of biomedical semantics}, 6(1):27, 2015.

\bibitem{ceri1990logic}
Stefano Ceri, Georg Gottlob, and Letizia Tanca.
\newblock Logic programming and databases: An overview.
\newblock {\em Logic Programming and Databases}, pages 1--15, 1990.

\bibitem{codd1972further}
Edgar~F Codd.
\newblock Further normalization of the data base relational model.
\newblock {\em Data base systems}, 6(1972):33--64, 1972.

\bibitem{10.5555/567222.567232}
Luc Dehaspe and Hannu Toironen.
\newblock {\em Discovery of relational association rules}, pages 189--208.
\newblock Springer-Verlag, Berlin, Heidelberg, 2001.

\bibitem{dehaspe1999discovery}
Luc Dehaspe and Hannu Toivonen.
\newblock Discovery of frequent {Datalog} patterns.
\newblock {\em Data Mining and knowledge discovery}, 3(1):7--36, 1999.

\bibitem{di2009mining}
Lisa Di-Jorio, Anne Laurent, and Maguelonne Teisseire.
\newblock Mining frequent gradual itemsets from large databases.
\newblock In {\em International symposium on intelligent data analysis}, pages
  297--308. Springer, 2009.

\bibitem{fan2010discovering}
Wenfei Fan, Floris Geerts, Jianzhong Li, and Ming Xiong.
\newblock Discovering conditional functional dependencies.
\newblock {\em IEEE Transactions on Knowledge and Data Engineering},
  23(5):683--698, 2010.

\bibitem{fan2020capturing}
Wenfei Fan, Ruochun Jin, Muyang Liu, Ping Lu, Chao Tian, and Jingren Zhou.
\newblock Capturing associations in graphs.
\newblock {\em Proc. VLDB Endow.}, 13(12):1863--1876, July 2020.

\bibitem{fan2017dependencies}
Wenfei Fan and Ping Lu.
\newblock Dependencies for graphs.
\newblock In {\em Proceedings of the 36th ACM SIGMOD-SIGACT-SIGAI Symposium on
  Principles of Database Systems}, PODS '17, pages 403--416, New York, NY, USA,
  2017. Association for Computing Machinery.

\bibitem{fan2019dependencies}
Wenfei Fan and Ping Lu.
\newblock Dependencies for graphs.
\newblock {\em ACM Trans. Database Syst.}, 44(2), February 2019.

\bibitem{fan2015association}
Wenfei Fan, Xin Wang, Yinghui Wu, and Jingbo Xu.
\newblock Association rules with graph patterns.
\newblock {\em Proc. VLDB Endow.}, 8(12):1502--1513, August 2015.

\bibitem{fan2016functional}
Wenfei Fan, Yinghui Wu, and Jingbo Xu.
\newblock Functional dependencies for graphs.
\newblock In {\em Proceedings of the 2016 International Conference on
  Management of Data}, SIGMOD '16, pages 1843--1857, New York, NY, USA, 2016.
  Association for Computing Machinery.

\bibitem{fournier2012cmrules}
Philippe Fournier-Viger, Usef Faghihi, Roger Nkambou, and Engelbert~Mephu
  Nguifo.
\newblock {CMRules}: Mining sequential rules common to several sequences.
\newblock {\em Knowledge-Based Systems}, 25(1):63--76, 2012.

\bibitem{fukuda1996mining}
Takeshi Fukuda, Yasuhido Morimoto, Shinichi Morishita, and Takeshi Tokuyama.
\newblock Mining optimized association rules for numeric attributes.
\newblock In {\em Proceedings of the Fifteenth ACM SIGACT-SIGMOD-SIGART
  Symposium on Principles of Database Systems}, PODS '96, pages 182--191, New
  York, NY, USA, 1996. Association for Computing Machinery.

\bibitem{goethals2002relational}
Bart Goethals and Jan Van~den Bussche.
\newblock Relational association rules: getting {WARMeR}.
\newblock In {\em Pattern Detection and Discovery: ESF Exploratory Workshop
  London, UK, September 16--19, 2002 Proceedings}, pages 125--139. Springer,
  2002.

\bibitem{gu2020towards}
Y~Gu, Y~Guan, and P~Missier.
\newblock Towards learning instantiated logical rules from knowledge graphs.
\newblock {\em arXiv preprint ArXiv:2003.06071}, 2020.

\bibitem{hu2003discovering}
Yi-Chung Hu, Ruey-Shun Chen, and Gwo-Hshiung Tzeng.
\newblock Discovering fuzzy association rules using fuzzy partition methods.
\newblock {\em Knowledge-Based Systems}, 16(3):137--147, 2003.

\bibitem{jm3}
Daniel Jurafsky and James~H. Martin.
\newblock {\em Speech and Language Processing: An Introduction to Natural
  Language Processing, Computational Linguistics, and Speech Recognition, with
  Language Models}.
\newblock 3rd edition, 2025.
\newblock Online manuscript released August 24, 2025.

\bibitem{kaushik2021systematic}
Minakshi Kaushik, Rahul Sharma, Sijo~Arakkal Peious, Mahtab Shahin, Sadok~Ben
  Yahia, and Dirk Draheim.
\newblock A systematic assessment of numerical association rule mining methods.
\newblock {\em SN Computer Science}, 2(5):348, 2021.

\bibitem{koperski1995discovery}
Krzysztof Koperski and Jiawei Han.
\newblock Discovery of spatial association rules in geographic information
  databases.
\newblock In {\em International Symposium on Spatial Databases}, pages 47--66.
  Springer, 1995.

\bibitem{lajus2020fast}
Jonathan Lajus, Luis Gal{\'a}rraga, and Fabian Suchanek.
\newblock Fast and exact rule mining with {AMIE} 3.
\newblock In {\em European Semantic Web Conference}, pages 36--52. Springer,
  2020.

\bibitem{meilicke2020reinforced}
Christian Meilicke, Melisachew~Wudage Chekol, Manuel Fink, and Heiner
  Stuckenschmidt.
\newblock Reinforced anytime bottom up rule learning for knowledge graph
  completion.
\newblock {\em arXiv preprint arXiv:2004.04412}, 2020.

\bibitem{meilicke2019introduction}
Christian Meilicke, Melisachew~Wudage Chekol, Daniel Ruffinelli, and Heiner
  Stuckenschmidt.
\newblock An introduction to {AnyBURL}.
\newblock In {\em Joint German/Austrian Conference on Artificial Intelligence
  (K{\"u}nstliche Intelligenz)}, pages 244--248. Springer, 2019.

\bibitem{meilicke2018fine}
Christian Meilicke, Manuel Fink, Yanjie Wang, Daniel Ruffinelli, Rainer
  Gemulla, and Heiner Stuckenschmidt.
\newblock Fine-grained evaluation of rule-and embedding-based systems for
  knowledge graph completion.
\newblock In {\em International semantic web conference}, pages 3--20.
  Springer, 2018.

\bibitem{safran}
Simon Ott, Christian Meilicke, and Matthias Samwald.
\newblock {SAFRAN}: An interpretable, rule-based link prediction method
  outperforming embedding models, 2021.

\bibitem{leverage}
Gregory Piatetsky-Shapiro.
\newblock Discovery, analysis, and presentation of strong rules.
\newblock {\em Knowledge Discovery in Data-bases}, pages 229--248, 1991.

\bibitem{ramezani2020mining}
Reza Ramezani, Mohammad~Ali Nematbakhsh, and Mohamad Saraee.
\newblock Mining association rules from semantic web data without user
  intervention.
\newblock {\em Journal of Computing \& Security}, 7(1), 2020.

\bibitem{ramezani2014mrar}
Reza Ramezani, Mohamad Saraee, and Mohammad~Ali Nematbakhsh.
\newblock {MRAR}: mining multi-relation association rules.
\newblock {\em Journal of Computing and Security}, 1(2):133--158, 2014.

\bibitem{ramezani2013finding}
Reza Ramezani, Mohammad Saraee, and Mohammad~Ali Nematbakhsh.
\newblock Finding association rules in linked data, a centralization approach.
\newblock In {\em 2013 21st Iranian Conference on Electrical Engineering
  (ICEE)}, pages 1--6. IEEE, 2013.

\bibitem{sasaki2024mining}
Yuya Sasaki and Panagiotis Karras.
\newblock Mining path association rules in large property graphs.
\newblock In {\em Proceedings of the 33rd ACM International Conference on
  Information and Knowledge Management}, CIKM '24, pages 1994--2003, New York,
  NY, USA, 2024. Association for Computing Machinery.

\bibitem{saygin2002privacy}
Y{\"u}cel Saygin, Vassilios~S Verykios, and Ahmed~K Elmagarmid.
\newblock Privacy preserving association rule mining.
\newblock In {\em Proceedings Twelfth International Workshop on Research Issues
  in Data Engineering: Engineering E-Commerce/E-Business Systems RIDE-2EC
  2002}, pages 151--158. IEEE, 2002.

\bibitem{semeraro1994avoiding}
Giovanni Semeraro, Floriana Esposito, Donato Malerba, Clifford Brunk, and
  Michael Pazzani.
\newblock Avoiding non-termination when learning logic programs: A case study
  with {FOIL} and {FOCL}.
\newblock In {\em International Workshop on Meta-Programming in Logic}, pages
  183--198. Springer, 1994.

\bibitem{sharma2022novel}
Rahul Sharma, Minakshi Kaushik, Sijo~Arakkal Peious, Alexandre Bazin,
  Syed~Attique Shah, Iztok Fister, Sadok~Ben Yahia, and Dirk Draheim.
\newblock A novel framework for unification of association rule mining, online
  analytical processing and statistical reasoning.
\newblock {\em IEEE Access}, 10:12792--12813, 2022.

\bibitem{srikant96mining}
Ramakrishnan Srikant and Rakesh Agrawal.
\newblock Mining quantitative association rules in large relational tables.
\newblock {\em SIGMOD Rec.}, 25(2):1--12, June 1996.

\bibitem{srikant1996mining}
Ramakrishnan Srikant and Rakesh Agrawal.
\newblock Mining sequential patterns: Generalizations and performance
  improvements.
\newblock In {\em International conference on extending database technology},
  pages 1--17. Springer, 1996.

\bibitem{taboada2008association}
Karla Taboada, Kaoru Shimada, Shingo Mabu, Kotaro Hirasawa, and Jinglu Hu.
\newblock Association rule mining for continuous attributes using genetic
  network programming.
\newblock In {\em Proceedings of the 9th Annual Conference on Genetic and
  Evolutionary Computation}, GECCO '07, page 1758, New York, NY, USA, 2007.
  Association for Computing Machinery.

\bibitem{wang2018mining}
Xin Wang and Yang Xu.
\newblock Mining graph pattern association rules.
\newblock In {\em International Conference on Database and Expert Systems
  Applications}, pages 223--235. Springer, 2018.

\bibitem{WANG2020112897}
Xin Wang, Yang Xu, and Huayi Zhan.
\newblock Extending association rules with graph patterns.
\newblock {\em Expert Systems with Applications}, 141:112897, 2020.

\bibitem{yao2008mining}
Hong Yao and Howard~J Hamilton.
\newblock Mining functional dependencies from data.
\newblock {\em Data Mining and Knowledge Discovery}, 16(2):197--219, 2008.

\bibitem{zhou2019evolving}
Huiyu Zhou and Kotaro Hirasawa.
\newblock Evolving temporal association rules in recommender system.
\newblock {\em Neural Computing and Applications}, 31(7):2605--2619, 2019.

\end{thebibliography}

\appendix
\section{Proofs}

We revisit the main definitions defined in the four contexts considered that we need for the proofs of the metrics' characteristics. Note that for the context of macro-averaged metrics we simply build on the definitions that were defined for the GPAR single graph context, because when the scores are macro-averaged we are working with a family of probability spaces.

\subsection*{Context ISAR: Itemset-based Association Rules}
Collection of definitions:
\begin{align*}
\allowdisplaybreaks
r ~=~& A \Rightarrow B\\
\Omega ~=~&  T\\
E_A ~=~& \lbrace \, t \in T \mid A \subseteq t \,\rbrace\\
E_B ~=~& \lbrace \, t \in T \mid B \subseteq t \, \rbrace\\
E_A \cap E_B ~=~& \lbrace \, t \in T \mid (A \cup B) \subseteq t \,\rbrace\\
\hat{P}_T(E_A) ~=~& \frac{|E_A|}{|T|}\\
\hat{P}_T(E_B) ~=~& \frac{|E_B|}{|T|}\\
\hat{P}_T(E_A \cap E_B) ~=~& \frac{|E_A \cap E_B|}{|T|}\\
\hat{P}_T(E_B \mid E_A) ~=~& \frac{|E_A \cap E_B|}{|E_A|}
\end{align*}

\subsection*{Context GPAR (single): Graph Pattern-based Association Rules, Single Graph Setting}
Collection of definitions:
\begin{align*}
\allowdisplaybreaks
r ~=~& (p_1,p_2,V_1,V_2)\\
\Omega ~=~& \fulltau\\
E_{g,p_1,V_1;p_2} ~=~& \lbrace \, T \in \fulltau \mid m_g(T,p_1,V_1) \, \rbrace\\
E_{g,p_2,V_2;p_1} ~=~& \lbrace \, T \in \fulltau \mid m_g(T,p_2,V_2) \, \rbrace\\
\hat{P}_g(E_{g,p_1,V_1;p_2}) ~=~& \frac{|E_{g,p_1,V_1;p_2}|}{|\fulltau|}\\
\hat{P}_T(E_{g,p_2,V_2;p_1}) ~=~& \frac{|E_{g,p_2,V_2;p_1}|}{|\fulltau|}\\
\hat{P}_g(E_{g,p_1,V_1;p_2} \cap E_{g,p_2,V_2;p_1}) ~=~& \frac{|E_{g,p_1,V_1;p_2} \cap E_{g,p_2,V_2;p_1}|}{|\fulltau|}\\
\hat{P}_g(E_{g,p_2,V_2;p_1} \mid E_{g,p_1,V_1;p_2}) ~=~& \frac{|E_{g,p_1,V_1;p_2} \cap E_{g,p_2,V_2;p_1}|}{|E_{g,p_1,V_1;p_2}|}
\end{align*}

\subsection*{Context GPAR (micro): Graph Pattern-based Association Rules, Graph Transactional Setting, Micro-averaged}
Collection of definitions:
\begin{align*}
\allowdisplaybreaks
r ~=~& (p_1,p_2,V_1,V_2)\\
\Omega ~=~& \fulltauG\\
E_{G,p_1,V_1;p_2} ~=~& \lbrace \, (i,T) \in \fulltauG \mid m_{g_i}(T,p_1,V_1) \, \rbrace\\
E_{G,p_2,V_2;p_1} ~=~& \lbrace \, (i,T) \in \fulltauG \mid m_{g_i}(T,p_2,V_2) \, \rbrace\\
\hat{P}_G(E_{G,p_1,V_1;p_2}) ~=~& \frac{|E_{G,p_1,V_1;p_2}|}{|\fulltauG|}\\
\hat{P}_G(E_{G,p_2,V_2;p_1}) ~=~& \frac{|E_{G,p_2,V_2;p_1}|}{|\fulltauG|}
\end{align*}

\subsection*{Generalization}
In order to abstract to a general setting for any of the three sample spaces, we make use of the following definitions:
\begin{align*}
\allowdisplaybreaks
E_1 ~=~& \lbrace \, x \in \Omega \mid cond_1(x) \,\rbrace\\
E_2 ~=~& \lbrace \, x \in \Omega \mid cond_2(x) \, \rbrace\\
\hat{P}_{\Omega}(E_1) ~=~& \frac{|E_1|}{|\Omega|}\\
\hat{P}_{\Omega}(E_2) ~=~& \frac{|E_2|}{|\Omega|}
\end{align*}
where $cond_1$ and $cond_2$ are boolean expressions, i.e., 
$A \subseteq t$ and $B \subseteq t$,
$m_g(T,p_1,V_1)$ and $m_g(T,p_2,V_2)$, or
$m_{g_i}(T,p_1,V_1)$ and $m_{g_i}(T,p_2,V_2)$, respectively.
We define the situations of interest as follows:
\begin{align*}
\texttt{IDE} ~\Leftrightarrow~& E_1 = E_2\\
\texttt{DIS} ~\Leftrightarrow~& E_1 \cap E_2 = \varnothing\\
\texttt{IND} ~\Leftrightarrow~& \hat{P}_{\Omega}(E_1 \cap E_2) = \hat{P}_{\Omega}(E_1) \cdot \hat{P}_{\Omega}(E_2)\\
\texttt{POS} ~\Leftrightarrow~& \hat{P}_\Omega(E_1 \cap E_2) > \hat{P}_\Omega(E_1)\cdot \hat{P}_\Omega(E_2)\\
\texttt{NEG} ~\Leftrightarrow~& \hat{P}_\Omega(E_1 \cap E_2) < \hat{P}_\Omega(E_1)\cdot \hat{P}_\Omega(E_2)
\end{align*}

The metrics can now be defined in their generalized form: 
\begin{align*}
\operatorname{confidence}_\Omega(r) ~=~& \hat{P}_\Omega(E_2 \mid E_1) = \frac{\hat{P}_\Omega(E_1 \cap E_2)}{\hat{P}_\Omega(E_1)} = \frac{|E_1 \cap E_2|}{|E_1|}\\
\operatorname{lift}_\Omega(r) ~=~& \frac{\hat{P}_\Omega(E_1 \cap E_2)}{\hat{P}_\Omega(E_2)} = \frac{|E_1 \cap E_2| \cdot |\Omega|}{|E_1| \cdot |E_2|}\\
\operatorname{leverage}_\Omega(r) ~=~& \hat{P}_\Omega(E_1 \cap E_2) - \hat{P}_\Omega(E_1) \cdot \hat{P}_\Omega(E_2)\\
\operatorname{conviction}_\Omega(r) ~=~& \frac{1 - \hat{P}_\Omega(E_2)}{ 1 - \hat{P}_\Omega(E_2 \mid E_1)} = \dfrac{|E_1| \cdot (|\Omega| - |E_2|)}{|\Omega| \cdot \left( |E_1| - |E_1 \cap E_2| \right)}
\end{align*}

We define $\operatorname{conviction}_\Omega(r)$ to be $\infty$ in the case that $\operatorname{confidence}_\Omega(r) = 1$.

\subsection{Proofs about Characteristics of Confidence}

\subsubsection{Contexts: ISAR, GPAR (single), and GPAR (micro)}
\begin{proposition}\label{proposition-confidence-IDE-ISAR-GPARsinglemicro}
Assuming that $\hat{P}_\Omega(E_1) > 0$, then $\texttt{IDE} \Longrightarrow \operatorname{confidence}_\Omega(r) = 1$.
\end{proposition}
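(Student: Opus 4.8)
The proof is a direct unfolding of definitions in the generalized setting. The plan is to assume the situation \texttt{IDE}, which by definition means $E_1 = E_2$, and then to substitute this identity into the generalized definition of confidence, namely $\operatorname{confidence}_\Omega(r) = \frac{|E_1 \cap E_2|}{|E_1|}$.

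First I would note that since $E_1 = E_2$, we have $E_1 \cap E_2 = E_1 \cap E_1 = E_1$, so the numerator $|E_1 \cap E_2|$ equals $|E_1|$. Substituting this into the fraction yields $\frac{|E_1|}{|E_1|}$. The assumption $\hat{P}_\Omega(E_1) > 0$ guarantees that $|E_1| > 0$ (since $\hat{P}_\Omega(E_1) = \frac{|E_1|}{|\Omega|}$ and $|\Omega| > 0$), so the denominator is nonzero and the fraction is well-defined and equal to $1$. This establishes $\operatorname{confidence}_\Omega(r) = 1$.

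There is no genuine obstacle here; the only point requiring care is invoking the hypothesis $\hat{P}_\Omega(E_1) > 0$ at the right moment to ensure the quotient is defined (confidence is otherwise undefined when $E_1 = \varnothing$). I would also remark explicitly that this is only a one-directional implication ($\Longrightarrow$, not $\Longleftrightarrow$), since confidence equal to $1$ merely requires $E_1 \subseteq E_2$, which is strictly weaker than $E_1 = E_2$; the converse can fail whenever $E_1 \subsetneq E_2$. Because the argument is phrased entirely in the generalized setting with $\Omega$, $E_1$, $E_2$, the single proof simultaneously covers the ISAR, GPAR (single), and GPAR (micro) contexts, which is exactly why these three share the same proposition number in Tables \ref{properties-confidence}--\ref{properties-conviction}.
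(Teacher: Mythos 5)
Your proof is correct and follows essentially the same route as the paper's: the forward direction by substituting $E_1 \cap E_2 = E_1$ into the counting definition of confidence, and the observation that confidence equal to $1$ only characterizes $E_1 \subseteq E_2$, which is why the implication is one-directional. Your explicit invocation of $\hat{P}_\Omega(E_1) > 0$ to justify that the denominator is nonzero is a small point of added care that the paper leaves implicit.
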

\begin{proof}\medskip

\noindent ($\Rightarrow$) Suppose $E_1 = E_2$ holds. Then $\operatorname{confidence}_\Omega(r) = 1$ holds, because

$\operatorname{confidence}_\Omega(r) = \dfrac{|E_1 \cap E_2|}{|E_1|} \overset{\texttt{IDE}}{=} \dfrac{|E_1|}{|E_1|} = 1$.

\noindent ($\Leftarrow$)
Suppose $\operatorname{confidence}_\Omega(r) = 1$ holds. Then, $E_1 = E_2$ does not hold in general, because
$
\dfrac{|E_1 \cap E_2|}{|E_1|} = 1
\Longleftrightarrow~ |E_1 \cap E_2| = |E_1|
\Longleftrightarrow~ E_1 \subseteq E_2 \centernot \Longleftrightarrow E_1 = E_2.
$
\end{proof}
\begin{proposition}\label{proposition-confidence-DIS-ISAR-GPARsinglemicro}
Assuming that $\hat{P}_\Omega(E_1) > 0$, then $\texttt{DIS} \Longleftrightarrow \operatorname{confidence}_\Omega(r) = 0$.
\end{proposition}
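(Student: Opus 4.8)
The proposition to prove is the biconditional $\texttt{DIS} \Longleftrightarrow \operatorname{confidence}_\Omega(r) = 0$, under the standing assumption $\hat{P}_\Omega(E_1) > 0$ (equivalently $|E_1| > 0$, which guarantees the confidence is defined). Recall from the generalized definitions that $\operatorname{confidence}_\Omega(r) = \tfrac{|E_1 \cap E_2|}{|E_1|}$ and that $\texttt{DIS}$ is defined as $E_1 \cap E_2 = \varnothing$. The plan is to prove each direction separately, as in the template set by Proposition~\ref{proposition-confidence-IDE-ISAR-GPARsinglemicro}, and both directions reduce to an elementary statement about cardinalities of finite sets.

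For the forward direction ($\Rightarrow$), I would assume $\texttt{DIS}$, i.e.\ $E_1 \cap E_2 = \varnothing$. Then $|E_1 \cap E_2| = 0$, so
\[
\operatorname{confidence}_\Omega(r) = \frac{|E_1 \cap E_2|}{|E_1|} = \frac{0}{|E_1|} = 0,
\]
where the division is legitimate precisely because the assumption $\hat{P}_\Omega(E_1) > 0$ gives $|E_1| > 0$. For the reverse direction ($\Leftarrow$), I would assume $\operatorname{confidence}_\Omega(r) = 0$, that is $\tfrac{|E_1 \cap E_2|}{|E_1|} = 0$. Since $|E_1| > 0$, this forces $|E_1 \cap E_2| = 0$, and because $E_1 \cap E_2$ is a finite set with cardinality zero it must be empty, i.e.\ $E_1 \cap E_2 = \varnothing$, which is exactly $\texttt{DIS}$.

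Unlike the $\texttt{IDE}$ case, where only one direction holds (confidence $1$ gives merely $E_1 \subseteq E_2$, not equality), here both implications go through cleanly, so the statement is a genuine biconditional and there is no real obstacle. The only point requiring care is to invoke the hypothesis $|E_1| > 0$ in both directions: in the forward direction to ensure the fraction is well-defined, and in the reverse direction to pass from a vanishing fraction to a vanishing numerator (without $|E_1| > 0$ the fraction would be undefined rather than zero). Since all three contexts ISAR, GPAR (single), and GPAR (micro) are subsumed by the generalized setting with sample space $\Omega$ and events $E_1, E_2$, the single generalized proof covers all of them simultaneously, which is why the same proposition number is reused across the corresponding table entries.
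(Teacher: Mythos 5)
Your proof is correct and matches the paper's own argument essentially step for step: the forward direction substitutes $|E_1 \cap E_2| = |\varnothing| = 0$ into $\tfrac{|E_1 \cap E_2|}{|E_1|}$, and the reverse direction observes that a zero fraction with positive denominator forces an empty intersection. Your version is slightly more explicit about where the hypothesis $\hat{P}_\Omega(E_1) > 0$ is invoked in each direction, but this is the same proof.
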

\begin{proof}

\noindent ($\Rightarrow$) Suppose $E_1 \cap E_2 = \varnothing$ holds. Then, $\operatorname{confidence}_\Omega(r) = 0$ holds, because

$\operatorname{confidence}_\Omega(r) = \dfrac{|E_1 \cap E_2|}{|E_1|} \overset{\texttt{DIS}}{=} \dfrac{|\varnothing|}{|E_1|} = 0$.

\noindent $(\Leftarrow)$ Suppose $\operatorname{confidence}_\Omega(r) = 0$ holds, then $E_1 \cap E_2 = \varnothing$ holds, because
if $\frac{|E_1 \cap E_2|}{|E_1|} = 0$, then the fraction can only become $0$ if $E_1 \cap E_2 = \varnothing$.
\end{proof}
\begin{proposition}\label{proposition-confidence-IND-ISAR-GPARsinglemicro}
Assuming that $\hat{P}_\Omega(E_1) > 0$, then $\texttt{IND} \Longleftrightarrow \operatorname{confidence}_\Omega(r) = \hat{P}_\Omega(E_2)$.
\end{proposition}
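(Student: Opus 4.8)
The plan is to prove the biconditional $\texttt{IND} \Longleftrightarrow \operatorname{confidence}_\Omega(r) = \hat{P}_\Omega(E_2)$ by starting from the definition of confidence as a conditional probability and manipulating the defining equation of \texttt{IND}. The key observation is that, by the generalized definitions in the excerpt, $\operatorname{confidence}_\Omega(r) = \hat{P}_\Omega(E_2 \mid E_1) = \tfrac{\hat{P}_\Omega(E_1 \cap E_2)}{\hat{P}_\Omega(E_1)}$, and the independence condition \texttt{IND} reads $\hat{P}_\Omega(E_1 \cap E_2) = \hat{P}_\Omega(E_1) \cdot \hat{P}_\Omega(E_2)$. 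Since we assume $\hat{P}_\Omega(E_1) > 0$, we are free to divide by $\hat{P}_\Omega(E_1)$ throughout, which is exactly what makes the equivalence clean.

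For the forward direction ($\Rightarrow$), I would assume \texttt{IND} holds, substitute $\hat{P}_\Omega(E_1 \cap E_2) = \hat{P}_\Omega(E_1)\cdot\hat{P}_\Omega(E_2)$ into the numerator of the confidence expression, and cancel the factor $\hat{P}_\Omega(E_1)$ (legal because it is strictly positive) to obtain $\operatorname{confidence}_\Omega(r) = \hat{P}_\Omega(E_2)$. For the reverse direction ($\Leftarrow$), I would assume $\operatorname{confidence}_\Omega(r) = \hat{P}_\Omega(E_2)$, write this as $\tfrac{\hat{P}_\Omega(E_1 \cap E_2)}{\hat{P}_\Omega(E_1)} = \hat{P}_\Omega(E_2)$, and multiply both sides by $\hat{P}_\Omega(E_1)$ to recover $\hat{P}_\Omega(E_1 \cap E_2) = \hat{P}_\Omega(E_1)\cdot\hat{P}_\Omega(E_2)$, which is precisely \texttt{IND}. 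Both directions are a single chain of equalities, so unlike Proposition~\ref{proposition-confidence-IDE-ISAR-GPARsinglemicro}, the reverse implication genuinely holds here and the statement is a true biconditional.

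There is no real obstacle in this proof; the only thing to be careful about is that every step of division or cancellation is justified by the standing hypothesis $\hat{P}_\Omega(E_1) > 0$, which must be invoked explicitly so that the equivalence is not merely formal. Here is the proof:

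\begin{proof}

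\noindent ($\Rightarrow$) Suppose \texttt{IND} holds, i.e., $\hat{P}_\Omega(E_1 \cap E_2) = \hat{P}_\Omega(E_1) \cdot \hat{P}_\Omega(E_2)$. Then, since $\hat{P}_\Omega(E_1) > 0$,
\[ \operatorname{confidence}_\Omega(r) = \frac{\hat{P}_\Omega(E_1 \cap E_2)}{\hat{P}_\Omega(E_1)} \overset{\texttt{IND}}{=} \frac{\hat{P}_\Omega(E_1) \cdot \hat{P}_\Omega(E_2)}{\hat{P}_\Omega(E_1)} = \hat{P}_\Omega(E_2). \]

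\noindent ($\Leftarrow$) Suppose $\operatorname{confidence}_\Omega(r) = \hat{P}_\Omega(E_2)$ holds. Then
\[ \frac{\hat{P}_\Omega(E_1 \cap E_2)}{\hat{P}_\Omega(E_1)} = \hat{P}_\Omega(E_2), \]
and multiplying both sides by $\hat{P}_\Omega(E_1) > 0$ yields $\hat{P}_\Omega(E_1 \cap E_2) = \hat{P}_\Omega(E_1) \cdot \hat{P}_\Omega(E_2)$, which is \texttt{IND}.
\end{proof}
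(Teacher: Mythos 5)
Your proof is correct and takes essentially the same approach as the paper: both rest on the identity $\operatorname{confidence}_\Omega(r) = \hat{P}_\Omega(E_1 \cap E_2)/\hat{P}_\Omega(E_1)$ and pass to the independence equation by multiplying or cancelling $\hat{P}_\Omega(E_1) > 0$. The paper merely presents the argument as a single chain of biconditional equivalences written in terms of cardinalities $|E_1|$, $|E_2|$, $|\Omega|$, whereas you split it into two directions at the level of probabilities; the algebra is identical.
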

\begin{proof}
Proof via a chain of bidirectional logical equivalences:
\begin{flushleft}
$\begin{aligned}
& \operatorname{confidence}_\Omega(r) = \hat{P}_\Omega(E_2)\\
\Longleftrightarrow~ & \frac{|E_1 \cap E_2|}{|E_1|} = \frac{|E_2|}{|\Omega|}\\
\Longleftrightarrow~ & |E_1 \cap E_2| = \dfrac{|E_1| \cdot |E_2|}{|\Omega|}~~~\text{(divide by } |\Omega| \text{ which is non-zero because } \hat{P}_\Omega(E_1) > 0 \text{)}\\
\Longleftrightarrow~ & \frac{|E_1 \cap E_2|}{|\Omega|} = \frac{|E_1|}{|\Omega|} \cdot \frac{|E_2|}{|\Omega|}\\
\Longleftrightarrow~ & \hat{P}_\Omega(E_1 \cap E_2) = \hat{P}_\Omega(E_1) \cdot \hat{P}_\Omega(E_2)\\ 
\Longleftrightarrow~ & \hat{P}_\Omega(E_1) \text { and } \hat{P}_\Omega(E_2) \text{ are independent.} 
\end{aligned}$
\end{flushleft}
\end{proof}
\begin{proposition}\label{proposition-confidence-POS-ISAR-GPARsinglemicro}
Assuming that $\hat{P}_\Omega(E_1) > 0$, then $\texttt{POS} \Longleftrightarrow \operatorname{confidence}_\Omega(r) > \hat{P}_\Omega(E_2)$.
\end{proposition}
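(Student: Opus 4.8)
The plan is to mirror the proof of Proposition \ref{proposition-confidence-IND-ISAR-GPARsinglemicro} almost verbatim, replacing the equality by a strict inequality and proceeding through a chain of bidirectional equivalences so that both directions are established at once. First I would expand both sides using the generalized definitions: the confidence metric as $\operatorname{confidence}_\Omega(r) = |E_1 \cap E_2| / |E_1|$ and the probability as $\hat{P}_\Omega(E_2) = |E_2| / |\Omega|$. This reduces the claim $\operatorname{confidence}_\Omega(r) > \hat{P}_\Omega(E_2)$ to the inequality $|E_1 \cap E_2| / |E_1| > |E_2| / |\Omega|$ between ratios of cardinalities.

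Next I would clear denominators step by step, exactly as in the \texttt{IND} case. The hypothesis $\hat{P}_\Omega(E_1) > 0$ guarantees both $|E_1| > 0$ and $|\Omega| > 0$, so I can multiply by $|E_1|$ and then divide by $|\Omega|$ while preserving the sense of the inequality, arriving at
\[
\frac{|E_1 \cap E_2|}{|\Omega|} > \frac{|E_1|}{|\Omega|} \cdot \frac{|E_2|}{|\Omega|},
\]
which by the definitions of the empirical measure is precisely $\hat{P}_\Omega(E_1 \cap E_2) > \hat{P}_\Omega(E_1) \cdot \hat{P}_\Omega(E_2)$, i.e. the condition \texttt{POS}. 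Since every manipulation is a logical equivalence, the implications $\texttt{POS} \Rightarrow \operatorname{confidence}_\Omega(r) > \hat{P}_\Omega(E_2)$ and its converse both follow from the single chain.

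The only point requiring care — and the nearest thing to an obstacle — is justifying that the direction of the inequality is maintained while clearing denominators; this is exactly where the assumption $\hat{P}_\Omega(E_1) > 0$ is used, as it rules out multiplication or division by zero and fixes the signs of the factors $|E_1|$ and $|\Omega|$. Everything else is routine algebra, structurally identical to the \texttt{IND} argument, and uses no property specific to $E_1$, $E_2$, or $\Omega$ beyond finiteness and nonemptiness. That is why the one generalized proof simultaneously covers the ISAR, GPAR (single), and GPAR (micro) contexts, just as the shared labelling of the propositions in Table \ref{properties-confidence} indicates.
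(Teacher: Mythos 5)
Your proposal is correct and is exactly the proof the paper intends: the paper's own proof of this proposition is "omitted," with instructions to repeat the chain of bidirectional equivalences from the \texttt{IND} case (Proposition \ref{proposition-confidence-IND-ISAR-GPARsinglemicro}) with $=$ replaced by $>$, which is precisely what you do, including the observation that $\hat{P}_\Omega(E_1) > 0$ guarantees the denominators $|E_1|$ and $|\Omega|$ are positive so that clearing them preserves the strict inequality in both directions.
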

\begin{proof}Omitted -- can be proven in the same way as Proposition \ref{proposition-confidence-IND-ISAR-GPARsinglemicro}. The only thing that needs to be changed is to replace $=$ with $>$.
\end{proof}
\begin{proposition}\label{proposition-confidence-NEG-ISAR-GPARsinglemicro}
Assuming that $\hat{P}_\Omega(E_1) > 0$, then $\texttt{NEG} \Longleftrightarrow \operatorname{confidence}_\Omega(r) < \hat{P}_\Omega(E_2)$.
\end{proposition}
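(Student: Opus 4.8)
The plan is to follow exactly the chain of bidirectional equivalences used in the proof of Proposition~\ref{proposition-confidence-IND-ISAR-GPARsinglemicro}, replacing the equality $=$ throughout by the strict inequality $<$. Since the statement to be shown is itself an equivalence, $\texttt{NEG} \Longleftrightarrow \operatorname{confidence}_\Omega(r) < \hat{P}_\Omega(E_2)$, and every step in the IND chain is already reversible, the whole argument reduces to checking that each algebraic manipulation preserves, rather than reverses, the direction of the inequality. As in the POS case (Proposition~\ref{proposition-confidence-POS-ISAR-GPARsinglemicro}), this lets me dispense with separate $(\Rightarrow)$ and $(\Leftarrow)$ arguments and instead present a single chain of $\Longleftrightarrow$.

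Concretely, I would start from $\operatorname{confidence}_\Omega(r) < \hat{P}_\Omega(E_2)$ and unfold both sides by their definitions to get $\frac{|E_1 \cap E_2|}{|E_1|} < \frac{|E_2|}{|\Omega|}$. Then I would multiply both sides by $|E_1|$; because $\hat{P}_\Omega(E_1) > 0$ forces $|E_1| > 0$, this is scaling by a strictly positive quantity, so the direction is preserved, yielding $|E_1 \cap E_2| < \frac{|E_1| \cdot |E_2|}{|\Omega|}$. Dividing both sides by $|\Omega| > 0$ (again positive) gives $\frac{|E_1 \cap E_2|}{|\Omega|} < \frac{|E_1|}{|\Omega|} \cdot \frac{|E_2|}{|\Omega|}$, which is precisely $\hat{P}_\Omega(E_1 \cap E_2) < \hat{P}_\Omega(E_1) \cdot \hat{P}_\Omega(E_2)$, the defining condition of \texttt{NEG}.

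The only point that requires genuine care — and the closest thing to an obstacle — is justifying that these two scaling operations preserve the inequality direction rather than flipping it. This rests entirely on the positivity of the multipliers: $\hat{P}_\Omega(E_1) > 0$ supplies $|E_1| > 0$, and $|\Omega| > 0$ holds since the sample space is non-empty in each of the three contexts ISAR, GPAR (single), and GPAR (micro). With positivity secured, every link in the chain is a true equivalence, so both directions of the proposition follow simultaneously, and the proof is complete.

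\begin{proof}
Proof via a chain of bidirectional logical equivalences:
\begin{flushleft}
$\begin{aligned}
& \operatorname{confidence}_\Omega(r) < \hat{P}_\Omega(E_2)\\
\Longleftrightarrow~ & \frac{|E_1 \cap E_2|}{|E_1|} < \frac{|E_2|}{|\Omega|}\\
\Longleftrightarrow~ & |E_1 \cap E_2| < \dfrac{|E_1| \cdot |E_2|}{|\Omega|}~~~\text{(multiply by } |E_1| > 0 \text{, direction preserved)}\\
\Longleftrightarrow~ & \frac{|E_1 \cap E_2|}{|\Omega|} < \frac{|E_1|}{|\Omega|} \cdot \frac{|E_2|}{|\Omega|}~~~\text{(divide by } |\Omega| > 0 \text{, direction preserved)}\\
\Longleftrightarrow~ & \hat{P}_\Omega(E_1 \cap E_2) < \hat{P}_\Omega(E_1) \cdot \hat{P}_\Omega(E_2)\\
\Longleftrightarrow~ & \hat{P}_\Omega(E_1) \text { and } \hat{P}_\Omega(E_2) \text{ are negatively correlated.}
\end{aligned}$
\end{flushleft}
\end{proof}
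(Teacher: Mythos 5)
Your proof is correct and follows exactly the route the paper intends: the paper's own proof of this proposition is "omitted," with the instruction to repeat the chain of bidirectional equivalences from Proposition~\ref{proposition-confidence-IND-ISAR-GPARsinglemicro} with $=$ replaced by $<$, which is precisely what you did. Your added remarks that the scaling steps preserve the strict inequality because $|E_1| > 0$ and $|\Omega| > 0$ are a small but welcome tightening of the paper's implicit argument.
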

\begin{proof}Omitted -- can be proven in the same way as Proposition \ref{proposition-confidence-IND-ISAR-GPARsinglemicro}. The only thing that needs to be changed is to replace $=$ with $<$.
\end{proof}

\subsubsection{Context: GPAR (macro)}

\[\operatorname{macro-confidence}_G(p_1,p_2,V_1,V_2) = \frac{1}{|G|} \sum_{g\in G} \operatorname{confidence}_g(p_1,p_2,V_1,V_2)\]

\begin{proposition}\label{proposition-confidence-IDE-GPARmacro}
Assuming that $\hat{P}_\Omega(E_{g,p_1,V_1;p_2}) > 0$ for each $g \in G$, then \\$\texttt{IDE} \Longrightarrow \operatorname{macro-confidence}_G(p_1,p_2,V_1,V_2) = 1$.
\end{proposition}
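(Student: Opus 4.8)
The plan is to reduce the macro-averaged claim to the single-graph result already established in Proposition~\ref{proposition-confidence-IDE-ISAR-GPARsinglemicro}, and then average. Recall from Table~\ref{tab:situation-names} that in the macro-averaged graph-transactional setting the situation \texttt{IDE} means precisely that \emph{for every} graph $g \in G$ the events $E_{g,p_1,V_1;p_2}$ and $E_{g,p_2,V_2;p_1}$ are identical. Likewise, $\operatorname{macro-confidence}_G$ is by definition the arithmetic mean of the per-graph single-graph confidences, so the strategy is to show each summand equals $1$.

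First I would fix an arbitrary $g \in G$ and observe that, under the hypothesis, $\hat{P}_g(E_{g,p_1,V_1;p_2}) > 0$, which is exactly the precondition needed to apply the single-graph proposition; and that \texttt{IDE} gives $E_{g,p_1,V_1;p_2} = E_{g,p_2,V_2;p_1}$, which is exactly the \texttt{IDE} hypothesis of that proposition instantiated in the GPAR (single) context (with $E_1 = E_{g,p_1,V_1;p_2}$ and $E_2 = E_{g,p_2,V_2;p_1}$). Invoking the forward direction of Proposition~\ref{proposition-confidence-IDE-ISAR-GPARsinglemicro} then yields $\operatorname{confidence}_g(p_1,p_2,V_1,V_2) = 1$ for this $g$.

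Since $g$ was arbitrary, this holds for all $g \in G$, and I would finish with the trivial averaging computation
\[
\operatorname{macro-confidence}_G(p_1,p_2,V_1,V_2) = \frac{1}{|G|}\sum_{g \in G} \operatorname{confidence}_g(p_1,p_2,V_1,V_2) = \frac{1}{|G|}\sum_{g \in G} 1 = \frac{|G|}{|G|} = 1 .
\]

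I do not expect any genuine obstacle here: the argument is essentially bookkeeping, delegating all the content to the single-graph proposition. The only points requiring a little care are (i) confirming that the macro \texttt{IDE} condition is the correct per-graph hypothesis rather than a statement about some aggregated event, and (ii) noting that only the implication ($\Longrightarrow$) is claimed, so I need not — and should not — attempt a converse, matching the fact that Proposition~\ref{proposition-confidence-IDE-ISAR-GPARsinglemicro} itself only provides the forward direction. I would also remark, for consistency with the surrounding discussion, that since each summand is a function of its own probability space, no unified probability space is invoked, which is exactly why the implication (but not the equivalence) survives macro-averaging.
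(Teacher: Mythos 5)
Your proposal is correct and matches the paper's own proof essentially step for step: the paper likewise invokes Proposition~\ref{proposition-confidence-IDE-ISAR-GPARsinglemicro} per graph to conclude each $\operatorname{confidence}_g(p_1,p_2,V_1,V_2)=1$ under \texttt{IDE}, and then averages to get $\tfrac{1}{|G|}\sum_{g\in G}1=1$. The only cosmetic difference is that the paper additionally appends a short remark (its ``$\Leftarrow$'' paragraph) explaining why the converse fails in general — because a sum of $n$ values in $[0,1]$ equaling $n$ forces each summand to be $1$, which only requires $E_{g,p_1,V_1;p_2}\subseteq E_{g,p_2,V_2;p_1}$ — whereas you correctly note the converse is simply not part of the claim.
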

\begin{proof}
$(\Rightarrow$) Suppose that it holds that $E_1$ and $E_2$ are identical for each $g \in G$, then \\$\operatorname{macro-confidence}_G(p_1,p_2,V_1,V_2) = 1$ holds, because\\
$\operatorname{macro-confidence}_G(p_1,p_2,V_1,V_2) = \frac{1}{|G|} \sum_{g\in G} \operatorname{confidence}_g(p_1,p_2,V_1,V_2) \overset{\texttt{IDE}, Prop. \ref{proposition-confidence-IDE-ISAR-GPARsinglemicro}}{=} \frac{1}{|G|} \sum_{g\in G} 1 = 1$.

\noindent ($\Leftarrow$)
Suppose $\operatorname{macro-confidence}_\Omega(p_1,p_2,V_1,V_2) = 1$ holds. Then, $E_1 = E_2$ for each $g \in G$ does not hold in general, because if the sum of $n$ values in $[0,1]$ equals $n$, then each summand is equal to $1$,  which is the case if $E_1 \subseteq E_2$.
\end{proof}

\begin{proposition}\label{proposition-confidence-DIS-GPARmacro}
Assuming that $\hat{P}_\Omega(E_{g,p_1,V_1;p_2}) > 0$ for each $g \in G$, \\then $\texttt{DIS} \Longleftrightarrow \operatorname{macro-confidence}_G(p_1,p_2,V_1,V_2) = 0$.
\end{proposition}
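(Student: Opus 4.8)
The plan is to reduce the macro-level biconditional to the single-graph biconditional already established in Proposition~\ref{proposition-confidence-DIS-ISAR-GPARsinglemicro}, using the fact that every per-graph confidence summand is non-negative. Note that the assumption $\hat{P}_\Omega(E_{g,p_1,V_1;p_2}) > 0$ for each $g \in G$ guarantees that $\operatorname{confidence}_g(p_1,p_2,V_1,V_2)$ is defined for every $g$, so that Proposition~\ref{proposition-confidence-DIS-ISAR-GPARsinglemicro} is applicable graph by graph.

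For the forward direction ($\Rightarrow$), I would assume \texttt{DIS} in the macro-averaged sense, i.e.\ that $E_{g,p_1,V_1;p_2} \cap E_{g,p_2,V_2;p_1} = \varnothing$ for every $g \in G$. Applying the ($\Rightarrow$) direction of Proposition~\ref{proposition-confidence-DIS-ISAR-GPARsinglemicro} to each graph yields $\operatorname{confidence}_g(p_1,p_2,V_1,V_2) = 0$ for all $g \in G$, so that $\operatorname{macro-confidence}_G(p_1,p_2,V_1,V_2) = \tfrac{1}{|G|}\sum_{g \in G} 0 = 0$.

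For the backward direction ($\Leftarrow$), the key observation is that each confidence value lies in $[0,1] \cap \mathbb{Q}$ and is therefore non-negative; a finite sum of non-negative rationals equals zero only if every summand is zero. Hence $\operatorname{macro-confidence}_G(p_1,p_2,V_1,V_2) = 0$ forces $\operatorname{confidence}_g(p_1,p_2,V_1,V_2) = 0$ for each $g \in G$, and the ($\Leftarrow$) direction of Proposition~\ref{proposition-confidence-DIS-ISAR-GPARsinglemicro} then gives $E_{g,p_1,V_1;p_2} \cap E_{g,p_2,V_2;p_1} = \varnothing$ for every $g$, which is precisely the macro-averaged \texttt{DIS} condition.

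The point worth flagging — rather than a genuine obstacle — is why this statement upgrades to a full biconditional, unlike the \texttt{IDE} case in Proposition~\ref{proposition-confidence-IDE-GPARmacro} where only one implication survives. The difference is that \texttt{DIS} is characterized by a \emph{biconditional} at the single-graph level, and the non-negativity of the summands lets the ``average is zero'' hypothesis propagate to each individual term without loss; in the \texttt{IDE} case the per-graph implication is only one-directional ($E_1 = E_2$ is strictly stronger than $E_1 \subseteq E_2$), so the macro statement cannot be tightened. I would make this contrast explicit to clarify why the arrow type in Table~\ref{properties-confidence} differs between the two rows.
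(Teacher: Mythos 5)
Your proof is correct and follows essentially the same route as the paper's: the forward direction applies the single-graph Proposition~\ref{proposition-confidence-DIS-ISAR-GPARsinglemicro} graph by graph, and the backward direction uses the fact that a finite sum of values in $[0,1]$ vanishes only if every summand vanishes, then invokes the single-graph biconditional again. Your closing remark contrasting the \texttt{DIS} and \texttt{IDE} rows is a valid (and accurate) explanation of why the arrow types in Table~\ref{properties-confidence} differ, though it goes beyond what the paper's proof states explicitly.
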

\begin{proof}
($\Rightarrow$) Suppose that it holds that $E_1$ and $E_2$ are disjoint for each $g \in G$, then\\ %
$\operatorname{macro-confidence}_G(p_1,p_2,V_1,V_2) = 0$ holds, because\\
$\operatorname{macro-confidence}_G(p_1,p_2,V_1,V_2) = \frac{1}{|G|} \sum_{g\in G} \operatorname{confidence}_g(p_1,p_2,V_1,V_2) \overset{\texttt{DIS}, Prop. \ref{proposition-confidence-DIS-ISAR-GPARsinglemicro}}{=} \frac{1}{|G|} \sum_{g\in G} 0 = 0$.

\noindent $(\Leftarrow$) Suppose that $\operatorname{macro-confidence}_G(p_1,p_2,V_1,V_2) = 0$ holds, then $E_1$ and $E_2$ are disjoint for each $g \in G$ holds, because if the sum of $n$ values in $[0,1]$ equals $0$, then each summand is equal to $0$, thus, $E_1$ and $E_2$ are disjoint for each $g \in G$.
\end{proof}

\begin{proposition}\label{proposition-confidence-IND-GPARmacro}
Assuming that $\hat{P}_\Omega(E_{g,p_1,V_1;p_2}) > 0$ for each $g \in G$, \\then $\texttt{IND} \Longrightarrow \operatorname{macro-confidence}_G(p_1,p_2,V_1,V_2) = \frac{1}{|G|} \sum_{g \in G} \hat{P}_g(E_{g,p_2,V_2;p_1})$.
\end{proposition}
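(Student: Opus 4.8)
The plan is to follow exactly the template set by the two preceding macro-averaged propositions (the \texttt{IDE} and \texttt{DIS} cases), namely to push the claim down to the per-graph level via Proposition~\ref{proposition-confidence-IND-ISAR-GPARsinglemicro} and then aggregate using the linearity of the macro-average. Since the statement is only a one-directional implication, I would need to handle just the forward direction.

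First I would record that the hypothesis $\hat{P}_\Omega(E_{g,p_1,V_1;p_2}) > 0$ for every $g \in G$ ensures that $\operatorname{confidence}_g(p_1,p_2,V_1,V_2)$ is defined for each $g$, so that the sum defining $\operatorname{macro-confidence}_G$ is well-formed and Proposition~\ref{proposition-confidence-IND-ISAR-GPARsinglemicro} is applicable graph by graph. This is the exact analogue of the well-definedness remark used in the \texttt{IDE} and \texttt{DIS} proofs.

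Next, assuming \texttt{IND} in the macro sense---that $E_{g,p_1,V_1;p_2}$ and $E_{g,p_2,V_2;p_1}$ are independent for every $g \in G$---I would invoke Proposition~\ref{proposition-confidence-IND-ISAR-GPARsinglemicro} in each graph to obtain $\operatorname{confidence}_g(p_1,p_2,V_1,V_2) = \hat{P}_g(E_{g,p_2,V_2;p_1})$. Substituting this equality term-by-term into the definition $\operatorname{macro-confidence}_G(p_1,p_2,V_1,V_2) = \tfrac{1}{|G|}\sum_{g \in G}\operatorname{confidence}_g(p_1,p_2,V_1,V_2)$ and factoring out $\tfrac{1}{|G|}$ immediately yields $\tfrac{1}{|G|}\sum_{g \in G}\hat{P}_g(E_{g,p_2,V_2;p_1})$, as required. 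The computation itself is routine once the per-graph reduction is in place.

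The part that requires care is not the arithmetic but justifying why only $\Longrightarrow$ holds here, in contrast to the biconditional of the single-graph case. The hard part is explaining that the converse fails because the macro-average is an external aggregation over a family of probability spaces rather than a single measure: the averaged identity can hold while a positive correlation in some graphs is compensated by a negative correlation in others, so equality of the averages does not force independence in each $g$. This is precisely the loss-of-deducibility phenomenon flagged in Section~\ref{sec:metricstransactional}, and exhibiting a two-graph counterexample (one graph in situation \texttt{POS}, one in \texttt{NEG}, with cancelling deviations) would confirm that the implication cannot be strengthened to an equivalence.
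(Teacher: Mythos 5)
Your proposal is correct and follows essentially the same route as the paper's own proof: the forward direction applies Proposition~\ref{proposition-confidence-IND-ISAR-GPARsinglemicro} graph by graph and substitutes into the definition of $\operatorname{macro-confidence}_G$, and your explanation of why the converse fails (cancelling deviations across graphs) is just a more elaborated version of the paper's remark that the same sum can be obtained from different collections of summands.
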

\begin{proof}

($\Rightarrow$) Suppose that for all $g \in G$ it holds that $\hat{P}_g(E_{g,p_1,V_1;p_2} \cap E_{g,p_2,V_2;p_1}) = \hat{P}_g(E_{g,p_1,V_1;p_2}) \cdot \hat{P}_g(E_{g,p_2,V_2;p_1})$, then $\operatorname{macro-confidence}_G(p_1,p_2,V_1,V_2) = \frac{1}{|G|} \sum_{g \in G} \hat{P}_g(E_{g,p_2,V_2;p_1})$, because \\
$\operatorname{macro-confidence}_G(p_1,p_2,V_1,V_2) = \frac{1}{|G|} \sum_{g\in G} \operatorname{confidence}_g(p_1,p_2,V_1,V_2) \overset{\texttt{IND}, Prop. \ref{proposition-confidence-IND-ISAR-GPARsinglemicro}}{=}\\ \frac{1}{|G|} \sum_{g\in G} \hat{P}_g(E_{g,p_2,V_2;p_1})$.

\noindent ($\Leftarrow$) The reverse is not true, because the same sum can be obtained from different collections of summands.
\end{proof}

\begin{proposition}\label{proposition-confidence-POS-GPARmacro}
Assuming that $\hat{P}_\Omega(E_{g,p_1,V_1;p_2}) > 0$ for each $g \in G$, \\then $\texttt{POS} \Longrightarrow \operatorname{macro-confidence}_G(p_1,p_2,V_1,V_2) > \frac{1}{|G|} \sum_{g \in G} \hat{P}_g(E_{g,p_2,V_2;p_1})$.
\end{proposition}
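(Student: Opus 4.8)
The plan is to follow exactly the template established by the preceding macro-averaged propositions (P\ref{proposition-confidence-IDE-GPARmacro}, P\ref{proposition-confidence-DIS-GPARmacro}, P\ref{proposition-confidence-IND-GPARmacro}): reduce the macro claim to the already-proven single-graph claim by exploiting the fact that $\operatorname{macro-confidence}_G$ is just a normalized sum of the per-graph confidences. Since the \texttt{POS} situation in the macro context means that $E_{g,p_1,V_1;p_2}$ and $E_{g,p_2,V_2;p_1}$ are positively correlated \emph{for every} $g \in G$, I expect only the forward implication $\Longrightarrow$ to hold, matching the $\arrowbox{\Longrightarrow}$ entry in Table \ref{properties-confidence}.

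First I would handle the forward direction. Assuming \texttt{POS} holds for each $g \in G$, I would apply Proposition \ref{proposition-confidence-POS-ISAR-GPARsinglemicro} (instantiated at the single-graph sample space $\Omega = \fulltau$) to each individual graph, obtaining the strict per-graph inequality $\operatorname{confidence}_g(p_1,p_2,V_1,V_2) > \hat{P}_g(E_{g,p_2,V_2;p_1})$. Then I would sum these strict inequalities over all $g \in G$ and divide by $|G|$, yielding
\[
\operatorname{macro-confidence}_G(p_1,p_2,V_1,V_2) = \frac{1}{|G|}\sum_{g\in G}\operatorname{confidence}_g(p_1,p_2,V_1,V_2) > \frac{1}{|G|}\sum_{g\in G}\hat{P}_g(E_{g,p_2,V_2;p_1}),
\]
which is the desired conclusion. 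The one small point worth stating explicitly is that summing strictly positive gaps preserves strictness because $G$ is non-empty (so $|G|\geq 1$), and the hypothesis $\hat{P}_g(E_{g,p_1,V_1;p_2})>0$ for each $g$ guarantees every $\operatorname{confidence}_g$ term is well-defined; these are the same side conditions used in the earlier macro proofs, so no new machinery is needed.

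Second I would dispatch the failure of the reverse direction, exactly as in the $(\Leftarrow)$ part of Proposition \ref{proposition-confidence-IND-GPARmacro}: the inequality $\operatorname{macro-confidence}_G > \tfrac{1}{|G|}\sum_{g\in G}\hat{P}_g(E_{g,p_2,V_2;p_1})$ only constrains the \emph{average} gap to be positive, so it can hold even when some individual graphs are in \texttt{NEG} or \texttt{IND} as long as the positive gaps elsewhere dominate. Since the same average can arise from many different collections of per-graph summands, the macro inequality does not let us recover \texttt{POS} at each $g$.

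There is no real obstacle here — the argument is purely the linearity and monotonicity of averaging applied on top of Proposition \ref{proposition-confidence-POS-ISAR-GPARsinglemicro}. The only thing requiring a modicum of care is being precise that strictness of the averaged inequality relies on every summand being strictly larger (which holds because \texttt{POS} is assumed for \emph{all} $g\in G$), and being honest that this is genuinely a one-way implication, so the proposition is stated and proved with $\Longrightarrow$ rather than $\Longleftrightarrow$.
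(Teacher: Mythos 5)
Your proof is correct and follows exactly the paper's own route: the paper proves this proposition by the template of Proposition~\ref{proposition-confidence-IND-GPARmacro}, invoking Proposition~\ref{proposition-confidence-POS-ISAR-GPARsinglemicro} per graph and replacing $=$ with $>$, which is precisely your argument, including the observation that the reverse implication fails because the same average can arise from different collections of summands.
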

\begin{proof}Omitted -- can be proven in the same way as Proposition \ref{proposition-confidence-IND-GPARmacro}, using Prop. \ref{proposition-confidence-POS-ISAR-GPARsinglemicro}. The only thing that needs to be changed is to replace $=$ with $>$.
\end{proof}

\begin{proposition}\label{proposition-confidence-NEG-GPARmacro}
Assuming that $\hat{P}_\Omega(E_{g,p_1,V_1;p_2}) > 0$ for each $g \in G$, \\then $\texttt{NEG} \Longrightarrow \operatorname{macro-confidence}_G(p_1,p_2,V_1,V_2) < \frac{1}{|G|} \sum_{g \in G} \hat{P}_g(E_{g,p_2,V_2;p_1})$.
\end{proposition}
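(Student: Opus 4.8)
The plan is to mirror the structure of Proposition~\ref{proposition-confidence-IND-GPARmacro}, reducing the macro-averaged claim to the per-graph single-graph result established in Proposition~\ref{proposition-confidence-NEG-ISAR-GPARsinglemicro}. Since $\operatorname{macro-confidence}_G$ is by definition the arithmetic mean $\frac{1}{|G|}\sum_{g \in G}\operatorname{confidence}_g(p_1,p_2,V_1,V_2)$, and since the \texttt{NEG} situation in the macro-averaged setting means that the events $E_{g,p_1,V_1;p_2}$ and $E_{g,p_2,V_2;p_1}$ are negatively correlated for \emph{every} $g \in G$, I would establish the strict inequality summand-by-summand and then average.

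First I would invoke the hypothesis that $\hat{P}_g(E_{g,p_1,V_1;p_2}) > 0$ for each $g \in G$, which guarantees that $\operatorname{confidence}_g$ is defined for every graph so that the sum is well-formed. Next, for each fixed $g \in G$, I would apply Proposition~\ref{proposition-confidence-NEG-ISAR-GPARsinglemicro} in its single-graph instantiation: negative correlation of the two events for that graph yields $\operatorname{confidence}_g(p_1,p_2,V_1,V_2) < \hat{P}_g(E_{g,p_2,V_2;p_1})$. Summing these strict inequalities over all $g \in G$ and dividing by $|G|$ preserves the strict inequality (a finite sum of strictly smaller terms is strictly smaller, and $|G| > 0$), giving $\operatorname{macro-confidence}_G(p_1,p_2,V_1,V_2) < \frac{1}{|G|}\sum_{g \in G}\hat{P}_g(E_{g,p_2,V_2;p_1})$, which is exactly the claim.

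The main point of care---rather than a genuine obstacle---is that only the forward implication ($\Longrightarrow$) holds, not the biconditional, exactly as in the IND and POS macro cases. The converse fails because the macro-average collapses the per-graph information: the same averaged value $\frac{1}{|G|}\sum_{g \in G}\hat{P}_g(E_{g,p_2,V_2;p_1})$ can arise from a collection of summands that are not all individually smaller than their corresponding $\hat{P}_g(E_{g,p_2,V_2;p_1})$, so knowing only that the averaged confidence lies below the averaged support does not let one recover that every graph is negatively correlated. Hence I would state and prove only the $\Longrightarrow$ direction, noting (as the paper does for Proposition~\ref{proposition-confidence-POS-GPARmacro}) that the argument is identical to that of Proposition~\ref{proposition-confidence-IND-GPARmacro} with $=$ replaced by $<$ and with Proposition~\ref{proposition-confidence-NEG-ISAR-GPARsinglemicro} used in place of Proposition~\ref{proposition-confidence-IND-ISAR-GPARsinglemicro}.
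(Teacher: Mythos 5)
Your proposal is correct and matches the paper's own proof exactly: the paper proves this proposition by repeating the argument of Proposition~\ref{proposition-confidence-IND-GPARmacro} with $=$ replaced by $<$, invoking Proposition~\ref{proposition-confidence-NEG-ISAR-GPARsinglemicro} per graph and averaging, which is precisely your summand-by-summand reduction. Your remark on why only the forward implication holds (the same average can arise from different collections of summands) is also the paper's stated reason.
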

\begin{proof}Omitted -- can be proven in the same way as Proposition \ref{proposition-confidence-IND-GPARmacro}, using Prop. \ref{proposition-confidence-NEG-ISAR-GPARsinglemicro}. The only thing that needs to be changed is to replace $=$ with $<$.
\end{proof}

\subsection{Proofs about Characteristics of Lift}

\subsubsection{Contexts: ISAR, GPAR (single), and GPAR (micro)}

\begin{proposition}\label{proposition-lift-IDE-ISAR-GPARsinglemicro}
Assuming that $\hat{P}_\Omega(E_1) > 0$ and $\hat{P}_\Omega(E_2) > 0$, then $\texttt{IDE} \Longrightarrow \operatorname{lift}_\Omega(r) = 1 / \hat{P}_\Omega(E_2)$.
\end{proposition}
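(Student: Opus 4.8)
The plan is to prove the single implication by direct substitution into the count-based closed form of lift, \[\operatorname{lift}_\Omega(r) = \frac{|E_1 \cap E_2| \cdot |\Omega|}{|E_1| \cdot |E_2|},\] taken from the generalized definitions, rather than routing through conditional probabilities. Since the statement is only an implication ($\Longrightarrow$), just one direction has to be established.

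First I would observe that the hypotheses $\hat{P}_\Omega(E_1) > 0$ and $\hat{P}_\Omega(E_2) > 0$ are, because $|\Omega| > 0$, equivalent to $|E_1| > 0$ and $|E_2| > 0$; this guarantees that the denominator $|E_1| \cdot |E_2|$ is nonzero, so the expression is well defined and the cancellations below are legitimate. Next, assuming \texttt{IDE} (that is, $E_1 = E_2$), I substitute $E_1 \cap E_2 = E_2$ and $|E_1| = |E_2|$ to obtain \[\operatorname{lift}_\Omega(r) = \frac{|E_2| \cdot |\Omega|}{|E_2| \cdot |E_2|} = \frac{|\Omega|}{|E_2|} = \frac{1}{\hat{P}_\Omega(E_2)},\] which is the claimed value and completes the forward direction.

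Because the proposition asserts only $\Longrightarrow$, no converse argument is required; I would nonetheless flag, in parallel with Proposition~\ref{proposition-confidence-IDE-ISAR-GPARsinglemicro}, why the statement is not phrased as an equivalence: from $\operatorname{lift}_\Omega(r) = 1/\hat{P}_\Omega(E_2)$ one recovers only $|E_1 \cap E_2| = |E_1|$, i.e. $E_1 \subseteq E_2$, which does not force $E_1 = E_2$.

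There is no genuine obstacle here — the argument is essentially a one-line substitution. The only point requiring care is confirming that the denominators are nonzero before cancelling, which the positivity hypotheses supply, together with the observation that a single computation covers all three contexts (ISAR, GPAR single, GPAR micro) at once, since they share the generalized measure $\hat{P}_\Omega$ and differ only in what $E_1$, $E_2$, and $\Omega$ denote.
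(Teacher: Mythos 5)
Your proof is correct and follows essentially the same route as the paper: a direct substitution of \texttt{IDE} into the count-based formula $\operatorname{lift}_\Omega(r) = \tfrac{|E_1 \cap E_2|\cdot|\Omega|}{|E_1|\cdot|E_2|}$, with the positivity hypotheses guaranteeing the cancellation is legitimate. Your closing remark on why the converse fails (one recovers only $E_1 \subseteq E_2$) matches the paper's own ($\Leftarrow$) discussion, so nothing is missing.
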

\begin{proof}
($\Rightarrow$) Suppose $E_1 = E_2$ holds, then $\operatorname{lift}_\Omega(r) = 1 / \hat{P}_\Omega(E_2)$ holds, because\\ $\operatorname{lift}_\Omega(r) = \dfrac{|E_1 \cap E_2| \cdot |\Omega|}{|E_1| \cdot |E_2|} \overset{\texttt{IDE}}{=} \dfrac{|E_1| \cdot |\Omega|}{|E_1| \cdot |E_2|} = 1/\hat{P}_\Omega(E_2)$.

\noindent ($\Leftarrow$) Suppose $\operatorname{lift}_\Omega(r) = 1 / \hat{P}_\Omega(E_2)$ holds, then $E_1 = E_2$ does not hold in general, because
\begin{flushleft}
$\begin{aligned}
& \operatorname{lift}_\Omega(r) = 1/\hat{P}_\Omega(E_2)\\
\Longleftrightarrow~ & \dfrac{|E_1 \cap E_2| \cdot |\Omega|}{|E_1| \cdot |E_2|} = \frac{|\Omega|}{|E_2|}~~~\text{(divide by } |\Omega| \text{ which is non-zero because } \hat{P}_\Omega(E_1) > 0 \text{)}\\
\Longleftrightarrow~ & \dfrac{|E_1 \cap E_2|}{|E_1| \cdot |E_2|} = \frac{1}{|E_2|}\\
\Longleftrightarrow~ & |E_1 \cap E_2| = \frac{|E_1| \cdot |E_2|}{|E_2|}\\
\Longleftrightarrow~ & |E_1 \cap E_2| = |E_1|\\
\Longleftrightarrow~ & E_1 \subseteq E_2 \centernot \Longleftrightarrow E_1 = E_2.
\end{aligned}$
\end{flushleft}
\end{proof}
\begin{proposition}\label{proposition-lift-DIS-ISAR-GPARsinglemicro}Assuming that $\hat{P}_\Omega(E_1) > 0$ and $\hat{P}_\Omega(E_2) > 0$, then $\texttt{DIS} \Longleftrightarrow \operatorname{lift}_\Omega(r) = 0$.
\end{proposition}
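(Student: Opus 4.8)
The plan is to prove both directions of the biconditional by working directly with the closed-form expression $\operatorname{lift}_\Omega(r) = \frac{|E_1 \cap E_2| \cdot |\Omega|}{|E_1| \cdot |E_2|}$ from the generalized setting, in close analogy to the confidence case (Proposition~\ref{proposition-confidence-DIS-ISAR-GPARsinglemicro}). The crucial preliminary observation is that the hypotheses $\hat{P}_\Omega(E_1) > 0$ and $\hat{P}_\Omega(E_2) > 0$ guarantee $|E_1| > 0$ and $|E_2| > 0$, and hence also $|\Omega| > 0$; therefore both the denominator $|E_1| \cdot |E_2|$ and the factor $|\Omega|$ occurring in the numerator are strictly positive. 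Establishing these positivity facts up front is what makes the two divisions legitimate in the subsequent steps.

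For the forward direction ($\Rightarrow$), I would assume \texttt{DIS}, i.e.\ $E_1 \cap E_2 = \varnothing$, so that $|E_1 \cap E_2| = 0$. Substituting into the closed form yields $\operatorname{lift}_\Omega(r) = \frac{0 \cdot |\Omega|}{|E_1| \cdot |E_2|} = 0$, the fraction being well defined precisely because the denominator is nonzero. For the reverse direction ($\Leftarrow$), I would assume $\operatorname{lift}_\Omega(r) = 0$; since the denominator $|E_1| \cdot |E_2|$ is strictly positive, a vanishing fraction forces the numerator $|E_1 \cap E_2| \cdot |\Omega|$ to be zero, and as $|\Omega| > 0$ this in turn forces $|E_1 \cap E_2| = 0$, that is, $E_1 \cap E_2 = \varnothing$, which is \texttt{DIS}.

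There is essentially no obstacle here: the only thing requiring care is the bookkeeping of which cardinalities are guaranteed positive, and this is entirely fixed by the two positivity hypotheses. Unlike the \texttt{IDE} case, where only one implication survives, the biconditional is clean in this instance because both directions collapse to the single equivalence $\operatorname{lift}_\Omega(r) = 0 \Longleftrightarrow |E_1 \cap E_2| = 0$. Since the argument is phrased purely in terms of $E_1$, $E_2$, and $\Omega$, it applies uniformly to the ISAR, GPAR (single), and GPAR (micro) contexts without modification.
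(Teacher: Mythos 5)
Your proof is correct and follows essentially the same route as the paper's: both directions work directly with the closed form $\operatorname{lift}_\Omega(r) = \frac{|E_1 \cap E_2|\cdot|\Omega|}{|E_1|\cdot|E_2|}$, substituting $|E_1 \cap E_2| = 0$ for the forward direction and using positivity of $|\Omega|$, $|E_1|$, and $|E_2|$ to force $E_1 \cap E_2 = \varnothing$ in the reverse. Your version merely makes the positivity bookkeeping slightly more explicit than the paper does.
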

\begin{proof}
($\Rightarrow$) Suppose $E_1 \cap E_2 = \varnothing$ holds, then $\operatorname{lift}_\Omega(r) = 0$ holds, because \\$\operatorname{lift}_\Omega(r) = \dfrac{|E_1 \cap E_2| \cdot |\Omega|}{|E_1| \cdot |E_2|} \overset{\texttt{DIS}}{=} \dfrac{|\varnothing| \cdot |\Omega|}{|E_1| \cdot |E_2|} = 0$.

\noindent ($\Leftarrow$) Suppose $\operatorname{lift}_\Omega(r) = 0$ holds, then $E_1 \cap E_2 = \varnothing$ holds, because 
\begin{flushleft}
$\begin{aligned}
& \operatorname{lift}_\Omega(r) = 0\\
\Longleftrightarrow~ & \dfrac{|E_1 \cap E_2| \cdot |\Omega|}{|E_1| \cdot |E_2|} = 0.
\end{aligned}$
\end{flushleft}
The fraction can only be 0 if $E_1 \cap E_2 = \varnothing$, because $\Omega \neq \varnothing$, because $\hat{P}_\Omega(E_1)> 0$.
\end{proof}

\begin{proposition}\label{proposition-lift-IND-ISAR-GPARsinglemicro}Assuming that $\hat{P}_\Omega(E_1) > 0$ and $\hat{P}_\Omega(E_2) > 0$, then $\texttt{IND} \Longleftrightarrow \operatorname{lift}_\Omega(r) = 1$.
\end{proposition}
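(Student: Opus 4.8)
The plan is to prove the biconditional by a single chain of reversible equivalences, in exactly the same style as Proposition~\ref{proposition-confidence-IND-ISAR-GPARsinglemicro}, working from the count-based form of lift rather than its probabilistic form. Recall that in the generalized setting $\operatorname{lift}_\Omega(r) = \tfrac{|E_1 \cap E_2| \cdot |\Omega|}{|E_1| \cdot |E_2|}$, and that \texttt{IND} is by definition the statement $\hat{P}_\Omega(E_1 \cap E_2) = \hat{P}_\Omega(E_1) \cdot \hat{P}_\Omega(E_2)$. The whole task reduces to showing that setting the count-based expression equal to $1$ is logically equivalent to this independence equality.

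First I would record that the hypotheses $\hat{P}_\Omega(E_1) > 0$ and $\hat{P}_\Omega(E_2) > 0$ force $|\Omega| > 0$, $|E_1| > 0$, and $|E_2| > 0$, so that the lift expression is well-defined and every multiplication or division carried out below is by a strictly positive quantity, hence a genuine two-way equivalence. Then the core of the argument is the chain
\begin{align*}
& \operatorname{lift}_\Omega(r) = 1\\
\Longleftrightarrow~& \frac{|E_1 \cap E_2| \cdot |\Omega|}{|E_1| \cdot |E_2|} = 1\\
\Longleftrightarrow~& |E_1 \cap E_2| \cdot |\Omega| = |E_1| \cdot |E_2|\\
\Longleftrightarrow~& \frac{|E_1 \cap E_2|}{|\Omega|} = \frac{|E_1|}{|\Omega|} \cdot \frac{|E_2|}{|\Omega|}\\
\Longleftrightarrow~& \hat{P}_\Omega(E_1 \cap E_2) = \hat{P}_\Omega(E_1) \cdot \hat{P}_\Omega(E_2),
\end{align*}
where the last line is precisely \texttt{IND}. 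The passage from the third to the fourth line divides both sides by $|\Omega|^2$, which is legitimate and reversible by the positivity noted above.

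The main obstacle here is not analytical but a matter of bookkeeping: making sure each step is a true equivalence and not a one-directional implication, which is exactly why the positivity of $|\Omega|$, $|E_1|$, and $|E_2|$ must be secured before any division. This is what upgrades the conclusion to $\Longleftrightarrow$, in contrast with the \texttt{IDE} case (Proposition~\ref{proposition-lift-IDE-ISAR-GPARsinglemicro}), where the corresponding chain terminates in $E_1 \subseteq E_2$ and only $\Longrightarrow$ survives. As with the confidence propositions, this single argument in the generalized setting simultaneously establishes the claim in the ISAR, GPAR (single), and GPAR (micro) contexts, since all three instantiate the same count-based definitions of $E_1$, $E_2$, and $\hat{P}_\Omega$.
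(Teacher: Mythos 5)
Your proof is correct and takes essentially the same route as the paper: both establish the biconditional via a chain of reversible equivalences from the count-based form $\operatorname{lift}_\Omega(r) = \tfrac{|E_1 \cap E_2|\cdot|\Omega|}{|E_1|\cdot|E_2|} = 1$ down to $\hat{P}_\Omega(E_1 \cap E_2) = \hat{P}_\Omega(E_1)\cdot\hat{P}_\Omega(E_2)$, using the positivity hypotheses to justify division by $|\Omega|^2$. Your extra intermediate step and the explicit bookkeeping of $|\Omega|, |E_1|, |E_2| > 0$ only make the same argument slightly more detailed.
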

\begin{proof}
Proof via a chain of bidirectional logical equivalences:
\begin{flushleft}
$\begin{aligned}
& \operatorname{lift}_\Omega(r) = 1\\
\Longleftrightarrow~ & \dfrac{|E_1 \cap E_2| \cdot |\Omega|}{|E_1| \cdot |E_2|} = 1 ~~~\text{(divide by } |\Omega|^2 \text{ which is non-zero because } \hat{P}_\Omega(E_1) > 0 \text{)}\\
\Longleftrightarrow~ & \dfrac{|E_1 \cap E_2|}{|\Omega|}= \frac{|E_1|}{|\Omega|} \cdot \frac{|E_2|}{|\Omega|}\\
\Longleftrightarrow~ & \hat{P}_\Omega(E_1 \cap E_2) = \hat{P}_\Omega(E_1) \cdot \hat{P}_\Omega(E_2).
\end{aligned}$
\end{flushleft}
\end{proof}

\begin{proposition}\label{proposition-lift-POS-ISAR-GPARsinglemicro}Assuming that $\hat{P}_\Omega(E_1) > 0$ and $\hat{P}_\Omega(E_2) > 0$, then $\texttt{IND} \Longleftrightarrow \operatorname{lift}_\Omega(r) > 1$.
\end{proposition}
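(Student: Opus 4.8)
The plan is to prove this equivalence by the same chain of bidirectional implications used for Proposition~\ref{proposition-lift-IND-ISAR-GPARsinglemicro} (the \texttt{IND} case), with the equality ``$=$'' replaced throughout by the strict inequality ``$>$''. Concretely, I would start from $\operatorname{lift}_\Omega(r) > 1$, substitute the closed form $\operatorname{lift}_\Omega(r) = \frac{|E_1 \cap E_2| \cdot |\Omega|}{|E_1| \cdot |E_2|}$, and rearrange so that the resulting inequality becomes, verbatim, the defining condition of the \texttt{POS} situation (I read the \texttt{IND} in the statement as a transcription slip for \texttt{POS}, consistent with the label and with Table~\ref{properties-lift}).

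First I would rewrite the hypothesis as $\frac{|E_1 \cap E_2| \cdot |\Omega|}{|E_1| \cdot |E_2|} > 1$. Dividing both sides by $|\Omega|^2$ — a strictly positive quantity, since $\hat{P}_\Omega(E_1) > 0$ forces $|\Omega| > 0$ — rearranges this into $\frac{|E_1 \cap E_2|}{|\Omega|} > \frac{|E_1|}{|\Omega|} \cdot \frac{|E_2|}{|\Omega|}$, which is precisely $\hat{P}_\Omega(E_1 \cap E_2) > \hat{P}_\Omega(E_1) \cdot \hat{P}_\Omega(E_2)$. By definition this last line is the \texttt{POS} situation, so the chain closes and yields the biconditional.

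The only point requiring care — and the step I would flag as the potential obstacle — is confirming that every link in the chain is genuinely reversible, i.e.\ that no sign flip occurs when clearing denominators. Multiplying or dividing an inequality by a quantity preserves its direction exactly when that quantity is strictly positive, and here $|E_1|$, $|E_2|$, and $|\Omega|$ are all strictly positive precisely because of the standing assumptions $\hat{P}_\Omega(E_1) > 0$ and $\hat{P}_\Omega(E_2) > 0$. With positivity established, each ``$\Longleftrightarrow$'' is justified and the argument is essentially a verbatim adaptation of the \texttt{IND} proof; there is no deeper difficulty, which is why the paper can safely mark it as omitted and defer to Proposition~\ref{proposition-lift-IND-ISAR-GPARsinglemicro}.
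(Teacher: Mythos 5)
Your proof is correct and takes essentially the same approach as the paper: the paper's own proof is simply ``omitted -- can be proven in the same way as Proposition~\ref{proposition-lift-IND-ISAR-GPARsinglemicro}, replacing $=$ with $>$,'' which is exactly the chain of reversible manipulations you spell out, including the positivity justifications that make each step a genuine biconditional. Your reading of \texttt{IND} in the statement as a transcription slip for \texttt{POS} is also the correct interpretation, consistent with the proposition's label and Table~\ref{properties-lift}.
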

\begin{proof}Omitted -- can be proven in the same way as Proposition \ref{proposition-lift-IND-ISAR-GPARsinglemicro}. The only thing that needs to be changed is to replace $=$ with $>$.
\end{proof}

\begin{proposition}\label{proposition-lift-NEG-ISAR-GPARsinglemicro}Assuming that $\hat{P}_\Omega(E_1) > 0$ and $\hat{P}_\Omega(E_2) > 0$, then $\texttt{IND} \Longleftrightarrow \operatorname{lift}_\Omega(r) < 1$.
\end{proposition}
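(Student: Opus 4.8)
The statement as typeset reads $\texttt{IND} \Longleftrightarrow \operatorname{lift}_\Omega(r) < 1$, and this cannot be literally correct: by definition \texttt{IND} means $\hat{P}_\Omega(E_1 \cap E_2) = \hat{P}_\Omega(E_1) \cdot \hat{P}_\Omega(E_2)$, and Proposition \ref{proposition-lift-IND-ISAR-GPARsinglemicro} already establishes that this condition is equivalent to $\operatorname{lift}_\Omega(r) = 1$, which contradicts $\operatorname{lift}_\Omega(r) < 1$. Hence the antecedent label is a typographical slip: both the label name of this proposition and the corresponding row of Table \ref{properties-lift} show that the intended situation is \texttt{NEG}. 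The plan is therefore to prove the intended statement $\texttt{NEG} \Longleftrightarrow \operatorname{lift}_\Omega(r) < 1$, which is precisely what the author's sketch ("replace $=$ with $<$" in Proposition \ref{proposition-lift-IND-ISAR-GPARsinglemicro}) yields.

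The proof will mirror the bidirectional equivalence chain of Proposition \ref{proposition-lift-IND-ISAR-GPARsinglemicro} line for line, with every $=$ replaced by $<$. First I would start from the closed form $\operatorname{lift}_\Omega(r) = \frac{|E_1 \cap E_2| \cdot |\Omega|}{|E_1| \cdot |E_2|}$ and note $\operatorname{lift}_\Omega(r) < 1 \Longleftrightarrow \frac{|E_1 \cap E_2| \cdot |\Omega|}{|E_1| \cdot |E_2|} < 1$. Dividing through by $|\Omega|^2$ (valid and inequality-preserving because $\hat{P}_\Omega(E_1) > 0$ forces $|\Omega| > 0$) turns this into $\frac{|E_1 \cap E_2|}{|\Omega|} < \frac{|E_1|}{|\Omega|} \cdot \frac{|E_2|}{|\Omega|}$, i.e. $\hat{P}_\Omega(E_1 \cap E_2) < \hat{P}_\Omega(E_1) \cdot \hat{P}_\Omega(E_2)$, which is exactly the defining condition of \texttt{NEG}. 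Since every link in the chain is a genuine equivalence, both directions of the iff are obtained simultaneously.

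The one point requiring care is the sign of the quantities one multiplies and divides by: clearing the denominators $|E_1|$, $|E_2|$, and $|\Omega|$ preserves the direction of a strict inequality only because all three are strictly positive, which is guaranteed by the standing hypotheses $\hat{P}_\Omega(E_1) > 0$ and $\hat{P}_\Omega(E_2) > 0$ (the former additionally ensuring $\Omega \neq \varnothing$). There is no real mathematical obstacle beyond this bookkeeping; the genuine difficulty is the editorial one of recording that the proposition should read \texttt{NEG}, not \texttt{IND}, so that its statement is consistent with its label, its proof, and Table \ref{properties-lift}.
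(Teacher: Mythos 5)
Your proposal is correct and takes essentially the same route as the paper: the paper's own proof is omitted with the remark that it is obtained from Proposition \ref{proposition-lift-IND-ISAR-GPARsinglemicro} by replacing $=$ with $<$, which is precisely the equivalence chain you spell out, including the observation that the positivity hypotheses make the divisions inequality-preserving. Your diagnosis of the typographical slip (the statement should read \texttt{NEG}, not \texttt{IND}, as the label and the corresponding row of Table \ref{properties-lift} confirm) is also correct and matches the paper's evident intent.
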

\begin{proof}Omitted -- can be proven in the same way as Proposition \ref{proposition-lift-IND-ISAR-GPARsinglemicro}. The only thing that needs to be changed is to replace $=$ with $<$.
\end{proof}

\subsubsection{Context: GPAR (macro)}

\[\operatorname{macro-lift}_G(p_1,p_2,V_1,V_2) = \frac{1}{|G|} \sum_{g\in G} \operatorname{lift}_g(p_1,p_2,V_1,V_2)\]

\begin{proposition}\label{proposition-lift-IDE-GPARmacro}
Assuming that $\hat{P}_g(E_{g,p_1,V_1;p_2}) > 0$ and $\hat{P}_g(E_{g,p_2,V_2;p_1}) > 0$ for each $g \in G$,
then $\texttt{IDE} \Longrightarrow \operatorname{macro-lift}_G(p_1,p_2,V_1,V_2) = \frac{1}{|G|}
\sum_{g \in G } 1 / \hat{P}_g(E_{g,p_2,V_2;p_1})$.
\end{proposition}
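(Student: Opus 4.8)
The plan is to reduce this macro-averaged statement to the already-established single-graph result (Proposition \ref{proposition-lift-IDE-ISAR-GPARsinglemicro}) applied graph by graph, exactly in the style of the macro-confidence proofs in Propositions \ref{proposition-confidence-IDE-GPARmacro}--\ref{proposition-confidence-NEG-GPARmacro}. Since only the forward implication ($\Longrightarrow$) is asserted, I would not attempt a converse; the absence of a reverse direction is the expected and intended consequence of macro-averaging, because identical macro-lift values can arise from non-\texttt{IDE} per-graph configurations.

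First I would observe that the \texttt{IDE} situation in the GPAR (macro) context means, by the definition in Table \ref{tab:situation-names}, that for every graph $g \in G$ the events $E_{g,p_1,V_1;p_2}$ and $E_{g,p_2,V_2;p_1}$ are identical. Thus the single-graph hypothesis $E_1 = E_2$ holds for each $g$ separately, and the standing assumptions $\hat{P}_g(E_{g,p_1,V_1;p_2}) > 0$ and $\hat{P}_g(E_{g,p_2,V_2;p_1}) > 0$ for each $g \in G$ guarantee that $\operatorname{lift}_g(p_1,p_2,V_1,V_2)$ is defined for every summand.

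Next I would unfold the definition $\operatorname{macro-lift}_G(p_1,p_2,V_1,V_2) = \frac{1}{|G|}\sum_{g \in G}\operatorname{lift}_g(p_1,p_2,V_1,V_2)$ and apply Proposition \ref{proposition-lift-IDE-ISAR-GPARsinglemicro} to each term. That proposition gives $\operatorname{lift}_g(p_1,p_2,V_1,V_2) = 1/\hat{P}_g(E_{g,p_2,V_2;p_1})$ under \texttt{IDE} (equivalently $1/\hat{P}_g(E_{g,p_1,V_1;p_2})$, since the two probabilities coincide when $E_1 = E_2$). Substituting termwise yields $\frac{1}{|G|}\sum_{g \in G} 1/\hat{P}_g(E_{g,p_2,V_2;p_1})$, which is precisely the claimed value.

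I expect no substantial obstacle: the argument is a pointwise substitution into a finite average, and all the analytic content already resides in the single-graph proposition. The only points requiring momentary care are notational—confirming that the per-graph space for each $g$ instantiates the generic $(\Omega, E_1, E_2)$ of the generalized setting with $E_1 = E_{g,p_1,V_1;p_2}$ and $E_2 = E_{g,p_2,V_2;p_1}$—and stating explicitly that the implication is one-directional, closing the proof after the forward direction with a brief remark (as in Proposition \ref{proposition-confidence-IND-GPARmacro}) that the reverse fails because the same average can be produced by differing collections of summands.
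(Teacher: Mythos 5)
Your proposal is correct and takes essentially the same route as the paper's own proof: unfold the definition of $\operatorname{macro-lift}_G$ as a finite average, apply Proposition \ref{proposition-lift-IDE-ISAR-GPARsinglemicro} termwise under the per-graph \texttt{IDE} hypothesis, and note that the converse fails because the same average can arise from different collections of summands. Your explicit remark that $1/\hat{P}_g(E_{g,p_1,V_1;p_2})$ and $1/\hat{P}_g(E_{g,p_2,V_2;p_1})$ coincide under \texttt{IDE} is a small clarity improvement over the paper's version, which contains a typographical slip (writing $\hat{P}_g(E_{g,p_2,V_2;p_1})$ instead of its reciprocal in the final sum).
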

\begin{proof}
($\Rightarrow$) Suppose $E_{g,p_1,V_1;p_2} = E_{g,p_2,V_2;p_1}$ for each $g \in G$  holds, then $\operatorname{macro-lift}_G(p_1,p_2,V_1,V_2)\\ 
= \frac{1}{|G|} \sum_{g \in G } 1 / \hat{P}_g(E_{g,p_2,V_2;p_1})$ holds, because 

$\operatorname{macro-lift}_G(p_1,p_2,V_1,V_2) = \frac{1}{|G|} \sum_{g\in G} \operatorname{lift}_g(p_1,p_2,V_1,V_2) \overset{\texttt{IDE}, Prop. \ref{proposition-lift-IDE-ISAR-GPARsinglemicro}}{=} \frac{1}{|G|} \sum_{g\in G} \hat{P}_g(E_{g,p_2,V_2;p_1})$.

\noindent ($\Leftarrow$) Suppose $\operatorname{macro-lift}_G(p_1,p_2,V_1,V_2) = 
\sum_{g \in G } 1 / \hat{P}_g(E_{g,p_2,V_2;p_1})$ holds, then $E_{g,p_1,V_1;p_2} = \\E_{g,p_2,V_2;p_1}$ does not hold, because the same sum can be obtained from different collections of summands.
\end{proof}

\begin{proposition}\label{proposition-lift-DIS-GPARmacro}Assuming that $\hat{P}_g(E_{g,p_1,V_1;p_2}) > 0$ and $\hat{P}_g(E_{g,p_2,V_2;p_1}) > 0$ for each $g \in G$, then $\texttt{DIS} \Longleftrightarrow \operatorname{macro-lift}_G(p_1,p_2,V_1,V_2) = 0$.
\end{proposition}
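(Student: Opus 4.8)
The plan is to reduce the biconditional to the single-graph result of Proposition~\ref{proposition-lift-DIS-ISAR-GPARsinglemicro} by unfolding the definition $\operatorname{macro-lift}_G(p_1,p_2,V_1,V_2) = \frac{1}{|G|}\sum_{g\in G}\operatorname{lift}_g(p_1,p_2,V_1,V_2)$ and then exploiting the fact that each summand $\operatorname{lift}_g(p_1,p_2,V_1,V_2)$ lies in $[0,\infty)\cap\mathbb{Q}$, i.e., is non-negative. The structure mirrors the proof of Proposition~\ref{proposition-confidence-DIS-GPARmacro}, and, as there, the assumption $\hat{P}_g(E_{g,p_1,V_1;p_2}) > 0$ and $\hat{P}_g(E_{g,p_2,V_2;p_1}) > 0$ for each $g\in G$ guarantees that every per-graph lift score is defined, so the average is well-formed.

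For the forward direction ($\Rightarrow$), I would assume the \texttt{DIS} situation, which in the macro-averaged setting means that $E_{g,p_1,V_1;p_2}$ and $E_{g,p_2,V_2;p_1}$ are disjoint for every $g\in G$. Applying Proposition~\ref{proposition-lift-DIS-ISAR-GPARsinglemicro} to each single graph $g$ gives $\operatorname{lift}_g(p_1,p_2,V_1,V_2) = 0$ for all $g$, whence $\operatorname{macro-lift}_G(p_1,p_2,V_1,V_2) = \frac{1}{|G|}\sum_{g\in G} 0 = 0$.

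For the backward direction ($\Leftarrow$), I would assume $\operatorname{macro-lift}_G(p_1,p_2,V_1,V_2) = 0$, i.e., $\frac{1}{|G|}\sum_{g\in G}\operatorname{lift}_g(p_1,p_2,V_1,V_2) = 0$. Since each summand is non-negative and $|G|$ is finite and positive, a finite sum of non-negative rationals can equal zero only if every summand is zero; hence $\operatorname{lift}_g(p_1,p_2,V_1,V_2) = 0$ for every $g\in G$. Applying the reverse implication of Proposition~\ref{proposition-lift-DIS-ISAR-GPARsinglemicro} to each $g$ then yields that $E_{g,p_1,V_1;p_2}$ and $E_{g,p_2,V_2;p_1}$ are disjoint for every $g$, which is exactly the \texttt{DIS} situation.

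The main point to get right --- and the reason this proposition is a full biconditional $\Longleftrightarrow$ rather than only a one-directional $\Longrightarrow$ like the macro \texttt{IDE} and \texttt{IND} cases (Propositions~\ref{proposition-lift-IDE-GPARmacro} and~\ref{proposition-lift-IND-GPARmacro}) --- is the non-negativity of lift. Because the individual scores cannot cancel against one another, a vanishing average forces every term to vanish; the ``different collections of summands can give the same sum'' obstruction that breaks the converse for \texttt{IDE} and \texttt{IND} does not arise when the target value is the extremal value $0$ of the co-domain. I expect no genuinely hard step: the only thing to state carefully is that $|G|$ is finite and positive and that lift is rational-valued and non-negative, so that the ``sum of non-negatives equals zero implies each is zero'' argument is valid and the per-graph equivalence of Proposition~\ref{proposition-lift-DIS-ISAR-GPARsinglemicro} transfers cleanly in both directions.
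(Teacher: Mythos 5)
Your proof is correct and follows essentially the same route as the paper's: the forward direction applies Proposition~\ref{proposition-lift-DIS-ISAR-GPARsinglemicro} graph by graph, and the backward direction observes that a finite sum of non-negative summands vanishes only if every summand vanishes, then transfers back via the same single-graph proposition. If anything, your treatment is slightly more careful than the paper's own wording, which mistakenly describes the lift summands as values in $[0,1]$ rather than $[0,\infty)$ --- non-negativity is what the argument actually requires, and that is exactly what you invoke.
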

\begin{proof}
($\Rightarrow$) Suppose $E_{g,p_1,V_1;p_2} \cap E_{g,p_2,V_2;p_1} = \varnothing$ for each $g \in G$ holds, then \\$\operatorname{macro-lift}_G(p_1,p_2,V_1,V_2) = 0$ holds, because 
$\operatorname{macro-lift}_G(p_1,p_2,V_1,V_2) = \\\frac{1}{|G|} \sum_{g\in G} \operatorname{lift}_g(p_1,p_2,V_1,V_2) \overset{\texttt{DIS}, Prop. \ref{proposition-lift-DIS-ISAR-GPARsinglemicro}}{=} \frac{1}{|G|} \sum_{g\in G} 0 = 0$.

\noindent ($\Leftarrow$) Suppose $\operatorname{macro-lift}_G(p_1,p_2,V_1,V_2) = 0$ holds, then $E_{g,p_1,V_1;p_2} \cap E_{g,p_2,V_2;p_1} = \varnothing$ holds, because if the sum of $n$ values in $[0,1]$ equals $0$, then each summand is equal to $0$, thus, $E_{g,p_1,V_1;p_2}$ and $E_{g,p_2,V_2;p_1}$ are disjoint for each $g \in G$.
\end{proof}

\begin{proposition}\label{proposition-lift-IND-GPARmacro}Assuming that $\hat{P}_g(E_{g,p_1,V_1;p_2}) > 0$ and $\hat{P}_g(E_{g,p_2,V_2;p_1}) > 0$ for each $g \in G$, then $\texttt{IND} \Longrightarrow \operatorname{macro-lift}_G(p_1,p_2,V_1,V_2) = 1$.
\end{proposition}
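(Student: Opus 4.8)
The plan is to reduce this macro-averaged statement to the per-graph characterization of lift under independence already established in Proposition~\ref{proposition-lift-IND-ISAR-GPARsinglemicro}, and then to average over $G$. Since $\operatorname{macro-lift}_G$ is by definition the arithmetic mean $\tfrac{1}{|G|}\sum_{g\in G}\operatorname{lift}_g(p_1,p_2,V_1,V_2)$, the entire argument can be carried out at the level of individual summands, exactly as in the forward direction of Proposition~\ref{proposition-confidence-IND-GPARmacro}.

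For the implication itself I would proceed as follows. The \texttt{IND} situation in the macro-averaged context means, by the definition given in Table~\ref{tab:situation-names}, that for every $g\in G$ the events $E_{g,p_1,V_1;p_2}$ and $E_{g,p_2,V_2;p_1}$ are independent. The standing hypotheses $\hat{P}_g(E_{g,p_1,V_1;p_2})>0$ and $\hat{P}_g(E_{g,p_2,V_2;p_1})>0$ for each $g$ guarantee that every single-graph lift $\operatorname{lift}_g$ is defined, so Proposition~\ref{proposition-lift-IND-ISAR-GPARsinglemicro} applies to each $g$ separately and yields $\operatorname{lift}_g(p_1,p_2,V_1,V_2)=1$. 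Substituting these values termwise gives $\operatorname{macro-lift}_G(p_1,p_2,V_1,V_2)=\tfrac{1}{|G|}\sum_{g\in G}1=1$, which is the desired conclusion.

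I would then make explicit why only the implication $\Longrightarrow$ is claimed rather than an equivalence, so that the proposition is not misread. The converse fails because the arithmetic mean is not injective on the multiset of its summands: a value of $1$ for $\operatorname{macro-lift}_G$ only forces the per-graph lifts to average to $1$, which is fully compatible with some graphs having $\operatorname{lift}_g>1$ and others having $\operatorname{lift}_g<1$, a configuration in which independence does not hold in each $g$. This is the same ``different collections of summands yield the same sum'' obstruction already invoked in the converse remarks of Propositions~\ref{proposition-confidence-IND-GPARmacro} and \ref{proposition-lift-IDE-GPARmacro}, and a one-line counterexample with $|G|=2$, one graph positively correlated and one negatively correlated so that the two lifts average to $1$, would make the point concrete.

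I do not expect a genuine obstacle here: the forward direction is a termwise application of an already-proven single-graph result followed by trivial averaging. The only points requiring care are bookkeeping — verifying that the positivity hypotheses are precisely what is needed to invoke Proposition~\ref{proposition-lift-IND-ISAR-GPARsinglemicro} on each graph — and being explicit that, as a one-directional implication, the converse must be dispatched by a counterexample rather than proved.
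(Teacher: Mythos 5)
Your proposal is correct and takes essentially the same route as the paper's own proof: the forward direction applies Proposition~\ref{proposition-lift-IND-ISAR-GPARsinglemicro} termwise to each $g \in G$ (the positivity hypotheses guaranteeing each $\operatorname{lift}_g$ is defined) and then averages, and the converse is dismissed by the same observation that the same mean can arise from different collections of summands. Your suggested explicit two-graph counterexample is a small addition beyond the paper, which merely asserts this obstruction, but it does not change the argument.
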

\begin{proof}
($\Rightarrow$) Suppose $\hat{P}_g(E_{g,p_1,V_1;p_2} \cap E_{g,p_2,V_2;p_1}) = \hat{P}_g(E_{g,p_1,V_1;p_2}) \cdot \hat{P}_g(E_{g,p_2,V_2;p_1})$ for each $g \in G$ holds, then $\operatorname{macro-lift}_G(p_1,p_2,V_1,V_2) = 1$ holds, because\\
$\operatorname{lift}_G(p_1,p_2,V_1,V_2) = \frac{1}{|G|} \sum_{g\in G} \operatorname{lift}_g(p_1,p_2,V_1,V_2) \overset{\texttt{IND}, Prop. \ref{proposition-lift-IND-ISAR-GPARsinglemicro}}{=} \frac{1}{|G|} \sum_{g\in G} 1 = 1$.

\noindent ($\Leftarrow$) Suppose $\operatorname{macro-lift}_G(p_1,p_2,V_1,V_2) = 1$ holds, then $\hat{P}_g(E_{g,p_1,V_1;p_2} \cap E_{g,p_2,V_2;p_1}) = \\\hat{P}_g(E_{g,p_1,V_1;p_2}) \cdot \hat{P}_g(E_{g,p_2,V_2;p_1})$ for each $g \in G$ does not hold, because lift scores are values in the interval $[0,\infty)$ and the same sum can be obtained from different collections of summands.
\end{proof}

\begin{proposition}\label{proposition-lift-POS-GPARmacro}Assuming that $\hat{P}_g(E_{g,p_1,V_1;p_2}) > 0$ and $\hat{P}_g(E_{g,p_2,V_2;p_1}) > 0$ for each $g \in G$, then $\texttt{POS} \Longrightarrow \operatorname{macro-lift}_G(p_1,p_2,V_1,V_2) > 1$.
\end{proposition}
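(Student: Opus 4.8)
The plan is to follow exactly the same template as Proposition \ref{proposition-lift-IND-GPARmacro}, the immediately preceding macro-averaged result, simply replacing the independence condition with positive correlation and the equality with strict inequality. The proof will be a one-line forward implication together with a remark that the reverse fails.

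First I would prove the forward direction ($\Rightarrow$). Suppose that for every $g \in G$ the events $E_{g,p_1,V_1;p_2}$ and $E_{g,p_2,V_2;p_1}$ are positively correlated, i.e.\ $\hat{P}_g(E_{g,p_1,V_1;p_2} \cap E_{g,p_2,V_2;p_1}) > \hat{P}_g(E_{g,p_1,V_1;p_2}) \cdot \hat{P}_g(E_{g,p_2,V_2;p_1})$. By the single-graph lift result, Proposition \ref{proposition-lift-POS-ISAR-GPARsinglemicro}, the \texttt{POS} situation is equivalent to $\operatorname{lift}_g(p_1,p_2,V_1,V_2) > 1$ for each individual $g$. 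Since macro-lift is the arithmetic mean $\frac{1}{|G|}\sum_{g\in G}\operatorname{lift}_g(p_1,p_2,V_1,V_2)$, and every summand strictly exceeds $1$, the average of $|G|$ values each greater than $1$ is itself greater than $1$. This yields $\operatorname{macro-lift}_G(p_1,p_2,V_1,V_2) > 1$. I would present this as a short chain exactly mirroring the \texttt{IND} case, with the invoked proposition and situation label written over the relevant relation.

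For the reverse direction I would note, as in Proposition \ref{proposition-lift-IND-GPARmacro}, that it does \emph{not} hold: because macro-lift is an external aggregation over a family of probability spaces, the same average above $1$ can arise from a collection of per-graph lift values that are not all above $1$ (some below, some compensating above). Hence the score $>1$ does not let us deduce that \texttt{POS} holds in every $g \in G$, so only the single-headed arrow is justified, consistent with the $\Longrightarrow$ entry recorded in Table \ref{properties-lift}.

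I do not anticipate a genuine obstacle here, since this is a routine corollary of the single-graph characterization plus monotonicity of averaging. The only point demanding minor care is making the averaging argument precise for the strict inequality: the mean of finitely many rationals each strictly greater than $1$ is strictly greater than $1$, which requires $|G|$ to be finite and nonzero (guaranteed, since $G$ is a nonempty bag). The proof can in fact be stated as ``Omitted -- can be proven in the same way as Proposition \ref{proposition-lift-IND-GPARmacro}, using Prop.\ \ref{proposition-lift-POS-ISAR-GPARsinglemicro}. The only thing that needs to be changed is to replace $=$ with $>$,'' matching the abbreviation style already used for the analogous \texttt{POS}/\texttt{NEG} cases elsewhere in the appendix.
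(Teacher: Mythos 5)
Your proposal is correct and follows essentially the same route as the paper: the paper's proof is exactly the abbreviated form you anticipated, namely ``Omitted -- can be proven in the same way as Proposition~\ref{proposition-lift-IND-GPARmacro}, based on Prop.~\ref{proposition-lift-POS-ISAR-GPARsinglemicro}. The only thing that needs to be changed is to replace $=$ with $>$.'' Your expanded version (per-graph \texttt{POS} gives $\operatorname{lift}_g > 1$ by the single-graph characterization, an average of finitely many values each exceeding $1$ exceeds $1$, and the converse fails because the same average can arise from mixed summands) is precisely the argument the paper's template encodes.
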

\begin{proof}Omitted -- can be proven in the same way as Proposition \ref{proposition-lift-IND-GPARmacro}, based on  Prop. \ref{proposition-lift-POS-ISAR-GPARsinglemicro}. The only thing that needs to be changed is to replace $=$ with $>$.
\end{proof}

\begin{proposition}\label{proposition-lift-NEG-GPARmacro}Assuming that $\hat{P}_g(E_{g,p_1,V_1;p_2}) > 0$ and $\hat{P}_g(E_{g,p_2,V_2;p_1}) > 0$ for each $g \in G$, then $\texttt{IND} \Longrightarrow \operatorname{macro-lift}_G(p_1,p_2,V_1,V_2) < 1$.
\end{proposition}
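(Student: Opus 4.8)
The plan is to reduce the macro statement to the single-graph level and then average, mirroring the proof of Proposition \ref{proposition-lift-IND-GPARmacro} with the equality replaced by a strict inequality. Recall that $\operatorname{macro-lift}_G(p_1,p_2,V_1,V_2)=\tfrac{1}{|G|}\sum_{g\in G}\operatorname{lift}_g(p_1,p_2,V_1,V_2)$, so once each summand is controlled, the conclusion about the average follows immediately. The per-graph input I would use is Proposition \ref{proposition-lift-NEG-ISAR-GPARsinglemicro}, which establishes $\texttt{IND}\Longleftrightarrow\operatorname{lift}_\Omega(r)<1$ in the single-graph and micro-averaged settings; this is exactly the single-graph lemma whose conclusion is $\operatorname{lift}<1$.

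For the forward direction I would argue as follows. Suppose the antecedent $\texttt{IND}$ holds in every $g\in G$. The standing assumptions $\hat{P}_g(E_{g,p_1,V_1;p_2})>0$ and $\hat{P}_g(E_{g,p_2,V_2;p_1})>0$ guarantee that $\operatorname{lift}_g$ is defined for each graph, so the average is well-posed. Applying Proposition \ref{proposition-lift-NEG-ISAR-GPARsinglemicro} in each graph yields $\operatorname{lift}_g(p_1,p_2,V_1,V_2)<1$ for all $g\in G$. Since $G$ is non-empty and each of the $|G|$ summands is strictly below $1$, we get $\tfrac{1}{|G|}\sum_{g\in G}\operatorname{lift}_g<\tfrac{1}{|G|}\sum_{g\in G}1=1$, hence $\operatorname{macro-lift}_G(p_1,p_2,V_1,V_2)<1$.

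The proposition asserts only the implication, but, exactly as in the proofs of Propositions \ref{proposition-lift-IND-GPARmacro} and \ref{proposition-lift-POS-GPARmacro}, I would add the remark that the converse fails: a macro-lift below $1$ can be produced by per-graph lifts that are not uniformly below $1$ (for instance, one graph contributing a lift well above $1$ offset by several graphs contributing lifts near $0$). Thus the averaged score does not determine the per-graph situation, which is the information loss characteristic of macro-averaging discussed in Section \ref{sec:characteristics}.

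The argument is essentially a one-line reduction followed by an averaging step, so there is no serious technical obstacle. The only point requiring attention is that the strict inequality survives the average; this holds because every term is strictly less than $1$ (not merely $\le 1$) and $G\neq\varnothing$. Correspondingly, the one substantive choice is to feed Proposition \ref{proposition-lift-NEG-ISAR-GPARsinglemicro} (the $<$-version) rather than Proposition \ref{proposition-lift-IND-ISAR-GPARsinglemicro} (the $=$-version) into the average, since the latter would give the wrong conclusion.
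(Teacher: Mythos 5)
Your proposal is correct and takes essentially the same route as the paper: the paper's (omitted) proof is exactly the argument you give, namely the proof of Proposition~\ref{proposition-lift-IND-GPARmacro} with $=$ replaced by $<$, feeding the single-graph lemma Proposition~\ref{proposition-lift-NEG-ISAR-GPARsinglemicro} into the average $\tfrac{1}{|G|}\sum_{g\in G}\operatorname{lift}_g$. Your added observations --- that strictness survives the average because every summand is strictly below $1$ and $G\neq\varnothing$, and that the converse fails since the same average can arise from different collections of summands --- match the paper's treatment of the companion macro propositions.
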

\begin{proof}Omitted -- can be proven in the same way as Proposition \ref{proposition-lift-IND-GPARmacro}, based on  Prop. \ref{proposition-lift-NEG-ISAR-GPARsinglemicro}. The only thing that needs to be changed is to replace $=$ with $<$.
\end{proof}

\subsection{Proofs about Characteristics of Leverage}

\subsubsection{Contexts: ISAR, GPAR (single), and GPAR (micro)}
\begin{proposition}\label{proposition-leverage-IDE-ISAR-GPARsinglemicro}$\texttt{IDE} \Longrightarrow \operatorname{leverage}_\Omega(r) = \hat{P}_\Omega(E_1)(1 - \hat{P}_\Omega(E_2))$.

\end{proposition}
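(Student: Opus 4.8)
The plan is to work directly from the generalized definition of leverage established earlier in the appendix, namely
\[
\operatorname{leverage}_\Omega(r) = \hat{P}_\Omega(E_1 \cap E_2) - \hat{P}_\Omega(E_1) \cdot \hat{P}_\Omega(E_2),
\]
and substitute the consequence of the \texttt{IDE} situation, which is defined as $E_1 = E_2$. This is a single forward implication, so I only need to establish the $(\Rightarrow)$ direction.

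First I would observe that $E_1 = E_2$ immediately yields $E_1 \cap E_2 = E_1 = E_2$ as sets, and hence, applying the empirical measure $\hat{P}_\Omega$, that $\hat{P}_\Omega(E_1 \cap E_2) = \hat{P}_\Omega(E_1) = \hat{P}_\Omega(E_2)$. Substituting $\hat{P}_\Omega(E_1 \cap E_2) = \hat{P}_\Omega(E_1)$ into the leverage expression gives
\[
\operatorname{leverage}_\Omega(r) = \hat{P}_\Omega(E_1) - \hat{P}_\Omega(E_1)\cdot\hat{P}_\Omega(E_2),
\]
and factoring out $\hat{P}_\Omega(E_1)$ produces the claimed value $\hat{P}_\Omega(E_1)\bigl(1 - \hat{P}_\Omega(E_2)\bigr)$. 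Since leverage requires no non-degeneracy assumption on $\hat{P}_\Omega(E_1)$ (it is defined whenever $\Omega \neq \varnothing$), no side conditions need to be checked beyond the standing assumption $\fulltau \neq \varnothing$ (or $T \neq \varnothing$, $\fulltauG \neq \varnothing$) that guarantees $|\Omega| > 0$ so that the probabilities are well defined.

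There is essentially no hard step here: the argument is a one-line set identity followed by an algebraic factorization, and it applies uniformly across the ISAR, GPAR (single), and GPAR (micro) contexts because all three share the same generalized definitions of $E_1$, $E_2$, and $\operatorname{leverage}_\Omega$. The only point worth flagging is that the statement is deliberately phrased as an implication rather than an equivalence: the converse fails because $\operatorname{leverage}_\Omega(r) = \hat{P}_\Omega(E_1)(1-\hat{P}_\Omega(E_2))$ forces only $\hat{P}_\Omega(E_1 \cap E_2) = \hat{P}_\Omega(E_1)$, i.e.\ $E_1 \subseteq E_2$, which does not entail $E_1 = E_2$. I would therefore not attempt a $(\Leftarrow)$ direction, matching the arrow in Table~\ref{properties-leverage}.
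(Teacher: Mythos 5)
Your proof is correct and matches the paper's argument essentially verbatim: the same substitution of $E_1 \cap E_2 = E_1$ into the generalized leverage definition followed by factoring, under no extra non-degeneracy assumptions. Your closing remark that the converse fails because the equation only forces $\hat{P}_\Omega(E_1 \cap E_2) = \hat{P}_\Omega(E_1)$, i.e.\ $E_1 \subseteq E_2$, is exactly the paper's justification for stating the result as a one-directional implication.
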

\begin{proof}
($\Rightarrow$) Suppose $E_1 = E_2$ holds, then $\operatorname{leverage}_\Omega(r) = \hat{P}_\Omega(E_1)(1 - \hat{P}_\Omega(E_2))$ holds, because
$ \operatorname{leverage}_\Omega(r) = \hat{P}_\Omega(E_1 \cap E_2) - \hat{P}_\Omega(E_1) \cdot \hat{P}_\Omega(E_2) \overset{\texttt{IDE}}{=} \hat{P}_\Omega(E_1) - \hat{P}_\Omega(E_1) \cdot \hat{P}_\Omega(E_2).$

\noindent ($\Leftarrow$) Suppose $\operatorname{leverage}_\Omega(r) = \hat{P}_\Omega(E_1)(1 - \hat{P}_\Omega(E_2))$ holds, then $E_1 = E_2$ does not hold, because 
\begin{flushleft}
$\begin{aligned}
& \operatorname{leverage}_\Omega(r) = \hat{P}_\Omega(E_1)(1 - \hat{P}_\Omega(E_2))\\
\Longleftrightarrow~ & \hat{P}_\Omega(E_1 \cap E_2) - \hat{P}_\Omega(E_1) \cdot \hat{P}_\Omega(E_2) = \hat{P}_\Omega(E_1)(1 - \hat{P}_\Omega(E_2))\\
\Longleftrightarrow~ & \hat{P}_\Omega(E_1 \cap E_2) - \hat{P}_\Omega(E_1) \cdot \hat{P}_\Omega(E_2) = \hat{P}_\Omega(E_1) - \hat{P}_\Omega(E_1)\cdot \hat{P}_\Omega(E_2)\\
\Longleftrightarrow~ & \hat{P}_\Omega(E_1 \cap E_2) = \hat{P}_\Omega(E_1)\\
\Longleftrightarrow~ & E_1 \subseteq E_2 \centernot \Longleftrightarrow E_1 = E_2.
\end{aligned}$
\end{flushleft}
\end{proof}

\begin{proposition}\label{proposition-leverage-DIS-ISAR-GPARsinglemicro}$\texttt{DIS} \Longleftrightarrow \operatorname{leverage}_\Omega(r) = - \hat{P}_\Omega(E_1) \cdot \hat{P}_\Omega(E_2)$.
\end{proposition}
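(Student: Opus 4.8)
The plan is to prove the biconditional by treating each direction separately, exploiting the fact that the product term $\hat{P}_\Omega(E_1)\cdot\hat{P}_\Omega(E_2)$ appears in the definition of leverage and cancels cleanly against the right-hand side. Recall that $\operatorname{leverage}_\Omega(r) = \hat{P}_\Omega(E_1 \cap E_2) - \hat{P}_\Omega(E_1)\cdot\hat{P}_\Omega(E_2)$ and that $\texttt{DIS}$ is defined as $E_1 \cap E_2 = \varnothing$. Unlike the \texttt{IDE} case (Proposition~\ref{proposition-leverage-IDE-ISAR-GPARsinglemicro}), where the reverse implication only yields $E_1 \subseteq E_2$ and hence fails, here I expect a genuine equivalence, because leverage depends on $E_1 \cap E_2$ only through the single quantity $\hat{P}_\Omega(E_1 \cap E_2)$, and the defining value of that quantity under disjointness is pinned down exactly.

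For the forward direction ($\Rightarrow$), I would assume $E_1 \cap E_2 = \varnothing$, substitute $\hat{P}_\Omega(E_1 \cap E_2) = \hat{P}_\Omega(\varnothing) = 0$ into the definition, and read off
\[
\operatorname{leverage}_\Omega(r) \overset{\texttt{DIS}}{=} 0 - \hat{P}_\Omega(E_1)\cdot\hat{P}_\Omega(E_2) = -\hat{P}_\Omega(E_1)\cdot\hat{P}_\Omega(E_2).
\]
This is immediate and requires no side conditions.

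For the reverse direction ($\Leftarrow$), I would run a short chain of equivalences: starting from $\operatorname{leverage}_\Omega(r) = -\hat{P}_\Omega(E_1)\cdot\hat{P}_\Omega(E_2)$, unfold the definition to get $\hat{P}_\Omega(E_1 \cap E_2) - \hat{P}_\Omega(E_1)\cdot\hat{P}_\Omega(E_2) = -\hat{P}_\Omega(E_1)\cdot\hat{P}_\Omega(E_2)$, cancel the product term on both sides to obtain $\hat{P}_\Omega(E_1 \cap E_2) = 0$, and finally conclude $E_1 \cap E_2 = \varnothing$. The one point needing care — and the closest thing to an obstacle — is this last step: passing from $\hat{P}_\Omega(E_1 \cap E_2) = 0$ to $E_1 \cap E_2 = \varnothing$ uses that $\hat{P}_\Omega(E_1 \cap E_2) = |E_1 \cap E_2| / |\Omega|$ with $|\Omega| > 0$, so the numerator $|E_1 \cap E_2|$ must vanish. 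The hypothesis $|\Omega| > 0$ is guaranteed because leverage is only defined when the sample space is non-empty (i.e., $T \neq \varnothing$, $\fulltau \neq \varnothing$, or $\fulltauG \neq \varnothing$, respectively), which is exactly the standing assumption recorded for the leverage metrics in Table~\ref{properties-leverage}. Since every singleton then carries strictly positive empirical mass, no nonempty event can have probability zero, completing the argument.
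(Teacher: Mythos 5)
Your proposal is correct and follows essentially the same route as the paper's proof: direct substitution for the forward direction, and for the reverse direction a cancellation of the product term yielding $\hat{P}_\Omega(E_1 \cap E_2) = 0$, from which disjointness follows. Your explicit justification of the final step (that probability zero forces $|E_1 \cap E_2| = 0$ because the empirical measure is counting measure over a non-empty finite sample space) is a detail the paper leaves implicit, but it is not a different argument.
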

\begin{proof}
($\Rightarrow$) Suppose $E_1 \cap E_2 = \varnothing$ holds, then $\operatorname{leverage}_\Omega(r) = - \hat{P}_\Omega(E_1) \cdot \hat{P}_\Omega(E_2)$ holds, because

$ \operatorname{leverage}_\Omega(r) = \hat{P}_\Omega(E_1 \cap E_2) - \hat{P}_\Omega(E_1) \cdot \hat{P}_\Omega(E_2) \overset{\texttt{DIS}}{=} - \hat{P}_\Omega(E_1) \cdot \hat{P}_\Omega(E_2).$

\noindent ($\Leftarrow$) Suppose $\operatorname{leverage}_\Omega(r) = - \hat{P}_\Omega(E_1) \cdot \hat{P}_\Omega(E_2)$ holds, then $E_1 \cap E_2 = \varnothing$ holds, because

\begin{flushleft}
$\begin{aligned}
& \operatorname{leverage}_\Omega(r) = -\hat{P}_\Omega(E_1)\cdot\hat{P}_\Omega(E_1)\\
\Longleftrightarrow~ & \hat{P}_\Omega(E_1 \cap E_2) - \hat{P}_\Omega(E_1) \cdot \hat{P}_\Omega(E_2) = -\hat{P}_\Omega(E_1)\cdot\hat{P}_\Omega(E_1)\\
\Longleftrightarrow~ & \hat{P}_\Omega(E_1 \cap E_2) = 0\\
\Longleftrightarrow~ & E_1 \cap E_2 = \varnothing.
\end{aligned}$
\end{flushleft}
\end{proof}

\begin{proposition}\label{proposition-leverage-IND-ISAR-GPARsinglemicro}$\texttt{IND} \Longleftrightarrow \operatorname{leverage}_\Omega(r) = 0$.
\end{proposition}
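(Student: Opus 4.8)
The plan is to prove this equivalence directly by a short chain of bidirectional implications, in the same style as Proposition~\ref{proposition-lift-IND-ISAR-GPARsinglemicro}, since both directions reduce to one and the same algebraic identity. No case distinction, auxiliary lemma, or positivity hypothesis is needed: leverage is well-defined as soon as the underlying probability space exists (i.e.\ $\Omega \neq \varnothing$, which is guaranteed here), so unlike the confidence and lift propositions we do not require $\hat{P}_\Omega(E_1) > 0$.

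First I would unfold the definition $\operatorname{leverage}_\Omega(r) = \hat{P}_\Omega(E_1 \cap E_2) - \hat{P}_\Omega(E_1) \cdot \hat{P}_\Omega(E_2)$ together with the defining condition $\texttt{IND} \Leftrightarrow \hat{P}_\Omega(E_1 \cap E_2) = \hat{P}_\Omega(E_1)\cdot \hat{P}_\Omega(E_2)$. Setting the leverage expression equal to zero and transposing the product term gives exactly the independence identity, so the two statements are literally the same equation written in two forms. Concretely, the proof is the chain
\begin{align*}
\operatorname{leverage}_\Omega(r) = 0
&\Longleftrightarrow \hat{P}_\Omega(E_1 \cap E_2) - \hat{P}_\Omega(E_1) \cdot \hat{P}_\Omega(E_2) = 0\\
&\Longleftrightarrow \hat{P}_\Omega(E_1 \cap E_2) = \hat{P}_\Omega(E_1) \cdot \hat{P}_\Omega(E_2)\\
&\Longleftrightarrow \texttt{IND}.
\end{align*}

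The main obstacle, such as it is, is purely expository rather than mathematical: I would make sure to state explicitly that both $(\Rightarrow)$ and $(\Leftarrow)$ are captured by the single chain of equivalences above, so that, in contrast to the \texttt{IDE} case (Proposition~\ref{proposition-leverage-IDE-ISAR-GPARsinglemicro}), here the reverse direction genuinely holds and the characterization is an ``if and only if.'' This matches the entry for leverage under \texttt{IND} in Table~\ref{properties-leverage}, where the biconditional arrow is recorded for the ISAR, GPAR~(single), and GPAR~(micro) contexts alike.
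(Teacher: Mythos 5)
Your proof is correct and is essentially the paper's own argument: the paper's ($\Leftarrow$) direction is exactly your chain of equivalences, and its separate ($\Rightarrow$) substitution step is subsumed by that same chain, so collapsing both directions into one bidirectional derivation (as Proposition~\ref{proposition-lift-IND-ISAR-GPARsinglemicro} does) is a purely stylistic difference. Your remark that no positivity hypothesis is needed beyond a non-empty sample space also matches the paper's assumptions for the leverage metrics.
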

\begin{proof}
($\Rightarrow$) Suppose $\hat{P}_\Omega(E_1 \cap E_2) = \hat{P}_\Omega(E_1) \cdot \hat{P}_\Omega(E_2)$ holds, then $\operatorname{leverage}_\Omega(r) = 0$ holds, because\\
$ \operatorname{leverage}_\Omega(r) = \hat{P}_\Omega(E_1 \cap E_2) - \hat{P}_\Omega(E_1) \cdot \hat{P}_\Omega(E_2) \overset{\texttt{IND}}{=} \hat{P}_\Omega(E_1) \cdot \hat{P}_\Omega(E_2) - \hat{P}_\Omega(E_1) \cdot \hat{P}_\Omega(E_2) = 0.$

\noindent ($\Leftarrow$) Suppose $\operatorname{leverage}_\Omega(r) = 0$ holds, then $\hat{P}_\Omega(E_1 \cap E_2) = \hat{P}_\Omega(E_1) \cdot \hat{P}_\Omega(E_2)$ holds, because
\begin{flushleft}
$\begin{aligned}
& \operatorname{leverage}_\Omega(r) = 0\\
\Longleftrightarrow~ & \hat{P}_\Omega(E_1 \cap E_2) - \hat{P}_\Omega(E_1) \cdot \hat{P}_\Omega(E_2) = 0\\
\Longleftrightarrow~ & \hat{P}_\Omega(E_1 \cap E_2) = \hat{P}_\Omega(E_1) \cdot \hat{P}_\Omega(E_2).\\
\end{aligned}$
\end{flushleft}
\end{proof}

\begin{proposition}\label{proposition-leverage-POS-ISAR-GPARsinglemicro}$\texttt{POS} \Longleftrightarrow \operatorname{leverage}_\Omega(r) > 0$.
\end{proposition}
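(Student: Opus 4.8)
The plan is to mirror the proof of Proposition \ref{proposition-leverage-IND-ISAR-GPARsinglemicro} almost verbatim, replacing the equality $=$ with the strict inequality $>$ throughout. The key observation is that leverage is defined directly as the difference $\operatorname{leverage}_\Omega(r) = \hat{P}_\Omega(E_1 \cap E_2) - \hat{P}_\Omega(E_1) \cdot \hat{P}_\Omega(E_2)$, while the \texttt{POS} situation is defined as precisely the inequality $\hat{P}_\Omega(E_1 \cap E_2) > \hat{P}_\Omega(E_1) \cdot \hat{P}_\Omega(E_2)$. So the statement to prove is essentially definitional once the difference is unfolded.

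First I would establish the chain of bidirectional equivalences. Since each step is reversible (adding or subtracting the term $\hat{P}_\Omega(E_1) \cdot \hat{P}_\Omega(E_2)$ on both sides of an inequality preserves it in both directions), a single chain handles both the $(\Rightarrow)$ and $(\Leftarrow)$ directions simultaneously:
\begin{align*}
& \operatorname{leverage}_\Omega(r) > 0\\
\Longleftrightarrow~ & \hat{P}_\Omega(E_1 \cap E_2) - \hat{P}_\Omega(E_1) \cdot \hat{P}_\Omega(E_2) > 0\\
\Longleftrightarrow~ & \hat{P}_\Omega(E_1 \cap E_2) > \hat{P}_\Omega(E_1) \cdot \hat{P}_\Omega(E_2),
\end{align*}
where the final line is exactly the definition of \texttt{POS}. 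This disposes of the proposition for the three contexts ISAR, GPAR (single), and GPAR (micro) at once, since all of them share the generalized definition of leverage over the abstract sample space $\Omega$.

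There is essentially no obstacle here: unlike the \texttt{IDE} and \texttt{DIS} cases, where one direction fails and one has to carefully diagnose that $E_1 \subseteq E_2$ does not imply $E_1 = E_2$, the \texttt{POS} case is a genuine biconditional because leverage encodes the correlation sign transparently. The only point worth stating explicitly is that no nondegeneracy assumption is needed for this proposition (the statement carries no hypothesis $\hat{P}_\Omega(E_1) > 0$), which is consistent with the leverage metric being defined whenever $\Omega \neq \varnothing$; the difference of two well-defined probabilities is always well-defined, so the equivalence holds unconditionally.

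Given how mechanical this is, I would in practice follow the style of the surrounding propositions and simply write: ``Omitted -- can be proven in the same way as Proposition \ref{proposition-leverage-IND-ISAR-GPARsinglemicro}. The only thing that needs to be changed is to replace $=$ with $>$.'' The substantive content is entirely captured by the three-line chain above.
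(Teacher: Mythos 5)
Your proposal is correct and matches the paper's own treatment: the paper likewise proves this by the argument of Proposition \ref{proposition-leverage-IND-ISAR-GPARsinglemicro} with $=$ replaced by $>$, which is exactly the chain of equivalences you unfold. Your additional remark that the biconditional here needs no nondegeneracy hypothesis beyond $\Omega \neq \varnothing$ is accurate and consistent with the paper's stated conditions for leverage.
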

\begin{proof}Omitted -- can be proven in the same way as Proposition \ref{proposition-leverage-IND-ISAR-GPARsinglemicro}. The only thing that needs to be changed is to replace $=$ with $>$.
\end{proof}

\begin{proposition}\label{proposition-leverage-NEG-ISAR-GPARsinglemicro}$\texttt{NEG} \Longleftrightarrow \operatorname{leverage}_\Omega(r) < 0$.
\end{proposition}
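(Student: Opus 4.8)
The plan is to prove this by the same chain-of-equivalences argument used for Proposition \ref{proposition-leverage-IND-ISAR-GPARsinglemicro}, simply replacing the equality with a strict inequality, exactly as the paper already does for the \texttt{POS} case in Proposition \ref{proposition-leverage-POS-ISAR-GPARsinglemicro}. The key observation is that leverage is defined in the generalized setting as $\operatorname{leverage}_\Omega(r) = \hat{P}_\Omega(E_1 \cap E_2) - \hat{P}_\Omega(E_1) \cdot \hat{P}_\Omega(E_2)$, and the \texttt{NEG} situation is defined as $\hat{P}_\Omega(E_1 \cap E_2) < \hat{P}_\Omega(E_1)\cdot \hat{P}_\Omega(E_2)$. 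The proposition therefore reduces to a trivial algebraic rearrangement: negativity of leverage is literally the defining inequality of \texttt{NEG} with the product moved to the other side.

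First I would handle the forward direction ($\Rightarrow$): assume \texttt{NEG}, i.e. $\hat{P}_\Omega(E_1 \cap E_2) < \hat{P}_\Omega(E_1)\cdot \hat{P}_\Omega(E_2)$, and substitute directly into the definition of leverage to conclude $\operatorname{leverage}_\Omega(r) = \hat{P}_\Omega(E_1 \cap E_2) - \hat{P}_\Omega(E_1)\cdot \hat{P}_\Omega(E_2) < 0$. Then I would handle the reverse direction ($\Leftarrow$) via the bidirectional chain $\operatorname{leverage}_\Omega(r) < 0 \Longleftrightarrow \hat{P}_\Omega(E_1 \cap E_2) - \hat{P}_\Omega(E_1)\cdot \hat{P}_\Omega(E_2) < 0 \Longleftrightarrow \hat{P}_\Omega(E_1 \cap E_2) < \hat{P}_\Omega(E_1)\cdot \hat{P}_\Omega(E_2)$, which is exactly \texttt{NEG}. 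Because both directions are strict-inequality analogues of the equivalences already established for \texttt{IND}, the entire argument is symmetric to the one given there.

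There is essentially no substantive obstacle here, which is presumably why the paper keeps such proofs terse. Unlike the \texttt{IDE} and \texttt{DIS} leverage propositions, there is no subtlety about $E_1 \subseteq E_2$ failing to imply $E_1 = E_2$, and unlike the confidence or lift propositions, no assumption such as $\hat{P}_\Omega(E_1) > 0$ is even required, since leverage is well-defined (and the rearrangement valid) as long as $\Omega \neq \varnothing$. Consequently, the cleanest presentation is to follow the paper's own convention: state that the proof is omitted because it proceeds exactly as in Proposition \ref{proposition-leverage-IND-ISAR-GPARsinglemicro} with $=$ replaced by $<$, mirroring the treatment of Proposition \ref{proposition-leverage-POS-ISAR-GPARsinglemicro}. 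The only thing I would double-check is that the strict inequality is preserved in both directions of the chain, which it trivially is because subtracting the same quantity from both sides of a strict inequality is an order-preserving operation.

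\begin{proof}
Omitted -- can be proven in the same way as Proposition \ref{proposition-leverage-IND-ISAR-GPARsinglemicro}. The only thing that needs to be changed is to replace $=$ with $<$.
\end{proof}
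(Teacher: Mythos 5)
Your proposal is correct and matches the paper's own treatment exactly: the paper likewise omits the proof, noting it follows the chain of bidirectional equivalences in Proposition~\ref{proposition-leverage-IND-ISAR-GPARsinglemicro} with $=$ replaced by $<$. Your additional observations (that no positivity assumption on $\hat{P}_\Omega(E_1)$ is needed and that strict inequality survives the rearrangement) are accurate and consistent with the paper's setup.
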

\begin{proof}Omitted -- can be proven in the same way as Proposition \ref{proposition-leverage-IND-ISAR-GPARsinglemicro}. The only thing that needs to be changed is to replace $=$ with $<$.
\end{proof}

\subsubsection{Context: GPAR (macro)}
\[\operatorname{macro-leverage}_G(p_1,p_2,V_1,V_2) = \frac{1}{|G|} \sum_{g\in G} \operatorname{leverage}_g(p_1,p_2,V_1,V_2)\]
 
\begin{proposition}\label{proposition-leverage-IDE-GPARmacro}$\texttt{IDE} \Longrightarrow \operatorname{macro-leverage}_G(p_1,p_2,V_1,V_2) = \\
\frac{1}{|G|}\sum_{g \in G } \hat{P}_g(E_{g,p_1,V_1;p_2})(1-\hat{P}_g(E_{g,p_1,V_1;p_2})). $\end{proposition}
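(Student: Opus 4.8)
The plan is to reduce this macro-level claim to the per-graph single-graph result, Proposition~\ref{proposition-leverage-IDE-ISAR-GPARsinglemicro}, and then average. By definition, $\operatorname{macro-leverage}_G(p_1,p_2,V_1,V_2) = \frac{1}{|G|}\sum_{g\in G}\operatorname{leverage}_g(p_1,p_2,V_1,V_2)$, and \texttt{IDE} in the macro context asserts that $E_{g,p_1,V_1;p_2} = E_{g,p_2,V_2;p_1}$ holds for \emph{every} $g \in G$. So I would first fix an arbitrary $g \in G$ and observe that the single-graph \texttt{IDE} condition holds in its probability space $(\fulltau,\mathcal{F},\hat{P}_g)$.

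First I would apply Proposition~\ref{proposition-leverage-IDE-ISAR-GPARsinglemicro} in the single-graph context (identifying $E_1 = E_{g,p_1,V_1;p_2}$ and $E_2 = E_{g,p_2,V_2;p_1}$) to obtain, for each $g$,
\[\operatorname{leverage}_g(p_1,p_2,V_1,V_2) = \hat{P}_g(E_{g,p_1,V_1;p_2})\bigl(1 - \hat{P}_g(E_{g,p_2,V_2;p_1})\bigr).\]
The key observation is that under \texttt{IDE} the two events coincide as sets, hence have equal empirical measure in the same space, giving $\hat{P}_g(E_{g,p_2,V_2;p_1}) = \hat{P}_g(E_{g,p_1,V_1;p_2})$. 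Substituting this into the second factor yields $\hat{P}_g(E_{g,p_1,V_1;p_2})(1 - \hat{P}_g(E_{g,p_1,V_1;p_2}))$, exactly the summand required. Summing this identity over all $g \in G$ and dividing by $|G|$ then produces the claimed closed form, so the forward implication follows by linearity of the averaging operator.

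For completeness I would note, following the template of the other macro propositions such as Proposition~\ref{proposition-leverage-IND-GPARmacro}, that this is only an implication and not a biconditional: since macro-leverage is merely an arithmetic mean, the same numerical value can arise from per-graph configurations that do not all satisfy \texttt{IDE}. The only subtle point in the argument is the substitution $\hat{P}_g(E_{g,p_2,V_2;p_1}) = \hat{P}_g(E_{g,p_1,V_1;p_2})$, which must be justified from the \emph{set} identity of the two events (equal sets have equal cardinality, hence equal probability in $\fulltau$), rather than assumed directly; everything else is a routine application of the per-graph result, so I anticipate no genuine obstacle.
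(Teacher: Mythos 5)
Your proof is correct and takes essentially the same approach as the paper's: apply the single-graph result (Proposition~\ref{proposition-leverage-IDE-ISAR-GPARsinglemicro}) to each $g \in G$ and then average, noting that the implication cannot be reversed because the same mean can arise from different collections of summands. If anything, you are slightly more careful than the paper, which ends its chain of equalities with the factor $(1-\hat{P}_g(E_{g,p_2,V_2;p_1}))$ and leaves the substitution $\hat{P}_g(E_{g,p_2,V_2;p_1}) = \hat{P}_g(E_{g,p_1,V_1;p_2})$ (justified by the set identity of the two events under \texttt{IDE}) implicit, whereas you spell it out.
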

\begin{proof}
($\Rightarrow$) Suppose $E_1 = E_2$ for each $g \in G$ holds, then\\$\operatorname{macro-leverage}_G(p_1,p_2,V_1,V_2) = \frac{1}{|G|} \sum_{g \in G } \hat{P}_g(E_{g,p_1,V_1;p_2})(1-\hat{P}_g(E_{g,p_1,V_1;p_2}))$ holds, because 

$\operatorname{macro-leverage}_G(p_1,p_2,V_1,V_2) = \frac{1}{|G|} \sum_{g\in G} \operatorname{leverage}_g(p_1,p_2,V_1,V_2) \overset{\texttt{IDE}, Prop. \ref{proposition-leverage-IDE-ISAR-GPARsinglemicro}}{=}\\ \frac{1}{|G|} \sum_{g\in G} \hat{P}_g(E_{g,p_1,V_1;p_2})(1-\hat{P}_g(E_{g,p_2,V_2;p_1}))$.

\noindent ($\Leftarrow$) Suppose $\operatorname{macro-leverage}_G(r) = 
\frac{1}{|G|}\sum_{g \in G } \hat{P}_g(E_{g,p_1,V_1;p_2})(1-\hat{P}_g(E_{g,p_1,V_1;p_2}))$ holds, then $E_1 = E_2$ for each $g \in G$ does not hold, because the same sum can be obtained from different collections of summands.
\end{proof}

\begin{proposition}\label{proposition-leverage-DIS-GPARmacro}\texttt{DIS} $\Longrightarrow \operatorname{macro-lev.}_G(p_1,p_2,V_1,V_2) {=} -\frac{1}{|G|}\sum_{g \in G} \hat{P}_g(E_{p_1,V_1;p_2}) \cdot \hat{P}_\Omega(E_{p_2,V_2;p_1}).$
\end{proposition}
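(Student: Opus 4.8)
The plan is to reduce this macro-averaged statement to the single-graph leverage result (Proposition \ref{proposition-leverage-DIS-ISAR-GPARsinglemicro}) applied termwise, exactly following the template established for the other macro propositions such as \ref{proposition-leverage-IDE-GPARmacro}, \ref{proposition-confidence-DIS-GPARmacro}, and \ref{proposition-lift-DIS-GPARmacro}. First I would recall that macro-leverage is by definition the arithmetic mean of the per-graph leverage scores:
\[\operatorname{macro-leverage}_G(p_1,p_2,V_1,V_2) = \frac{1}{|G|} \sum_{g\in G} \operatorname{leverage}_g(p_1,p_2,V_1,V_2).\]
In the macro context the \texttt{DIS} situation means, by the definition in Table \ref{tab:situation-names}, that $E_{g,p_1,V_1;p_2} \cap E_{g,p_2,V_2;p_1} = \varnothing$ for \emph{every} $g \in G$, so the hypothesis is a per-graph hypothesis that can be fed into each summand individually.

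For the forward direction ($\Rightarrow$) I would apply Proposition \ref{proposition-leverage-DIS-ISAR-GPARsinglemicro} to each term of the sum, which gives $\operatorname{leverage}_g(p_1,p_2,V_1,V_2) = -\hat{P}_g(E_{g,p_1,V_1;p_2}) \cdot \hat{P}_g(E_{g,p_2,V_2;p_1})$ for each $g$. Substituting and pulling the minus sign and the factor $\tfrac{1}{|G|}$ out of the sum then yields the claimed identity directly:
\[\operatorname{macro-leverage}_G(p_1,p_2,V_1,V_2) \overset{\texttt{DIS}, Prop. \ref{proposition-leverage-DIS-ISAR-GPARsinglemicro}}{=} -\frac{1}{|G|}\sum_{g\in G} \hat{P}_g(E_{g,p_1,V_1;p_2}) \cdot \hat{P}_g(E_{g,p_2,V_2;p_1}).\]
For the reverse direction I would argue, as in Propositions \ref{proposition-leverage-IDE-GPARmacro} and \ref{proposition-lift-IND-GPARmacro}, that the implication is only one-directional: macro-averaging collapses the per-graph information into a single aggregate, and the same total value can be produced by many different collections of per-graph summands, so the aggregate score does not force every graph into the \texttt{DIS} situation. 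This is why the table entry for this cell is marked with $\Longrightarrow$ rather than $\Longleftrightarrow$.

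The main obstacle here is essentially nil in terms of calculation; the content of the proof is a single termwise substitution using the already-proved single-graph case, and the only genuine step is the justification that the converse fails. The one point requiring care is notational consistency: the displayed statement writes $\hat{P}_\Omega(E_{p_2,V_2;p_1})$, which in the macro setting should read $\hat{P}_g(E_{g,p_2,V_2;p_1})$ (there is no single unified $\Omega$ in the macro-averaged context, as emphasised at the end of Section \ref{sec:metricstransactional}), so I would phrase the proof throughout with the per-graph measure $\hat{P}_g$ to keep the family-of-probability-spaces interpretation correct.
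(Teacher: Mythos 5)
Your proposal is correct and follows essentially the same route as the paper's own proof: the forward direction applies Proposition~\ref{proposition-leverage-DIS-ISAR-GPARsinglemicro} termwise to the defining sum of $\operatorname{macro-leverage}_G$, and the failure of the converse is justified by the same observation that a given sum can arise from different collections of summands. Your additional remark that the $\hat{P}_\Omega(E_{p_2,V_2;p_1})$ in the statement should read $\hat{P}_g(E_{g,p_2,V_2;p_1})$ is a correct catch of a notational slip that the paper itself carries into its proof, and phrasing everything with the per-graph measure $\hat{P}_g$ is the right repair.
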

\begin{proof}
($\Rightarrow$) Suppose $E_1$ and $E_2$ are disjoint for each $g \in G$ holds, then \\$\operatorname{macro-leverage}_G(p_1,p_2,V_1,V_2) = -\frac{1}{|G|}\sum_{g \in G} \hat{P}_g(E_{p_1,V_1;p_2}) \cdot \hat{P}_\Omega(E_{p_2,V_2;p_1})$ holds, because\\
$\operatorname{macro-leverage}_G(p_1,p_2,V_1,V_2) = \\\frac{1}{|G|} \sum_{g\in G} \operatorname{leverage}_g(p_1,p_2,V_1,V_2) \overset{\texttt{DIS}, Prop. \ref{proposition-leverage-DIS-ISAR-GPARsinglemicro}}{=} \frac{1}{|G|} \sum_{g\in G} - \hat{P}_g(E_{g,p_1,V_1;p_2}) \cdot \hat{P}_g(E_{g,p_2,V_2;p_1})$.

\noindent ($\Leftarrow$) Suppose $\operatorname{macro-leverage}_G(p_1,p_2,V_1,V_2) = -\frac{1}{|G|}\sum_{g \in G} \hat{P}_g(E_{g,p_1,V_1;p_2}) \cdot \hat{P}_\Omega(E_{g,p_2,V_2;p_1})$ holds, then $E_1$ and $E_2$ are disjoint for each $g \in G$ does not hold, because the same sum can be obtained from different collections of summands.
\end{proof}

\begin{proposition}\label{proposition-leverage-IND-GPARmacro}\texttt{IND} $\Longrightarrow \operatorname{macro-leverage}_G = 0$.
\end{proposition}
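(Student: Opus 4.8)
The plan is to reduce the claim to the single-graph result already established in Proposition~\ref{proposition-leverage-IND-ISAR-GPARsinglemicro}, exploiting that $\operatorname{macro-leverage}_G$ is by definition the arithmetic mean of the per-graph scores $\operatorname{leverage}_g$. For the forward direction I would assume the \texttt{IND} situation in the macro sense, namely that for every $g \in G$ the events $E_{g,p_1,V_1;p_2}$ and $E_{g,p_2,V_2;p_1}$ are independent. Applying Proposition~\ref{proposition-leverage-IND-ISAR-GPARsinglemicro} to each individual graph $g$ gives $\operatorname{leverage}_g(p_1,p_2,V_1,V_2) = 0$ for every $g \in G$. Summing over $G$ and dividing by $|G|$ then yields
\[\operatorname{macro-leverage}_G(p_1,p_2,V_1,V_2) = \frac{1}{|G|}\sum_{g \in G} \operatorname{leverage}_g(p_1,p_2,V_1,V_2) \overset{\texttt{IND},\,\mathrm{Prop.}\ref{proposition-leverage-IND-ISAR-GPARsinglemicro}}{=} \frac{1}{|G|}\sum_{g \in G} 0 = 0,\]
which is exactly the asserted implication. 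This mirrors the forward parts of Propositions~\ref{proposition-leverage-IDE-GPARmacro} and~\ref{proposition-leverage-DIS-GPARmacro}, so the forward direction itself is routine.

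The step deserving the most care is making explicit \emph{why the statement is only an implication} and not a biconditional, in contrast to the single-graph and micro-averaged cases where $\texttt{IND} \Longleftrightarrow \operatorname{leverage}_\Omega(r) = 0$ holds. The obstacle is cancellation: since each $\operatorname{leverage}_g$ ranges over $[-\tfrac14,\tfrac14]$ and may be strictly negative, a vanishing average does not force every summand to vanish. I would therefore complement the proof with a short ($\Leftarrow$) remark establishing that the converse fails, arguing that $\operatorname{macro-leverage}_G = 0$ can arise from a bag $G$ in which some graphs contribute strictly positive leverage while others contribute strictly negative leverage that exactly offsets it, so that the \texttt{IND} condition need not hold in any single $g$.

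This is the same ``same sum from different collections of summands'' phenomenon invoked for Propositions~\ref{proposition-leverage-IDE-GPARmacro} and~\ref{proposition-leverage-DIS-GPARmacro}, but here it is sharper: because leverage takes both signs, the failure of the converse is driven by genuine \emph{cancellation} of positive against negative contributions rather than by mere non-uniqueness of the decomposition. Consequently the independence of the two events cannot be recovered from the aggregate score alone, which is precisely the loss of diagnostic power that Section~\ref{sec:characteristics} highlights as characteristic of the macro-averaged metrics.
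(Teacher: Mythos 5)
Your proof takes essentially the same approach as the paper's: the forward direction applies Proposition~\ref{proposition-leverage-IND-ISAR-GPARsinglemicro} graph-by-graph and averages, exactly as the paper does, and your closing remark on why the converse fails matches the paper's justification. Your explicit cancellation argument (strictly positive and strictly negative per-graph leverages offsetting each other) is a slightly sharper articulation of the paper's reason that leverage lies in $[-\tfrac{1}{4},\tfrac{1}{4}]$ and ``the same sum can be obtained from different collections of summands,'' but it is the same idea, not a different route.
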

\begin{proof}
($\Rightarrow$) Suppose $\hat{P}_g(E_{g,p_1,V_1;p_2} \cap E_{g,p_2,V_2;p_1}) = \hat{P}_g(E_{g,p_1,V_1;p_2}) \cdot \hat{P}_g(E_{g,p_2,V_2;p_1})$ for each $g \in G$ holds, then $\operatorname{macro-leverage}_G = 0$ holds, because\\
$\operatorname{macro-leverage}_G(p_1,p_2,V_1,V_2) = \frac{1}{|G|} \sum_{g\in G} \operatorname{leverage}_g(p_1,p_2,V_1,V_2) \overset{\texttt{IND}, Prop. \ref{proposition-leverage-IND-ISAR-GPARsinglemicro}}{=} \frac{1}{|G|} \sum_{g\in G} 0 = 0$.

\noindent ($\Leftarrow$) Suppose $\operatorname{macro-leverage}_G(p_1,p_2,V_1,V_2) = 0$ holds, then $\hat{P}_g(E_{g,p_1,V_1;p_2} \cap E_{g,p_2,V_2;p_1}) = \\\hat{P}_g(E_{g,p_1,V_1;p_2}) \cdot \hat{P}_g(E_{g,p_2,V_2;p_1})$ for each $g \in G$ does not hold, because leverage scores are values in the interval $[-\tfrac{1}{4},\tfrac{1}{4})$ and the same sum can be obtained from different collections of summands.
\end{proof}

\begin{proposition}\label{proposition-leverage-POS-GPARmacro}\texttt{POS} $\Longrightarrow \operatorname{macro-leverage}_G > 0$.
\end{proposition}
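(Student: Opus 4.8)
The plan is to follow exactly the template established by the preceding macro-averaged propositions, in particular Proposition \ref{proposition-leverage-IND-GPARmacro} (the \texttt{IND} case), but invoking the single-graph positive-correlation result (Proposition \ref{proposition-leverage-POS-ISAR-GPARsinglemicro}) and replacing the equality with a strict inequality. Since the statement asserts only the single implication $\Longrightarrow$, I would prove just the forward direction and explicitly note why the converse fails.

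First I would unfold what \texttt{POS} means in the macro-averaged (\texttt{GPAR (macro)}) context as specified in Table \ref{tab:situation-names}: for every graph $g \in G$, the events $E_{g,p_1,V_1;p_2}$ and $E_{g,p_2,V_2;p_1}$ are positively correlated, i.e.\ $\hat{P}_g(E_{g,p_1,V_1;p_2} \cap E_{g,p_2,V_2;p_1}) > \hat{P}_g(E_{g,p_1,V_1;p_2}) \cdot \hat{P}_g(E_{g,p_2,V_2;p_1})$. Applying Proposition \ref{proposition-leverage-POS-ISAR-GPARsinglemicro} graph-by-graph then yields $\operatorname{leverage}_g(p_1,p_2,V_1,V_2) > 0$ for each $g \in G$. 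Substituting into the definition of macro-leverage gives
\[
\operatorname{macro-leverage}_G(p_1,p_2,V_1,V_2) = \frac{1}{|G|} \sum_{g\in G} \operatorname{leverage}_g(p_1,p_2,V_1,V_2) > \frac{1}{|G|} \sum_{g\in G} 0 = 0,
\]
using that $|G| \geq 1$ and that a sum of strictly positive terms is strictly positive.

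For completeness I would append the observation (mirroring the $(\Leftarrow)$ remarks in Propositions \ref{proposition-leverage-IND-GPARmacro} and \ref{proposition-leverage-DIS-GPARmacro}) that the reverse implication does not hold: a strictly positive average can arise from a collection of per-graph leverage values that are not all positive, so a positive macro-leverage score does not allow one to deduce that every $g \in G$ exhibits positive correlation. This is precisely why the table and proposition use $\Longrightarrow$ rather than $\Longleftrightarrow$.

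There is no real computational obstacle here; the only point requiring care is the logical bookkeeping around directionality. The mild subtlety worth stating clearly is that the strict inequality is preserved under averaging only because \emph{every} summand is strictly positive and the averaging weight $1/|G|$ is itself positive, which is why the hypothesis must quantify over all $g \in G$ rather than some $g \in G$; this is the same structural feature that forces the converse to fail.
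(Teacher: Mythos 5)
Your proof is correct and follows essentially the same route as the paper: the paper omits the proof, stating it proceeds exactly like the other macro-averaged propositions by applying the single-graph result (Proposition \ref{proposition-leverage-POS-ISAR-GPARsinglemicro}) graph-by-graph, averaging the strictly positive per-graph leverage scores, and noting the converse fails because the same average can arise from different collections of summands. Your write-up simply makes this template explicit (and, incidentally, uses the correct strict inequality $>$ where the paper's omitted-proof remark contains a typo instructing to replace $=$ with $<$).
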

\begin{proof}
Omitted -- can be proven in the same way as Proposition \ref{proposition-lift-IND-GPARmacro}, based on  Prop. \ref{proposition-leverage-POS-ISAR-GPARsinglemicro}. The only thing that needs to be changed is to replace $=$ with $<$.
\end{proof}

\begin{proposition}\label{proposition-leverage-NEG-GPARmacro}\texttt{NEG} $\Longrightarrow \operatorname{macro-leverage}_G < 0$.
\end{proposition}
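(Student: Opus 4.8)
The plan is to prove the single stated direction by reducing the macro-averaged claim to the per-graph result of Proposition~\ref{proposition-leverage-NEG-ISAR-GPARsinglemicro}, exactly as Propositions~\ref{proposition-leverage-IND-GPARmacro} and~\ref{proposition-leverage-POS-GPARmacro} reduce their claims to the corresponding single-graph propositions. First I would write out the definition $\operatorname{macro-leverage}_G(p_1,p_2,V_1,V_2) = \frac{1}{|G|}\sum_{g \in G}\operatorname{leverage}_g(p_1,p_2,V_1,V_2)$, so that the macro score is visibly an average of single-graph leverage scores.

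Next I would unpack the macro-averaged reading of the \texttt{NEG} situation (as given in Table~\ref{tab:situation-names}): it asserts that for \emph{every} graph $g \in G$ the events $E_{g,p_1,V_1;p_2}$ and $E_{g,p_2,V_2;p_1}$ are negatively correlated, i.e. $\hat{P}_g(E_{g,p_1,V_1;p_2} \cap E_{g,p_2,V_2;p_1}) < \hat{P}_g(E_{g,p_1,V_1;p_2}) \cdot \hat{P}_g(E_{g,p_2,V_2;p_1})$. Applying Proposition~\ref{proposition-leverage-NEG-ISAR-GPARsinglemicro} in the single-graph GPAR context (where $\Omega = \fulltau$) to each such $g$ then yields $\operatorname{leverage}_g(p_1,p_2,V_1,V_2) < 0$ for every $g \in G$. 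Since a finite sum of strictly negative terms is strictly negative and $|G| > 0$, dividing by $|G|$ preserves the sign, giving $\operatorname{macro-leverage}_G(p_1,p_2,V_1,V_2) < 0$ and completing the forward implication.

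Because the statement is only an implication, nothing more is strictly required; however, to match the exposition of the surrounding macro propositions I would add the remark that the converse fails, since the same negative average can be produced by a mixture of positive and negative per-graph leverage values (the same observation used in the $(\Leftarrow)$ parts of Propositions~\ref{proposition-leverage-IND-GPARmacro} and~\ref{proposition-leverage-DIS-GPARmacro}). There is no real obstacle here: the single-graph case already carries all the analytic content, and the only point needing care is to confirm that the per-graph \texttt{NEG} hypothesis is precisely what licenses the application of Proposition~\ref{proposition-leverage-NEG-ISAR-GPARsinglemicro} within each probability space $\fulltau$. Once that correspondence is noted, the sign-of-an-average computation is immediate, so I expect the ``main obstacle'' to be purely presentational --- stating the reduction cleanly rather than any genuine difficulty.
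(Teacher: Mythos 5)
Your proposal is correct and matches the paper's intended (omitted) argument exactly: the paper likewise reduces the claim to Proposition~\ref{proposition-leverage-NEG-ISAR-GPARsinglemicro} applied per graph and averages the resulting strictly negative scores, following the template of Proposition~\ref{proposition-leverage-IND-GPARmacro}. Your added remark that the converse fails (same average from different summand collections) is also consistent with how the paper handles the $(\Leftarrow)$ direction in the neighboring macro propositions.
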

\begin{proof}
Omitted -- can be proven in the same way as Proposition \ref{proposition-lift-IND-GPARmacro}, based on  Prop. \ref{proposition-leverage-NEG-ISAR-GPARsinglemicro}. The only thing that needs to be changed is to replace $=$ with $<$.
\end{proof}

\subsection{Proofs about Characteristics of Conviction}
\subsubsection{Contexts: ISAR, GPAR (single), and GPAR (micro)}

\begin{proposition}\label{proposition-conviction-IDE-ISAR-GPARsinglemicro}Assuming that 
$\hat{P}_\Omega(E_1) > 0$ and $\hat{P}_\Omega(E_2) < 1$, then 
$\texttt{IDE} \Longrightarrow \operatorname{conviction}_\Omega(r) = \infty$.
\end{proposition}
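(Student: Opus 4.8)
The plan is to establish the implication in the forward direction only (the statement is $\Longrightarrow$, not an equivalence) and to reduce it to the already-proven behaviour of confidence under $\texttt{IDE}$. First I would invoke Proposition~\ref{proposition-confidence-IDE-ISAR-GPARsinglemicro}, which gives $\operatorname{confidence}_\Omega(r) = 1$ whenever $E_1 = E_2$ and $\hat{P}_\Omega(E_1) > 0$. Since $\operatorname{confidence}_\Omega(r) = \hat{P}_\Omega(E_2 \mid E_1)$, this means the denominator $1 - \hat{P}_\Omega(E_2 \mid E_1)$ of the conviction expression equals $0$. By the convention stated earlier in the excerpt --- that $\operatorname{conviction}_\Omega(r)$ is \emph{defined to be} $\infty$ exactly when $\operatorname{confidence}_\Omega(r) = 1$ --- the conclusion $\operatorname{conviction}_\Omega(r) = \infty$ then follows immediately.

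Alternatively, I would argue directly from the closed form $\operatorname{conviction}_\Omega(r) = \dfrac{1 - \hat{P}_\Omega(E_2)}{1 - \hat{P}_\Omega(E_2 \mid E_1)}$. Under $\texttt{IDE}$ we have $E_1 \cap E_2 = E_1$, and because $\hat{P}_\Omega(E_1) > 0$ forces $|E_1| > 0$, the conditional probability is well defined and evaluates to $\hat{P}_\Omega(E_2 \mid E_1) = \tfrac{|E_1 \cap E_2|}{|E_1|} = \tfrac{|E_1|}{|E_1|} = 1$, so the denominator vanishes. The two hypotheses do all the work here: $\hat{P}_\Omega(E_1) > 0$ guarantees that the fraction defining the conditional probability makes sense, while $\hat{P}_\Omega(E_2) < 1$ guarantees that the numerator $1 - \hat{P}_\Omega(E_2)$ is strictly positive, so the quotient genuinely diverges to $+\infty$ rather than producing the indeterminate form $0/0$.

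The only subtlety --- the closest thing to an obstacle --- is to confirm that the two standing assumptions rule out the degenerate cases and that the convention assigning the value $\infty$ is applied correctly; concretely, one must note that $\hat{P}_\Omega(E_2) < 1$ is precisely what separates the intended $\infty$ case from an undefined expression. No genuine computation beyond this is required, so I would keep the write-up short and lean on Proposition~\ref{proposition-confidence-IDE-ISAR-GPARsinglemicro} for the identity $\operatorname{confidence}_\Omega(r) = 1$, structuring the final proof as a single short $(\Rightarrow)$ paragraph.
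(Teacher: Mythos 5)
Your proposal is correct and follows essentially the same route as the paper: under \texttt{IDE} the confidence equals $1$ (Proposition~\ref{proposition-confidence-IDE-ISAR-GPARsinglemicro}), and the convention that $\operatorname{conviction}_\Omega(r)$ is \emph{defined} to be $\infty$ when $\operatorname{confidence}_\Omega(r)=1$ yields the conclusion. The paper additionally remarks (in a $(\Leftarrow)$ paragraph) that the converse fails because $\operatorname{conviction}_\Omega(r)=\infty$ only forces $E_1 \subseteq E_2$, but since the statement is a one-way implication, your omission of this is not a gap.
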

\begin{proof}
($\Rightarrow$) Suppose $E_1 = E_2$ holds, then $\operatorname{conviction}_\Omega(r) = \infty$ holds, because if $E_1 = E_2$, then $\operatorname{confidence}_\Omega(r) = 1$, and then, by definition, $\operatorname{conviction}_\Omega(r) = \infty$.

\noindent ($\Leftarrow$) Suppose $\operatorname{conviction}_\Omega(r) = \infty$ holds, then $E_1 = E_2$ does not hold, because $\operatorname{conviction}_\Omega(r)$ can only be $\infty$ if $\operatorname{confidence}_\Omega(r) = 1$, for which it is necessary that $E_1 \subseteq E_2$.
\end{proof}

\begin{proposition}\label{proposition-conviction-DIS-ISAR-GPARsinglemicro}Assuming that $\hat{P}_\Omega(E_1) > 0$ and $\hat{P}_\Omega(E_2) < 1$, then $\texttt{DIS} \Longleftrightarrow \operatorname{conviction}_\Omega(r) = 1-\hat{P}_\Omega(E_2)$.
\end{proposition}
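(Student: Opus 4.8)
The plan is to prove the biconditional by treating the two directions separately, relying on the generalized conviction definition $\operatorname{conviction}_\Omega(r) = \frac{1 - \hat{P}_\Omega(E_2)}{1 - \hat{P}_\Omega(E_2 \mid E_1)}$ together with Proposition P\ref{proposition-confidence-DIS-ISAR-GPARsinglemicro}, which (under $\hat{P}_\Omega(E_1) > 0$) already establishes that $\texttt{DIS} \Longleftrightarrow \operatorname{confidence}_\Omega(r) = 0$. At the outset I would record the two standing assumptions: $\hat{P}_\Omega(E_1) > 0$ makes $\hat{P}_\Omega(E_2 \mid E_1) = \operatorname{confidence}_\Omega(r)$ well-defined, and $\hat{P}_\Omega(E_2) < 1$ guarantees that the target value $1 - \hat{P}_\Omega(E_2)$ is strictly positive and finite.

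For the forward direction ($\Rightarrow$) I would assume $\texttt{DIS}$, i.e.\ $E_1 \cap E_2 = \varnothing$. By Proposition P\ref{proposition-confidence-DIS-ISAR-GPARsinglemicro} this yields $\operatorname{confidence}_\Omega(r) = \hat{P}_\Omega(E_2 \mid E_1) = 0$, and substituting into the conviction formula gives $\operatorname{conviction}_\Omega(r) = \frac{1 - \hat{P}_\Omega(E_2)}{1 - 0} = 1 - \hat{P}_\Omega(E_2)$. This direction reduces to a single substitution once confidence is known to vanish.

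For the reverse direction ($\Leftarrow$) I would assume $\operatorname{conviction}_\Omega(r) = 1 - \hat{P}_\Omega(E_2)$. Since this value is finite, confidence cannot equal $1$ (otherwise conviction would be $\infty$ by its definition), so the fraction is genuinely defined and may be equated with the assumed value. Because $\hat{P}_\Omega(E_2) < 1$, the factor $1 - \hat{P}_\Omega(E_2)$ is nonzero, so dividing both sides of $\frac{1 - \hat{P}_\Omega(E_2)}{1 - \hat{P}_\Omega(E_2 \mid E_1)} = 1 - \hat{P}_\Omega(E_2)$ by it gives $\frac{1}{1 - \hat{P}_\Omega(E_2 \mid E_1)} = 1$, hence $\hat{P}_\Omega(E_2 \mid E_1) = 0$, i.e.\ $\operatorname{confidence}_\Omega(r) = 0$. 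Applying Proposition P\ref{proposition-confidence-DIS-ISAR-GPARsinglemicro} in the reverse direction then yields $\texttt{DIS}$.

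The one point I would be most careful about is the well-definedness bookkeeping in the reverse direction: the hypothesis $\hat{P}_\Omega(E_2) < 1$ must be invoked twice — once to exclude the $\infty$-case of conviction (so that the algebraic manipulation is legitimate) and once to justify cancelling the common factor $1 - \hat{P}_\Omega(E_2)$. Everything beyond that is elementary algebra plus the already-proved confidence characterization, so I do not anticipate any deeper difficulty.
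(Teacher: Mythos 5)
Your proof is correct and takes essentially the same route as the paper: the forward direction is the same substitution of a vanishing conditional probability into the conviction formula, and the reverse direction performs the identical cancellation of the nonzero factor $1-\hat{P}_\Omega(E_2)$ to conclude $\hat{P}_\Omega(E_2 \mid E_1) = 0$; the only difference is that you factor the translation between \texttt{DIS} and zero confidence through Proposition~\ref{proposition-confidence-DIS-ISAR-GPARsinglemicro}, where the paper inlines that step by expanding the conditional probability and concluding $\hat{P}_\Omega(E_1 \cap E_2) = 0$. Your explicit exclusion of the $\infty$-case of conviction in the reverse direction is a small point of added rigor that the paper leaves implicit.
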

\begin{proof}
($\Rightarrow$) Suppose $E_1 \cap E_2 = \varnothing$ holds, then $\operatorname{conviction}_\Omega(r) = 1-\hat{P}_\Omega(E_2)$ holds, because\\
$\operatorname{conviction}_\Omega(r) = \dfrac{1 - \hat{P}_\Omega(E_2)}{1 - \hat{P}_\Omega(E_2 \mid E_1)} \overset{\texttt{DIS}}{=} \dfrac{1 - \hat{P}_\Omega(E_2)}{1 - 0} = 1 - \hat{P}_\Omega(E_2)$.

\noindent ($\Leftarrow$) Suppose $\operatorname{conviction}_\Omega(r) = \hat{P}_\Omega(E_1)\cdot(1-\hat{P}_\Omega(E_2))$ holds, then $E_1 \cap E_2 = \varnothing$ holds, because 
\begin{flushleft}
$\begin{aligned}
& \operatorname{conviction}_\Omega(r) = 1-\hat{P}_\Omega(E_2)\\
\Longleftrightarrow~ & \dfrac{1 - \hat{P}_\Omega(E_2)}{ 1 - \hat{P}_\Omega(E_2 \mid E_1)} = 1-\hat{P}_\Omega(E_2)~~~\text{(divide by } 1-\hat{P}_\Omega(E_2) \text{ which is non-zero)}\\
\Longleftrightarrow~ & \dfrac{1}{ 1 - \hat{P}_\Omega(E_2 \mid E_1)} = 1\\
\Longleftrightarrow~ & \dfrac{1}{\frac{\hat{P}_\Omega(E_1)}{\hat{P}_\Omega(E_1)} - \frac{\hat{P}_\Omega(E_1 \cap E_2)}{\hat{P}_\Omega(E_1)}} = 1\\
\Longleftrightarrow~ &  \dfrac{\hat{P}_\Omega(E_1)}{\hat{P}_\Omega(E_1) - \hat{P}_\Omega(E_1 \cap E_2)} = 1\\
\Longleftrightarrow~ & \hat{P}_\Omega(E_1) = \hat{P}_\Omega(E_1) - \hat{P}_\Omega(E_1 \cap E_2)\\
\Longleftrightarrow~ & \hat{P}_\Omega(E_1 \cap E_2) = 0 \Longrightarrow E_1 \cap E_2 = \varnothing.
\end{aligned}$
\end{flushleft}
\end{proof}

\begin{proposition}\label{proposition-conviction-IND-ISAR-GPARsinglemicro}Assuming that $\hat{P}_\Omega(E_1) > 0$ and $\hat{P}_\Omega(E_2) < 1$, then \\$\texttt{IND} \Longleftrightarrow \operatorname{conviction}_\Omega(r) = 1$.
\end{proposition}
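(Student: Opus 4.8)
The plan is to prove the biconditional $\texttt{IND} \Longleftrightarrow \operatorname{conviction}_\Omega(r) = 1$ via a chain of bidirectional logical equivalences, exactly mirroring the structure already used in Proposition~\ref{proposition-conviction-DIS-ISAR-GPARsinglemicro}. The starting point is the generalized definition $\operatorname{conviction}_\Omega(r) = \frac{1 - \hat{P}_\Omega(E_2)}{1 - \hat{P}_\Omega(E_2 \mid E_1)}$, and the target condition $\texttt{IND}$ is $\hat{P}_\Omega(E_1 \cap E_2) = \hat{P}_\Omega(E_1) \cdot \hat{P}_\Omega(E_2)$. Since both directions follow from the same equivalence chain, a single chain suffices rather than two separate arrows.

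First I would set $\operatorname{conviction}_\Omega(r) = 1$ and, using the assumption $\hat{P}_\Omega(E_2) < 1$ (so that the numerator $1 - \hat{P}_\Omega(E_2)$ is strictly positive and nonzero), cancel it against the right-hand side to obtain $1 - \hat{P}_\Omega(E_2 \mid E_1) = 1 - \hat{P}_\Omega(E_2)$, which simplifies to $\hat{P}_\Omega(E_2 \mid E_1) = \hat{P}_\Omega(E_2)$. Next I would expand the conditional probability as $\hat{P}_\Omega(E_2 \mid E_1) = \frac{\hat{P}_\Omega(E_1 \cap E_2)}{\hat{P}_\Omega(E_1)}$, which is well-defined because $\hat{P}_\Omega(E_1) > 0$. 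Multiplying through by $\hat{P}_\Omega(E_1)$ yields $\hat{P}_\Omega(E_1 \cap E_2) = \hat{P}_\Omega(E_1) \cdot \hat{P}_\Omega(E_2)$, which is precisely the independence condition $\texttt{IND}$. Because every step is an equivalence (each operation is invertible under the stated nonzero assumptions), reading the chain backwards establishes the converse direction automatically.

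The only subtle point, and the one I would be most careful about, is justifying the cancellation steps: both the division by $1 - \hat{P}_\Omega(E_2)$ and the clearing of the denominator $\hat{P}_\Omega(E_1)$ rely on the proposition's standing assumptions $\hat{P}_\Omega(E_2) < 1$ and $\hat{P}_\Omega(E_1) > 0$. I would explicitly flag these nonzero-divisor conditions inline, as is done in the neighboring propositions, to keep each equivalence genuinely bidirectional. This is not a deep obstacle but it is where a careless argument could silently break the ``if and only if.''

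Concretely, the proof would read:
\begin{proof}
Proof via a chain of bidirectional logical equivalences:
\begin{flushleft}
$\begin{aligned}
& \operatorname{conviction}_\Omega(r) = 1\\
\Longleftrightarrow~ & \dfrac{1 - \hat{P}_\Omega(E_2)}{1 - \hat{P}_\Omega(E_2 \mid E_1)} = 1 ~~~\text{(multiply by } 1-\hat{P}_\Omega(E_2 \mid E_1)\text{)}\\
\Longleftrightarrow~ & 1 - \hat{P}_\Omega(E_2) = 1 - \hat{P}_\Omega(E_2 \mid E_1)\\
\Longleftrightarrow~ & \hat{P}_\Omega(E_2 \mid E_1) = \hat{P}_\Omega(E_2)\\
\Longleftrightarrow~ & \dfrac{\hat{P}_\Omega(E_1 \cap E_2)}{\hat{P}_\Omega(E_1)} = \hat{P}_\Omega(E_2) ~~~\text{(multiply by } \hat{P}_\Omega(E_1) > 0\text{)}\\
\Longleftrightarrow~ & \hat{P}_\Omega(E_1 \cap E_2) = \hat{P}_\Omega(E_1) \cdot \hat{P}_\Omega(E_2).
\end{aligned}$
\end{flushleft}
\end{proof}
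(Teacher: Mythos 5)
Your proof is correct and takes essentially the same route as the paper's: both reduce $\operatorname{conviction}_\Omega(r) = 1$ to $\hat{P}_\Omega(E_2 \mid E_1) = \hat{P}_\Omega(E_2)$ via the same cancellation in the conviction formula, relying on $\hat{P}_\Omega(E_2) < 1$ and $\hat{P}_\Omega(E_1) > 0$. The only difference is presentational: the paper splits the argument into ($\Rightarrow$) by direct substitution and ($\Leftarrow$) by an equivalence chain that stops at $\hat{P}_\Omega(E_2 \mid E_1) = \hat{P}_\Omega(E_2)$, whereas you run a single bidirectional chain all the way through to $\hat{P}_\Omega(E_1 \cap E_2) = \hat{P}_\Omega(E_1) \cdot \hat{P}_\Omega(E_2)$, making explicit the final expansion of the conditional probability that the paper leaves implicit.
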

\begin{proof}
($\Rightarrow$) Suppose $\hat{P}_g(E_{g,p_1,V_1;p_2} \cap E_{g,p_2,V_2;p_1}) = \hat{P}_g(E_{g,p_1,V_1;p_2}) \cdot \hat{P}_g(E_{g,p_2,V_2;p_1})$ holds, then $\operatorname{conviction}_\Omega(r) = 1$ holds, because

$\operatorname{conviction}_\Omega(r) = \dfrac{1 - \hat{P}_\Omega(E_2)}{1 - \hat{P}_\Omega(E_2 \mid E_1)} \overset{\texttt{IND}}{=} \dfrac{1 - \hat{P}_\Omega(E_2)}{1 - \hat{P}_\Omega(E_2)} = 1$.

\noindent ($\Leftarrow$) Suppose $\operatorname{conviction}_\Omega(r) = 1$ holds, then $\hat{P}_g(E_{g,p_1,V_1;p_2} \cap E_{g,p_2,V_2;p_1}) = \hat{P}_g(E_{g,p_1,V_1;p_2}) \cdot \hat{P}_g(E_{g,p_2,V_2;p_1})$ holds, because

\begin{flushleft}
$\begin{aligned}
& \operatorname{conviction}_\Omega(r) = 1\\
\Longleftrightarrow~ & \dfrac{1 - \hat{P}_\Omega(E_2)}{ 1 - \hat{P}_\Omega(E_2 \mid E_1)} = 1\\
\Longleftrightarrow~ & 1 - \hat{P}_\Omega(E_2) = 1 - \hat{P}_\Omega(E_2 \mid E_1)\\
\Longleftrightarrow~ & \hat{P}_\Omega(E_2 \mid E_1) = \hat{P}_\Omega(E_2).
\end{aligned}$
\end{flushleft}

\end{proof}

\begin{proposition}\label{proposition-conviction-POS-ISAR-GPARsinglemicro}Assuming that $\hat{P}_\Omega(E_1) > 0$ and $\hat{P}_\Omega(E_2) < 1$, then \\$\texttt{POS} \Longleftrightarrow \operatorname{conviction}_\Omega(r) > 1$.
\end{proposition}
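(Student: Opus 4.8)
The plan is to mirror the bidirectional equivalence chain used for the independence case in Proposition \ref{proposition-conviction-IND-ISAR-GPARsinglemicro}, replacing the equality with a strict inequality, while taking extra care at the two boundary points that the $=$ case never encounters: the vanishing of the conviction denominator and the value $\infty$. Working in the generalized setting, the argument is identical for $\Omega \in \lbrace T, \fulltau, \fulltauG \rbrace$, so proving it once settles all three contexts.

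First I would recall that $\operatorname{conviction}_\Omega(r) = \frac{1-\hat{P}_\Omega(E_2)}{1-\hat{P}_\Omega(E_2 \mid E_1)}$ and that the numerator $1 - \hat{P}_\Omega(E_2)$ is strictly positive by the assumption $\hat{P}_\Omega(E_2) < 1$. The decisive observation is that, since the numerator is positive, whether the conviction exceeds $1$ is governed entirely by the denominator, equivalently by comparing $\hat{P}_\Omega(E_2 \mid E_1)$ with $\hat{P}_\Omega(E_2)$. I would therefore aim for the key equivalence $\operatorname{conviction}_\Omega(r) > 1 \Longleftrightarrow \hat{P}_\Omega(E_2 \mid E_1) > \hat{P}_\Omega(E_2)$, and then convert the right-hand side into \texttt{POS} by multiplying through by $\hat{P}_\Omega(E_1) > 0$: the inequality $\hat{P}_\Omega(E_2 \mid E_1) > \hat{P}_\Omega(E_2)$ holds if and only if $\hat{P}_\Omega(E_1 \cap E_2) > \hat{P}_\Omega(E_1) \cdot \hat{P}_\Omega(E_2)$, which is precisely the \texttt{POS} condition.

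The main obstacle, and the only place where this proof genuinely differs from Proposition \ref{proposition-conviction-IND-ISAR-GPARsinglemicro}, is the case $\hat{P}_\Omega(E_2 \mid E_1) = 1$, i.e.\ $\operatorname{confidence}_\Omega(r) = 1$, where the denominator is zero and conviction is defined to \emph{be} $\infty$. Cross-multiplication is not licensed there, so I would treat this case separately: when $\operatorname{confidence}_\Omega(r) = 1$ we have $\operatorname{conviction}_\Omega(r) = \infty > 1$ by definition, and simultaneously $\hat{P}_\Omega(E_2 \mid E_1) = 1 > \hat{P}_\Omega(E_2)$ (using $\hat{P}_\Omega(E_2) < 1$), so \texttt{POS} holds; hence both sides of the biconditional are true and the equivalence survives at this boundary. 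For the complementary case $\hat{P}_\Omega(E_2 \mid E_1) < 1$ the denominator is strictly positive, and the step $\frac{1-\hat{P}_\Omega(E_2)}{1-\hat{P}_\Omega(E_2 \mid E_1)} > 1 \Longleftrightarrow 1 - \hat{P}_\Omega(E_2) > 1 - \hat{P}_\Omega(E_2 \mid E_1) \Longleftrightarrow \hat{P}_\Omega(E_2 \mid E_1) > \hat{P}_\Omega(E_2)$ is valid and direction-preserving, completing the chain.

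Combining the two cases yields $\texttt{POS} \Longleftrightarrow \operatorname{conviction}_\Omega(r) > 1$ across all three contexts at once. I would note for the subsequent \texttt{NEG} analogue (Proposition \ref{proposition-conviction-NEG-ISAR-GPARsinglemicro}) that the boundary difficulty does not arise at all: under \texttt{NEG} one has $\hat{P}_\Omega(E_2 \mid E_1) < \hat{P}_\Omega(E_2) < 1$, so the denominator is automatically strictly positive and the symmetric argument with reversed inequalities goes through without a case split.
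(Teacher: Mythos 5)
Your proof is correct and its core is the same equivalence chain the paper intends: compare $\hat{P}_\Omega(E_2 \mid E_1)$ with $\hat{P}_\Omega(E_2)$ via the conviction fraction, then multiply by $\hat{P}_\Omega(E_1) > 0$ to reach \texttt{POS}. The paper itself omits the proof, instructing only that one repeat Proposition~\ref{proposition-conviction-IND-ISAR-GPARsinglemicro} with the relation symbol changed; your contribution beyond that is the explicit case split at $\hat{P}_\Omega(E_2 \mid E_1) = 1$, and this is a genuine improvement rather than pedantry. In the \texttt{IND} proof the boundary never occurs (independence forces $\hat{P}_\Omega(E_2 \mid E_1) = \hat{P}_\Omega(E_2) < 1$, so the denominator is automatically positive), but under \texttt{POS} it can occur (e.g.\ $E_1 \subseteq E_2$ with $\hat{P}_\Omega(E_1) > 0$, $\hat{P}_\Omega(E_2) < 1$), in which case the fraction is undefined and conviction equals $\infty$ only by the paper's convention; a literal ``replace $=$ with $>$'' rewrite of the \texttt{IND} chain would therefore be invalid at exactly that point. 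Your two-case argument closes this gap cleanly, and your closing remark that the \texttt{NEG} analogue needs no such split (since $\hat{P}_\Omega(E_2 \mid E_1) < \hat{P}_\Omega(E_2) < 1$ keeps the denominator positive) is also correct.
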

\begin{proof}
Omitted -- can be proven in the same way as Proposition \ref{proposition-conviction-IND-ISAR-GPARsinglemicro}. The only thing that needs to be changed is to replace $=$ with $<$.
\end{proof}

\begin{proposition}\label{proposition-conviction-NEG-ISAR-GPARsinglemicro}Assuming that $\hat{P}_\Omega(E_1) > 0$ and $\hat{P}_\Omega(E_2) < 1$, then \\$\texttt{NEG} \Longleftrightarrow \operatorname{conviction}_\Omega(r) < 1$.
\end{proposition}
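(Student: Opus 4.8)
The plan is to follow exactly the template of the independence case (Proposition~\ref{proposition-conviction-IND-ISAR-GPARsinglemicro}), establishing the biconditional by a short chain of equivalences linking $\operatorname{conviction}_\Omega(r) < 1$ to the defining inequality of \texttt{NEG}. Throughout I would work in the generalized setting with $\operatorname{conviction}_\Omega(r) = \frac{1 - \hat{P}_\Omega(E_2)}{1 - \hat{P}_\Omega(E_2 \mid E_1)}$, and I would keep in mind the convention that $\operatorname{conviction}_\Omega(r) = \infty$ whenever $\hat{P}_\Omega(E_2 \mid E_1) = 1$, which matters for the converse direction.

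First I would record the algebraic reformulation of \texttt{NEG}. Since $\hat{P}_\Omega(E_1) > 0$, dividing the defining inequality $\hat{P}_\Omega(E_1 \cap E_2) < \hat{P}_\Omega(E_1)\cdot \hat{P}_\Omega(E_2)$ by $\hat{P}_\Omega(E_1)$ shows that \texttt{NEG} is equivalent to $\hat{P}_\Omega(E_2 \mid E_1) < \hat{P}_\Omega(E_2)$; that is, the confidence lies strictly below the base rate of the consequent. This is the form I would actually compare against conviction.

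For the forward direction I would assume \texttt{NEG}, so $\hat{P}_\Omega(E_2 \mid E_1) < \hat{P}_\Omega(E_2) < 1$, where the last inequality is the standing assumption $\hat{P}_\Omega(E_2) < 1$. Then both $1 - \hat{P}_\Omega(E_2)$ and $1 - \hat{P}_\Omega(E_2 \mid E_1)$ are strictly positive, and $\hat{P}_\Omega(E_2 \mid E_1) < \hat{P}_\Omega(E_2)$ gives $1 - \hat{P}_\Omega(E_2) < 1 - \hat{P}_\Omega(E_2 \mid E_1)$, whence the ratio is strictly below $1$. For the converse I would assume $\operatorname{conviction}_\Omega(r) < 1$; a finite value below $1$ rules out $\hat{P}_\Omega(E_2 \mid E_1) = 1$, and since a probability never exceeds $1$ we obtain $\hat{P}_\Omega(E_2 \mid E_1) < 1$, so the denominator is positive. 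Multiplying $\frac{1 - \hat{P}_\Omega(E_2)}{1 - \hat{P}_\Omega(E_2 \mid E_1)} < 1$ through by the positive quantity $1 - \hat{P}_\Omega(E_2 \mid E_1)$ then yields $\hat{P}_\Omega(E_2 \mid E_1) < \hat{P}_\Omega(E_2)$, i.e.\ \texttt{NEG}.

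The main obstacle, as in the positive-correlation case, is purely a matter of sign bookkeeping: the cross-multiplication step preserves the direction of the inequality only because $1 - \hat{P}_\Omega(E_2 \mid E_1)$ is positive, and this positivity must be secured separately in each direction (from $\hat{P}_\Omega(E_2 \mid E_1) < \hat{P}_\Omega(E_2) < 1$ going forward, and from finiteness of a conviction value below $1$ together with $\hat{P}_\Omega(E_2 \mid E_1) \leq 1$ going back). Once this is in place, the argument is otherwise identical to Proposition~\ref{proposition-conviction-IND-ISAR-GPARsinglemicro} with the equality replaced by the strict inequality, and the conclusion transfers verbatim to the ISAR, GPAR (single), and GPAR (micro) contexts through the common generalized definitions.
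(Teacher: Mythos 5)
Your proof is correct and takes essentially the same route as the paper, which simply omits the argument with the remark that it is Proposition~\ref{proposition-conviction-IND-ISAR-GPARsinglemicro} with $=$ replaced by $<$. Your explicit handling of the sign of $1 - \hat{P}_\Omega(E_2 \mid E_1)$ (positive in the forward direction via $\hat{P}_\Omega(E_2\mid E_1) < \hat{P}_\Omega(E_2) < 1$, and in the converse via finiteness of a conviction value below $1$) is exactly the bookkeeping that the strict-inequality version needs and that the paper leaves implicit.
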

\begin{proof}Omitted -- can be proven in the same way as Proposition \ref{proposition-conviction-IND-ISAR-GPARsinglemicro}. The only thing that needs to be changed is to replace $=$ with $<$.
\end{proof}

\subsubsection{Context: GPAR (macro)}

\[\operatorname{macro-conviction}_G(p_1,p_2,V_1,V_2) = \frac{1}{|G|} \sum_{g\in G} \operatorname{conviction}_g(p_1,p_2,V_1,V_2)\]

\begin{proposition}\label{proposition-conviction-IDE-GPARmacro}
Assuming that $\hat{P}_\Omega(E_1) > 0$ and $\hat{P}_\Omega(E_2) < 1$ for each $g \in G$, then
$\texttt{IDE} \Longrightarrow \operatorname{macro-conviction}_G(p_1,p_2,V_1,V_2) = \infty$.
\end{proposition}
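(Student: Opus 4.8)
The plan is to reduce the macro-averaged claim to the single-graph result established in Proposition~\ref{proposition-conviction-IDE-ISAR-GPARsinglemicro}, and then to handle the summation of $\infty$ values according to the arithmetic convention fixed when $\operatorname{macro-conviction}_G$ was defined. I would split the argument into the two directions $(\Rightarrow)$ and $(\Leftarrow)$, mirroring the structure used for the other macro-averaged propositions such as Proposition~\ref{proposition-confidence-IDE-GPARmacro}, even though the statement asserts only the forward implication.

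For the forward direction, I would assume that $E_{g,p_1,V_1;p_2} = E_{g,p_2,V_2;p_1}$ holds for every $g \in G$. Fixing an arbitrary $g \in G$, the standing hypotheses $\hat{P}_g(E_{g,p_1,V_1;p_2}) > 0$ and $\hat{P}_g(E_{g,p_2,V_2;p_1}) < 1$ let me invoke Proposition~\ref{proposition-conviction-IDE-ISAR-GPARsinglemicro}, which yields $\operatorname{conviction}_g(p_1,p_2,V_1,V_2) = \infty$. Since this holds for each summand, $\sum_{g \in G} \operatorname{conviction}_g(p_1,p_2,V_1,V_2)$ is a finite sum of $|G|$ copies of $\infty$, which collapses to $\infty$ by finite induction on $|G|$ using the convention $\infty + \infty = \infty$; scaling by the positive factor $\tfrac{1}{|G|}$ preserves this, so $\operatorname{macro-conviction}_G(p_1,p_2,V_1,V_2) = \infty$.

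For the reverse direction, I would argue that the implication cannot be strengthened to an equivalence. By the same $\infty$-arithmetic ($\infty + x = \infty$ for $x \in \mathbb{Q}$), the macro-average already diverges to $\infty$ as soon as a single graph $g \in G$ contributes $\operatorname{conviction}_g = \infty$, that is, as soon as $\operatorname{confidence}_g = 1$ for one graph. Moreover, $\operatorname{confidence}_g = 1$ only forces $E_{g,p_1,V_1;p_2} \subseteq E_{g,p_2,V_2;p_1}$, not the full identity demanded by \texttt{IDE}. Hence $\operatorname{macro-conviction}_G = \infty$ does not entail that \texttt{IDE} holds for every graph (nor even for any particular one).

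The main obstacle I anticipate is purely the bookkeeping around the extended value $\infty$: the paper specifies only how $\infty$ behaves under addition and under addition with a rational, so I should make explicit that multiplication by $\tfrac{1}{|G|} > 0$ leaves $\infty$ unchanged and that the finite sum of $\infty$-summands equals $\infty$. Once these conventions are pinned down, the remainder is a direct application of the single-graph proposition, with no genuine calculation required.
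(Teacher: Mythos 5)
Your proposal is correct and follows essentially the same route as the paper's proof: reduce to Proposition~\ref{proposition-conviction-IDE-ISAR-GPARsinglemicro} to get $\operatorname{conviction}_g = \infty$ for each $g \in G$, collapse the finite sum of $\infty$-summands via the stated $\infty$-arithmetic, and note that the converse fails because a single summand equal to $\infty$ (equivalently, $\operatorname{confidence}_g = 1$ for one graph, which only forces $E_{g,p_1,V_1;p_2} \subseteq E_{g,p_2,V_2;p_1}$) already makes the macro-average infinite. Your explicit remarks about scaling by $\tfrac{1}{|G|}$ and the subset-versus-identity point are slightly more careful than the paper's wording but do not constitute a different argument.
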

\begin{proof}
($\Rightarrow$) Suppose $\hat{P}_\Omega(E_1) > 0$ and $\hat{P}_\Omega(E_2) < 1$ for each $g \in G$ holds, then \\$\operatorname{macro-conviction}_G(p_1,p_2,V_1,V_2) = \infty$ holds, because\\
$\operatorname{macro-conv.}_G(p_1,p_2,V_1,V_2) = \frac{1}{|G|} \sum_{g\in G} \operatorname{conviction}_g(p_1,p_2,V_1,V_2) \overset{\texttt{IDE}, Prop. \ref{proposition-conviction-IDE-ISAR-GPARsinglemicro}}{=} \frac{1}{|G|} \sum_{g\in G} \infty = \infty$.

\noindent ($\Leftarrow$) Suppose $\operatorname{conviction}_G(p_1,p_2,V_1,V_2) = \infty$ holds, then $\hat{P}_\Omega(E_1) > 0$ and $\hat{P}_\Omega(E_2) < 1$ for each $g \in G$ does not hold, because it is sufficient for a sum to evaluate to $\infty$ if at least one of the summands are $\infty$.
\end{proof}

\begin{proposition}\label{proposition-conviction-DIS-GPARmacro}Assuming that $\hat{P}_\Omega(E_1) > 0$ and $\hat{P}_\Omega(E_2) < 1$ for each $g \in G$, then\\
$\texttt{DIS} \Longrightarrow \operatorname{macro-conviction}_G(p_1,p_2,V_1,V_2) = \frac{1}{|G|}\sum_{g \in G} (1 - \hat{P}_g(E_{g,p_2,V_2;p_1}))$.
\end{proposition}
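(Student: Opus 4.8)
The plan is to prove this single implication exactly as its companions (e.g. Propositions \ref{proposition-conviction-DIS-ISAR-GPARsinglemicro} and \ref{proposition-leverage-DIS-GPARmacro}) were handled: reduce the macro-averaged statement to the already-established single-graph result via the fact that macro-conviction is just an arithmetic mean of per-graph conviction scores. Since the proposition asserts only $\Longrightarrow$, the forward direction is the substantive content, and the converse is expected to fail for the usual cancellation reason.

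First I would treat the forward direction. Assume \texttt{DIS} holds in the macro sense, i.e.\ for every $g \in G$ the events $E_{g,p_1,V_1;p_2}$ and $E_{g,p_2,V_2;p_1}$ are disjoint. Under the standing assumptions $\hat{P}_g(E_{g,p_1,V_1;p_2}) > 0$ and $\hat{P}_g(E_{g,p_2,V_2;p_1}) < 1$ for each $g$, Proposition \ref{proposition-conviction-DIS-ISAR-GPARsinglemicro} applies to each individual graph and gives $\operatorname{conviction}_g(p_1,p_2,V_1,V_2) = 1 - \hat{P}_g(E_{g,p_2,V_2;p_1})$. Substituting this into the defining average
\[ \operatorname{macro-conviction}_G(p_1,p_2,V_1,V_2) = \frac{1}{|G|}\sum_{g\in G} \operatorname{conviction}_g(p_1,p_2,V_1,V_2) \]
yields the claimed value $\frac{1}{|G|}\sum_{g \in G}(1 - \hat{P}_g(E_{g,p_2,V_2;p_1}))$. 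I would note that every summand here is finite and strictly positive because $\hat{P}_g(E_{g,p_2,V_2;p_1}) < 1$, so no $\infty$-bookkeeping intrudes on this direction.

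I would then remark, mirroring Proposition \ref{proposition-leverage-DIS-GPARmacro}, that the converse does not hold, which is precisely why the proposition is stated with $\Longrightarrow$ rather than $\Longleftrightarrow$. The equality constrains only the aggregate $\sum_{g}\bigl[\operatorname{conviction}_g(p_1,p_2,V_1,V_2) - (1-\hat{P}_g(E_{g,p_2,V_2;p_1}))\bigr]$ to vanish; these per-graph deviations can be positive on positively correlated graphs and negative on negatively correlated ones, so they may cancel without \texttt{DIS} holding on every graph. The one place calling for care -- and the only real obstacle, a mild one -- is that conviction may take the value $\infty$; before invoking the cancellation argument I must first observe that a single $\infty$ summand would force the macro-average to $\infty$ and hence preclude matching the finite right-hand side, so the counterexample must be built entirely from finite per-graph scores. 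Beyond that, everything reduces to linearity of the average and the already-proved single-graph proposition, leaving no genuinely hard step.
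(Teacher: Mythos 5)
Your proposal is correct and follows essentially the same route as the paper: the forward direction applies Proposition \ref{proposition-conviction-DIS-ISAR-GPARsinglemicro} graph-by-graph and uses linearity of the average, and the converse is dismissed because the same sum can arise from different collections of summands. Your additional observations (each summand is finite since $\hat{P}_g(E_{g,p_2,V_2;p_1}) < 1$, and any counterexample to the converse must avoid $\infty$-valued summands) are sound refinements the paper leaves implicit, and you even avoid a small typo in the paper's displayed sum, which writes $\hat{P}_g(E_{g,p_1,V_1;p_2})$ where $\hat{P}_g(E_{g,p_2,V_2;p_1})$ is meant.
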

\begin{proof}
($\Rightarrow$) Suppose $E_{g,p_1,V_1;p_2} \cap E_{g,p_2,V_2;p_1} = \varnothing$ for each $g \in G$ holds, then \\$\operatorname{macro-conv.}_G(p_1,p_2,V_1,V_2) = \frac{1}{|G|}\sum_{g \in G} (1 - \hat{P}_g(E_{g,p_2,V_2;p_1}))$ holds, because\\
$\operatorname{macro-conv.}_G(p_1,p_2,V_1,V_2) = \frac{1}{|G|} \sum_{g\in G} \operatorname{conv.}_g(p_1,p_2,V_1,V_2) \overset{\texttt{DIS}, Prop. \ref{proposition-conviction-DIS-ISAR-GPARsinglemicro}} = \\\frac{1}{|G|} \sum_{g\in G} (1 - \hat{P}_g(E_{g,p_1,V_1;p_2}))$.

\noindent ($\Leftarrow$) Suppose $\operatorname{macro-conviction}_G(p_1,p_2,V_1,V_2) = \frac{1}{|G|}\sum_{g \in G} (1 - \hat{P}_g(E_{g,p_2,V_2;p_1}))$ holds, then \\$E_{g,p_1,V_1;p_2} \cap E_{g,p_2,V_2;p_1} = \varnothing$ for each $g \in G$ does not hold, because the same sum can be obtained from different collections of summands.
\end{proof}

\begin{proposition}\label{proposition-conviction-IND-GPARmacro}Assuming that $\hat{P}_\Omega(E_1) > 0$ and $\hat{P}_\Omega(E_2) < 1$ for each $g \in G$, then
$\texttt{IND} \Longrightarrow \operatorname{macro-conviction}_G(p_1,p_2,V_1,V_2) = 1$.
\end{proposition}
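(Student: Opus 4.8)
The plan is to follow the same template already used for the other macro-averaged propositions (macro-confidence and macro-lift under \texttt{IND}): unfold the definition of $\operatorname{macro-conviction}_G$ as the arithmetic mean of the per-graph convictions, apply the single-graph characterization pointwise, and then collapse the resulting constant sum. Since the proposition is stated with a single implication arrow, I would prove the forward direction directly and then argue that the converse fails, mirroring the structure of Proposition~\ref{proposition-conviction-IND-GPARmacro}'s neighbors in Table~\ref{properties-conviction}.

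For the forward direction ($\Rightarrow$), I would assume \texttt{IND} holds for every $g \in G$, i.e.\ $\hat{P}_g(E_{g,p_1,V_1;p_2} \cap E_{g,p_2,V_2;p_1}) = \hat{P}_g(E_{g,p_1,V_1;p_2}) \cdot \hat{P}_g(E_{g,p_2,V_2;p_1})$ for each graph. The standing assumptions $\hat{P}_g(E_1) > 0$ and $\hat{P}_g(E_2) < 1$ guarantee that each $\operatorname{conviction}_g$ is defined and finite, because independence together with $\hat{P}_g(E_2) < 1$ forces $\operatorname{confidence}_g = \hat{P}_g(E_2) < 1$, so no summand is $\infty$. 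Invoking Proposition~\ref{proposition-conviction-IND-ISAR-GPARsinglemicro} pointwise, each summand equals $1$, and hence
\[\operatorname{macro-conviction}_G(p_1,p_2,V_1,V_2) = \frac{1}{|G|}\sum_{g\in G} \operatorname{conviction}_g(p_1,p_2,V_1,V_2) \overset{\texttt{IND}, \text{Prop.}~\ref{proposition-conviction-IND-ISAR-GPARsinglemicro}}{=} \frac{1}{|G|}\sum_{g\in G} 1 = 1.\]

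For the reverse direction I expect failure, which is why the table records only $\Longrightarrow$. The argument is that $\operatorname{macro-conviction}_G = 1$ constrains only the mean of the per-graph convictions, not each term: since conviction values range over $\mathbb{Q}_{>0} \cup \lbrace \infty \rbrace$, graphs with $\operatorname{conviction}_g > 1$ (positive correlation) can be balanced by graphs with $\operatorname{conviction}_g < 1$ (negative correlation) to yield an average of exactly $1$ without \texttt{IND} holding on any individual graph. I anticipate no genuine obstacle, as the proof is a direct specialization of the already-established pattern; the only point requiring care is the finiteness bookkeeping in the forward direction, where the hypothesis $\hat{P}_g(E_2) < 1$ for every $g$ is precisely what rules out an $\infty$ summand that would otherwise make the mean diverge.
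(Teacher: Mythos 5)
Your proof is correct and takes essentially the same route as the paper: unfold $\operatorname{macro-conviction}_G$ as the mean of the per-graph convictions, apply Proposition~\ref{proposition-conviction-IND-ISAR-GPARsinglemicro} pointwise so every summand equals $1$, and note that the converse fails because the same mean can arise from different collections of summands (convictions above and below $1$ cancelling). Your explicit finiteness remark (that \texttt{IND} together with $\hat{P}_g(E_{g,p_2,V_2;p_1}) < 1$ rules out any $\infty$ summand) is a small clarification the paper leaves implicit, but it does not change the argument.
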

\begin{proof}
($\Rightarrow$) Suppose $\hat{P}_g(E_{g,p_1,V_1;p_2} \cap E_{g,p_2,V_2;p_1}) = \hat{P}_g(E_{g,p_1,V_1;p_2}) \cdot \hat{P}_g(E_{g,p_2,V_2;p_1})$ holds, then $\operatorname{macro-conviction}_G(p_1,p_2,V_1,V_2) = 1$ holds, because

$\operatorname{macro-conv.}_G(p_1,p_2,V_1,V_2) = \frac{1}{|G|} \sum_{g\in G} \operatorname{conv.}_g(p_1,p_2,V_1,V_2) \overset{\texttt{IND}, Prop. \ref{proposition-conviction-IND-ISAR-GPARsinglemicro}}{=} \frac{1}{|G|} \sum_{g\in G} 1 = 1$.

\noindent ($\Leftarrow$) Suppose $\operatorname{macro-conviction}_G(p_1,p_2,V_1,V_2) = 1$ holds, then $\hat{P}_g(E_{g,p_1,V_1;p_2} \cap E_{g,p_2,V_2;p_1}) = \hat{P}_g(E_{g,p_1,V_1;p_2}) \cdot \hat{P}_g(E_{g,p_2,V_2;p_1})$ holds, 
because conviction scores are values in the interval $\mathbb{Q}_{>} \cap \lbrace \infty \rbrace$ and the same sum can be obtained from different collections of summands.
\end{proof}

\begin{proposition}\label{proposition-conviction-POS-GPARmacro}Assuming that $\hat{P}_\Omega(E_1) > 0$ and $\hat{P}_\Omega(E_2) < 1$ for each $g \in G$, then
$\texttt{POS} \Longrightarrow \operatorname{macro-conviction}_G(p_1,p_2,V_1,V_2) > 1$.
\end{proposition}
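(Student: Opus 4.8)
The plan is to follow the template established for Proposition~\ref{proposition-conviction-IND-GPARmacro}, replacing the independence hypothesis by positive correlation and the resulting equality by a strict inequality. First I would unfold the definition of macro-conviction as the arithmetic mean of the per-graph conviction scores, writing $\operatorname{macro-conviction}_G(p_1,p_2,V_1,V_2) = \frac{1}{|G|}\sum_{g\in G}\operatorname{conviction}_g(p_1,p_2,V_1,V_2)$. This reduces the claim to a statement about each summand together with the behaviour of averaging.

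Next I would invoke the single-graph result. Under the \texttt{POS} situation the events $E_{g,p_1,V_1;p_2}$ and $E_{g,p_2,V_2;p_1}$ are positively correlated for every $g\in G$, so Proposition~\ref{proposition-conviction-POS-ISAR-GPARsinglemicro}, applied with $\Omega = \fulltau$, yields $\operatorname{conviction}_g(p_1,p_2,V_1,V_2) > 1$ for each $g$. In the case where every summand is finite the conclusion is immediate, since the average of finitely many rationals each strictly exceeding $1$ strictly exceeds $1$: $\frac{1}{|G|}\sum_{g\in G}\operatorname{conviction}_g > \frac{1}{|G|}\sum_{g\in G} 1 = 1$.

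The only point requiring genuine care, and hence the main obstacle, is the treatment of infinite summands. Positive correlation does not exclude $\operatorname{confidence}_g = 1$ (which occurs when $E_{g,p_1,V_1;p_2}\subseteq E_{g,p_2,V_2;p_1}$), in which case $\operatorname{conviction}_g = \infty$ by definition. Here I would appeal to the arithmetic convention for $\infty$ fixed in Section~\ref{sec:metricstransactional}, namely $\infty + x = \infty$ for any $x\in\mathbb{Q}$ and $\infty+\infty=\infty$: if at least one summand equals $\infty$, the whole mean is $\infty$, which still satisfies $>1$; otherwise every summand is a finite rational exceeding $1$ and the strict inequality of the previous paragraph applies directly. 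In both subcases the bound $\operatorname{macro-conviction}_G(p_1,p_2,V_1,V_2) > 1$ holds, which establishes the forward implication. As with the sibling macro propositions, I would finally remark that the statement is claimed only as an implication rather than an equivalence, because the same mean can arise from collections of per-graph conviction scores not all exceeding $1$, so $\operatorname{macro-conviction}_G > 1$ does not recover positive correlation in each individual $g$.
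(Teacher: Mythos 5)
Your proposal is correct and takes essentially the same route as the paper: unfold $\operatorname{macro-conviction}_G$ as the average of per-graph conviction scores, apply Proposition~\ref{proposition-conviction-POS-ISAR-GPARsinglemicro} to each summand, and conclude that the average exceeds $1$, with the converse failing because the same mean can arise from different collections of summands. Your explicit handling of infinite summands via the conventions $\infty + x = \infty$ and $\infty + \infty = \infty$ from Section~\ref{sec:metricstransactional} is a careful addition that the paper's one-line proof (which merely says to mimic Proposition~\ref{proposition-conviction-IND-GPARmacro} with the equality replaced by an inequality) leaves implicit.
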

\begin{proof}Omitted -- can be proven in the same way as Proposition \ref{proposition-conviction-IND-GPARmacro}, based on  Prop. \ref{proposition-conviction-POS-ISAR-GPARsinglemicro}. The only thing that needs to be changed is to replace $=$ with $<$.
\end{proof}

\begin{proposition}\label{proposition-conviction-NEG-GPARmacro}Assuming that $\hat{P}_\Omega(E_1) > 0$ and $\hat{P}_\Omega(E_2) < 1$ for each $g \in G$, then
$\texttt{NEG} \Longrightarrow \operatorname{macro-conviction}_G(p_1,p_2,V_1,V_2) < 1$.
\end{proposition}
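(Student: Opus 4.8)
The plan is to reuse verbatim the template that was already established for the other macro-averaged conviction results, in particular Propositions \ref{proposition-conviction-IND-GPARmacro} and \ref{proposition-conviction-DIS-GPARmacro}, reducing the macro claim to the single-graph characterization of conviction under negative correlation. The underlying principle is simple: macro-conviction is an arithmetic mean of per-graph conviction scores, and the mean of finitely many numbers each strictly below $1$ is itself strictly below $1$.

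First I would unfold the definition
\[\operatorname{macro-conviction}_G(p_1,p_2,V_1,V_2) = \frac{1}{|G|} \sum_{g\in G} \operatorname{conviction}_g(p_1,p_2,V_1,V_2),\]
and observe that the standing assumptions $\hat{P}_\Omega(E_1) > 0$ and $\hat{P}_\Omega(E_2) < 1$ for each $g \in G$ guarantee that every summand $\operatorname{conviction}_g(p_1,p_2,V_1,V_2)$ is defined. For the forward implication I would then invoke Proposition \ref{proposition-conviction-NEG-ISAR-GPARsinglemicro}, which gives $\texttt{NEG} \Longleftrightarrow \operatorname{conviction}_\Omega(r) < 1$ in the single-graph setting. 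Since \texttt{NEG} in the macro setting means that $E_{g,p_1,V_1;p_2}$ and $E_{g,p_2,V_2;p_1}$ are negatively correlated in \emph{every} $g \in G$, each summand satisfies $\operatorname{conviction}_g(p_1,p_2,V_1,V_2) < 1$, and a termwise bound yields
\[\operatorname{macro-conviction}_G(p_1,p_2,V_1,V_2) = \frac{1}{|G|}\sum_{g \in G}\operatorname{conviction}_g(p_1,p_2,V_1,V_2) < \frac{1}{|G|}\sum_{g \in G} 1 = 1.\]
As in the analogous propositions, I would then remark that only the forward direction holds: a macro-averaged value below $1$ can arise from a collection of per-graph conviction scores that are not all below $1$ (some equalling or exceeding $1$ and being compensated by smaller ones), so the average being $< 1$ does not force negative correlation in each graph.

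The only step requiring genuine care is the finiteness observation embedded in the bound. Because $\operatorname{conviction}_g$ is defined to \emph{be} $\infty$ whenever $\operatorname{confidence}_g = 1$, I must confirm that no summand diverges, otherwise the inequality $\operatorname{conviction}_g < 1$ and the termwise comparison would be vacuous or ill-posed. This is immediate under \texttt{NEG}: negative correlation gives $\operatorname{conviction}_g < 1$ strictly via Proposition \ref{proposition-conviction-NEG-ISAR-GPARsinglemicro}, so each score is a finite rational well below $1$ and the symbol $\infty$ never enters the average. Once this is noted, the remainder is the same bounded-average argument already used for the \texttt{DIS} and \texttt{IND} macro cases, and in the write-up I would simply defer to Proposition \ref{proposition-conviction-IND-GPARmacro}, replacing $=$ with $<$ and citing Proposition \ref{proposition-conviction-NEG-ISAR-GPARsinglemicro} in place of Proposition \ref{proposition-conviction-IND-ISAR-GPARsinglemicro}.
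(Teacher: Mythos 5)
Your proposal is correct and follows exactly the route the paper takes: reduce to the single-graph characterization $\texttt{NEG} \Longleftrightarrow \operatorname{conviction}_g < 1$ (Proposition \ref{proposition-conviction-NEG-ISAR-GPARsinglemicro}), apply it termwise, average, and note that the converse fails because the same mean can arise from different collections of summands -- which is precisely the paper's instruction to redo Proposition \ref{proposition-conviction-IND-GPARmacro} with $=$ replaced by $<$. Your explicit check that no summand equals $\infty$ under \texttt{NEG} is a small, sound addition that the paper leaves implicit.
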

\begin{proof}Omitted -- can be proven in the same way as Proposition \ref{proposition-conviction-IND-GPARmacro}, based on  Prop. \ref{proposition-conviction-NEG-ISAR-GPARsinglemicro}. The only thing that needs to be changed is to replace $=$ with $<$.
\end{proof}

\end{document}